\theoremstyle{thmstyletwo}%
\theoremstyle{thmstylethree}%
\newcommand{\indeg}{\mathrm{indeg}\xspace}
\newcommand{\outdeg}{\mathrm{outdeg}\xspace}
\newcommand{\subs}{\mathrm{Sub}\xspace}
\newcommand{\ov}{\overline}
\newcommand{\Pio}[2]{\textcolor{blue}{$\mathrm{I}_{#1}\mathrm{O}_{#2}$}}
\newcommand{\Pin}[1]{\textcolor{blue}{$\mathrm{I}_{#1}$}}
\newtheorem{theor}{Theorem}
\newtheorem{lemma}{Lemma}
\newtheorem{clm}{Claim}
\newtheorem{proposition}{Proposition}
\definecolor{lightcyan}{rgb}{0.88,1,1}
\definecolor{antiquewhite}{rgb}{0.98, 0.92, 0.84}
\newenvironment{claimproof}{\noindent{\itshape Claim Proof.}}{~\hfill
	$\blacksquare$\smallskip}
\begin{document}

\title[Computing Bend-Minimum Orthogonal Drawings of Plane Series-Parallel Graphs in Linear Time]
{Computing Bend-Minimum Orthogonal Drawings of Plane Series-Parallel Graphs in Linear Time\footnote{A preliminary short version of this research, restricted to rectilinear planarity testing, appears in the proceedings of the International Symposium on Graph Drawing and Network Visalization, GD~2020~\cite{DBLP:conf/gd/Didimo0LO20}. The current journal version significantly extends the preliminary GD version, by providing: Complete technical details and full proofs for all lemmas/theorems, new results about the concept of spirality (\cref{se:substitution}), a new liner-time bend-minimization algorithm (\cref{se:bend-min-overv,se:budgets,se:summary}), and a revised set of open problems (\cref{se:open}). Research partially supported by MIUR Project ``AHeAD'' under PRIN 20174LF3T8.}}

\author*[1]{\fnm{Walter} \sur{Didimo}}\email{walter.didimo@unipg.it}	
\equalcont{These authors contributed equally to this work.}
\author[2]{\fnm{Michael} \sur{Kaufmann}}\email{mk@informatik.uni-tuebingen.de}
\equalcont{These authors contributed equally to this work.}
\author[1]{\fnm{Giuseppe} \sur{Liotta}}\email{giuseppe.liotta@unipg.it}
\equalcont{These authors contributed equally to this work.}
\author[1]{\fnm{Giacomo} \sur{Ortali}}\email{giacomo.ortali@studenti.unipg.it}
\equalcont{These authors contributed equally to this work.}

\affil*[1]{\orgdiv{Department of Engineering}, \orgname{University of Perugia}, \orgaddress{\city{Perugia}, \country{Italy}}}

\affil*[2]{\orgname{University of T\"ubingen}, \orgaddress{\city{T\"ubingen}, \country{Germany}}}


\abstract{
A \emph{planar orthogonal drawing} of a planar 4-graph $G$ (i.e., a planar graph with vertex-degree at most four) is a crossing-free drawing that maps each vertex of $G$ to a distinct point of the plane and each edge of $G$ to a sequence of horizontal and vertical segments between its end-points.  
A longstanding open question in Graph Drawing, dating back over 30 years, is whether there exists a linear-time algorithm to compute an orthogonal drawing of a \emph{plane} 4-graph with the minimum number of bends. The term ``plane'' indicates that the input graph comes together with a planar embedding, which must be preserved by the drawing (i.e., the drawing must have the same set of faces as the input graph). In this paper we positively answer the question above for the widely-studied class of series-parallel graphs. Our linear-time algorithm is based on a characterization of the planar series-parallel graphs that admit an orthogonal drawing without bends. This characterization is given in terms of the orthogonal spirality that each type of triconnected component of the graph can take; the orthogonal spirality of a component measures how much that component is ``rolled-up'' in an orthogonal drawing of the graph.
}

\keywords {Orthogonal Drawings, Bend Minimization, Linear-time Algorithms, Plane Graphs, Series-Parallel Graphs}

\maketitle

\section{Introduction}
%
Given a planar 4-graph $G$ (i.e., a planar graph with vertex-degree at most four), a \emph{planar orthogonal drawing} of $G$ is a crossing-free drawing that maps each vertex of $G$ to a distinct point of the plane and each edge of $G$ to a sequence of horizontal and vertical segments between its end-points~\cite{DBLP:books/ph/BattistaETT99,DBLP:reference/crc/DuncanG13,DBLP:conf/dagstuhl/1999dg,DBLP:books/ws/NishizekiR04}. A \emph{bend} is a point where a vertical and a horizontal segment of the same edge meet; see \cref{fi:graph-minbend,fi:ortho-minbend}, where bends are depicted as~`$\times$'. Computing planar orthogonal drawings of planar graphs with the minimum number bends is one of the most studied problems in Graph Drawing. Garg and Tamassia~\cite{DBLP:journals/siamcomp/GargT01} proved that this problem is NP-hard for general planar 4-graphs if the algorithm can freely choose the planar embedding; an $O(n)$-time algorithm exists for $n$-vertex planar 3-graphs~\cite{DBLP:conf/soda/DidimoLOP20} and an $O(n^3 \log^2 n)$-time algorithm exists for $n$-vertex series-parallel 4-graphs~\cite{DBLP:conf/gd/GiacomoLM19}.

If $G$ is a \emph{plane} 4-graph, i.e., it has a planar embedding that the drawing algorithm must preserve, the problem is polynomial-time solvable. A seminal paper by Tamassia~\cite{DBLP:journals/siamcomp/Tamassia87} describes an $O(n^2 \log n)$-time algorithm based on an elegant min-cost flow-network model. Cornelsen and Karrenbauer~\cite{DBLP:journals/jgaa/CornelsenK12} reduced the time complexity to $O(n^{1.5})$, through a more efficient min-cost flow-network technique. Deciding whether there exists an (optimal) $O(n)$-time algorithm is a longstanding question that is still unanswered (see, e.g.,~\cite{DBLP:conf/gd/BrandenburgEGKLM03,DBLP:books/ph/BattistaETT99,dlt-gd-17}).

\begin{figure}[tb]
	\centering
	\begin{subfigure}{.27\columnwidth}
		\centering
		\includegraphics[width=\columnwidth,page=2]{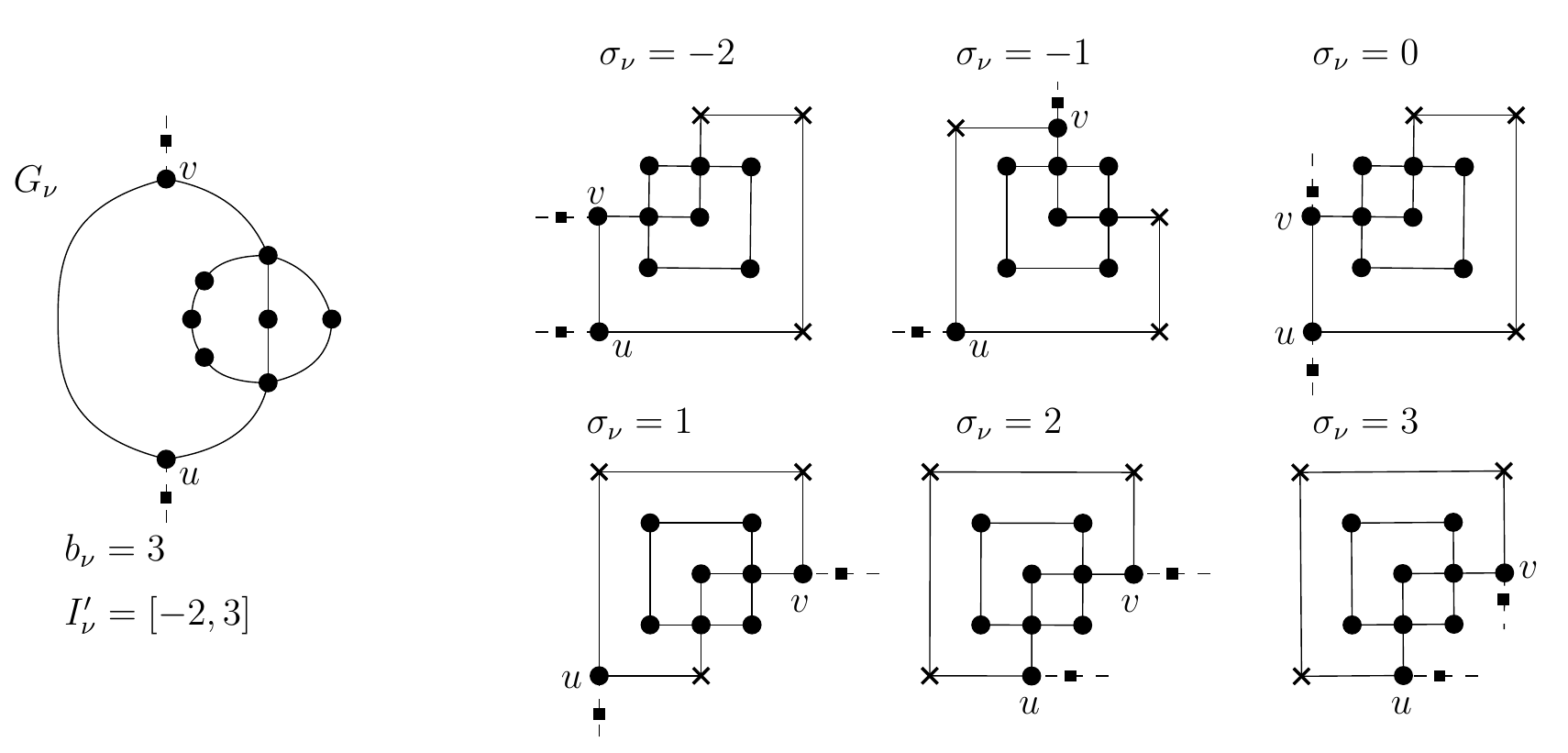}
		\subcaption{\centering $G$}
		\label{fi:graph-minbend}
	\end{subfigure}
	\hfil
	\begin{subfigure}{.27\columnwidth}
		\centering
		\includegraphics[width=\columnwidth,page=4]{I2O11_minbend_example-new.pdf}
		\subcaption{\centering $H$}
		\label{fi:ortho-minbend}
	\end{subfigure}
	\hfil
	\begin{subfigure}{.27\columnwidth}
		\centering
		\includegraphics[width=\columnwidth,page=3]{I2O11_minbend_example-new.pdf}
		\subcaption{\centering $T$}
		\label{fi:spq-tree-minbend}
	\end{subfigure}
	\hfil
	\begin{subfigure}{0.93\columnwidth}
		\centering
		\includegraphics[width=0.95\columnwidth,page=5]{I2O11_minbend_example-new.pdf}
		\subcaption{\centering}
		\label{fi:suboptimal-minbend}
	\end{subfigure}
	\caption{(a) A series-parallel graph $G$ with a given planar embedding. (b) A bend-minimum orthogonal drawing $H$ of $G$ with 5 bends, represented by cross vertices. (c) The SPQ$^*$-tree $T$ of $G$ with reference edge $e=(1,11)$; series and parallel compositions are represented by S- and P-nodes; a Q$^*$-node represents an edge or a series of edges; the root P$^r$-node is a parallel composition between $e$ and the rest of the graph. (d) Three suboptimal orthogonal drawings obtained by distributing the same number of bends of the highlighted subgraph $G_{\nu_3}$ in a different way.}\label{fi:intro}
\end{figure}

\smallskip\noindent{\bf Contribution.} In this paper we positively answer the question above for series-parallel graphs, which are a classical subject of investigation in Graph Drawing and Graph Algorithms (see, e.g., ~\cite{DBLP:books/ph/BattistaETT99,DBLP:journals/jacm/TakamizawaNS82,DBLP:journals/siamcomp/ValdesTL82,DBLP:journals/siamdm/ZhouN08}). Namely, we give an $O(n)$-time algorithm that receives as input an $n$-vertex plane series-parallel 4-graph $G$ and that computes an embedding-preserving orthogonal drawing of $G$ with the minimum number of bends. 
While $O(n)$-time algorithms that compute bend-minimum orthogonal drawings of plane graphs with maximum vertex-degree three are known (see, e.g.,~\cite{DBLP:books/ws/NishizekiR04,DBLP:journals/jgaa/RahmanNN99,DBLP:conf/wg/RahmanN02}), our result is the first linear-time algorithm for a graph family with vertices of degree four. Indeed, even for series-parallel plane graphs with degree-4 vertices, the most efficient solution known to date is the $O(n^{1.5})$-time algorithm by Cornelsen and Karrenbauer~\cite{DBLP:journals/jgaa/CornelsenK12}.

Different from the approach of Cornelsen and Karrenbauer we do not use network-flow techniques to minimize the number of bends. Instead, we rely on the observation that the bend-minimization problem for an orthogonal drawing of a plane graph $G$ is equivalent to inserting in $G$ the minimum number of subdivision vertices that make it rectilinear planar, i.e., drawable without bends. Following this idea, we first characterize those series-parallel graphs that are rectilinear planar. The characterization is expressed in terms of the ``orthogonal spirality'' for the triconnected components of $G$. 
Informally speaking, the orthogonal spirality of a component in an orthogonal drawing of $G$ measures how much the component is ``rolled-up'' (see, e.g.,~\cite{DBLP:journals/siamcomp/BattistaLV98,DBLP:conf/isaac/DidimoL98,DBLP:conf/soda/DidimoLOP20,DBLP:journals/jcss/DidimoLP19,DBLP:conf/gd/GargL99}).
We then consider the problem of efficiently adding the minimum number of subdivision vertices along the edges of those series-parallel graphs that are not rectilinear planar.

In a nutshell, our bend-minimization algorithm executes a post-order visit of a series-parallel \emph{decomposition tree} $T$ of the input graph $G$, also called \emph{SPQ$^*$-tree}. Tree $T$ represents the parallel and the series compositions that form~$G$ (see \cref{fi:graph-minbend,fi:spq-tree-minbend}, and refer to \cref{se:preli} for a formal definition of SPQ$^*$-trees). Suppose that~$\nu$ is a node of~$T$ and~$G_\nu$ its corresponding subgraph in $G$. When the algorithm visits $\nu$, it must efficiently determine whether $G_\nu$ is rectilinear planar and, if not, 
it must compute the minimum number of subdivision vertices needed to make it rectilinear planar; how to efficiently compute such a number is a first key ingredient of our approach. 

A second key ingredient is proving that, when the algorithm processes a node $\nu$ in the bottom-up visit, the addition of the minimum number of subdivision vertices that make $G_\nu$ rectilinear planar leads to the optimum in terms of total number of bends.

As a third key ingredient, the algorithm needs to concisely describe the set of rectilinear drawings of $G_\nu$ that can be obtained by distributing these subdivision vertices in all possible ways along the edges of~$G_\nu$, which gives rise to a combinatorial explosion of different possibilities. Indeed, different distributions of the same set of subdivision vertices along the edges of~$G_\nu$ can lead to orthogonal drawings of $G$ that have different number of bends. 
For example, consider the highlighted subgraph $G_{\nu_3}$ (associated with node $\nu_3$ of~$T$) in the graph of \cref{fi:graph-minbend}. Any orthogonal drawing of $G_{\nu_3}$ requires at least three bends
(subdivision vertices); placing all of them on edge $(2,11)$ yields the optimal solution of \cref{fi:ortho-minbend}, which additionally requires two bends on edge $(1,11)$. Conversely, placing the three bends on a different subset of edges of~$G_{\nu_3}$ leads to (suboptimal) solutions with more bends; see \cref{fi:suboptimal-minbend}.
To efficiently handle the combinatorially many distributions of the subdivision vertices along the edges of a subgraph $G_\nu$, we succinctly encode in $O(1)$ space the ``orthogonal shapes'' that $G_\nu$ can have in a bend-minimum planar orthogonal drawing of~$G$. This is done by looking at the set of possible orthogonal spiralities that~$G_\nu$ can take in such a drawing.

\smallskip
The remainder of the paper is organized as follows. \cref{se:preli} recalls basic definitions used throughout the paper. \cref{se:substitution} strengthen a result given in~\cite{DBLP:journals/siamcomp/BattistaLV98} about the interchangeability of orthogonal representations with the same spirality. \cref{se:rect} characterizes those plane series-parallel graphs that are rectilinear planar. \cref{se:bend-min-overv} gives an overview of our bend-minimization algorithm. \cref{se:budgets} provides details about the bottom-up visit performed by the algorithm. \cref{se:summary} summarizes our main result. 
Concluding remarks and open problems are in \cref{se:open}.
%

\section{Preliminaries}\label{se:preli}
We assume familiarity with basic concepts of graph planarity and graph drawing ~\cite{DBLP:books/ph/BattistaETT99,DBLP:conf/dagstuhl/1999dg,nc-08,DBLP:books/ws/NishizekiR04}.
We focus on \emph{orthogonal representations} rather than orthogonal drawings. An orthogonal representation~$H$ of a planar graph~$G$ describes a class of equivalent planar orthogonal drawings of $G$ in terms of planar embedding, ordered sequence of bends along the edges (i.e., sequence of left/right turns going from an end-vertex to the other) and clockwise sequence of geometric angles at each vertex, each angle formed by two (possibly coincident) consecutive edges around the vertex and expressed as a value in the set $\{90^\circ,180^\circ,270^\circ,360^\circ\}$ (angles of $360^\circ$ occur only at degree-1 vertices).
An orthogonal representation $H$ of $G$ can be described by a planar embedding of $G$ plus an \emph{angle labeling} specifying: $(i)$ For each vertex $v$ of~$G$, the geometric angles at $v$; $(ii)$ for each edge $e=(u,v)$ of~$G$, the ordered sequence of bends along $e$ as a sequence of angles in the left face (and hence in the right face) of $e$ while moving along $e$ from $u$ to $v$ (each bend determines an angle of $90^\circ$ in one of the two faces incident to $e$ and an angle of $270^\circ$ in the other face). It is well known (see, e.g.,~\cite{dett-gd-99}) that an angle labeling of $G$ describes a valid orthogonal representation if and only if the following properties hold: \textsf{(H1)} for each vertex $v$, the sum of the angles at $v$ equals $360^\circ$; \textsf{(H2)} for each face $f$, if $N_a(f)$ is the number of $a^\circ$ angles in $f$, we have $N_{90}(f)-N_{270}(f)-2N_{360}(f)=4$ (resp. $N_{90}(f)-N_{270}(f)-2N_{360(f)}=-4$) if $f$ is an internal face (resp. the external face). 

Note that, since the set of faces of a plane graph $G$ is a base for the set of simple cycles of $G$, Properties~\textsf{(H1)} and \textsf{(H2)} together are equivalent to say that for every simple cycle $C$ of $G$, the number of right turns minus the number of left turns, walking clockwise on the boundary of $C$, is equal to four. Namely, if $v$ is a vertex of $C$ we count: a right turn at $v$ if there is an angle of $90^\circ$ at $v$ inside $C$; a left turn at $v$ if the sum of the angles at $v$ inside $C$ equals $270^\circ$; two left turns at $v$ if $v$ has degree one. Also, a bend on an edge of $C$ corresponds to a right (resp. left) turn if it determines an angle of $90^\circ$ (resp., $270^\circ$) in $C$.     

Given an orthogonal representation $H$ of a plane graph $G$, a drawing of $H$ (which corresponds to an orthogonal drawing of $G$) can be computed in linear time~\cite{DBLP:journals/siamcomp/Tamassia87}. If $H$ has no bend, $H$ is a \emph{rectilinear representation}. 

\medskip
\smallskip\noindent{\bf Series-Parallel graphs and Decomposition Trees.} A \textit{two-terminal series-parallel} graph~$G$, also called \emph{series-parallel graph}, has two distinct vertices $s$ and $t$, called the \textit{source} and the \textit{sink} of $G$, respectively. A series-parallel graph can be inductively defined by, and naturally associated with, a \emph{decomposition tree} $T$: $(i)$~A single edge $(s,t)$ is a series-parallel graph with source~$s$~and~sink~$t$; in this case $T$ consists of a single Q-node, whose \emph{poles} are~$s$ and~$t$.
$(ii)$~Given $p \geq 2$ series-parallel graphs $G_1, \dots, G_p$, each $G_i$ with source $s_i$ and sink $t_i$ ($i=1, \dots, p$), a new series-parallel graph $G$ can be obtained with any of these two operations:

\smallskip\noindent {\em- Series composition}, which identifies $t_i$ with $s_{i+1}$ ($i=1,\dots,p-1$); $G$ has source $s=s_1$ and sink $t=t_p$. The composition is represented in $T$ by an S-node, with \emph{poles} $s_1$ and $t_p$, whose children are the roots of the decomposition trees $T_i$ of $G_i$ ($i=1, \dots, p$).

\smallskip\noindent {\em- Parallel composition}, which identifies all sources $s_i$ (resp. all sinks $t_i$) together~($i=1, \dots, p$); $G$ has source~$s=s_i$~and sink~$t=t_i$. The composition is represented in $T$ by a P-node, with \emph{poles} are $s$ and $t$, whose children are the roots of the decomposition trees $T_i$ of~$G_i$~($i=1, \dots, p$).

\smallskip In our algorithm we do not distinguish between Q-nodes and S-nodes whose children are all Q-nodes. We just call any of these nodes a Q$^*$-node. In other words, a Q$^*$-node represents a series of edges.
For a node $\nu$ of $T$, the \emph{pertinent graph} $G_\nu$ of $\nu$ is the subgraph of $G$ formed by all edges associated with the Q$^*$-nodes in the subtree rooted at~$\nu$. We also call $G_\nu$ a \emph{component} of $G$.

Let $G$ be a plane (two-terminal) series-parallel graph with vertex-degree at most four. Note that $G$ is either biconnected or it can be made biconnected with the addition of a single dummy edge; in this latter case we assume that the planar embedding of $G$ is such that the dummy edge can be added on the external face of $G$.
For any edge $e=(s,t)$ (possibly a dummy edge) on the external face, we can associate with $G$ a decomposition tree $T$ where the root is a P-node representing the parallel composition between $e$ and the rest of the graph. Thus, the root of $T$ is always a P-node with two children, one of which is a Q$^*$-node corresponding to $e$. It will be called the (unique) P$^r$-node of $T$, to distinguish it by the other P-nodes. Edge $e$ is the \emph{reference edge} of $T$, and $T$ is the \emph{SPQ$^*$-tree} of $G$ \emph{with respect to} $e$. Without loss of generality we assume that the external face of $G$ is to the right of $e$ while moving from $s$ to $t$. Also, it is always possible to make $T$ such that each (non-root) P-node has no P-node child and each S-node has no S-node child. Since $G$ has vertex-degree at most four, a P-node has either two or three children. Finally, we assume that the left-to-right order of the children of a P-node reflects the left-to-right order that their corresponding components have in the planar embedding of $G$. See \cref{fi:graph-minbend} and \cref{fi:spq-tree-minbend}. From now on we assume that $T$ satisfies the properties above for an $n$-vertex biconnected series-parallel graph. Observe that the number of nodes of $T$ is $O(n)$.

\smallskip\noindent{\bf Spirality of Series-Parallel Graphs.} Let $G$ be a biconnected plane series-parallel graph and let $T$ be an SPQ$^*$-tree with respect to a reference edge $e=(s,t)$. Let $H$ be an (embedding-preserving) orthogonal representation of~$G$. Also, let $\nu$ be a node of $T$ with poles $\{u,v\}$ and let $H_\nu$ be the restriction of $H$ to $G_\nu$. We also say that $H_\nu$ is a \emph{component} of $H$. For each pole $w \in \{u,v\}$, let  $\indeg_\nu(w)$ and $\outdeg_\nu(w)$ be the degree of $w$ inside and outside $H_\nu$, respectively. Define two (possibly coincident) \emph{alias vertices} of $w$, denoted by $w'$ and $w''$, as follows:
$(i)$ if $\indeg_\nu(w)=1$, then $w'=w''=w$;
$(ii)$ if $\indeg_\nu(w)=\outdeg_\nu(w)=2$, then $w'$ and $w''$ are dummy vertices, each splitting one of the two distinct edge segments incident to~$w$ outside~$H_\nu$;
$(iii)$ if $\indeg_\nu(w)>1$ and $\outdeg_\nu(w)=1$, then $w'=w''$ is a dummy vertex that splits the edge segment incident to $w$ outside $H_\nu$.

\begin{figure}[tb]
	\centering
	\includegraphics[width=1\columnwidth,page=1]{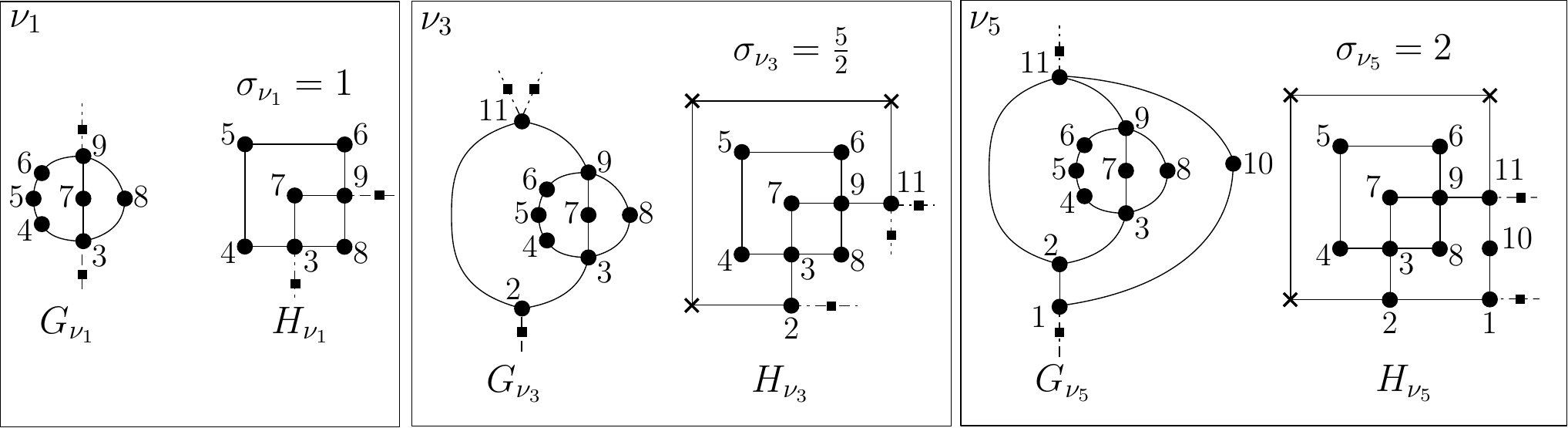}
	\caption{The components associated with the P-nodes $\nu_1$, $\nu_3$, and $\nu_5$, of the graph in \cref{fi:intro}. The alias vertices are the little squares along dashed edges. For each node $\nu_i$, $i\in\{1,3,5\}$, we report $G_{\nu_i}$, $H_{\nu_i}$, and the spirality~$\sigma_{\nu_i}$~of~$H_{\nu_i}$~in~$H$. In particular, the P-component of $\nu_3$, with poles $\{u=2, v=11\}$, has spirality $\frac{5}{2}$; the P-component of $\nu_{5}$, with poles $\{u=1, v=11\}$, has spirality  $2$.   
 }\label{fi:spiralities-minbend}
\end{figure}

Let $A^w$ be the set of distinct alias vertices of a pole $w$. Let $P^{uv}$ be any simple path from $u$ to $v$ inside $H_\nu$ and let $u'$ and $v'$ be the alias vertices of $u$ and of $v$, respectively. The path $S^{u'v'}$ obtained concatenating $(u',u)$, $P^{uv}$, and $(v,v')$ is called a \emph{spine} of $H_\nu$. Denote by $n(S^{u'v'})$ the number of right turns minus the number of left turns encountered along $S^{u'v'}$ while moving from $u'$ to $v'$.
%
The \emph{spirality} $\sigma(H_\nu)$ of $H_\nu$ is introduced by Di Battista et al.~\cite{DBLP:journals/siamcomp/BattistaLV98} and it is defined based on the following cases (see also \cref{fi:spiralities-minbend} for the spiralities of some P-components in the representation~$H$ of \cref{fi:ortho-minbend}.):
\begin{itemize}
\item $A^u=\{u'\}$ and $A^v=\{v'\}$; then $\sigma(H_\nu) = n(S^{u'v'})$.
\item $A^u=\{u'\}$ and $A^v=\{v',v''\}$; then $\sigma(H_\nu) = \frac{n(S^{u'v'}) + n(S^{u'v''})}{2}$.
\item $A^u=\{u',u''\}$ and $A^v=\{v'\}$; then $\sigma(H_\nu) = \frac{n(S^{u'v'}) + n(S^{u''v'})}{2}$.
\item $A^u=\{u',u''\}$ and $A^v=\{v',v''\}$; without loss of generality, assume that $(u,u')$ precedes $(u,u'')$ counterclockwise around $u$ and that $(v,v')$ precedes $(v,v'')$ clockwise around $v$; then $\sigma(H_\nu) = \frac{n(S^{u'v'}) + n(S^{u''v''})}{2}$.
\end{itemize}

\medskip Di Battista et al.~\cite{DBLP:journals/siamcomp/BattistaLV98} show that the spirality of~$H_\nu$ does not vary with the choice of the path~$P^{uv}$. For brevity, in the following we often denote by $\sigma_\nu$ the spirality of an orthogonal representation~$H_\nu$ of~$G_\nu$. 
%
If $\nu$ is a Q$^*$-node or a P-node with three children, $\sigma_\nu$ is always an integer. If $\nu$ is an S-node or a P-node with two children, $\sigma_\nu$ is either integer or semi-integer depending on whether the total number of alias vertices for the poles of $\nu$ is even or odd. When we say that the spirality $\sigma_\nu$ can take \emph{all} values in an interval $[a,b]$, we mean that such values are either all the integer numbers or all the semi-integer numbers in $[a,b]$, depending on the cases described above for $\nu$.

\section{Substituting Orthogonal Components with the Same Spirality}\label{se:substitution}
Let $G$ be a biconnected plane series-parallel graph and let $T$ be an SPQ$^*$-tree of $G$ with respect to a given reference edge. Di Battista et al.~\cite{DBLP:journals/siamcomp/BattistaLV98} prove that two distinct orthogonal representations of the same component $G_\nu$ that have the same spirality are ``interchangeable'', under some additional hypotheses. Roughly speaking, they prove that if a component $H_\nu$ has a certain spirality in a given orthogonal representation $H$, it can be substituted with another representation $H'_\nu$ having the same spirality, under the assumption that the angles at the poles of $\nu$ that are outside $G_\nu$ do not change. In this subsection we formalize the concept of substituting an orthogonal component with another one and give a stronger version of the result in~\cite{DBLP:journals/siamcomp/BattistaLV98}, which proves the interchangeability of two orthogonal components that have the same spirality, regardless of their angles at the poles.    

Let $H$ and $H'$ be two different orthogonal representations of $G$ with the reference edge on the external face, and let $H_\nu$ and $H'_\nu$ be the restrictions of $H$ and $H'$ to $G_\nu$, respectively. If $\sigma(H_\nu)=\sigma(H'_\nu)$, the operation of \emph{substituting} $H_\nu$ with $H'_\nu$ in $H$, denoted by $\subs(H_\nu, H'_\nu)$, defines a new plane graph $H''$ with an angle labeling, such that: $(a)$ $H''$ corresponds to a valid orthogonal representation of $G$; $(b)$  the restriction of $H''$ to $G_\nu$ coincides with $H'_\nu$; $(c)$ the restriction of $H''$ to $G \setminus G_\nu$ stays as in $H$. 

More formally, let $u$ and $v$ be the two poles of~$\nu$. The external boundary of $H_\nu$ contains a \emph{left path} $p_l$ and a \emph{right path} $p_r$, such that $p_l$ goes from $u$ to $v$ while traversing the external boundary of $H_\nu$ clockwise and $p_r$ goes from $u$ to $v$ while traversing the external boundary of $H_\nu$ counterclockwise. Denote by $f_l$ the face of~$H$ outside $H_\nu$ and incident to $p_l$, and denote by $f_r$ the face of~$H$ outside $H_\nu$ and incident to $p_r$.
Also, for each pole $w \in \{u,v\}$ of $\nu$, denote by $a_{w,l}$ (resp. $a_{w,r}$) the angle at~$w$ in face $f_l$ (resp. $f_r$) of $H$. 
Similarly, with respect to $H'_\nu$ and $H'$, define $p'_l$, $p'_r$, $f'_l$, $f'_r$, and $a'_{w,l}$, $a'_{w,r}$ for each pole $w \in \{u,v\}$.  
The operation $\subs(H_\nu, H'_\nu)$ defines $H''$ as follows (schematic illustrations are given in ~\cref{fi:substitution-1}--\ref{fi:substitution-4}):

\begin{itemize}
	\item The set of vertices and the set of edges of $H''$ are the same as in $G$.
	\item The planar embedding of $H''$ is such that: All faces of $H$ outside $H_\nu$ and distinct from $f_l$ and $f_r$, as well as all faces of $H'_\nu$, are also faces of $H''$. Also, $H''$ has two faces $f''_l$ and $f''_r$ obtained by replacing $p_l$ with $p'_l$ and $p_r$ with $p'_r$ in the boundary of $f_l$ and $f_r$, respectively.
	\item The angle labeling of $H''$ is such that: $(i)$ All the angles at the vertices and along the edges of~$G$ not belonging to $G_\nu$ are those in $H$. $(ii)$ All the angles at the vertices of $G_\nu$ distinct from $u$ and $v$ are those in $H'_\nu$. $(iii)$ All the angles along the edges of $G_\nu$ are those in $H'_\nu$. $(iv)$ For each pole $w \in \{u,v\}$ of $\nu$, the angles at $w$ that are outside $G_\nu$ and that are neither in $f''_l$ nor in $f''_r$ are those in $H$; the angles at $w$ that are inside $G_\nu$ are those in $H'_\nu$; the angle $a''_{w,l}$ at $w$ in $f''_l$ and the angle $a''_{w,r}$ at $w$ in $f''_r$ are such that $a''_{w,l}=a_{w,l}$ and $a''_{w,r}=a_{w,r}$ if $\indeg_\nu(w)=1$, while $a''_{w,l}=a'_{w,l}$ and $a''_{w,r}=a'_{w,r}$ if $\indeg_\nu(w)>1$.     	
\end{itemize}

The next theorem proves that $H''$ is a valid orthogonal representation.

\begin{theor}\label{th:substitution}
	Let $G$ be a biconnected series-parallel 4-graph, $T$ be an SPQ$^*$-tree of $G$ with respect to a reference edge $e$, and $\nu$ be a non-root node of $T$. Let $H$ and $H'$ be two different orthogonal representations of $G$ with $e$ on the external face, and let $H_\nu$ and $H'_\nu$ be the restrictions of $H$ and $H'$ to $G_\nu$, respectively. If $\sigma(H_\nu)=\sigma(H'_\nu)$ then the graph $H''$ defined by $\subs(H_\nu,H'_\nu)$ is an orthogonal representation of $G$.
\end{theor}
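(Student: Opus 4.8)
The plan is to verify that the angle labeling defined by $\subs(H_\nu,H'_\nu)$ satisfies properties \textsf{(H1)} and \textsf{(H2)}, since these two together characterize valid orthogonal representations. Property \textsf{(H1)} (angle sum $360^\circ$ at every vertex) is the easier half: for a vertex of $G\setminus G_\nu$ all angles are inherited from $H$; for an internal vertex of $G_\nu$ all angles are inherited from $H'_\nu$; the only delicate case is a pole $w$, where we take the interior angles from $H'_\nu$ and the exterior angles (those not in $f''_l,f''_r$) from $H$, plus the two designated angles $a''_{w,l},a''_{w,r}$. Here I would split according to $\indeg_\nu(w)$. If $\indeg_\nu(w)=1$ the single interior edge contributes its $H'_\nu$-angle of $360^\circ$ only if $w$ has degree one overall; otherwise the interior contributes the angle between the unique interior edge and the two boundary edges $p_l,p_r$, and we set $a''_{w,l}=a_{w,l}$, $a''_{w,r}=a_{w,r}$, so the sum at $w$ equals (exterior angles of $H$) $+\,a_{w,l}+a_{w,r}$ $+$ (interior angle of $H'_\nu$ opposite to $f_l,f_r$). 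One shows this equals $360^\circ$ by comparing with the angle sum at $w$ in $H$ and using that both $H_\nu$ and $H'_\nu$ have the same $\indeg_\nu(w)$, hence the interior angle ``facing away'' from the two boundary faces is forced to be the same value ($360^\circ$ minus the two $90^\circ$ turns onto $p_l,p_r$, when $\indeg_\nu(w)=1$ and $\deg w=2$; in the degree-$3$ case it is a $90^\circ$ angle in both). If $\indeg_\nu(w)>1$ then all angles at $w$ except the exterior ones not touching $f''_l,f''_r$ come from $H'_\nu$, and the exterior contribution is what it was in $H$; again the total is $360^\circ$ because in both $H$ and $H'$ the split of $360^\circ$ at $w$ into (interior part) $+$ (exterior part) is governed only by $\indeg_\nu(w)$ and $\outdeg_\nu(w)$, which agree.

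For property \textsf{(H2)} I would argue face by face. Every face of $H$ lying outside $H_\nu$ and different from $f_l,f_r$ is unchanged in $H''$, so it still satisfies \textsf{(H2)}; likewise every internal face of $H'_\nu$ is a face of $H''$ with its $H'_\nu$ labeling, hence fine. The crux is the two modified faces $f''_l$ and $f''_r$, obtained from $f_l$ (resp.\ $f_r$) by replacing the subpath $p_l$ (resp.\ $p_r$) with $p'_l$ (resp.\ $p'_r$). The right way to handle this is via the ``cycle reformulation'' recalled in the preliminaries: for any simple cycle, clockwise right turns minus left turns $=4$. Walk clockwise along $\partial f''_l$; this walk decomposes into the portion along $G\setminus G_\nu$ (identical to the corresponding portion of $\partial f_l$ in $H$, including the turns at the two poles on the $H$-side) plus the portion along $p'_l$ traversed through $H'_\nu$. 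To compare $\partial f''_l$ with $\partial f_l$ it suffices to compare the net turning contributed by $p'_l$ (with its two endpoint turns onto the external edges) against that contributed by $p_l$. This quantity is, up to sign and the constant $4$, exactly the spirality measured along the left path: a spine of $H_\nu$ can be chosen to run along $p_l$, and $n$ of that spine differs from $\sigma(H_\nu)$ only by fixed correction terms depending on the alias-vertex configuration at $u$ and $v$ — the same configuration for $H_\nu$ and $H'_\nu$ because $\indeg_\nu$ and $\outdeg_\nu$ are intrinsic to $G_\nu$ and $\nu$, not to the representation. Since $\sigma(H_\nu)=\sigma(H'_\nu)$ by hypothesis, the net turn along $p'_l$ (with endpoint corrections) equals that along $p_l$, so $\partial f''_l$ has the same turn-sum as $\partial f_l$, namely $\pm 4$ with the correct sign (internal unless $f_l$ was the external face, in which case exactly one of $f_l,f_r$ is external and the bookkeeping is symmetric). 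The same argument applies verbatim to $f''_r$ using the right path.

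The main obstacle, and where I would spend the most care, is precisely this turn-accounting at the two poles: one must check that the endpoint turns at $u$ and $v$ appearing in the boundary walk of $f''_l$ (resp.\ $f''_r$) are consistent with the chosen values $a''_{u,l},a''_{v,l}$ (resp.\ $a''_{u,r},a''_{v,r}$) and that, in each of the four alias-vertex cases in the definition of spirality, the correction relating $n$ of the boundary spine to $\sigma$ is genuinely independent of the representation. Concretely, when $\indeg_\nu(w)=1$ we kept $a''_{w,l}=a_{w,l}$, and the interior edge at $w$ plays a geometric role on the boundary of $f''_l$; when $\indeg_\nu(w)>1$ we kept $a''_{w,l}=a'_{w,l}$, and then the alias vertex splits an \emph{exterior} segment so the relevant turn along $p'_l$ at $w$ is entirely internal to $H'_\nu$ and already accounted for by $\sigma(H'_\nu)$. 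A clean way to organize this is: (1) show the external contribution to both $f''_l$ and $f''_r$ is identical to that in $H$ (immediate from the construction of the angle labeling, cases (i) and (iv)); (2) show the internal contribution of $p'_l$ together with the designated pole angles equals that of $p_l$ together with the designated pole angles in $H$, using equality of spirality and the representation-independence of the alias configuration; (3) conclude \textsf{(H2)} for the two modified faces, and hence for all faces, which with \textsf{(H1)} gives that $H''$ is a valid orthogonal representation of $G$. One should also note in passing that $H''$ is a plane graph on the same vertex/edge set — the embedding is well-defined because $p_l,p_r$ and $p'_l,p'_r$ are the full left/right external boundary paths, so replacing one pair by the other yields a genuine planar embedding with the claimed face set.
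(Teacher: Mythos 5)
Your proposal is correct and follows essentially the same route as the paper's proof: verify \textsf{(H1)} and \textsf{(H2)}, dispose of the unchanged vertices and faces by inheritance, and reduce the two modified faces to showing $n(p_l)=n(p'_l)$ and $n(p_r)=n(p'_r)$, which follows from $\sigma(H_\nu)=\sigma(H'_\nu)$ once one observes that the offset between the left and right spine counts (and hence the correction relating each to the spirality) is forced by the alias-vertex configuration, i.e., by $\indeg_\nu$ and $\outdeg_\nu$ alone. The paper simply carries out this accounting explicitly in a case analysis over the indegrees and outdegrees of the two poles, using that all angles at a degree-4 pole are $90^\circ$ to pin down the fixed offsets you invoke.
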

\begin{proof}
	We have to show that the embedded labeled graph $H''$ defined by $\subs(H_\nu,H'_\nu)$ satisfies Properties~\textsf{(H1)} and~\textsf{(H2)} of an orthogonal representation. Clearly, since $H$ and $H'$ are orthogonal representations, \textsf{(H1)} holds for all vertices of $H''$ distinct from the poles $\{u,v\}$ of $G_\nu$; indeed, each vertex distinct from $u$ and $v$ inherits all its angles either from $H$ or from~$H'$. Analogously, each face of $H''$ distinct from $f''_l$ and $f''_r$ is either a face of $H$ or a face of $H'$, thus its angles satisfy Property~\textsf{(H2)}.  
	It remains to show that \textsf{(H1)} holds for $u$ and $v$, and that \textsf{(H2)} holds for $f''_l$ and $f''_r$. To this aim, we analyze different cases based on the indegree of the two poles $\{u,v\}$ of~$G_\nu$.
	
	\smallskip\noindent{\bf Case 1: $\indeg_\nu(u)=1$ and $\indeg_\nu(v)=1$.} Refer to \cref{fi:substitution-1}.
	In this case, the alias vertices $u'$ and $v'$, associated with $u$ and $v$ respectively, coincide with the poles, i.e., $u = u'$ and $v = v'$. Let $\overline{ux}$ and $\overline{yv}$ be the two edge segments of $H_\nu$ incident to $u$ and to $v$, respectively. Analogously, let $\overline{ux'}$ and $\overline{y'v}$ be the two edge segments of $H'_\nu$ incident to $u$ and to $v$, respectively. 
	Without loss of generality, assume that $H$ and $H'$ are oriented in such a way that both $\overline{ux}$ and $\overline{ux'}$ are vertical segments and that $u$ is below $x$ in any drawing of $H$ and $u$ is below $x'$ in any drawing of $H'$. By definition, since $u = u'$ and $v = v'$, the spirality $\sigma(H_\nu)$ (resp. $\sigma(H'_\nu)$) equals the number of right turns minus the number of left turns while moving from $u$ to $v$ along any simple path of $H_\nu$ (resp. of $H'_\nu$). Hence, since by hypothesis $\sigma(H_\nu) = \sigma(H'_\nu)$, the edge segments $\overline{yv}$ and $\overline{y'v}$ are either both horizontal or both vertical, and more precisely they are incident to $v$ in $H$ and to $v$ in $H'$ from the same side (south, north, west, or east). In this case, $\subs(H_\nu,H'_\nu)$ defines $a''_{u,l}=a_{u,l}$, $a''_{u,r}=a_{u,r}$, $a''_{v,l}=a_{v,l}$, and $a''_{v,r}=a_{v,r}$, which implies that all the angles around $u$ and $v$ in $H''$ coincide with the angles around $u$ and $v$ in~$H$. Hence, Property~\textsf{(H1)} holds for both $u$ and $v$ in $H''$. Also, let $n(p_l)$ (resp. $n(p_r)$) be the number of right turns minus the number of left turns along $p_l$ (resp. $p_r$) while moving from $u$ to $v$ in $H$. Similarly, let $n(p'_l)$ (resp. $n(p'_r)$) be the number of right turns minus the number of left turns along $p'_l$ (resp. $p'_r$) while moving from $u$ to $v$ in $H'$. Since $\sigma(H_\nu) = \sigma(H'_\nu)$, and since $u=u'$ and $v=v'$, we have $n(p_l)=n(p'_l)$  and  $n(p_r)=n(p'_r)$. It follows that $N_{90}(f''_l)-N_{270}(f''_l)=N_{90}(f_l)-N_{270}(f_l)$ and $N_{90}(f''_r)-N_{270}(f''_r)=N_{90}(f_r)-N_{270}(f_r)$, which imply Property~\textsf{(H2)} for $f''_l$ and $f''_r$ (note that, since $G$ is biconnected, $N_{360}(f)=0$ for every face $f$ of $H$ and of $H'$).            
	
	\begin{figure}[tb]
		\centering
		\includegraphics[width=0.9\columnwidth,page=1]{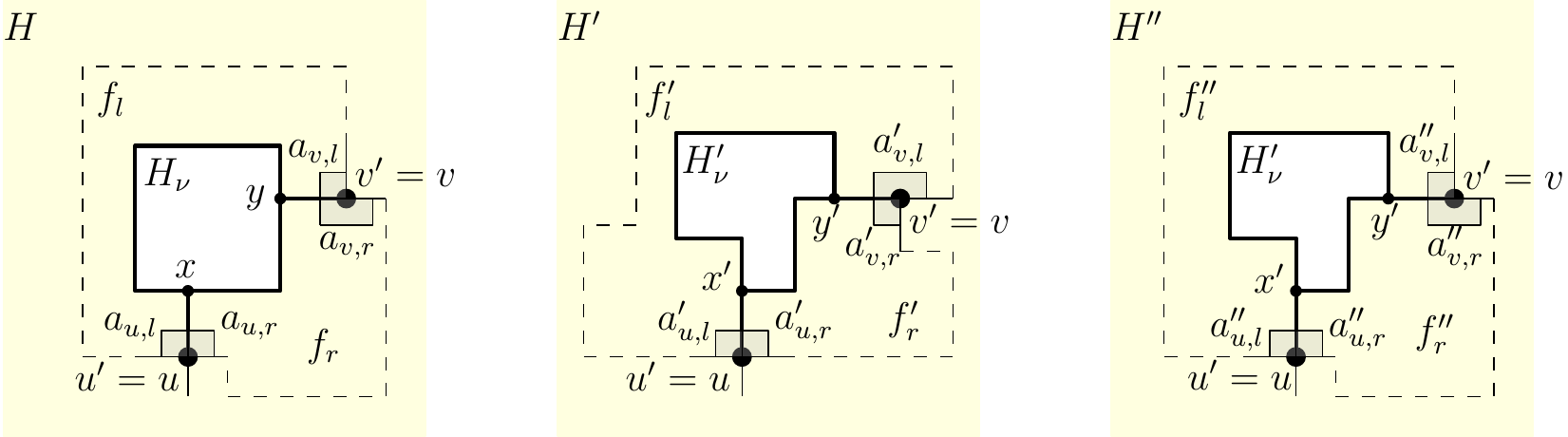}
		\caption{Case~1 of Theorem~\ref{th:substitution}: Schematic illustration of the graph $H''$ defined by $\subs(H_\nu,H'_\nu)$.}\label{fi:substitution-1}
	\end{figure}
	
	\smallskip\noindent{\bf Case 2: $\indeg_\nu(u)=1$ and $\indeg_\nu(v)>1$.} We distinguish two subcases: $\outdeg_\nu(v) = 1$ or $\outdeg_\nu(v)=2$. Assume first that $\outdeg_\nu(v) = 1$ (see \cref{fi:substitution-21}). In this case, $u$ and its alias vertex $u'$ coincide and the only alias vertex $v'$ associated with $v$ subdivides the edge segment incident to $v$ in $H$ and in $H'$. As in the analysis of Case~1, assume that $H$ and $H'$ are oriented so that each of the two edge segments $\overline{ux}$ and $\overline{ux'}$ in $H$ and in $H'$, respectively, is incident to $u$ from north. By definition, the spirality $\sigma(H_\nu)$ (resp. $\sigma(H'_\nu)$) in this case equals the number of right turns minus the number of left turns along any simple path of $H_\nu$ (resp. of $H'_\nu$) from $u$ to $v'$. Since $\sigma(H_\nu)=\sigma(H'_\nu)$, this implies that the segments $\overline{vv'}$ in $H$ and $H'$ are incident to $v$ from the same side. In this case, $\subs(H_\nu,H'_\nu)$ defines $a''_{u,l}=a_{u,l}$, $a''_{u,r}=a_{u,r}$, $a''_{v,l}=a'_{v,l}$, and $a''_{v,r}=a'_{v,r}$, which implies that all the angles around $u$ in~$H''$ coincide with the angles around $u$ and all the angles around $v$ in~$H''$ coincide with those around $v$ in~$H'$. Thus, Property~\textsf{(H1)} holds for $u$ and $v$ in $H''$. It remains to prove ~\textsf{(H2)} for $f''_l$ and $f''_r$.  Denote by $P_l$ (resp. $P_r$) the path of $H$ obtained by concatenating $p_l$ (resp. $p_r$) with the edge segment $\overline{vv'}$. Analogously, denote by $P'_l$ (resp. $P'_r$) the path of $H'$ obtained by concatenating $p'_l$ (resp. $p'_r$) with the edge segment $\overline{vv'}$. Since  $\sigma(H_\nu)=\sigma(H'_\nu)$, with the usual notation we have $n(P_l) = n(P'_l)$ and $n(P_r) = n(P'_r)$. Since, as observed above, the segments $\overline{vv'}$ in $H$ and $H'$ are incident to $v$ from the same side, and since the angles at $u$ are the same in $H''$ and in $H$, we have $N_{90}(f''_l)-N_{270}(f''_l)=N_{90}(f_l)-N_{270}(f_l)$ and $N_{90}(f''_r)-N_{270}(f''_r)=N_{90}(f_r)-N_{270}(f_r)$, which imply Property~\textsf{(H2)} for $f''_l$ and $f''_r$.     
	
	\begin{figure}[tb]
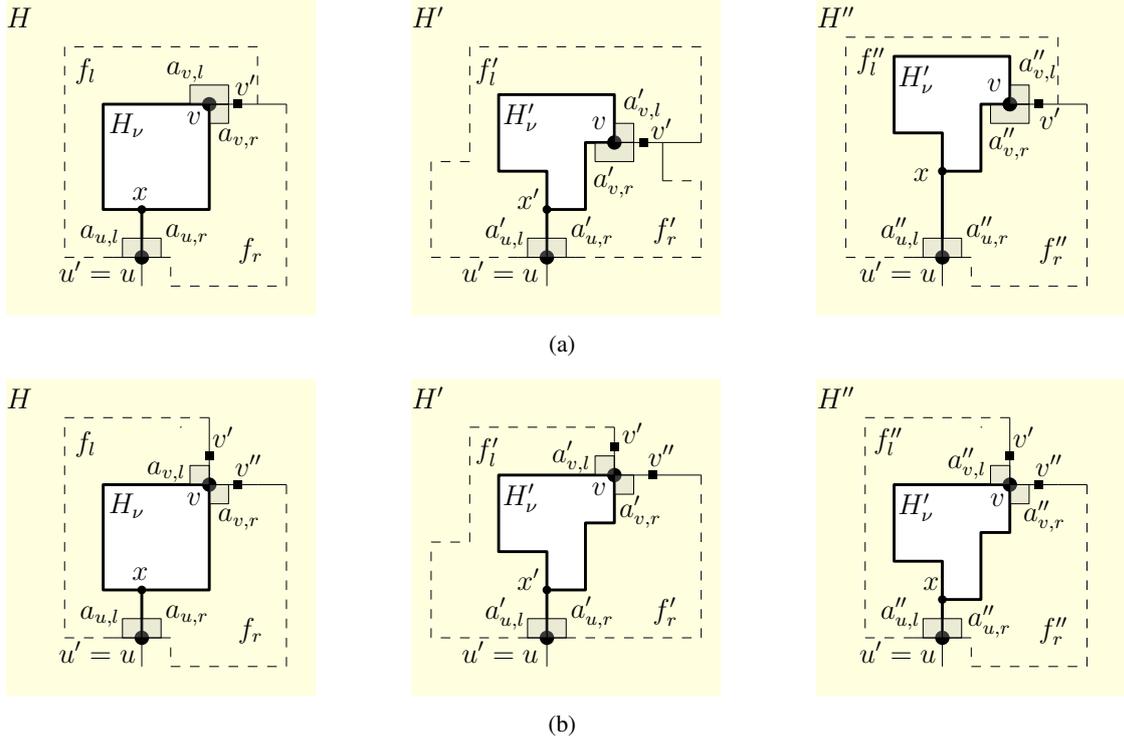

		\centering
		\begin{subfigure}{1\columnwidth}
			\centering
			\includegraphics[width=0.9\columnwidth,page=2]{substitution.pdf}
			\subcaption{\centering}
			\label{fi:substitution-21}
		\end{subfigure}
		\hfil
		\begin{subfigure}{1\columnwidth}
			\centering
			\includegraphics[width=0.9\columnwidth,page=3]{substitution.pdf}
			\subcaption{\centering}
			\label{fi:substitution-22}
		\end{subfigure}
		\caption{Case~2 of Theorem~\ref{th:substitution}: Schematic illustration of the graph $H''$ defined by $\subs(H_\nu,H'_\nu)$ when (a) $\outdeg(v)=1$ and (b) $\outdeg(v)=2$. }\label{fi:substitution-2}
	\end{figure}

	Suppose now that $\outdeg_\nu(v)=2$ (see \cref{fi:substitution-22}). In this case, $u$ and its alias vertex $u'$ coincide while $v$ has two alias vertices $v'$ and $v''$, which subdivides the two edge segments incident to $v$ in $H$ and in $H'$. Since $\deg(v)=4$, the angles at $v$ are all $90^\circ$ degree angles, both in $H$ and in $H'$. It follows that, the angles at $u$ and $v$ in $H''$ are the same as in $H$, i.e., Property~\textsf{(H1)} holds. Denote by $P_l$ (resp. $P_r$) the path of $H$ obtained by concatenating $p_l$ (resp. $p_r$) with the edge segment $\overline{vv'}$ (resp. $\overline{vv''}$). Analogously, denote by $P'_l$ (resp. $P'_r$) the path of $H'$ obtained by concatenating $p'_l$ (resp. $p'_r$) with the edge segment $\overline{vv'}$  (resp. $\overline{vv''}$). Since $\sigma(H_\nu)=\sigma(H'_\nu)$, we have $\frac{n(P_l)+n(P_r)}{2} = \frac{n(P'_l)+n(P'_r)}{2}$. On the other hand, since all angles at $v$ are right angles in $H$ and in $H'$, we have $n(P_r) = n(P_l) + 1$ and $n(P'_r) = n(P'_l) + 1$. This implies that $n(P_l) = n(P'_l)$ and $n(P_r) = n(P'_r)$, which, together with the fact that the angles at $u$ are the same in $H$ and $H''$, implies that $N_{90}(f''_l)-N_{270}(f''_l)=N_{90}(f_l)-N_{270}(f_l)$ and $N_{90}(f''_r)-N_{270}(f''_r)=N_{90}(f_r)-N_{270}(f_r)$. Hence, Property~\textsf{(H2)} holds for $f''_l$ and $f''_r$.     
	
	\smallskip\noindent{\bf Case 3: $\indeg_\nu(u)>1$ and $\indeg_\nu(v)=1$.} This case is symmetric to Case~2.
	
	\smallskip\noindent{\bf Case 4: $\indeg_\nu(u)>1$ and $\indeg_\nu(v)>1$.} In this case, there are three non-symmetric subcases to analyze, depending on the outdegree of $u$ and of $v$, i.e., $\outdeg_\nu(u)=\outdeg_\nu(v)=1$, or $\outdeg_\nu(u)=1$ and $\outdeg_\nu(v)=2$ (symmetrically $\outdeg_\nu(u)=2$ and $\outdeg_\nu(v)=1$), or $\outdeg_\nu(u)=\outdeg_\nu(v)=2$.
	In all these cases, for a pole $w \in \{u,v\}$, the angles defined by $\subs(H_\nu, H'_\nu)$ around $w$ in $H''$ are the same as in $H'$ (note that if $\outdeg_\nu(w)=2$, the angles at $w$ are all right angles, and they coincide both in $H$ and $H'$). Hence, Property~\textsf{(H1)} holds for $u$ and $v$ in $H''$. About Property~\textsf{(H2)}, we analyze the different subcases separately: 
	\begin{itemize}
		\item $\outdeg_\nu(u)=\outdeg_\nu(v)=1$ (see \cref{fi:substitution-41}). Each of the two poles $u$ and $v$ has a single alias vertex, denoted as $u'$ and $v'$, respectively. Let $P_l$ (resp. $P_r$) be the path of $H$ obtained by concatenating $p_l$ (resp. $p_r$) with the segments $\overline{u'u}$ and $\overline{vv'}$. Analogously, let $P'_l$ (resp. $P'_r$) be the path of $H'$ obtained by concatenating $p'_l$ (resp. $p'_r$) with the segments $\overline{u'u}$ and $\overline{vv'}$. Using the same notation as in the previous cases, we have $\sigma(H_\nu)=n(P_l)=n(P_r)$ and $\sigma(H'_\nu)=n(P'_l)=n(P'_r)$. Hence, since $\sigma(H_\nu)=\sigma(H'_\nu)$, we have $n(P_l)=n(P_r)=n(P'_l)=n(P'_r)$, which implies that $N_{90}(f''_l)-N_{270}(f''_l)=N_{90}(f_l)-N_{270}(f_l)$ and $N_{90}(f''_r)-N_{270}(f''_r)=N_{90}(f_r)-N_{270}(f_r)$. Hence, Property~\textsf{(H2)} holds for $f''_l$ and $f''_r$. 
		
		\item $\outdeg_\nu(u)=1$ and $\outdeg_\nu(v)=2$ (see \cref{fi:substitution-42}). The pole $u$ has a single alias vertex $u'$, while $v$ has two alias vertices $v'$ and $v''$. Let $P_l$ (resp. $P_r$) be the path of $H$ obtained by concatenating $p_l$ (resp. $p_r$) with the segments $\overline{u'u}$ and $\overline{vv'}$ (resp. $\overline{vv''}$). Analogously, let $P'_l$ (resp. $P'_r$) be the path of $H'$ obtained by concatenating $p'_l$ (resp. $p'_r$) with the segments $\overline{u'u}$ and $\overline{vv'}$ (resp. $\overline{vv''}$). Since $\sigma(H_\nu)=\sigma(H'_\nu)$, we have $\frac{n(P_l)+n(P_r)}{2}=\frac{n(P'_l)+n(P'_r)}{2}$. Also, since all the angles at $v$ are right angles, we have $n(P_r) = n(P_l) + 1$ and $n(P'_r) = n(P'_l) + 1$, which implies $n(P_l)=n(P'_l)$ and $n(P_r)=n(P'_r)$. Hence, $N_{90}(f''_l)-N_{270}(f''_l)=N_{90}(f_l)-N_{270}(f_l)$ and $N_{90}(f''_r)-N_{270}(f''_r)=N_{90}(f_r)-N_{270}(f_r)$, i.e.,  Property~\textsf{(H2)} holds for $f''_l$ and $f''_r$.       
		
		\begin{figure}[tb]
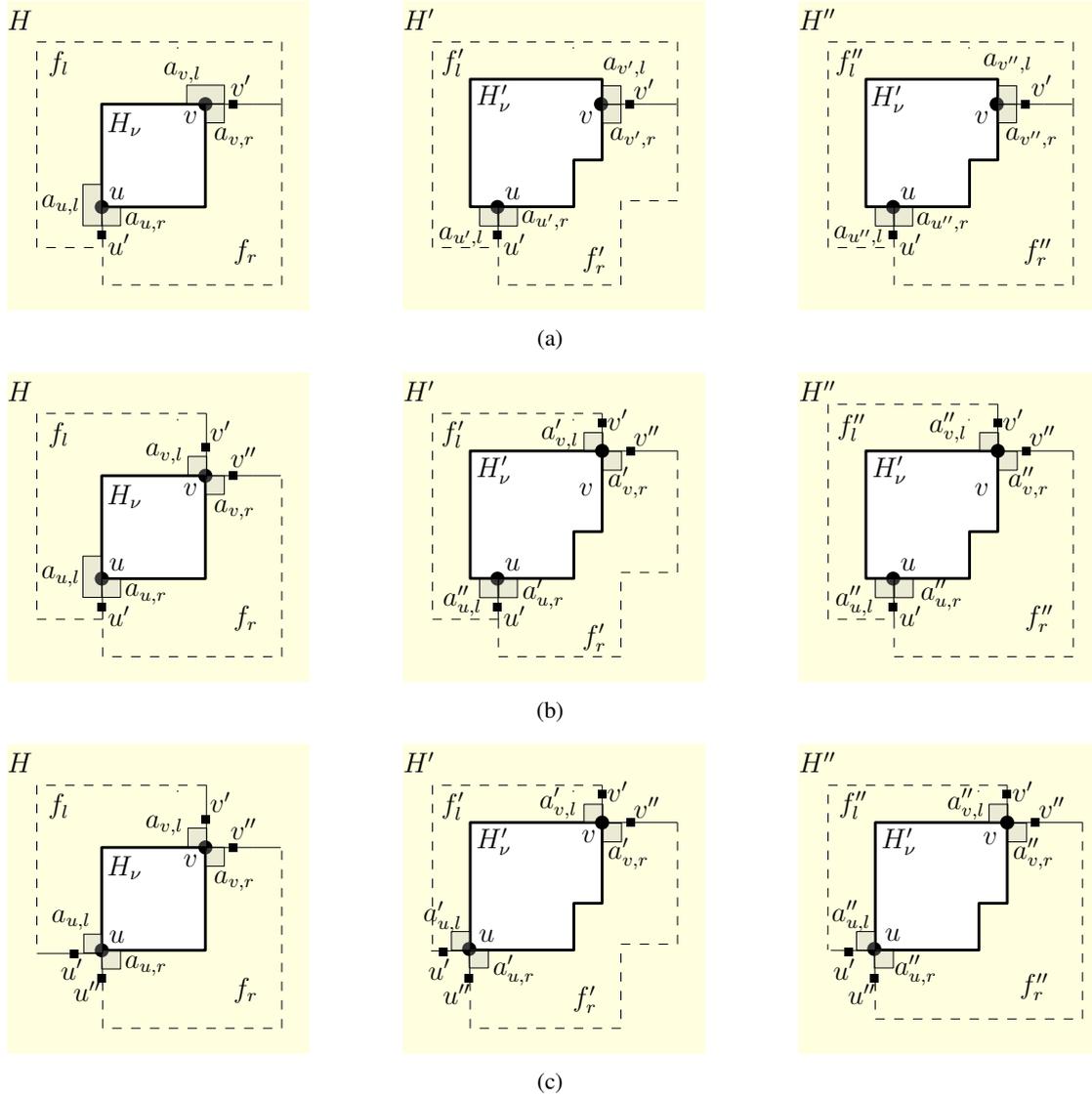

			\centering
			\begin{subfigure}{1\columnwidth}
				\centering
				\includegraphics[width=0.9\columnwidth,page=4]{substitution.pdf}
				\subcaption{\centering}
				\label{fi:substitution-41}
			\end{subfigure}
			\hfil
			\begin{subfigure}{1\columnwidth}
				\centering
				\includegraphics[width=0.9\columnwidth,page=5]{substitution.pdf}
				\subcaption{\centering}
				\label{fi:substitution-42}
			\end{subfigure}
			\hfill
			\begin{subfigure}{1\columnwidth}
				\centering
				\includegraphics[width=0.9\columnwidth,page=6]{substitution.pdf}
				\subcaption{\centering}
				\label{fi:substitution-43}
			\end{subfigure}
			\caption{Case~3 of Theorem~\ref{th:substitution}: Schematic illustration of the graph $H''$ defined by $\subs(H_\nu,H'_\nu)$ when (a) $\outdeg(u)=\outdeg(v)=1$, (b) $\outdeg(u)=1$ and $\outdeg(v)=2$, and (c) $\outdeg(u)=\outdeg(v)=2$.}\label{fi:substitution-4}
		\end{figure}
		
		\item $\outdeg_\nu(u)=\outdeg_\nu(v)=2$ (see \cref{fi:substitution-43}). Each of the two poles $u$ and $v$ has two alias vertices, denoted as $\{u',u''\}$ and $\{v',v''\}$, respectively. Let $P_l$ (resp. $P_r$) be the path of $H$ obtained by concatenating $p_l$ (resp. $p_r$) with the segments $\overline{u'u}$ and $\overline{vv'}$ (resp. $\overline{u''u}$ and $\overline{vv''}$). Analogously, let $P'_l$ (resp. $P'_r$) be the path of $H'$ resulting from the concatenation of~$p'_l$ (resp. $p'_r$) with the segments $\overline{u'u}$ and $\overline{vv'}$ (resp. $\overline{u''u}$ and $\overline{vv''}$). Since $\sigma(H_\nu)=\sigma(H'_\nu)$, we have $\frac{n(P_l)+n(P_r)}{2} = \frac{n(P'_l)+n(P'_r)}{2}$. Also, since all the angles at $u$ and $v$ are right angles, we have $n(P_r) = n(P_l) + 2$ and $n(P'_r) = n(P'_l) + 2$. This implies that $n(P_l)=n(P'_l)$ and $n(P_r)=n(P'_r)$, which in turns implies that $N_{90}(f''_l)-N_{270}(f''_l)=N_{90}(f_l)-N_{270}(f_l)$ and $N_{90}(f''_r)-N_{270}(f''_r)=N_{90}(f_r)-N_{270}(f_r)$. Hence, Property~\textsf{(H2)} holds for $f''_l$ and $f''_r$.     
	\end{itemize}
\end{proof}

Based on Theorem~\ref{th:substitution}, in the following we can assume that two orthogonal components with the same spirality are equivalent, and we can describe the set of possible orthogonal representations for a component in terms of their values of spiralities.


\section{Rectilinear Plane Series-Parallel Graphs}\label{se:rect}
This section characterizes rectilinear plane series-parallel graphs.
Let $G$ be a plane series-parallel 4-graph. If $G$ is biconnected let $e$ be any edge on the external face of $G$; otherwise, we add a dummy edge $e$ that makes it biconnected (recall that, if $G$ is not biconnected we are assuming that the dummy edge $e$ can always be added in the external face).
Let $T$ be the SPQ$^*$-tree of $G$ with respect to~$e$ and let $\nu$ be a node of $T$.
%
We say that a component $G_{\nu}$ \emph{admits spirality $\sigma_\nu$} or, equivalently, that $\nu$ admits spirality $\sigma_\nu$, if there exists a rectilinear planar representation $H_\nu$ of $G_\nu$ with spirality $\sigma_\nu$ in some rectilinear planar representation~$H$ of~$G$.
The following lemmas immediately derive from the results by Di Battista et al~\cite{DBLP:journals/siamcomp/BattistaLV98}, which for any S-node or P-node $\nu$, relate the values of spirality for an orthogonal representation of $G_\nu$ to the values of spirality of the orthogonal representations of the child components of $G_\nu$ (i.e., the components corresponding to the children of~$\nu$). 
%
%
Namely, \cref{le:spirality-S-node} concentrates on an S-nodes, \cref{le:spirality-P-node-3-children} on P-nodes with three children, and \cref{le:spirality-P-node-2-children} on P-nodes with two children.  
See also \cref{fi:spirality-relationships} for an illustration.

\begin{lemma}[\cite{DBLP:journals/siamcomp/BattistaLV98}]\label{le:spirality-S-node}
	Let $\nu$ be an S-node of $T$ with children $\mu_1, \dots, \mu_h$ $(h \geq 2)$. The component $G_\nu$ admits spirality $\sigma_\nu$ if and only if $\sigma_\nu = \sum_{i=1}^{h}\sigma_{\mu_i}$, where $\sigma_{\mu_i}$ is a spirality value admitted by $G_{\mu_i}$ $(1 \leq i \leq h)$.
\end{lemma}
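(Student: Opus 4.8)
\textbf{Proof plan for \cref{le:spirality-S-node}.} The plan is to prove the two directions of the biconditional separately, in both cases reasoning about a spine of $G_\nu$ that is built by concatenating spines of the child components $G_{\mu_1}, \dots, G_{\mu_h}$.

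First I would establish the ``only if'' direction. Suppose $H$ is a rectilinear representation of $G$ in which $H_\nu$ has spirality $\sigma_\nu$. Since $\nu$ is an S-node, the poles $u, v$ of $\nu$ are connected inside $G_\nu$ by the concatenation of paths through the poles of the children: writing the poles of $\mu_i$ as $\{w_{i-1}, w_i\}$ with $w_0 = u$ and $w_h = v$, any simple path $P^{uv}$ inside $G_\nu$ decomposes as $P^{w_0 w_1} \cdot P^{w_1 w_2} \cdots P^{w_{h-1} w_h}$, where each $P^{w_{i-1} w_i}$ lies inside $G_{\mu_i}$. Because (by \cref{th:substitution} and the cited results of Di Battista et al.) the spirality of $H_\nu$ does not depend on the choice of $P^{uv}$, I can choose the spine of $H_\nu$ to run through these sub-spines. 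The key observation is that each internal pole $w_i$ ($1 \le i \le h-1$) has indegree $1$ inside $G_{\mu_i}$ (it is the sink of $G_{\mu_i}$) and indegree $1$ inside $G_{\mu_{i+1}}$ (its source), so the alias-vertex construction makes the ``contribution at $w_i$'' the same whether we view $w_i$ as an endpoint of a spine of $H_{\mu_i}$, of $H_{\mu_{i+1}}$, or as an interior vertex of the spine of $H_\nu$; the right-minus-left turn counts therefore add up, and one gets $n(S) = \sum_i n(S_{\mu_i})$ for the corresponding spines, hence $\sigma_\nu = \sum_{i=1}^h \sigma_{\mu_i}$ after averaging over the (at most two) alias vertices of the outer poles $u$ and $v$. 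Each $H_{\mu_i}$ is a rectilinear representation of $G_{\mu_i}$ appearing inside the rectilinear representation $H$ of $G$, so each $\sigma_{\mu_i}$ is a value admitted by $G_{\mu_i}$.

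For the ``if'' direction, assume $\sigma_\nu = \sum_{i=1}^h \sigma_{\mu_i}$ where each $\sigma_{\mu_i}$ is admitted by $G_{\mu_i}$, witnessed by a rectilinear representation $H^{(i)}$ of $G$ in which $H^{(i)}_{\mu_i}$ has spirality $\sigma_{\mu_i}$. Starting from any rectilinear representation $H$ of $G$ (which exists because some admitted spirality exists), I would apply \cref{th:substitution} iteratively: substitute $H_{\mu_i}$ with the component $H^{(i)}_{\mu_i}$ of the desired spirality, for $i = 1, \dots, h$. Each substitution is legal because the substituted component has exactly the spirality of the one it replaces only after we have chosen it to be $\sigma_{\mu_i}$ — wait, here I must be careful: \cref{th:substitution} requires equal spirality of the old and new component, so the cleaner route is to not start from an arbitrary $H$ but to build the target representation of $G_\nu$ directly by gluing the $H^{(i)}_{\mu_i}$'s in series (identifying sink of one with source of the next), check via the turn-count argument above that the resulting component has spirality $\sum_i \sigma_{\mu_i} = \sigma_\nu$, and then invoke \cref{th:substitution} once to substitute this glued component into $H$ in place of $H_\nu$ (whose spirality, a priori, may differ). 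Since \cref{th:substitution} as stated allows substitution only between components of equal spirality, the precise statement to use is that $G_\nu$ admits \emph{some} spirality, and among its admissible spiralities the value $\sum_i \sigma_{\mu_i}$ is realizable; the gluing construction shows this value is achievable as a rectilinear representation of $G_\nu$, and then any rectilinear $H$ of $G$ whose restriction to $G_\nu$ has that spirality (obtained by the substitution argument applied to whichever spirality $H$ originally had, chaining through intermediate values if $G_\nu$'s admissible set is an interval, as the paper notes it is) completes the proof.

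The main obstacle I anticipate is the bookkeeping at the internal poles $w_i$: making fully rigorous the claim that the turn contributions compose additively requires carefully unwinding the alias-vertex definition in each of cases $(i)$–$(iii)$ for both the ``inside'' and ``outside'' degrees of $w_i$ as they appear in $G_{\mu_i}$, in $G_{\mu_{i+1}}$, and in $G_\nu$, and checking that the angle at $w_i$ used in each sub-spine count is consistent. A secondary subtlety is parity/semi-integrality: I must confirm that the sum $\sum_i \sigma_{\mu_i}$ has the parity dictated by the number of alias vertices of the poles of $\nu$, which follows because an internal pole $w_i$ always contributes an even count (indegree $1$ on both sides) while the parities at $u$ and $v$ are inherited unchanged from $\mu_1$ and $\mu_h$ respectively. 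Since the bulk of the additive turn-count reasoning is exactly the content of the cited Di Battista et al.\ result, the write-up can lean on that citation and only needs to highlight how \cref{th:substitution} upgrades ``interchangeable under fixed pole angles'' to ``interchangeable unconditionally,'' which is what licenses the free gluing of the child components.
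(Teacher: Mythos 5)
The paper gives no proof of \cref{le:spirality-S-node}: it is imported from Di Battista et al.~\cite{DBLP:journals/siamcomp/BattistaLV98}, so there is no in-paper argument to compare yours against; judged on its own terms, your sketch has two genuine gaps. The first is that your ``key observation'' --- that each internal pole $w_i$ has indegree $1$ inside $G_{\mu_i}$ and inside $G_{\mu_{i+1}}$ --- is false. The children of an S-node may be P-nodes (the tree is only normalized so that an S-node has no S-node child), and the sink of a P-component has indegree $2$ or $3$ in that component; see the child $\mu_2$ in \cref{fi:allspiralities-s}. When $\indeg_{\mu_i}(w_i)>1$, the alias vertex of $w_i$ relative to $\mu_i$ is a dummy vertex lying on an edge of the \emph{neighbouring} component, so consecutive sub-spines overlap and the turn at $w_i$ is shared between them; the additivity of turn counts still holds, but not for the reason you give, and your parity argument (``an internal pole always contributes an even count'') rests on the same false premise. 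This case analysis, which you defer to the end as ``bookkeeping,'' is essentially the whole content of the lemma rather than a side issue.

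The second gap is in the ``if'' direction. ``Admits spirality $\sigma$'' is a global notion: a rectilinear representation of all of $G$ must exist whose restriction to $G_\nu$ has spirality $\sigma$. Gluing the witnesses $H^{(i)}_{\mu_i}$ in series does give a representation of $G_\nu$ with spirality $\sum_i\sigma_{\mu_i}$, but \cref{th:substitution} only licenses substituting it into a representation $H$ of $G$ whose restriction to $G_\nu$ \emph{already} has that spirality --- it cannot change the spirality of a component. Your fallback of chaining through intermediate values ``since the admissible set is an interval'' is circular, because that interval property (\cref{le:S-representability}) is itself derived from the present lemma. Closing this gap requires the global feasibility machinery of~\cite{DBLP:journals/siamcomp/BattistaLV98}, which is precisely why the paper cites the result instead of reproving it.
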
	

\begin{lemma}[\cite{DBLP:journals/siamcomp/BattistaLV98}]\label{le:spirality-P-node-3-children}
	Let $\nu$ be a P-node of $T$ with three children $\mu_l$, $\mu_c$, and $\mu_r$. $G_{\nu}$ admits spirality $\sigma_\nu$ with $G_{\mu_l}$, $G_{\mu_c}$, $G_{\mu_r}$ in this left-to-right order if and only if there exist three values $\sigma_{\mu_l}$, $\sigma_{\mu_c}$, and $\sigma_{\mu_r}$ such that: (i) $G_{\mu_l}$, $G_{\mu_c}$, $G_{\mu_r}$ admit spirality $\sigma_{\mu_l}$, $\sigma_{\mu_c}$, $\sigma_{\mu_r}$, respectively; and (ii) $\sigma_\nu = \sigma_{\mu_l} - 2 = \sigma_{\mu_c} = \sigma_{\mu_r} + 2$.	
\end{lemma}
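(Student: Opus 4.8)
The plan is to isolate a single ``rigidity'' property of the parallel composition and derive both directions of the lemma from it: namely, \emph{in every orthogonal representation $H$ of $G$ in which $G_{\mu_l}, G_{\mu_c}, G_{\mu_r}$ occur in this left-to-right order, the restrictions satisfy $\sigma(H_\nu) = \sigma(H_{\mu_l}) - 2 = \sigma(H_{\mu_c}) = \sigma(H_{\mu_r}) + 2$.} The starting point is a degree count. A P-node with three children is necessarily non-root, since the root P$^{r}$-node has exactly two children. For a pole $w \in \{u,v\}$ of such a node, each of the three children is incident to $w$ with at least one edge (as $w$ is a terminal of that child), and the rest of $G$ is incident to $w$ with at least one further edge (as $\nu$ is non-root and $G$ is biconnected, so $G \setminus (G_\nu \setminus \{u,v\})$ must reattach to both poles). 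Since $G$ has maximum degree four, this forces $\deg_G(w) = 4$, $\indeg_\nu(w) = 3$, $\outdeg_\nu(w) = 1$, and $\deg_{G_{\mu_i}}(w) = 1$ for each child $\mu_i$. I will use three consequences: (a) in every orthogonal representation, all angles at $u$ and $v$ equal $90^\circ$; (b) with respect to $\nu$, each pole has a single alias vertex, which splits the unique edge of $G$ incident to that pole outside $G_\nu$; (c) with respect to each child $\mu_i$, the alias vertices coincide with the poles, so $\sigma(H_{\mu_i})$ equals the turn number $n(p)$ of \emph{any} simple $u$--$v$ path $p$ of $H_{\mu_i}$, in particular of either of its two boundary paths.

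To prove rigidity I would argue in two steps. First, $\sigma(H_\nu) = \sigma(H_{\mu_c})$: choose a spine of $H_\nu$ passing through $G_{\mu_c}$; by (b), the half-edge joining a pole $w$ to its alias vertex lies on the edge of $G$ at $w$ outside $G_\nu$, and in the rotation at $w$ this edge is two positions (hence $180^\circ$) away from the edge of $G_{\mu_c}$ at $w$, so it is collinear with it and the extension contributes no turn at $w$; consequently this spine of $H_\nu$ has the same turn number as a spine of $H_{\mu_c}$. Second, for the other equalities I would apply Property~\textsf{(H2)} to the internal face $f$ of $H$ enclosed between $G_{\mu_l}$ and $G_{\mu_c}$, which is a simple cycle since those components share only $u$ and $v$. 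By the left-to-right convention of \cref{se:preli}, a clockwise traversal of $\partial f$ follows one boundary path of $G_{\mu_c}$ from $u$ to $v$ and one boundary path of $G_{\mu_l}$ from $v$ to $u$; their turn numbers are $\sigma(H_{\mu_c})$ and $-\sigma(H_{\mu_l})$ by (c), while by (a) the angle of $f$ at each pole is $90^\circ$, i.e.\ a right turn counting $+1$. Matching this with Property~\textsf{(H2)} forces $\sigma(H_{\mu_l})$ and $\sigma(H_{\mu_c})$ to differ by exactly $2$, the sign being fixed by the orientation conventions so that $\sigma(H_{\mu_l}) = \sigma(H_{\mu_c}) + 2$; the symmetric argument on the face between $G_{\mu_c}$ and $G_{\mu_r}$ gives $\sigma(H_{\mu_c}) = \sigma(H_{\mu_r}) + 2$.

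Given rigidity, both directions are short. For necessity, if $G_\nu$ admits $\sigma_\nu$ via a rectilinear representation $H$ of $G$ with $\sigma(H_\nu) = \sigma_\nu$, then $H_{\mu_l}, H_{\mu_c}, H_{\mu_r}$ are rectilinear representations of the children occurring in a rectilinear representation of $G$, so each $G_{\mu_i}$ admits $\sigma_{\mu_i} := \sigma(H_{\mu_i})$, and rigidity is exactly condition~(ii). For sufficiency, suppose $G_{\mu_l}, G_{\mu_c}, G_{\mu_r}$ admit $\sigma_{\mu_l}, \sigma_{\mu_c}, \sigma_{\mu_r}$ with (ii) holding; from ``$G_{\mu_c}$ admits $\sigma_{\mu_c}$'' pick a rectilinear representation $H$ of $G$ with $\sigma(H_{\mu_c}) = \sigma_{\mu_c}$, and conclude by rigidity that $\sigma(H_\nu) = \sigma(H_{\mu_c}) = \sigma_{\mu_c} = \sigma_\nu$, the last step by (ii); hence $G_\nu$ admits $\sigma_\nu$. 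Note that the hypotheses on $G_{\mu_l}$ and $G_{\mu_r}$ are used only through (ii), since by rigidity their spiralities inside $H_\nu$ are automatically $\sigma_\nu \pm 2$, so \cref{th:substitution} is not needed here (though one could alternatively build $H$ from the three given representations by two substitutions).

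The step I expect to be the main obstacle is the orientation bookkeeping inside the Property~\textsf{(H2)} argument: deciding consistently which boundary path of $G_{\mu_l}$ and which of $G_{\mu_c}$ lie on the clockwise boundary walk of $f$, in which direction each is traversed, and hence the exact sign of $\sigma(H_{\mu_l}) - \sigma(H_{\mu_c})$, all in accordance with the fixed conventions of \cref{se:preli} (external face to the right of $e$; left-to-right order of P-node children matching the embedding; $p_l, p_r$ defined via clockwise/counterclockwise traversal of a component's external boundary). The degree count, though elementary, also needs the small verification that $\outdeg_\nu(w) = 1$ at each pole of a non-root P-node in a biconnected $4$-graph.
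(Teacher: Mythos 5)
Your proof is correct. Note, however, that the paper does not prove this lemma at all: it is imported verbatim from Di Battista, Liotta, and Vargiu \cite{DBLP:journals/siamcomp/BattistaLV98} (``The following lemmas immediately derive from the results by Di Battista et al.''), so there is no in-paper argument to compare against. What you have produced is a self-contained derivation, and it is sound: the degree count is right (each of the three children meets each pole, a non-root component of a biconnected graph has $\outdeg_\nu(w)\geq 1$ at each pole, and the degree bound then forces $\deg(w)=4$, $\indeg_{\mu_i}(w)=1$, $\outdeg_\nu(w)=1$, which is exactly why the paper's \cref{le:spirality-P-node-2-children}-style coefficients $k$ and $\alpha$ disappear in the three-child case); the collinearity of the external edge with the middle child's edge at a degree-4 pole correctly gives $\sigma(H_\nu)=\sigma(H_{\mu_c})$ (this mirrors the alias-vertex reasoning in Case~4 of \cref{th:substitution}); and the turn count of $4$ around the internal face between consecutive siblings, with a $+1$ right turn at each pole, correctly yields the offset of $2$ (the same computation the paper performs explicitly for two-child P-nodes in \cref{le:P-2-children-support-type1}, where $a_c=\sigma_{\mu_l}-\sigma_{\mu_r}+4-\alpha_u^l-\alpha_u^r-\alpha_v^l-\alpha_v^r$). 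Your observation that the ``rigidity'' property makes both directions immediate — in particular that sufficiency needs only the hypothesis on $G_{\mu_c}$ together with condition~(ii) — is a nice economy, and the one step you flag as delicate (the sign in the clockwise face traversal) does resolve the way you claim under the paper's conventions: the clockwise boundary of the face between $G_{\mu_l}$ and $G_{\mu_c}$ traverses the left child from $u$ to $v$ and the center child from $v$ to $u$, giving $\sigma_{\mu_l}-\sigma_{\mu_c}+2=4$.
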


	\begin{figure}[tb]
	\centering
	\begin{subfigure}{0.24\columnwidth}
		\centering
		\includegraphics[width=1\columnwidth,page=1]{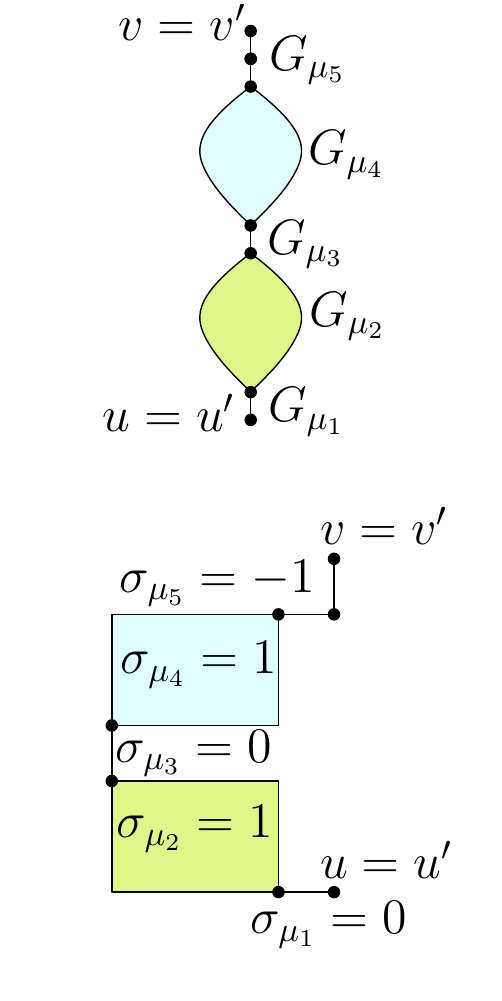}
		\subcaption{\centering S-node}
		\label{fi:spirality-relationships-S}
	\end{subfigure}
	\hfill
	\begin{subfigure}{0.26\columnwidth}
		\centering
		\includegraphics[width=1\columnwidth,page=2]{spirality-relationships.pdf}
		\subcaption{\centering P-node 3 children}
		\label{fi:spirality-relationships-P3}
	\end{subfigure}
	\hfill
	\begin{subfigure}{0.342\columnwidth}
		\centering
		\includegraphics[width=1\columnwidth,page=3]{spirality-relationships.pdf}
		\subcaption{\centering P-node 2 children}
		\label{fi:spirality-relationships-P2}
	\end{subfigure}
	\caption{Illustration of the relationships in: (a) Lemma~\ref{le:spirality-S-node} for S-nodes, (b) Lemma~\ref{le:spirality-P-node-3-children} for P-nodes with three children, and (c) Lemma~\ref{le:spirality-P-node-2-children} for P-nodes with two children.}\label{fi:spirality-relationships}
\end{figure}

If $\nu$ is a P-node with two children, denote by $\mu_l$ and $\mu_r$ its left and right child in $T$, respectively. If $\nu$ is a P-node with three children, denote by $\mu_l$, $\mu_c$, and $\mu_r$, the three children of $\nu$ from left to right. Also, for each pole $w \in \{u,v\}$ of $\nu$, the \emph{leftmost angle} at $w$ in $H$ is the angle formed by the leftmost external edge and the leftmost internal edge of $H_\nu$ incident to~$w$. The \emph{rightmost angle} at $w$ in $H$ is defined symmetrically.
We define two binary variables $\alpha_w^l$ and $\alpha_w^r$ as follows: $\alpha_w^l = 0$ ($\alpha_w^r = 0$) if the leftmost (rightmost) angle at $w$ in $H$ is of $180^\circ$, while $\alpha_w^l = 1$ ($\alpha_w^r = 1$) if this angle is of $90^\circ$.
Observe that if $\deg(w)=4$ or if $\nu$ has three children, $\alpha_w^l = \alpha_w^r = 1$. Also, if $\nu$ has two children, define two additional variables $k_w^l$ and $k_w^r$ as follows: $k_w^d = 1$ if $\indeg_{\mu_d}(w)=\outdeg_{\nu}(w)=1$,
while $k_{w}^d=1/2$ otherwise, for $d \in \{l,r\}$. For example, in 
\cref{fi:spiralities-minbend} the component of $\nu_3$ is such that  $k_u^l=k_u^r=1$, $k_v^l =k_v^r=\frac{1}{2}$, $\alpha_u^l=0$, and $\alpha_u^r=\alpha_v^l=\alpha_v^r=1$; the component of $\nu_5$ is such that $k_u^l=k_u^r=1$, $k_v^l=\frac{1}{2}$, $k_v^r=1$, $\alpha_u^l=0$, and $\alpha_u^r=\alpha_v^l=\alpha_v^r=1$..

\begin{lemma}[\cite{DBLP:journals/siamcomp/BattistaLV98}]\label{le:spirality-P-node-2-children}
	Let $\nu$ be a P-node of $T$ with two children $\mu_l$ and $\mu_r$, and with poles $u$ and $v$. $G_{\nu}$ admits spirality $\sigma_\nu$ with $G_{\mu_l}$ and $G_{\mu_r}$ in this left-to-right order if and only if there exist six values $\sigma_{\mu_l}$, $\sigma_{\mu_r}$, $\alpha_u^l$, $\alpha_u^r$, $\alpha_v^l$, and $\alpha_v^r$ such that: (i) $G_{\mu_l}$ and $G_{\mu_r}$ admit spirality $\sigma_{\mu_l}$ and $\sigma_{\mu_r}$, respectively; (ii) $\alpha_w^l \in \{0,1\}$, $\alpha_w^r \in \{0,1\}$, and $1 \leq \alpha_w^l+\alpha_w^r \leq 2$ for any $w \in \{u,v\}$; and (iii) $\sigma_\nu = \sigma_{\mu_l} - k_{u}^l \alpha_{u}^l - k_{v}^l \alpha_{v}^l = \sigma_{\mu_r} + k_{u}^r \alpha_{u}^r + k_{v}^r\alpha_{v}^r$.
\end{lemma}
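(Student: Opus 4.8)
The statement of \cref{le:spirality-P-node-2-children} is attributed to Di Battista et al.~\cite{DBLP:journals/siamcomp/BattistaLV98}, so the plan is to give a self-contained derivation from first principles, using the definitions of spirality and the turn-counting characterization of orthogonal representations recalled in \cref{se:preli}. I would first fix notation: let $\nu$ be a P-node with poles $u,v$ and left/right children $\mu_l,\mu_r$, and let $H_\nu$ be an orthogonal representation of $G_\nu$ inside a rectilinear (or more generally orthogonal) representation $H$ of $G$. The restrictions $H_{\mu_l}$ and $H_{\mu_r}$ share exactly the two poles, with $H_{\mu_l}$ to the left of $H_{\mu_r}$; together they bound a single internal face $f$ of $H_\nu$ whose boundary is the right path $p_r(H_{\mu_l})$ of the left component followed (from $v$ back to $u$) by the reverse of the right path of $H_{\mu_r}$ — wait, more precisely $f$ is bounded by the right path of $H_{\mu_l}$ and the left path of $H_{\mu_r}$. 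The four variables $\alpha_u^l,\alpha_u^r,\alpha_v^l,\alpha_v^r$ record the angles at the two poles inside $H_\nu$ on the two ``sides'' (leftmost and rightmost internal faces of $G_\nu$), and the coefficients $k_w^d$ account for whether an alias vertex subdivides an outside edge segment at $w$ (value $1/2$) or not (value $1$), exactly mirroring the case split in the definition of spirality.

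\textbf{Key steps.} (1) \emph{Necessity.} Suppose $G_\nu$ has spirality $\sigma_\nu$ realized by $H_\nu$ in $H$. Choose a spine $S$ of $H_\nu$ that runs through $G_{\mu_l}$: it consists of $(u',u)$, a path $P^{uv}$ inside $G_{\mu_l}$, and $(v,v')$. By definition $\sigma_{\mu_l}=n(S^{u'v'}_{\mu_l})$ computed with the alias vertices of $\mu_l$. The key geometric observation is that the alias segment at $u$ for the node $\nu$ and the alias segment at $u$ for the child $\mu_l$ differ only by the angle $\alpha_u^l$ between the external edge(s) of $G_\nu$ at $u$ and the leftmost internal edge of $G_{\mu_l}$ at $u$: traversing from the $\nu$-alias of $u$ to the $\mu_l$-alias of $u$ adds exactly $k_u^l\alpha_u^l$ to the turn count (a right or left turn of the appropriate multiplicity; the factor $k_u^l$ is $1$ when there is a genuine turn at the single alias vertex and $1/2$ when the alias is split). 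The same holds at $v$ with $\alpha_v^l$. Hence $\sigma_\nu = \sigma_{\mu_l} - k_u^l\alpha_u^l - k_v^l\alpha_v^l$; choosing instead a spine through $G_{\mu_r}$ and using the rightmost internal edges gives $\sigma_\nu = \sigma_{\mu_r} + k_u^r\alpha_u^r + k_v^r\alpha_v^r$ (the sign flips because ``rightmost'' is the mirror of ``leftmost''). Property (ii), $1\le\alpha_w^l+\alpha_w^r\le 2$, is forced because the angle inside $G_\nu$ at pole $w$ is either $90^\circ$ (both indicators $1$), $180^\circ$ (one $0$, one $1$), or — when $\deg_\nu(w)=3$, e.g. three children — must be $90^\circ+90^\circ$; and the total angle around $w$ cannot make the inside angle $270^\circ$ or more since $w$ also has an outside edge. (2) \emph{Sufficiency.} Conversely, given values $\sigma_{\mu_l},\sigma_{\mu_r},\alpha_u^l,\alpha_u^r,\alpha_v^l,\alpha_v^r$ satisfying (i)–(iii), take rectilinear representations $H_{\mu_l},H_{\mu_r}$ realizing $\sigma_{\mu_l},\sigma_{\mu_r}$; by \cref{th:substitution} their internal structure is irrelevant beyond spirality, so I may assume they are drawn in convenient ``normal form''. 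Glue them along $u$ and $v$ with pole angles dictated by the $\alpha$'s. One must check \textsf{(H1)} at $u,v$ (the inside angles sum to $90^\circ$ or $180^\circ$ and leave room for the outside edges — this is exactly condition (ii)) and \textsf{(H2)} for the newly created internal face $f$ between the two children: walking clockwise around $f$, the right-minus-left turn count along the right path of $H_{\mu_l}$ minus that along the left path of $H_{\mu_r}$, plus the corner turns at $u$ and $v$ (which are $2-\alpha_u^l-\alpha_u^r$ style contributions after reconciling with the alias-vertex conventions), equals $4$; this is an algebraic consequence of the two equalities in (iii). (3) Conclude that $\sigma_\nu$ computed from the resulting spine equals the common value in (iii), so $G_\nu$ admits $\sigma_\nu$.

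\textbf{Main obstacle.} The delicate part is the bookkeeping of the coefficients $k_w^d$ and the passage between the alias-vertex conventions for $\nu$ versus for its children $\mu_l,\mu_r$. The definition of spirality splits into four cases according to $\indeg_\nu(w)$ and $\outdeg_\nu(w)$, and when we descend to a child $\mu_d$ the relevant quantity becomes $\indeg_{\mu_d}(w)$ versus $\outdeg_{\mu_d}(w)=\outdeg_\nu(w)+\deg_{\mu_{3-d}}(w)$ (the other child also contributes to the ``outside'' of $\mu_d$). Verifying that in every combination the increment to the turn count along the spine, when passing from the $\mu_d$-alias segment at $w$ to the $\nu$-alias segment at $w$, is precisely $k_w^d\alpha_w^d$ with the coefficient as defined ($1$ when $\indeg_{\mu_d}(w)=\outdeg_\nu(w)=1$, i.e.\ there is a single alias at $w$ both for $\nu$ and $\mu_d$; and $1/2$ otherwise, where the turn is ``shared'' between two alias copies) is the heart of the argument and requires a careful, case-by-case inspection of the local picture at each pole, cross-referenced with \cref{fi:spirality-relationships-P2}. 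Everything else — the face-count identity for $f$ and the $\textsf{(H1)}$ check — follows mechanically once this local increment lemma is in place.
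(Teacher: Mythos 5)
The paper does not actually prove this lemma: it is imported from Di Battista, Liotta, and Vargiu~\cite{DBLP:journals/siamcomp/BattistaLV98} (see the sentence ``The following lemmas immediately derive from the results by Di Battista et al.''), so there is no in-paper proof to compare against. Judged on its own, your plan has the right architecture --- turn-counting along a spine through each child for necessity, and gluing two representations with prescribed pole angles plus a check of \textsf{(H1)} and \textsf{(H2)} for sufficiency --- and it is consistent with how the paper uses the lemma downstream: the identity $a_c=\sigma_{\mu_l}-\sigma_{\mu_r}+4-\alpha_u^l-\alpha_u^r-\alpha_v^l-\alpha_v^r$ in the proof of \cref{le:P-2-children-support-type1} is exactly your face-count computation for the internal face between the two children.

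Two things need fixing before this becomes a proof. First, you misstate what the $\alpha$'s measure: by the paper's definition, $\alpha_w^l$ records the angle between the leftmost \emph{external} edge at $w$ and the leftmost internal edge of $H_\nu$ at $w$, i.e., an angle lying in the face $f_l$ \emph{outside} $H_\nu$, not an angle ``inside $H_\nu$'' as you write; your parenthetical invoking ``three children'' is also out of place for a two-child P-node. The correct reason for $\alpha_w^l+\alpha_w^r\ge 1$ is simply that if both flanking angles were $180^\circ$, the remaining angles at $w$ would have to sum to $0^\circ$ or less. This matters because the signs and placement of the $\pm k_w^d\alpha_w^d$ corrections in condition~(iii) come precisely from where these angles sit relative to the spine. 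Second, the entire content of the lemma is in the local increment claim with the coefficients $k_w^d\in\{1,\tfrac{1}{2}\}$ --- the factor $\tfrac{1}{2}$ appearing exactly when either $\sigma_\nu$ or $\sigma_{\mu_d}$ is defined as an average over two alias spines at $w$ --- and you explicitly defer this case analysis to ``careful inspection.'' As written, the proposal is a correct outline of the standard argument rather than a proof; the deferred bookkeeping is where all the work, and all the risk of sign errors, lives.
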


In the following we prove a condition under which the plane graph $G_\nu$ is rectilinear planar, assuming that its child components (if $\nu$ is not a leaf of $T$) are rectilinear planar. This condition depends on the type of node~$\nu$ and is referred to as \emph{representability condition} of $\nu$ (or, equivalently, of $G_\nu$). Also, if the representability condition holds for $\nu$, we denote by $I_\nu$ the set of values of spirality for which $G_\nu$ is rectilinear planar, i.e., $G_\nu$ admits spirality $\sigma_\nu$ if and only if $\sigma_\nu \in I_\nu$. We prove that $I_\nu$ is always an interval (of all integer or all semi-integer values) and call it the \emph{representability interval} of $\nu$ (or,~equivalently,~of~$G_\nu$).

\subsection{Representability condition for Q$^*$-nodes and S-nodes}\label{se:qs-rect-charact}

For a Q$^*$-node $\nu$ representing a chain of $\ell$ edges, we say that $\ell$ is the \emph{length} of $\nu$. As the next lemmas prove, the components of Q$^*$- and S-nodes are always rectilinear planar, i.e., the representability condition is always~true.  

\begin{lemma}\label{le:Q*-representability}
	Let $\nu$ be a Q$^*$-node of length $\ell$. Graph $G_\nu$ is always rectilinear planar (i.e., its representability condition is always true) and its representability interval is $I_\nu = [-\ell+1, \ell-1]$.
\end{lemma}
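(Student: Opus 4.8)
The plan is to prove the two claims of \cref{le:Q*-representability} separately: first that every value $\sigma_\nu$ in the interval $[-\ell+1,\ell-1]$ is admissible, and second that no value outside this interval is admissible. Throughout I will use the characterization of a rectilinear representation via Properties \textsf{(H1)} and \textsf{(H2)}, together with the turn-counting reformulation: for every simple cycle $C$, the number of right turns minus the number of left turns (walking clockwise on $C$) equals four. Since $G_\nu$ is a chain of $\ell$ edges $(s=w_0, w_1, \dots, w_\ell=t)$, any rectilinear representation $H_\nu$ is simply a ``staircase'' path, and its spirality is, by definition, the number of right turns minus the number of left turns encountered along the spine from the alias vertex of $s$ to the alias vertex of $t$. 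Because each internal vertex $w_i$ ($1 \le i \le \ell-1$) of the chain has degree two inside $G_\nu$, the angle at $w_i$ on one side of the spine is one of $90^\circ$, $180^\circ$, $270^\circ$, contributing a turn in $\{+1, 0, -1\}$ to $n(S)$; the $\ell-1$ internal bends are independent of one another.

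For the upper/lower bound (the ``only if'' direction), I would argue as follows. Along any spine $S$ of $H_\nu$, each of the $\ell-1$ internal vertices contributes at most one right turn and at least one left turn (i.e. a contribution in $[-1,+1]$), and the alias-vertex endpoints $s', t'$ are the actual poles (since a Q$^*$-node is a chain, each pole has indegree $1$ inside $G_\nu$, so $s'=s$ and $t'=t$ and there is a single spine); hence $|n(S)| \le \ell-1$, giving $\sigma_\nu = n(S) \in [-\ell+1, \ell-1]$. One should double check the degenerate case $\ell=1$: then $G_\nu$ is a single edge, there are no internal vertices, $n(S)=0$, and indeed $I_\nu = [0,0] = [-\ell+1,\ell-1]$. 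For the realizability direction (the ``if'' direction), given a target $\sigma_\nu = \sigma$ with $|\sigma| \le \ell-1$, I would explicitly construct an orthogonal shape of the chain: draw $s$ and then traverse the chain making $90^\circ$ turns on the appropriate side so as to accumulate exactly $\sigma$ net right turns over the $\ell-1$ internal vertices (e.g. $|\sigma|$ of the internal vertices get a $90^\circ$ or $270^\circ$ angle of the proper handedness and the remaining $\ell-1-|\sigma|$ get a straight $180^\circ$ angle), leaving straight segments otherwise. This local shape obviously satisfies \textsf{(H1)} at every internal vertex, and — crucially — I must argue that it extends to a rectilinear representation $H$ of the whole graph $G$; but this is immediate because the chain is a child component that can always be replaced inside any global rectilinear representation of $G$ by virtue of \cref{th:substitution}: $G$ itself has some rectilinear representation in which $G_\nu$ appears with some spirality $\sigma_0$, and $\subs$ lets us swap in the shape with spirality $\sigma$ provided such a shape of $G_\nu$ exists as an abstract orthogonal representation with the prescribed spirality (here I should also note that the parity of $\sigma$ must match: an S-node chain $\nu$ takes integer or semi-integer values according to the parity of the number of alias vertices of its poles, and the interval notation $[-\ell+1,\ell-1]$ is understood to range over the admissible parity class, so the constructed shape can always be chosen with the correct internal-angle/bend pattern to hit any value of the right parity).

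The main obstacle, and the point deserving the most care, is the interaction between the purely local construction of the chain's shape and the global feasibility requirement hidden in the definition of ``admits spirality $\sigma_\nu$'': the definition demands a rectilinear planar representation $H$ of all of $G$, not just of $G_\nu$ in isolation. I would handle this by first establishing that $G$ (being series-parallel with a reference edge, and with the chain as a subcomponent) is guaranteed to have at least one rectilinear representation in which $G_\nu$ has some spirality — this follows since a Q$^*$-node with $\ell \ge 1$ edges can always be drawn as a monotone staircase and fitted in, or more cleanly by induction bottoming out at this very lemma — and then invoking \cref{th:substitution} to transfer to the target spirality. A subtle secondary point is the semi-integrality bookkeeping when a pole of $\nu$ has outdegree $2$ (so that pole acquires two alias vertices); there the spirality is the average of two spine counts, but for a chain both spines still traverse the same $\ell-1$ internal vertices and differ only by the turn at the relevant pole, so the reachable set of averages is exactly the semi-integers (resp. integers) in $[-\ell+1,\ell-1]$, matching the stated interval under the paper's convention. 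Once these wrinkles are dispatched, the proof is a short verification; I would keep the write-up to a couple of paragraphs.
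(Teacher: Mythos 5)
Your core argument coincides with the paper's proof: the bound $\lvert\sigma_\nu\rvert\le\ell-1$ comes from the fact that each of the $\ell-1$ internal degree-2 vertices contributes a turn in $\{-1,0,+1\}$, and realizability is shown by the explicit staircase placing $\lvert\sigma\rvert$ turns of the appropriate handedness. Two of your ``wrinkles'' are, however, non-issues and one is based on a misreading: alias vertices are determined by the \emph{indegree} of a pole inside $G_\nu$, and for a Q$^*$-node both poles have indegree~$1$, so each alias vertex coincides with its pole, there is a single spine, and the spirality is always an integer --- the semi-integer case you discuss never arises here. Likewise, the paper does not route the realizability direction through \cref{th:substitution} or through a global representation of $G$; the representability interval is established for $G_\nu$ as a standalone rectilinear component, and global feasibility is handled later by the representability conditions at the ancestors (indeed your claim that $G$ is ``guaranteed to have at least one rectilinear representation'' is false in general). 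None of this invalidates your proof of the lemma, but the extra machinery is unnecessary.
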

\begin{proof}
	$G_\nu$ is a path with $\ell-1$ degree-2 vertices. For any integer $k \in [-\ell+1,0]$, a rectilinear planar representation $H_\nu$ of $G_\nu$ with spirality $k$ is obtained by making a left turn at $k$ degree-2 vertices of $G_\nu$ (going from the source to the sink pole), and no turn at any remaining vertex of $G_\nu$. Symmetrically, for any $k \in (0,\ell-1]$, we realize $H_\nu$ with spirality $k$ by making a right turn at exactly $k$ degree-2 vertices of $G_\nu$. It is clear that no values of spirality out of $I_\nu$ can be achieved.
\end{proof}	

\cref{fi:allspiralities-q} illustrates \cref{le:Q*-representability} for a Q$^*$-node $\nu$ of length $4$, for which $I_\nu=[-3,3]$. The figure depicts a rectilinear planar representation of $G_\nu$ with spirality $\sigma_\nu$ for every $\sigma_\nu \in I_\nu$. 

\begin{figure}[h]
	\captionsetup[subfigure]{labelformat=empty}
	\centering
	\begin{subfigure}{.09\columnwidth}
		\centering
		\includegraphics[width=\columnwidth,page=1]{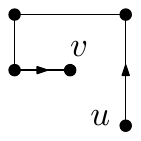}
		\subcaption{\centering $\sigma_\nu$=$-3$}
		\label{fi:allspiralities-q-1}
	\end{subfigure}
	\hfil
	\begin{subfigure}{.09\columnwidth}
		\centering
		\includegraphics[width=\columnwidth,page=2]{allspiralities-q.pdf}
		\subcaption{\centering $\sigma_\nu$=$-2$}
		\label{fi:allspiralities-q-2}
	\end{subfigure}
	\hfil
	\begin{subfigure}{.09\columnwidth}
		\centering
		\includegraphics[width=\columnwidth,page=3]{allspiralities-q.pdf}
		\subcaption{\centering $\sigma_\nu$=$-1$}
		\label{fi:allspiralities-q-3}
	\end{subfigure}
	\hfil
	\begin{subfigure}{.09\columnwidth}
		\centering
		\includegraphics[width=\columnwidth,page=4]{allspiralities-q.pdf}
		\subcaption{\centering $\sigma_\nu$=$0$}
		\label{fi:allspiralities-q-4}
	\end{subfigure}
	\hfil
	\begin{subfigure}{.09\columnwidth}
		\centering
		\includegraphics[width=\columnwidth,page=5]{allspiralities-q.pdf}
		\subcaption{\centering $\sigma_\nu$=$1$}
		\label{fi:allspiralities-q-5}
	\end{subfigure}
	\hfil
	\begin{subfigure}{.09\columnwidth}
		\centering
		\includegraphics[width=\columnwidth,page=6]{allspiralities-q.pdf}
		\subcaption{\centering $\sigma_\nu$=$2$}
		\label{fi:allspiralities-q-6}
	\end{subfigure}
	\hfil
	\begin{subfigure}{.09\columnwidth}
		\centering
		\includegraphics[width=\columnwidth,page=7]{allspiralities-q.pdf}
		\subcaption{\centering $\sigma_\nu$=$3$}
		\label{fi:allspiralities-q-7}
	\end{subfigure}
	\caption{Illustration of \cref{le:Q*-representability}. For a $Q^*$ node $\nu$ of length $4$ we have $I_\nu=[-3,3]$.  }\label{fi:allspiralities-q}
\end{figure}

\begin{lemma}\label{le:S-representability}
	Let $\nu$ be an S-node with $h \geq 2$ children $\mu_1, \dots, \mu_h$. Suppose that, for every $i \in [1,h]$, the representability interval of $G_{\mu_i}$ is $I_{\mu_i} = [m_i,M_i]$. Graph $G_\nu$ is always rectilinear planar (i.e., its representability condition is always true) and its representability interval is $I_\nu = [\sum_{i=1}^hm_i,\sum_{i=1}^hM_i]$.
\end{lemma}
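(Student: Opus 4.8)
The plan is to read off the claim from \cref{le:spirality-S-node} together with the hypothesis that each child $G_{\mu_i}$ admits exactly the spiralities in $I_{\mu_i}=[m_i,M_i]$, a nonempty set consisting of either all integers or all semi-integers in that range, with consecutive elements spaced by $1$. By \cref{le:spirality-S-node}, a value $\sigma_\nu$ is admitted by $G_\nu$ if and only if $\sigma_\nu=\sum_{i=1}^{h}\sigma_{\mu_i}$ for some choice of $\sigma_{\mu_i}\in I_{\mu_i}$, $1\le i\le h$; equivalently, the set of admissible spiralities of $G_\nu$ is precisely the discrete Minkowski sum $I_{\mu_1}+\dots+I_{\mu_h}$. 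I will show that this sum equals $[\sum_{i=1}^h m_i,\sum_{i=1}^h M_i]$, understood as the set of all values of the appropriate type (integer or semi-integer) in that range.

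First, taking $\sigma_{\mu_i}=m_i$ for every $i$ exhibits the admissible spirality $\sum_i m_i$, so $G_\nu$ has a rectilinear planar representation and its representability condition holds (and likewise $\sum_i M_i$ is admissible). Moreover, since $m_i\le\sigma_{\mu_i}\le M_i$ for each $i$, every admissible value of $G_\nu$ lies in $[\sum_i m_i,\sum_i M_i]$, giving $I_\nu\subseteq[\sum_i m_i,\sum_i M_i]$. For the reverse inclusion I use the unit spacing of each $I_{\mu_i}$. The type of $\sigma_{\mu_i}$ (integer or semi-integer) is determined by the node $\mu_i$, so $\sum_i M_i-\sum_i m_i=\sum_i(M_i-m_i)$ is a nonnegative integer and every value of the target type in $[\sum_i m_i,\sum_i M_i]$ has the form $\sum_i m_i+k$ with $k$ an integer, $0\le k\le\sum_i(M_i-m_i)$. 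Starting from the decomposition $\sigma_{\mu_i}=m_i$ for all $i$, I increase $\sigma_{\mu_1}$ by one unit at a time up to $M_1$, then $\sigma_{\mu_2}$ up to $M_2$, and so on; at every step the decomposition stays feasible (each intermediate $\sigma_{\mu_j}$ lies in $I_{\mu_j}$) and the sum $\sum_i\sigma_{\mu_i}$ increases by exactly $1$, so it sweeps through all the values $\sum_i m_i+k$ above. Hence each such value is admitted by $G_\nu$, and combined with the previous inclusion we get $I_\nu=[\sum_i m_i,\sum_i M_i]$.

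To finish, I note that since the type of each $\sigma_{\mu_i}$ is fixed, $\sum_i\sigma_{\mu_i}$ always has a single fixed type (semi-integer precisely when an odd number of the $\mu_i$ contribute a semi-integer), so $I_\nu$ is genuinely an interval of all-integer or all-semi-integer values, as required of the representability interval of an S-node; if desired this parity can also be checked against the alias-vertex count of $\nu$, but that is routine. I do not anticipate a substantial obstacle here: the only delicate point is ensuring the Minkowski sum has no ``gaps'', which is exactly where the unit spacing of the $I_{\mu_i}$ is used, plus the small amount of parity bookkeeping just mentioned.
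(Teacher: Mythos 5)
Your proposal is correct and takes essentially the same route as the paper: both arguments reduce the statement to \cref{le:spirality-S-node} and then verify that the Minkowski sum of the unit-spaced intervals $I_{\mu_1},\dots,I_{\mu_h}$ has no gaps. The only difference is organizational: you sweep the children's spiralities up one unit at a time across all $h$ components in a single pass, whereas the paper performs the same enumeration for $h=2$ and extends it by induction on $h$; the parity bookkeeping you add at the end is a harmless extra.
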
	
\begin{proof}
	We use induction on the number of children of $\nu$. In the base case $h=2$. By hypothesis $I_{\mu_1}=[m_1, M_1]$ and $I_{\mu_2}=[m_2,M_2]$. By Lemma~\ref{le:spirality-S-node}, a series composition of a rectilinear representation of $G_{\mu_1}$ with spirality $\sigma_{\mu_1}$ and of a rectilinear representation of $G_{\mu_2}$ with spirality $\sigma_{\mu_2}$ results in a rectilinear representation of $G_\nu$ with spirality $\sigma_\nu = \sigma_{\mu_1} + \sigma_{\mu_2}$. Hence, if $M_1 = m_1+r_1$ and $M_2 = m_2+r_2$, for two non-negative integers $r_1$ and $r_2$, then the possible values for $\sigma_\nu$ are exactly $m_1+m_2, m_1+1+m_2, \dots, m_1+r_1+m_2, \dots, m_1+r_1+m_2+1, \dots, m_1+r_1+m_2+r_2$, i.e., all values in the interval $[m_1+m_2, M_1+M_2]$. In the inductive case $h \geq 3$; consider the series composition $G'_1$ of $G_{\mu_1}, \dots, G_{\mu_{h-1}}$. Graph $G_\nu$ is the series composition of $G'_1$ and $G_{\mu_h}$. By inductive hypothesis the representability interval of $G'_1$ is $[\sum_{i=1}^{h-1}m_i,\sum_{i=1}^{h-1}M_i]$ and by Lemma~\ref{le:spirality-S-node} applied to $G'_1$ and $G_{\mu_h}$ we have $I_\nu = [\sum_{i=1}^hm_i,\sum_{i=1}^hM_i]$, using the same reasoning as for the base case.
\end{proof}

\cref{fi:allspiralities-s} illustrates \cref{le:S-representability}. \cref{fi:allspiralities-s-1} shows an S-node $\nu$ and its three children $\mu_1$, $\mu_2$, and $\mu_3$, where $\mu_1$ and $\mu_3$ are Q$^*$-nodes and $\mu_2$ is a P-node. \cref{fi:allspiralities-s-2} shows the components $G_\nu$, $G_{\mu_1}$, $G_{\mu_2}$, and $G_{\mu_3}$, where: $I_{\mu_1}=[0,0]$ and $I_{\mu_3}=[-2,2]$ by \cref{le:Q*-representability}; $I_{\mu_2}=[-1,-1]$, as $G_{\mu_2}$ only admits a rectilinear planar representation of spirality~$-1$. By \cref{le:S-representability}, $I_\nu = [\sum_{i=1}^hm_i,\sum_{i=1}^hM_i]=[0-2-1,0+2-1]=[-3,1]$. \cref{fi:allspiralities-s-3} depicts a rectilinear planar representation of $G_\nu$ with spirality $\sigma_\nu$ for every~$\sigma_\nu \in I_\nu$. 

\begin{figure}[h]
	\centering		
	\begin{subfigure}{.225\columnwidth}
		\centering
		\includegraphics[width=\columnwidth,page=6]{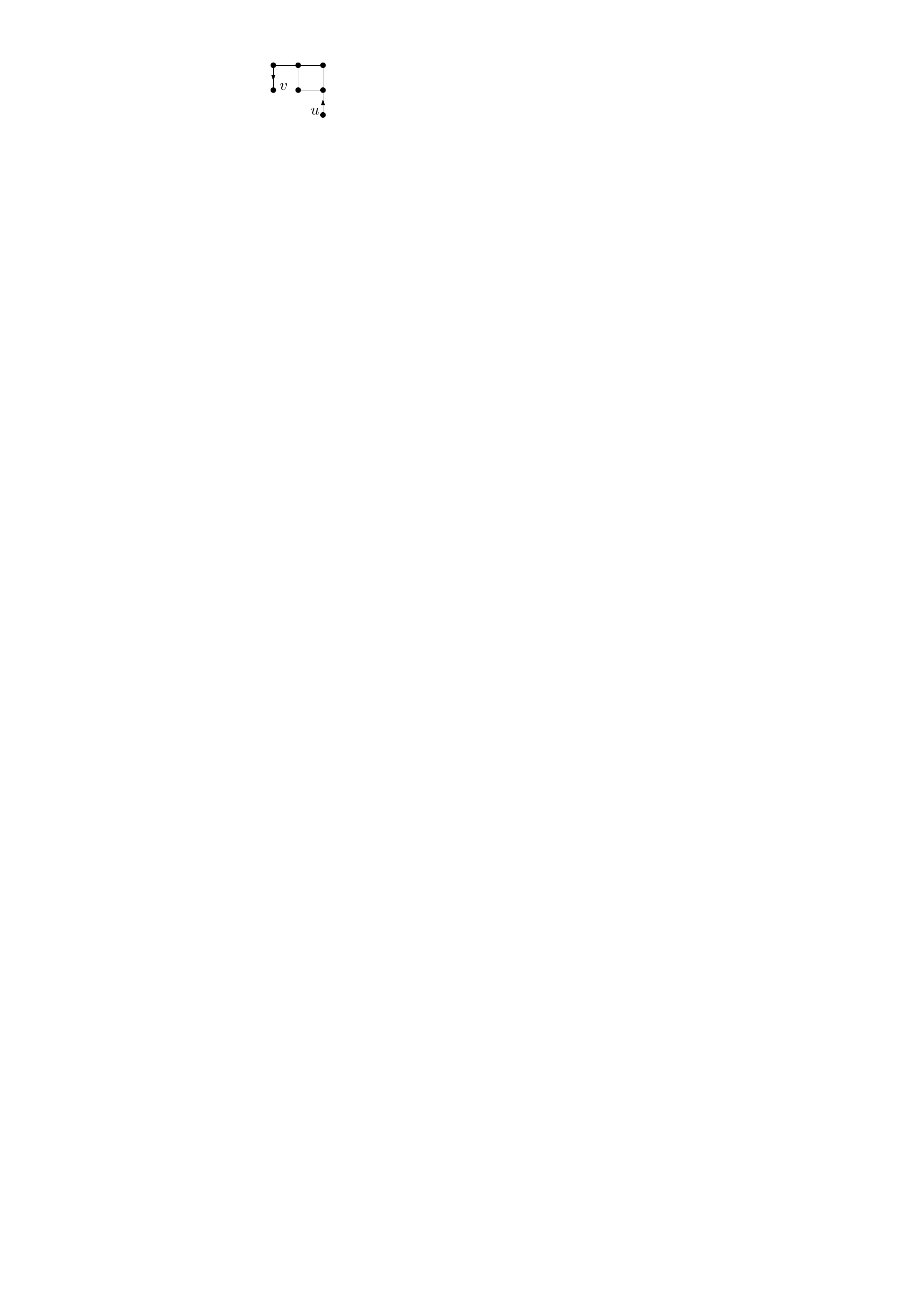}
		\subcaption{\centering}
		\label{fi:allspiralities-s-1}
	\end{subfigure}
	\hfil
	\begin{subfigure}{.35\columnwidth}
		\centering
		\includegraphics[width=\columnwidth,page=7]{allspiralities-s.pdf}
		\subcaption{\centering}
		\label{fi:allspiralities-s-2}
	\end{subfigure}
	\hfil
	\begin{subfigure}{.8\columnwidth}
		\centering
		\includegraphics[width=\columnwidth,page=8]{allspiralities-s.pdf}
		\subcaption{\centering}
		\label{fi:allspiralities-s-3}
	\end{subfigure}
	\caption{Illustration of \cref{le:S-representability}. (a)~An S-node $\nu$ with children $\mu_1$, $\mu_2$, and $\mu_3$. (b) The components $G_\nu$, $G_{\mu_1}$, $G_{\mu_2}$, and $G_{\mu_3}$. Since $I_{\mu_1}$=$[0,0]$, $I_{\mu_2}$=$[-1,-1]$,  and $I_{\mu_3}$=$[-3,3]$, we have $I_\nu$=$ [\sum_{i=1}^hm_i,\sum_{i=1}^hM_i]$=$[-3,1]$. (c)~A rectilinear planar representation of $G_\nu$ with spirality $\sigma_\nu$ for every $\sigma_\nu \in I_\nu$.}\label{fi:allspiralities-s}
\end{figure}


\subsection{Representability condition for P-nodes with three children}\label{se:p3-rect-charact} 

Different from S-nodes, if $\nu$ is a P-node and the pertinent graphs of the children of $\nu$ are rectilinear planar, $G_\nu$ may not be rectilinear planar. In this subsection we consider the case when $\nu$ has three children.

\begin{lemma}\label{le:P-3-children-representability}
	Let $\nu$ be a P-node with three children $\mu_l$, $\mu_c$, and $\mu_r$, ordered from left to right. Suppose that $G_{\mu_l}, G_{\mu_c}$, and $G_{\mu_r}$ are rectilinear planar and that their representability intervals are $I_{\mu_l}=[m_l, M_l]$, $I_{\mu_c}=[m_c, M_c]$, and $I_{\mu_r} = [m_r, M_r]$, respectively. Graph $G_\nu$ is rectilinear planar if and only if $[m_l-2,M_l-2] \cap [m_c,M_c] \cap [m_r+2,M_r+2] \neq \emptyset$. Also, if this representability condition holds then the representability interval of $G_\nu$ is $I_\nu = [\max\{m_l-2,m_c,m_r+2\},\min\{M_l-2,M_c,M_r+2\}]$.
\end{lemma}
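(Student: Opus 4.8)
The plan is to apply \cref{le:spirality-P-node-3-children} as the combinatorial core of the argument, and then argue that each spirality value permitted by that lemma is actually achievable by a \emph{rectilinear} representation, given that the three child components are rectilinear planar on their respective intervals. Concretely, by \cref{le:spirality-P-node-3-children}, $G_\nu$ admits spirality $\sigma_\nu$ (with the children in the order $\mu_l,\mu_c,\mu_r$) if and only if there are values $\sigma_{\mu_l},\sigma_{\mu_c},\sigma_{\mu_r}$ admitted by the respective children with $\sigma_\nu=\sigma_{\mu_l}-2=\sigma_{\mu_c}=\sigma_{\mu_r}+2$. The first step is to observe that ``$\sigma_{\mu_l}$ is admitted by $G_{\mu_l}$ in a rectilinear representation of $G_\nu$'' is, by hypothesis plus \cref{th:substitution}, equivalent to $\sigma_{\mu_l}\in I_{\mu_l}=[m_l,M_l]$, and analogously for $\mu_c$ and $\mu_r$: since each child is rectilinear planar with the stated representability interval, and since Theorem~\ref{th:substitution} lets us substitute any rectilinear component by another rectilinear one with the same spirality, the three children can be chosen independently with spiralities anywhere in their intervals.

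Given that, the condition ``there exist $\sigma_{\mu_l}\in[m_l,M_l]$, $\sigma_{\mu_c}\in[m_c,M_c]$, $\sigma_{\mu_r}\in[m_r,M_r]$ with $\sigma_{\mu_l}-2=\sigma_{\mu_c}=\sigma_{\mu_r}+2$'' is, after the substitutions $\sigma_\nu:=\sigma_{\mu_c}$, exactly the condition that $\sigma_\nu$ lies simultaneously in $[m_l-2,M_l-2]$, in $[m_c,M_c]$, and in $[m_r+2,M_r+2]$. Hence $G_\nu$ is rectilinear planar if and only if the intersection of these three intervals is nonempty, and when it is nonempty the set of admissible $\sigma_\nu$ is precisely that intersection, namely $I_\nu=[\max\{m_l-2,m_c,m_r+2\},\min\{M_l-2,M_c,M_r+2\}]$. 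I also need to remark that this intersection of integer intervals is again an integer interval (the three child components of a P-node with three children have integer spirality, so all endpoints are integers and the resulting $I_\nu$ is an interval of integers, consistent with the claim that $I_\nu$ is a representability interval); this is a one-line observation.

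The one genuine subtlety — and the step I expect to need the most care — is the direction that constructs an actual rectilinear \emph{planar} representation $H_\nu$ of $G_\nu$, embedded in a rectilinear $H$ of $G$, rather than merely a valid angle labeling. For a P-node with three children, both poles have degree at least $3$, and (since $G$ is a $4$-graph) at least one of $\mu_l,\mu_c,\mu_r$ contributes exactly one edge at each pole while the angles around each pole inside $G_\nu$ are forced; I would invoke Di Battista et al.~\cite{DBLP:journals/siamcomp/BattistaLV98} (via \cref{le:spirality-P-node-3-children}, which already encodes the $-2/0/+2$ offsets coming from the $90^\circ$ angles the left and right children must form with the central one at the two poles) to guarantee that any triple of rectilinear child representations with spiralities satisfying the offset relation can indeed be glued into a rectilinear planar representation of $G_\nu$ with the prescribed embedding, and that conversely every rectilinear $H_\nu$ decomposes this way. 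In other words, \cref{le:spirality-P-node-3-children} does the heavy lifting; the proof here is essentially bookkeeping: translate its ``there exist child spiralities'' quantifier into a three-way interval intersection, and read off the extremal endpoints. I would close by pointing to a figure analogous to \cref{fi:allspiralities-s} illustrating the construction for a small example.
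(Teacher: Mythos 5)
Your proposal is correct and follows essentially the same route as the paper: both directions reduce to \cref{le:spirality-P-node-3-children}, translating the existence of admissible child spiralities with the $+2/0/-2$ offsets into the nonemptiness of the three-way interval intersection, which is then read off as $I_\nu$. The paper's proof is exactly this bookkeeping, so no further comparison is needed.
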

\begin{proof}
	\noindent\textsf{Representability condition.} Suppose first that $G_\nu$ is rectilinear planar and let $H_\nu$ be a rectilinear planar representation of $G_\nu$ with spirality $\sigma_\nu$. By Lemma~\ref{le:spirality-P-node-3-children}, the spiralities $\sigma_{\mu_l}$, $\sigma_{\mu_c}$, and $\sigma_{\mu_r}$ for the representations of $G_{\mu_l}, G_{\mu_c}$, and $G_{\mu_r}$ in $H_\nu$ are such that $\sigma_{\mu_l}=\sigma_\nu+2$, $\sigma_{\mu_c}=\sigma_\nu$, and $\sigma_{\mu_r}=\sigma_\nu-2$. Since $\sigma_{\mu_l} \in [m_l,M_l]$, $\sigma_{\mu_c} \in [m_c,M_c]$, and $\sigma_{\mu_r} \in [m_r,M_r]$, we have $\sigma_\nu \in [m_l-2,M_l-2] \cap [m_c,M_c] \cap [m_r+2,M_r+2]$.
	Suppose vice versa that $[m_l-2,M_l-2] \cap [m_c,M_c] \cap [m_r+2,M_r+2] \neq \emptyset$, and let $k$ be any value in such intersection. Setting $\sigma_{\mu_l} = k + 2$, $\sigma_{\mu_c} = k$, and $\sigma_{\mu_r} = k - 2$ we have $\sigma_{\mu_l} \in [m_l,M_l]$, $\sigma_{\mu_c} \in [m_c,M_c]$, and $\sigma_{\mu_r} \in [m_r,M_r]$. By Lemma~\ref{le:spirality-P-node-3-children}, $G_\nu$ is rectilinear planar for a value of spirality~$\sigma_\nu = k$.
	
	\smallskip\noindent\textsf{Representability interval.} Assume that $G_\nu$ is rectilinear planar. Clearly $[\max\{m_l-2,m_c,m_r+2\},\min\{M_l-2,M_c,M_r+2\}] = [m_l-2,M_l-2] \cap [m_c,M_c] \cap [m_r+2,M_r+2]$, and by the truth of the feasiblity condition we have $[\max\{m_l-2,m_c,m_r+2\},\min\{M_l-2,M_c,M_r+2\}] \neq \emptyset$.
	Similarly to the first part of the proof of the representability condition, any rectilinear planar representation of $G_\nu$ has a value of spirality in the interaval $[\max\{m_l-2,m_c,m_r+2\},\min\{M_l-2,M_c,M_r+2\}]$.
	On the other hand, let $k \in [\max\{m_l-2,m_c,m_r+2\},\min\{M_l-2,M_c,M_r+2\}]$. Analogously to the second part of the proof of the representability condition, we can construct a rectilinear planar representation of $G_\nu$ with spirality $\sigma_\nu=k$, by combining in parallel rectilinear planar representations of $G_{\mu_l}$, $G_{\mu_c}$, and $G_{\mu_r}$ with spiralities $\sigma_{\mu_l} = \sigma_\nu + 2$, $\sigma_{\mu_c} = \sigma_\nu$, and $\sigma_{\mu_r} = \sigma_\nu - 2$.
\end{proof}

\cref{fi:three-children-spiralities} illustrates Lemma~\ref{le:P-3-children-representability}. In \cref{fi:three-children-spiralities-a}, $\nu$ has three children that are Q$^*$-nodes. By Lemma~\ref{le:Q*-representability}, $I_{\mu_l}=[-5,5]$,  $I_{\mu_c}=[-3,3]$, and $I_{\mu_c}=[-1,1]$. We have $[m_l-2,M_l-2] \cap [m_c,M_c] \cap [m_r+2,M_r+2] = [-7,3] \cap [-3,3] \cap [1,3] = \{1,2,3\}\neq \emptyset$ and, consequently, $G_\nu$ is rectilinear planar. Also, $I_\nu = [\max\{m_l-2,m_c,m_r+2\},\min\{M_l-2,M_c,M_r+2\}]=[1,3]$. In \cref{fi:three-children-spiralities-b}, the left child of $\nu$ is an S-node such that  $I_{\mu_l}=[-2,2]$. We have $[m_l-2,M_l-2] \cap [m_c,M_c] \cap [m_r+2,M_r+2] = [-4,-4] \cap [-3,3] \cap [1,3] = \emptyset$ and, consequently, $G_\nu$ is not rectilinear planar.

\begin{figure}[h]
	\captionsetup[subfigure]{labelformat=empty}
	\centering
	\begin{subfigure}{.3\columnwidth}
		\centering
		\includegraphics[width=\columnwidth,page=1]{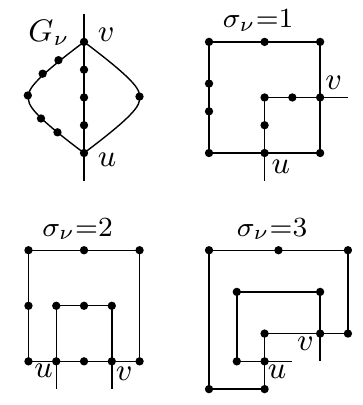}
		\subcaption{\centering~(a)}
		\label{fi:three-children-spiralities-a}
	\end{subfigure}
	\hfil
	\begin{subfigure}{.3\columnwidth}
		\centering
		\includegraphics[width=\columnwidth,page=2]{three-children-spiralities.pdf}
		\subcaption{\centering~(b)}
		\label{fi:three-children-spiralities-b}
	\end{subfigure}
	\caption{Illustration of Lemma~\ref{le:P-3-children-representability}. (a)~$G_\nu$ is rectilinear planar and $I_\nu=[1,3]$, (b)~$G_\nu$ is not rectilinear planar. }\label{fi:three-children-spiralities}
\end{figure}


\subsection{Representability condition for P-nodes with two children}\label{se:p2-rect-charact}

For a P-node $\nu$ with two children $\mu_l$ and $\mu_r$, the representability condition and interval depend on the indegree and outdegree of the poles of $\nu$ in $G_\nu$, $G_{\mu_l}$, and $G_{\mu_r}$. We define the \emph{type} of $\nu$ and of $G_\nu$ as follows (refer to~\cref{fi:P-node-types}):

\begin{itemize}
	\item \Pio{2}{\lambda\beta}: Both poles of $\nu$ have indegree two in $G_\nu$; also one pole has outdegree $\lambda$ in $G_\nu$ and the other pole has outdegree $\beta$ in $G_\nu$, for $1 \leq \lambda \leq \beta \leq 2$. This gives rise to the specific types \Pio{2}{11}, \Pio{2}{12}, and \Pio{2}{22}.
	
	\item \Pio{3d}{\lambda\beta}: One pole of $\nu$ has indegree two in $G_\nu$, while the other pole has indegree three in $G_\nu$ and indegree two in $G_{\mu_d}$ for $d \in \{l,r\}$; also one pole has outdegree $\lambda$ in $G_\nu$ and the other has outdegree $\beta$ in $G_\nu$, for $1 \leq \lambda \leq \beta \leq 2$, where $\lambda=\beta=2$ is not possible. This gives rise to the specific types \Pio{3l}{11}, \Pio{3r}{11}, \Pio{3l}{12}, \Pio{3r}{12}.
	
	\item \Pin{3dd'}: Both poles of $\nu$ have indegree three in $G_\nu$; one of the two poles has indegree two in $G_{\mu_d}$ and the other has indegree two in $G_{\mu_{d'}}$, for  $dd' \in \{ll,lr,rr\}$ (both poles have outdegree one in $G_\nu$). Hence, the specific types are~\Pin{3ll},~\Pin{3lr},~\Pin{3rr}.
\end{itemize}

\begin{figure}[!h]
	\centering
	\includegraphics[width=.85\columnwidth]{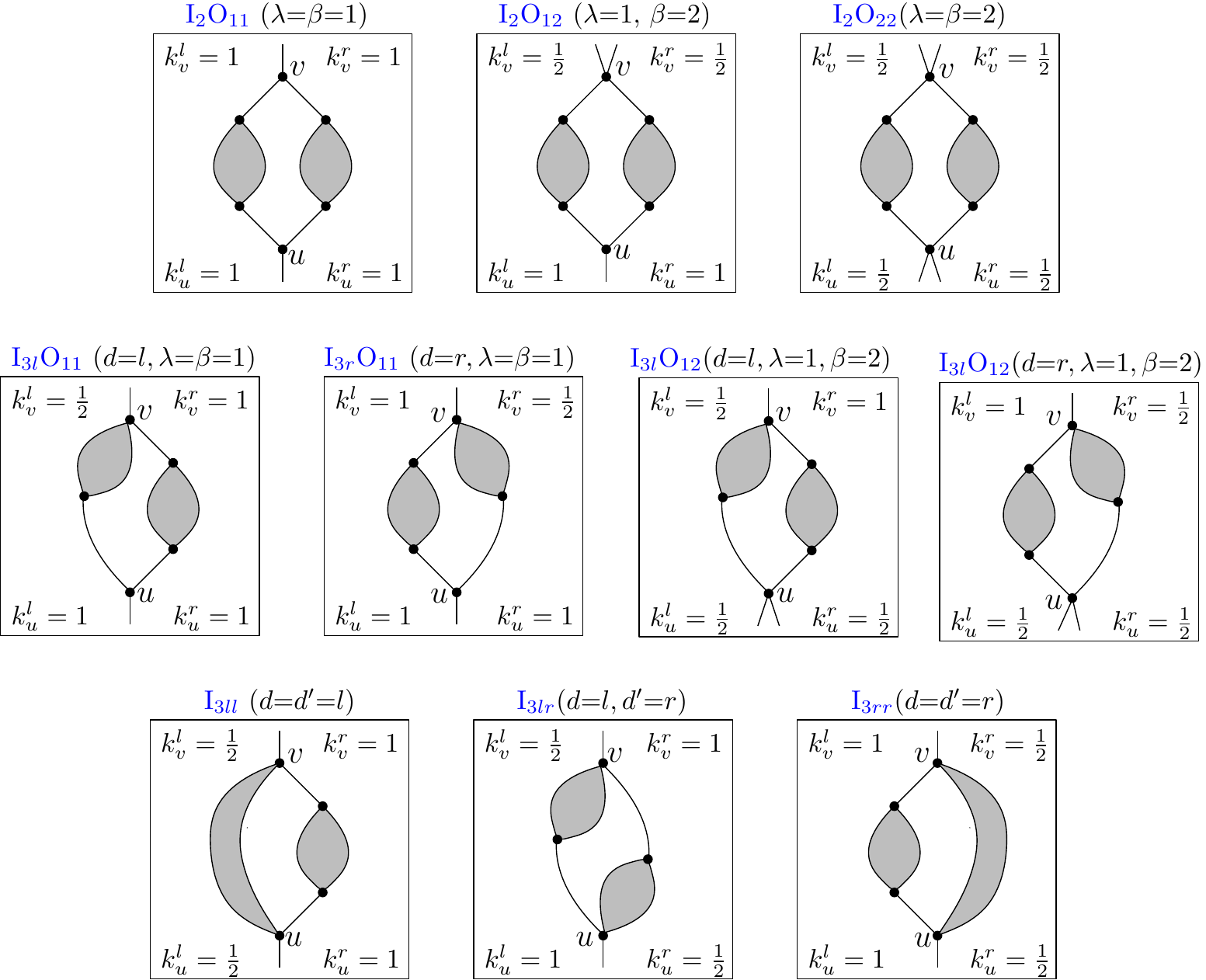}
	\caption{Schematic illustration of the different types of P-nodes with two children.}\label{fi:P-node-types}
\end{figure}

\begin{lemma}\label{le:P-2-children-support-type1}
	Let $G_\nu$ be a P-node of type \Pio{2}{\lambda\beta} with children $\mu_l$ and $\mu_r$. $G_\nu$ is rectilinear planar if and only if $G_{\mu_l}$ and $G_{\mu_r}$ are rectilinear planar for values of spiralities $\sigma_{\mu_l}$ and $\sigma_{\mu_r}$ such that $\sigma_{\mu_l}-\sigma_{\mu_r}\in[2,4-\gamma]$, where $\gamma = \lambda + \beta - 2$.
\end{lemma}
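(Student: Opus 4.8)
The plan is to reduce the statement directly to \cref{le:spirality-P-node-2-children}. Since $G_\nu$ is rectilinear planar precisely when it admits \emph{some} spirality, \cref{le:spirality-P-node-2-children} yields that $G_\nu$ is rectilinear planar if and only if there are values $\sigma_{\mu_l}$ and $\sigma_{\mu_r}$ admitted by $G_{\mu_l}$ and $G_{\mu_r}$, together with binary variables $\alpha_w^l,\alpha_w^r$ for $w\in\{u,v\}$ with $1\le\alpha_w^l+\alpha_w^r\le 2$, such that the two expressions for $\sigma_\nu$ in condition~(iii) of that lemma coincide, that is,
\[
\sigma_{\mu_l}-\sigma_{\mu_r}=k_u^l\alpha_u^l+k_v^l\alpha_v^l+k_u^r\alpha_u^r+k_v^r\alpha_v^r .
\]
Hence the whole proof reduces to determining, for a P-node of type \Pio{2}{\lambda\beta}, precisely the set of values that the right-hand side can attain over admissible choices of the $\alpha$'s.

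First I would record the structural consequences of the type. Since both poles of $\nu$ have indegree two in $G_\nu$ and $\nu$ has exactly two children, each child is incident to each pole of $\nu$ through exactly one edge; thus $\indeg_{\mu_l}(w)=\indeg_{\mu_r}(w)=1$ for $w\in\{u,v\}$, and consequently $k_w^l=k_w^r$: a common value $k_w$ equal to $1$ when $\outdeg_\nu(w)=1$ and to $\frac12$ when $\outdeg_\nu(w)=2$. Moreover, if $\outdeg_\nu(w)=2$ then $\deg(w)=4$, which forces all four angles at $w$ to be $90^\circ$, hence $\alpha_w^l=\alpha_w^r=1$. Also, each child has exactly one alias vertex at each of its two poles, so each $\sigma_{\mu_d}$ is an integer and therefore $\sigma_{\mu_l}-\sigma_{\mu_r}$ is an integer. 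Substituting $k_w^l=k_w^r=k_w$ turns the displayed equality into $\sigma_{\mu_l}-\sigma_{\mu_r}=k_u(\alpha_u^l+\alpha_u^r)+k_v(\alpha_v^l+\alpha_v^r)$, where each $\alpha_w^l+\alpha_w^r$ lies in $\{1,2\}$ and equals $2$ whenever $\outdeg_\nu(w)=2$.

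Finally I would run the three cases for $\{\lambda,\beta\}$, naming the poles so that $\outdeg_\nu(u)=\lambda$ and $\outdeg_\nu(v)=\beta$. If $\lambda=\beta=1$ (so $\gamma=0$): $k_u=k_v=1$ and both angle-sums range freely over $\{1,2\}$, so the right-hand side attains exactly $\{2,3,4\}=[2,4-\gamma]\cap\mathbb{Z}$. If $\lambda=1$ and $\beta=2$ (so $\gamma=1$): $k_u=1$ with $\alpha_u^l+\alpha_u^r$ free in $\{1,2\}$, while $\deg(v)=4$ forces $k_v=\frac12$ and $\alpha_v^l+\alpha_v^r=2$, so the right-hand side equals $(\alpha_u^l+\alpha_u^r)+1\in\{2,3\}=[2,4-\gamma]\cap\mathbb{Z}$. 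If $\lambda=\beta=2$ (so $\gamma=2$): both poles have degree $4$, so $k_u=k_v=\frac12$ and both angle-sums equal $2$, giving the single value $2=[2,4-\gamma]\cap\mathbb{Z}$. This proves the ``only if'' direction; for ``if'', given admitted $\sigma_{\mu_l},\sigma_{\mu_r}$ with integral difference in $[2,4-\gamma]$ one picks $\alpha$'s realizing that difference (allowed in every case by the analysis above), sets $\sigma_\nu=\sigma_{\mu_l}-k_u^l\alpha_u^l-k_v^l\alpha_v^l$, and applies \cref{le:spirality-P-node-2-children} --- combining, via \cref{th:substitution}, rectilinear representations of $G_{\mu_l}$ and $G_{\mu_r}$ with the prescribed spiralities --- to obtain a rectilinear planar representation of $G_\nu$. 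I expect no real obstacle: the only slightly delicate bookkeeping is keeping track of which angle-sums are forced to $2$ by a degree-$4$ pole and of the integrality of $\sigma_{\mu_l}-\sigma_{\mu_r}$, which is what makes $[2,4-\gamma]$ the correct interval of integers rather than of semi-integers.
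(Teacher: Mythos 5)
Your proof is correct and follows essentially the same route as the paper's: both directions reduce to the spirality relation of Lemma~\ref{le:spirality-P-node-2-children} together with a case analysis of the values attainable by $k_u^d\alpha_u^d+k_v^d\alpha_v^d$ over the three choices of $(\lambda,\beta)$, yielding exactly the interval $[2,4-\gamma]$. The only divergence is in the sufficiency direction, where the paper does not invoke the converse of Lemma~\ref{le:spirality-P-node-2-children} as a black box but re-verifies it by combining the two child representations in parallel and checking that every cycle through the poles has turn-count four; your shortcut is legitimate because that lemma is stated as an equivalence, and your explicit remark that $\sigma_{\mu_l}-\sigma_{\mu_r}$ is forced to be an integer (so that $[2,4-\gamma]$ is an interval of integers) is a detail the paper leaves implicit.
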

\begin{proof}
	We distinguish three cases, based on the values of $\alpha$ and $\beta$.
	
	\noindent{\bf Case 1: $\alpha=\beta=1$.}  In this case $G_\nu$ is of type \Pio{2}{11} and we prove that $G_\nu$ is rectilinear planar if and only if $G_{\mu_l}$ and $G_{\mu_r}$ are rectilinear planar for values of spiralities $\sigma_{\mu_l}$ and $\sigma_{\mu_r}$ such that $\sigma_{\mu_l}-\sigma_{\mu_r}\in[2,4]$. For a \Pio{2}{11} component we have $k_u^l=k_v^l=k_u^r=k_v^r=1$. 
	
	If $G_\nu$ is rectilinear planar then $1 \leq \alpha_{u}^l+\alpha_{u}^r \leq 2$ and $1 \leq \alpha_{v}^l+\alpha_{v}^r \leq 2$ in any rectilinear planar representation of $G_\nu$. Hence, by Lemma~\ref{le:spirality-P-node-2-children}, for any value of spirality $\sigma_\nu$  we have $\sigma_{\mu_l}-\sigma_{\mu_r}=\alpha_{u}^l+\alpha_{v}^l+\alpha_{u}^r+\alpha_{v}^r \in [2,4]$. 
	Suppose vice versa that $G_{\mu_l}$ and $G_{\mu_r}$ are rectilinear planar for values of spirality $\sigma_{\mu_l}$ and $\sigma_{\mu_r}$ such that $\sigma_{\mu_l}-\sigma_{\mu_r}\in[2,4]$. We show that $G_\nu$ admits a rectilinear planar representation $H_\nu$. To define $H_\nu$, we combine in parallel the two rectilinear planar representations of $G_{\mu_l}$ and $G_{\mu_r}$ and suitably assign the values of $\alpha_u^d$ and $\alpha_v^d$ ($d \in \{l,r\}$), depending on the value of $\sigma_{\mu_l}-\sigma_{\mu_r}$. This assignment is such that for any cycle $C$ of $G_\nu$ through $u$ and $v$, the number of $90^\circ$ angles minus the number of $270^\circ$ angles in the interior of $C$ is equal to four.
	Poles $u$ and $v$ split $C$ into two paths $\pi_l$ and $\pi_r$. The spirality $\sigma_{\mu_l}$ equals the number of right turns minus the number of left turns along $\pi_l$ while going from $u$ to $v$, which in turns corresponds to the number of $90^\circ$ angles minus the number of $270^\circ$ angles in the interior of $C$ at the vertices of $\pi_l$. Similarly, $-\sigma_{\mu_r}$ equals the number of right turns minus the number of left turns along $\pi_r$ while going from $v$ to $u$, which in turns corresponds to the number of $90^\circ$ angles minus the number of $270^\circ$ angles in the interior of $C$ at the vertices of $\pi_r$. By also taking into account the angles at $u$ and $v$ inside $C$, the number of $90^\circ$ angles minus the number of $270^\circ$ angles in the interior of $C$ can be expressed as $a_c = \sigma_{\mu_l}-\sigma_{\mu_r} + 4 - \alpha_u^l - \alpha_u^r - \alpha_v^l - \alpha_v^r$. 
	We distinguish the following three cases:
	%
	$(i)$ If $\sigma_{\mu_l}-\sigma_{\mu_r}=2$, then for every pole $w \in \{u,v\}$ we set  
	$\alpha_w^l$ and $\alpha_w^r$ such that $\alpha_w^l + \alpha_w^r = 1$. 
	$(ii)$ If $\sigma_{\mu_l}-\sigma_{\mu_r}=3$, then for one pole $w \in \{u,v\}$ we set $\alpha_w^l$ and $\alpha_w^r$ such that $\alpha_w^l + \alpha_w^r = 1$, and for the other pole $w' \in \{u,v\}$ we set $\alpha_{w'}^l = \alpha_{w'}^r = 1$.
	$(iii)$ If $\sigma_{\mu_l}-\sigma_{\mu_r}=4$, then for every pole $w \in \{u,v\}$ we set  
	$\alpha_w^l = \alpha_w^r = 1$.  
	In all the cases above, we have that $a_c=4$.
	Also, any other cycle not passing through $u$ and $v$ is an orthogonal polygon because it belongs to a rectilinear planar representation of either $G_{\mu_l}$ (with spirality $\sigma_{\mu_l}$) or $G_{\mu_r}$ (with spirality $\sigma_{\mu_r}$).
	
	\smallskip\noindent{\bf Case 2: $\alpha=1, \beta=2$.} In this case $G_\nu$ is of type \Pio{2}{12} and we prove that $G_\nu$ is rectilinear planar if and only if $G_{\mu_l}$ and $G_{\mu_r}$ are rectilinear planar for values of spiralities $\sigma_{\mu_l}$ and $\sigma_{\mu_r}$ such that $\sigma_{\mu_l}-\sigma_{\mu_r}\in[2,3]$. Suppose, w.l.o.g., that $\outdeg_\nu(u)=1$ and $\outdeg_\nu(v)=2$. We have $k_u^l=k_u^r=1$ and $k_v^l=k_v^r=\frac{1}{2}$.
	
	If $G_\nu$ is rectilinear planar then $\alpha_v^l+\alpha_v^r=2$ and $\alpha_u^l+\alpha_u^r\in[1,2]$. By Lemma~\ref{le:spirality-P-node-2-children}, $\sigma_{\mu_l}-\sigma_{\mu_r}=k_{u}^l \alpha_{u}^l +k_{v}^r \alpha_{v}^l +  k_{u}^l \alpha_{u}^r + k_{v}^r\alpha_{v}^r$, and hence $\sigma_{\mu_l}-\sigma_{\mu_r}= \alpha_{u}^l + \frac{1}{2}  \alpha_{v}^l + \alpha_{u}^r + \frac{1}{2} \alpha_{v}^r\in [2,3]$.
	Suppose vice versa that $G_{\mu_l}$ and $G_{\mu_r}$ are rectilinear planar for values of spirality $\sigma_{\mu_l}$ and $\sigma_{\mu_r}$ such that $\sigma_{\mu_l}-\sigma_{\mu_r}\in [2,3]$. We show that $G_\nu$ admits a rectilinear planar representation $H_\nu$. 
	To define $H_\nu$, we combine in parallel the two rectilinear planar representations of $G_{\mu_l}$ and $G_{\mu_r}$ and suitably set $\alpha_u^d$ and $\alpha_v^d$ ($d \in \{l,r\}$). Namely, we set $\alpha_v^l=\alpha_v^r=1$, and the values of $\alpha_u^l$ and $\alpha_u^r$ as follows: $(i)$ if $\sigma_{\mu_l}-\sigma_{\mu_r}=2$, we set $\alpha_u^l$ and $\alpha_u^r$ such that $\alpha_u^l + \alpha_u^r = 1$; $(ii)$ if $\sigma_{\mu_l}-\sigma_{\mu_r}=3$, we set $\alpha_u^l=\alpha_u^r=1$. With an argument similar to the previous case, for any cycle $C$ through $u$ and $v$, the number of $90^\circ$ angles minus the number of $270^\circ$ angles in $C$ can be expressed in this case by $a_c = \sigma_{\mu_l}-\sigma_{\mu_r} + 4 - \alpha_u^l - \alpha_u^r - 1$ (the angle at $v$ in $C$ is always of 90$^\circ$ degrees). In case $(i)$ we have $a_c = 2 + 4 - 1 - 1 = 4$; in case $(ii)$ we have $a_c = 3 + 4 - 2 - 1 = 4$.
	Also, any other cycle not passing through $u$ and $v$ is an orthogonal polygon because it belongs to a rectilinear planar representation of either $G_{\mu_l}$ or $G_{\mu_r}$.
	
	\smallskip\noindent{\bf Case 3: $\alpha=\beta=2$.} In this case $G_\nu$ is of type \Pio{2}{22} and we prove that $G_\nu$ is rectilinear planar if and only if $G_{\mu_l}$ and $G_{\mu_r}$ are rectilinear planar for values of spiralities $\sigma_{\mu_l}$ and $\sigma_{\mu_r}$ such that $\sigma_{\mu_l}-\sigma_{\mu_r}=2$. We have $k_u^l=k_u^r=\frac{1}{2}$.
	
	If $G_\nu$ is rectilinear planar then $\alpha_{u}^l+\alpha_{u}^r = \alpha_{v}^l+\alpha_{v}^r = 2$. By Lemma~\ref{le:spirality-P-node-2-children}, $\sigma_{\mu_l} = \sigma_{\nu} + 1$ and $\sigma_{\mu_r} = \sigma_{\nu} - 1$; hence $\sigma_{\mu_l}-\sigma_{\mu_r}=2$.  
	Suppose vice versa that $\sigma_{\mu_l}-\sigma_{\mu_r}=2$. We show that $G_\nu$ admits a rectilinear planar representation $H_\nu$. Again, we obtain $H_\nu$ by combining in parallel the two rectilinear planar representations of $G_{\mu_l}$ and $G_{\mu_r}$ and by suitably setting $\alpha_u^d$ and $\alpha_v^d$ ($d \in \{l,r\}$). In this case, for any cycle $C$ through $u$ and $v$, the number of $90^\circ$ angles minus the number of $270^\circ$ angles in $C$ can be expressed by $a_c = \sigma_{\mu_l}-\sigma_{\mu_r} + 1 + 1$ (both the angles at $u$ and $v$ inside $C$ is always of 90$^\circ$ degrees). We then set $\alpha_u^l=\alpha_v^l=\alpha_u^r=\alpha_v^r=1$, which guarantees $a_c = 4$.   
	Also, any other cycle not passing through $u$ and $v$ is an orthogonal polygon because it belongs to a rectilinear planar representation of either $G_{\mu_l}$ or $G_{\mu_r}$.
\end{proof}

\begin{lemma}\label{le:P-2-children-representability-type1}
	Let $\nu$ be a P-node of type \Pio{2}{\lambda\beta} with children $\mu_l$ and $\mu_r$. Suppose that $G_{\mu_l}$ and $G_{\mu_r}$ are rectilinear planar with representability intervals $I_{\mu_l}=[m_l, M_l]$ and $I_{\mu_r} = [m_r, M_r]$, respectively. Graph $G_\nu$ is rectilinear planar if and only if $[m_l-M_r,M_l-m_r] \cap [2,4-\gamma] \neq \emptyset$, where $\gamma = \lambda + \beta -2$. Also, if this representability condition holds then the representability interval of $G_\nu$ is $I_\nu = [\max\{m_l-2,m_r\}+\frac{\gamma}{2}, \min\{M_l, M_r+2\}-\frac{\gamma}{2}]$.
\end{lemma}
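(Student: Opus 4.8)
The plan is to derive both halves of the statement from \cref{le:P-2-children-support-type1} (which says exactly when $G_\nu$ is rectilinear planar in terms of the spiralities of $G_{\mu_l}$ and $G_{\mu_r}$) together with \cref{le:spirality-P-node-2-children} (which gives the exact arithmetic relation linking $\sigma_\nu$ to $\sigma_{\mu_l}$ and $\sigma_{\mu_r}$). Throughout I write $\gamma = \lambda+\beta-2 \in \{0,1,2\}$ and use that a pole of outdegree $1$ in $G_\nu$ has $k^l=k^r=1$, whereas a pole of outdegree $2$ has $k^l=k^r=\tfrac12$ and its $\alpha$-values forced to $1$ (as noted before \cref{le:spirality-P-node-2-children}); also, for every child here both poles have indegree $1$ inside the child, so $\sigma_{\mu_l},\sigma_{\mu_r}$ are integers and $I_{\mu_l},I_{\mu_r}$ are all-integer intervals. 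For the \emph{representability condition}, \cref{le:P-2-children-support-type1} gives that $G_\nu$ is rectilinear planar iff there are $\sigma_{\mu_l}\in[m_l,M_l]$ and $\sigma_{\mu_r}\in[m_r,M_r]$ with $\sigma_{\mu_l}-\sigma_{\mu_r}\in[2,4-\gamma]$; since the set of realizable differences is precisely the set of integers in $[m_l-M_r,\,M_l-m_r]$, this is equivalent to $[m_l-M_r,M_l-m_r]\cap[2,4-\gamma]\neq\emptyset$, i.e.\ to $M_l-m_r\ge 2$ and $m_l-M_r\le 4-\gamma$.

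For the \emph{representability interval} I first set $p = k_u^l\alpha_u^l + k_v^l\alpha_v^l$ and $q = k_u^r\alpha_u^r + k_v^r\alpha_v^r$, so that by \cref{le:spirality-P-node-2-children} a representation realizing spirality $\sigma_\nu$ has $\sigma_{\mu_l}=\sigma_\nu+p$ and $\sigma_{\mu_r}=\sigma_\nu-q$. Rerunning the three-type case analysis in the proof of \cref{le:P-2-children-support-type1} (types $\mathrm{I}_2\mathrm{O}_{11}$, $\mathrm{I}_2\mathrm{O}_{12}$, $\mathrm{I}_2\mathrm{O}_{22}$, using the forced $\alpha$-values at degree-$4$ poles and $1\le\alpha_w^l+\alpha_w^r\le 2$ at degree-$3$ poles), I would record the slightly sharper fact that the set of attainable pairs $(p,q)$ is exactly $\{(p,q): \tfrac{\gamma}{2}\le p,q\le 2-\tfrac{\gamma}{2},\ p+q\in[2,4-\gamma]\}$ (with granularity $1$ in each coordinate): in particular $\tfrac{\gamma}{2}\le p,q\le 2-\tfrac{\gamma}{2}$ always, and $p+q$ hits every integer of $[2,4-\gamma]$. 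Consequently $\sigma_\nu$ is admitted by $G_\nu$ iff there is such a pair $(p,q)$ with $\sigma_\nu+p\in[m_l,M_l]$ and $\sigma_\nu-q\in[m_r,M_r]$.

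Necessity of the claimed interval is then immediate: $\sigma_\nu=\sigma_{\mu_l}-p\ge m_l-(2-\tfrac\gamma2)$ and $\sigma_\nu=\sigma_{\mu_r}+q\ge m_r+\tfrac\gamma2$ give $\sigma_\nu\ge\max\{m_l-2,m_r\}+\tfrac\gamma2$, and symmetrically $\sigma_\nu\le\min\{M_l,M_r+2\}-\tfrac\gamma2$. For sufficiency I fix a value $\sigma_\nu$ of the right granularity in $[\max\{m_l-2,m_r\}+\tfrac\gamma2,\ \min\{M_l,M_r+2\}-\tfrac\gamma2]$ and set $[p^-,p^+]=[\tfrac\gamma2,2-\tfrac\gamma2]\cap[m_l-\sigma_\nu,M_l-\sigma_\nu]$ and $[q^-,q^+]=[\tfrac\gamma2,2-\tfrac\gamma2]\cap[\sigma_\nu-M_r,\sigma_\nu-m_r]$. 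The bounds on $\sigma_\nu$ make both intervals nonempty with endpoints that are attainable $p$- and $q$-values, and a short check gives $p^-+q^-\le 4-\gamma$ and $p^++q^+\ge 2$ — the only nontrivial instances being $p^-+q^-=m_l-M_r$ and $p^++q^+=M_l-m_r$, which are exactly the two inequalities of the representability condition. Since $p+q$ runs over every integer between $p^-+q^-$ and $p^++q^+$ as $(p,q)$ ranges over $[p^-,p^+]\times[q^-,q^+]$, some attainable pair has $p+q\in[2,4-\gamma]$, certifying that $\sigma_\nu$ is admitted. Together with necessity this yields $I_\nu=[\max\{m_l-2,m_r\}+\tfrac\gamma2,\ \min\{M_l,M_r+2\}-\tfrac\gamma2]$, a nonempty interval of the appropriate granularity precisely when the representability condition holds.

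I expect the only real work to be (i) the type-by-type verification that the attainable $(p,q)$-set is exactly the one claimed — in particular noticing that the per-pole constraint $\alpha_w^l+\alpha_w^r\ge1$ rules out only pairs with $p+q<2$, which are excluded anyway — and (ii) the $\max/\min$ manipulation in the sufficiency step, where one must see that the representability condition is exactly what forces the box $[p^-,p^+]\times[q^-,q^+]$ to meet the strip $\{p+q\in[2,4-\gamma]\}$. Both are elementary interval arithmetic rather than anything structural; the slightly fiddly part is keeping the half-integer bookkeeping for the $\mathrm{I}_2\mathrm{O}_{12}$ case straight.
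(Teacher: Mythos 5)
Your proof is correct. It rests on the same two ingredients as the paper's own proof --- \cref{le:P-2-children-support-type1} for the representability condition and \cref{le:spirality-P-node-2-children} for relating $\sigma_\nu$ to $\sigma_{\mu_l},\sigma_{\mu_r}$ --- and your treatment of the condition itself is identical. Where you genuinely diverge is in the proof of the representability interval. The paper splits into the three types \Pio{2}{11}, \Pio{2}{12}, \Pio{2}{22} and, within the first two, into further sub-cases according to where the target spirality $k$ sits relative to $M_l$ (e.g.\ $k\le M_l-2$, $k=M_l-1$, $k=M_l$), exhibiting an explicit choice of $\sigma_{\mu_l}$, $\sigma_{\mu_r}$ and of the $\alpha$-values in each sub-case. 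You instead characterize once and for all the attainable pairs $(p,q)=\bigl(k_u^l\alpha_u^l+k_v^l\alpha_v^l,\ k_u^r\alpha_u^r+k_v^r\alpha_v^r\bigr)$ as the grid points of the box $[\gamma/2,\,2-\gamma/2]^2$ whose sum lies in $[2,4-\gamma]$ --- a claim that checks out type by type, and which is exactly where the per-pole constraint $\alpha_w^l+\alpha_w^r\ge 1$ gets absorbed --- and then reduce sufficiency to the single observation that the box $[p^-,p^+]\times[q^-,q^+]$ meets the strip $p+q\in[2,4-\gamma]$, the two nontrivial corner sums $m_l-M_r$ and $M_l-m_r$ being handled precisely by the representability condition. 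This buys a uniform argument covering all three types at once (in particular it dissolves the paper's separate half-integer bookkeeping for \Pio{2}{12} into the granularity of the grid), at the price of having to verify the $(p,q)$-characterization carefully; the paper's longer sub-case construction has the compensating virtue that each step is a concrete assignment of spiralities and angles that the algorithmic sections (\cref{se:budgets,se:summary}) can later invoke verbatim.
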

\begin{proof}
	We prove the correctness of the representability condition and
	the validity of the representability interval.
	
	\smallskip\noindent\textsf{Representability condition.}
	Suppose that $G_\nu$ is rectilinear planar. By Lemma~\ref{le:P-2-children-support-type1}, $G_{\mu_l}$ and $G_{\mu_r}$ admit spiralities $\sigma_{\mu_l}$ and $\sigma_{\mu_r}$, respectively, such that $\sigma_{\mu_l}-\sigma_{\mu_r} \in [2,4-\gamma]$. Hence, $m_l-M_r \leq \sigma_{\mu_l}-\sigma_{\mu_r} \leq 4-\gamma$ and $M_l-m_r \geq \sigma_{\mu_l}-\sigma_{\mu_r} \geq 2$, i.e.,  $[m_l-M_r,M_l-m_r] \cap [2,4-\gamma] \neq \emptyset$.
	
	Suppose, vice versa that $[m_l-M_r,M_l-m_r] \cap [2,4-\gamma] \neq \emptyset$. By hypothesis $G_{\mu_l}$ (resp. $G_{\mu_r}$) is rectilinear planar for every integer value of spirality in the interval $[m_l,M_l]$ (resp. $[m_r,M_r]$). This implies that for every integer value $k$ in the interval $[m_l-M_r, M_l-m_r]$, there exist rectilinear planar representations for $G_{\mu_l}$ and $G_{\mu_r}$ with spiralities $\sigma_{\mu_l}$ and $\sigma_{\mu_r}$ such that $\sigma_{\mu_l}-\sigma_{\mu_r} = k$. Since by hypothesis there exists a value $k \in [m_l-M_r,M_l-m_r] \cap [2,4-\gamma]$, there must be two values of spiralities $\sigma_{\mu_l}$ and $\sigma_{\mu_r}$ for the representations of $G_{\mu_l}$ and $G_{\mu_r}$ such that $\sigma_{\mu_l}-\sigma_{\mu_r} = k \in [2,4-\gamma]$. Hence, by Lemma~\ref{le:P-2-children-support-type1}, $G_\nu$ is rectilinear planar.
	
	\smallskip\noindent\textsf{Representability interval.}
	We analyze three cases, based on the values of $\alpha$ and $\beta$.
	
	\smallskip\noindent{\bf Case 1: $\alpha=\beta=1$.}  $G_\nu$ is of type \Pio{2}{11} and we prove that $I_\nu = [\max\{m_l-2,m_r\}, \min\{M_l, M_r+2\}]$.  
	Assume first that $\sigma_\nu$ is the spirality of a rectilinear representation of $G_\nu$. By Lemma~\ref{le:spirality-P-node-2-children}, $\sigma_\nu \in [m_l-2, M_r+2]$. Also, since for a \Pio{2}{11} component $k_u^l=k_v^l=k_u^r=k_v^r=1$, we have $\sigma_\nu = \sigma_{\mu_r} + \alpha_{u}^r + \alpha_{v}^r$, which implies $\sigma_\nu \geq m_r$. Analogously, $\sigma_\nu = \sigma_{\mu_l} - \alpha_{u}^l - \alpha_{v}^l \leq M_l$. Hence, $\sigma_\nu \in I_\nu = [\max\{m_l-2,m_r\}, \min\{M_l,M_r+2\}]$.
	
	Assume vice versa that $k$ is any integer in the interval $I_\nu = [\max\{m_l-2,m_r\}, M = \min\{M_l,M_r+2\}]$. We show that $G_\nu$ admits a rectilinear planar representation with spirality $\sigma_\nu = k$. By hypothesis $k \leq \min\{M_l, M_r + 2\} \leq M_l$; also, $k \geq \max\{m_l-2,m_r\} \geq m_l-2$, i.e., $k+2 \geq m_l$. Hence $[k,k+2] \cap [m_l, M_l] \neq \emptyset$. Analogously, $k \leq \min\{M_l, M_r + 2\} \leq M_r + 2$, i.e., $k-2 \leq M_r$; also, $k \geq \max\{m_l-2,m_r\} \geq m_r$. Hence $[k-2,k] \cap [m_r,M_r] \neq \emptyset$.
	We now distinguish the following sub-cases:
	
	\begin{itemize}	
		\item{\bf Case 1.1: $k\le M_l-2$.} Consider any two rectilinear planar representations $H_{\mu_l}$ of $G_{\mu_l}$ and $H_{\mu_r}$ of $G_{\mu_r}$ with spirality $\sigma_{\mu_l}=k+2$ and $\sigma_{\mu_r}\in [k-2,k] \cap [m_r, M_r] \neq \emptyset$, respectively. As already observed, $k+2 \geq m_l$ and by hypothesis $k+2 \le M_l$; hence $\sigma_{\mu_l} \in [m_l,M_l]$. With this choice we have $2\leq \sigma_{\mu_l}-\sigma_{\mu_r}\leq 4$, and we can combine $H_{\mu_l}$ and $H_{\mu_r}$ in parallel as in the proof of Lemma~\ref{le:P-2-children-support-type1} to obtain a rectilinear planar representation $H_\nu$ of $G_\nu$. By Lemma~\ref{le:spirality-P-node-2-children} the spirality of $H_\nu$ equals $\sigma_{\mu_l} - \alpha_{l}^u - \alpha_{l}^v=k+2 - \alpha_{l}^u - \alpha_{l}^v$ and it suffices to set $\alpha_{l}^u=\alpha_{l}^v=1$ (which is always possible, as these two values correspond to $90^\circ$ angles) to get $\sigma_\nu = k$.
		
		\item{\bf Case 1.2: $k= M_l-1$.} Consider any rectilinear planar representation $H_{\mu_l}$ of $G_{\mu_l}$ with spirality $\sigma_{\mu_l} = k+1 = M_l$. To suitably choose the spirality of a rectilinear planar representation $H_{\mu_r}$ of $G_{\mu_r}$, observe that by the representability condition $M_l-2 \geq m_r$ and, as already proved, $M_r \geq k-2$, i.e., $M_r \geq M_l-3$. It follows that $[M_l-3,M_l-2] \cap [m_r,M_r] \neq \emptyset$. Hence, either $M_l-3 \in [m_r,M_r]$ (possibly $m_r=M_r=M_l-3$) or $M_l-2 \in [m_r,M_r]$ (possibly $m_r=M_r=M_l-2$). In the first case, choose any representation $H_{\mu_r}$ with spirality $\sigma_r = M_l-3$, which implies $\sigma_{\mu_l}-\sigma_{\mu_r}=3 \in [2,4]$. In the second case, choose $H_{\mu_r}$ with spirality $\sigma_r = M_l-2$, which implies $\sigma_{\mu_l}-\sigma_{\mu_r}=2 \in [2,4]$. The representations $H_{\mu_l}$ and $H_{\mu_r}$ can be combined in parallel to get a representation of $G_\nu$ with spirality $\sigma_{\nu}=k$. Namely, by Lemma~\ref{le:spirality-P-node-2-children} we can set $\alpha_u^l=0$ and $\alpha_v^l=1$ (or vice versa); also, if $\sigma_{\mu_r}=M_l-2$ we set $\alpha_u^r=0$ and $\alpha_v^l=1$ (or vice versa), while if $\sigma_{\mu_r}=M_l-3$ we set $\alpha_u^r=\alpha_v^l=1$.
		
		\item{\bf Case 1.3: $k= M_l$.} In this case, we can combine in parallel a representation $H_{\mu_l}$ of $G_{\mu_l}$ with spirality $\sigma_{\mu_l}=k=M_l$ and a representation  
		$H_{\mu_r}$ of $G_{\mu_r}$ with spirality $\sigma_{\mu_r}=k-2=M_l-2$, which implies that $ \sigma_{\mu_l}-\sigma_{\mu_r}=2$. By the representability condition we have $M_l-2 \geq m_r$, i.e., $\sigma_{\mu_r} \geq m_r$; also, $k \leq \min\{M_l,M_r+2\}\leq M_r+2$, i.e., $\sigma_{\mu_r} \leq M_r$. Hence, $\sigma_{\mu_r} \in [m_r,M_r]$. By Lemma~\ref{le:spirality-P-node-2-children} we can set $\alpha_u^l=\alpha_v^l=0$ and $\alpha_u^r=\alpha_v^l=1$ to get a representation of $G_\nu$ with spirality $\sigma_\nu = k$. 
	\end{itemize}
	
	\smallskip\noindent{\bf Case 2: $\alpha=1, \beta=2$.}  $G_\nu$ is of type \Pio{2}{12} and we prove that $I_\nu = [\max\{m_l-2,m_r\}+\frac{1}{2}, \min\{M_l, M_r+2\}-\frac{1}{2}]$.
	Assume first that $G_\nu$ is rectilinear planar and let $H_\nu$ be a rectilinear planar representation of $G_\nu$ with spirality~$\sigma_\nu$. Let $H_{\mu_l}$ and $H_{\mu_r}$ be the rectilinear planar representations of $G_{\mu_l}$ and $G_{\mu_r}$ contained in $H_\nu$, and let $\sigma_{\mu_l}$ and $\sigma_{\mu_r}$ be their corresponding spiralities. 
	By Lemma~\ref{le:P-2-children-support-type1}, $\sigma_{\mu_l} - \sigma_{\mu_r} \in [2,3]$, i.e., $\sigma_{\mu_l} \in [2+\sigma_{\mu_r}, 3+\sigma_{\mu_r}]$. Since $\sigma_{\mu_l} \in [m_l,M_l]$ and $\sigma_{\mu_r} \in [m_r,M_r]$, we have $\sigma_{\mu_l}\ge \max\{m_l,m_r+2\}$.
	Suppose, w.l.o.g, that $\outdeg_\nu(v)=2$ and $\outdeg_\nu(u)=1$. We have $k_{u}^r=k_{u}^l=1$,  $k_{v}^r=k_{v}^l=\frac{1}{2}$, $\alpha_{u}^l \in [0,1]$, $\alpha_{u}^r \in [0,1]$, and $\alpha_{v}^l=\alpha_{v}^r = 1$. 
	By Lemma~\ref{le:spirality-P-node-2-children}, $\sigma_\nu = \sigma_{\mu_l} - \alpha_{u}^l - \frac{1}{2} \alpha_{v}^l$. Since $-\alpha_{u}^l - \frac{1}{2} \alpha_{v}^l\ge -\frac{3}{2}$, we have $\sigma_\nu\ge \max\{m_l,m_r+2\}-\frac{3}{2}$. It follows that $\sigma_\nu\ge \max\{m_l-2,m_r\}+\frac{1}{2}$. Analogously, since $\sigma_{\mu_r} \in [\sigma_{\mu_l}-3,\sigma_{\mu_l}-2]$, we have $\sigma_{\mu_r}\le \min\{M_l-2,M_r\}$. By Lemma~\ref{le:spirality-P-node-2-children}, $\sigma_\nu = \sigma_{\mu_r} + \alpha_{u}^r + \frac{1}{2} \alpha_{v}^r$. Since $\alpha_{u}^r + \frac{1}{2} \alpha_{v}^r \le \frac{3}{2}$,  we have $\sigma_\nu\le \min\{M_l-2,M_r\}+\frac{3}{2}$. It follows that  $\sigma_\nu\le \min\{M_l,M_r+2\}-\frac{1}{2}$. Therefore, $\sigma_\nu \in I_\nu$.
	
	Assume vice versa that $k$ is a semi-integer in the interval $I_\nu = [\max\{m_l-2,m_r\}+\frac{1}{2},\min\{M_l,M_r+2\}-\frac{1}{2}]$. We show that $G_\nu$ has a rectilinear planar representation with spirality $\sigma_\nu = k$.
	Since $k \in [m_l-\frac{3}{2}, M_l-\frac{1}{2}]$ we have $k + \frac{1}{2} \leq M_l$ and $k + \frac{3}{2} \geq m_l$, i.e., $[k+\frac{1}{2},k+\frac{3}{2}]\cap [m_l,M_l] \neq \emptyset$. Also, since $m_l$ and $M_l$ are both integer numbers while $k$ is semi-integer, it is impossible to have $k + 1 = m_l = M_l$. It follows that $k+\frac{1}{2} \in [m_l,M_l]$ or $k+\frac{3}{2}\in [m_l,M_l]$. 
	With the same reasoning, we have $k \in [m_r+\frac{1}{2}, M_r+\frac{3}{2}]$ and $[k-\frac{3}{2},k-\frac{1}{2}] \cap [m_r,M_r] \neq \emptyset$. Hence, $k-\frac{3}{2}\in [m_r,M_r]$ or $k-\frac{1}{2}\in [m_r,M_r]$. We now prove that $k+\frac{3}{2} \in [m_l,M_l]$ or $k-\frac{3}{2} \in [m_r,M_r]$. Suppose for a contradiction that $k+\frac{3}{2}\not \in [m_l,M_l]$ and $k-\frac{3}{2}\not \in [m_r,M_r]$. In that case $k+\frac{1}{2}\in [m_l,M_l]$ and $k-\frac{1}{2} \in [m_r,M_r]$. Consequently, $k+\frac{1}{2}=M_l$ and $k-\frac{1}{2}=m_r$. Hence, $M_l-m_r=1$ and, by the representability condition, $G_\nu$ is not rectilinear planar, a contradiction. 
	As in the previous case, a rectilinear representation of $G_\nu$ with spirality $k$ is obtained by combining in parallel a representation $H_{\mu_l}$ of $G_{\mu_l}$ with spirality $\sigma_{\mu_l}$ and a representation $H_{\mu_r}$ of $G_{\mu_r}$ with spirality $\sigma_{\mu_r}$, for two suitable values  $\sigma_{\mu_l}$ and $\sigma_{\mu_r}$. Based on the previous considerations, we distinguish the following sub-cases:      
	
	\begin{itemize}
		\item \textbf{Case 2.1: $k+\frac{3}{2}\not \in [m_l,M_l]$.} 
		This implies that $k+\frac{1}{2}\in [m_l,M_l]$ and $k-\frac{3}{2}\in [m_r,M_r]$, and therefore we set $\sigma_{\mu_l}=k+\frac{1}{2}$ and $\sigma_{\mu_r}=k-\frac{3}{2}$.
		
		\item \textbf{Case 2.2: $k-\frac{3}{2}\not \in [m_r,M_r]$.} 
		This implies that $k+\frac{3}{2}\in [m_l,M_l]$ and $k-\frac{1}{2}\in [m_r,M_r]$, and therefore we set $\sigma_{\mu_l}=k+\frac{3}{2}$ and $\sigma_{\mu_r}=k-\frac{1}{2}$.
		
		\item \textbf{Case 2.3: $k+\frac{3}{2}\in [m_l,M_l]$ and $k-\frac{3}{2}\in [m_r,M_r]$.} We set $\sigma_{\mu_l}=k+\frac{3}{2}$ and $\sigma_{\mu_r}=k-\frac{3}{2}$.
	\end{itemize} 
	
	Note that, in all the three sub-cases we have $\sigma_{\mu_l}-\sigma_{\mu_r}\in [2,3]$, hence by Lemma~\ref{le:P-2-children-support-type1} there exists a rectilinear planar representation $H_\nu$ of $G_\nu$ that contains $H_{\mu_l}$ and $H_{\mu_r}$. It remains to prove that the spirality $\sigma_\nu$ of $H_\nu$ is equal to $k$. Suppose, w.l.o.g, that $\outdeg_\nu(u)=1$ and $\outdeg_\nu(v)=2$.  We have  $k_{u}^r=k_{u}^l=1$ and $k_{v}^r=k_{v}^l=\frac{1}{2}$. Since $G_\nu$ is rectilinear planar, $\alpha_{u}^l\in[0,1]$ and  $\alpha_{v}^l=1$. By Lemma~\ref{le:spirality-P-node-2-children}, $\sigma_\nu = \sigma_{\mu_l} - \alpha_{u}^l - \frac{1}{2} \alpha_{v}^l$. In Case~2.1 we have $\sigma_\nu = k+\frac{1}{2} - \alpha_{u}^l - \frac{1}{2} \alpha_{v}^l$; choosing $\alpha_{u}^l=0$ and $\alpha_{v}^l=1$ we have $\sigma_\nu = k$. In Cases~2.2 and~2.3  we have $\sigma_\nu = k+\frac{3}{2} - \alpha_{u}^l - \frac{1}{2} \alpha_{v}^l$; choosing $\alpha_{u}^l=1$ and $\alpha_{v}^l=1$ we have  $\sigma_\nu = k$.
	
	\smallskip\noindent{\bf Case 3: $\alpha=\beta=2$.}  $G_\nu$ is of type \Pio{2}{22} and we prove that $I_\nu = [\max\{m_l-2,m_r\}+1, \min\{M_l, M_r+2\}-1]$. 
	Assume first that $G_\nu$ is rectilinear planar and let $H_\nu$ be a rectilinear planar representation of $G_\nu$ with spirality~$\sigma_\nu$. Let $H_{\mu_l}$ and $H_{\mu_r}$ be the rectilinear planar representations of $G_{\mu_l}$ and $G_{\mu_r}$ contained in $H_\nu$, and let  $\sigma_{\mu_l}$ and $\sigma_{\mu_r}$ be their spiralities. Since both $u$ and $v$ have outdegree two in $G_\nu$ we have that $\alpha_{u}^l+\alpha_{u}^r = \alpha_{v}^l+\alpha_{v}^r = 2$. By Lemma~\ref{le:spirality-P-node-2-children},  $\sigma_{\mu_l} = \sigma_{\nu} + 1$ and $\sigma_{\mu_r} = \sigma_{\nu} - 1$. By the representability condition, $\sigma_{\mu_r}=\sigma_{\mu_l}-2$. Hence $\sigma_{\mu_r}\ge m_l-2$ and $\sigma_{\mu_r}\ge \max\{m_l-2,m_r\}$. 
	Also by $\sigma_{\nu}=\sigma_{\mu_r}+1$, $\sigma_\nu\ge \max\{m_l-2,m_r\}+1$. Similarly, by the representability condition, $\sigma_{\mu_l}=\sigma_{\mu_r}+2$. Hence $\sigma_{\mu_l}\le M_r+2$ and $\sigma_{\mu_l}\le \max\{M_l,M_r+2\}$. Since $\sigma_{\mu_l}=\sigma_\nu+1$ we have $\sigma_\nu\le \max\{M_l,M_r+2\}-1$.
	
	Assume vice versa that $k$ is an integer in the interval $I_\nu = [\max\{m_l-2,m_r\}+1,\min\{M_l,M_r+2\}-1]$. We show that there exists a rectilinear planar representation of $G_\nu$ with spirality $\sigma_\nu=k$. We have  $k +1 \in [\max\{m_l,m_r+2\},\min\{M_l,M_r+2\}]$ and therefore $k +1 \in [m_l,M_l]$. Hence there exists a rectilinear planar representation $H_{\mu_l}$ of $G_{\mu_l}$ with spirality $\sigma_{\mu_l}=k+1$. Similarly, we have  $k -1 \in [\max\{m_l-2,m_r\},\min\{M_l-2,M_r\}]$ and therefore $k-1 \in [m_r,M_r]$. Hence there exists a rectilinear planar representation $H_{\mu_r}$ of $G_{\mu_r}$ with spirality $\sigma_{\mu_r} = k-1$. By the representability condition $G_\nu$ has a rectilinear planar representation $H_\nu$; also, following the same construction as in the proof of Lemma~\ref{le:P-2-children-support-type1}, the spirality of $H_\nu$ is $\sigma_\nu=k$.
\end{proof}

\begin{lemma}\label{le:P-2-children-support-type2}
	Let $G_\nu$ be a P-node of type \Pio{3d}{\alpha\beta} and let $\mu_l$ and $\mu_r$ be its two children. $G_\nu$ is rectilinear planar if and only if $G_{\mu_l}$ and $G_{\mu_r}$ are rectilinear planar for values of spiralities $\sigma_{\mu_l}$ and $\sigma_{\mu_r}$, respectively, such that $\sigma_{\mu_l}-\sigma_{\mu_r}\in[\frac{5}{2},\frac{7}{2}-\gamma]$, where $\gamma = \alpha + \beta - 2$.
\end{lemma}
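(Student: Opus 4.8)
The plan is to mirror the scheme of \cref{le:P-2-children-support-type1}, using the spirality relation of \cref{le:spirality-P-node-2-children}. The first step is to extract the structural facts implied by the type $\mathrm{I}_{3d}\mathrm{O}_{\alpha\beta}$. Let $v$ be the pole with $\indeg_\nu(v)=3$ and $u$ the pole with $\indeg_\nu(u)=2$. Since $v$ is a pole of the (non-root) node $\nu$, at least one edge of $v$ leaves $G_\nu$, hence $\deg(v)=4$ and $\outdeg_\nu(v)=1$; in particular $\alpha_v^l=\alpha_v^r=1$, the smaller outdegree $\alpha$ equals $\outdeg_\nu(v)=1$, and $\gamma=\alpha+\beta-2=\beta-1\in\{0,1\}$. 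Thus only the sub-types $\mathrm{I}_{3d}\mathrm{O}_{11}$ ($\beta=1$, $\gamma=0$) and $\mathrm{I}_{3d}\mathrm{O}_{12}$ ($\beta=2$, $\gamma=1$) arise. Assume w.l.o.g. $d=l$ (the case $d=r$ is analogous). Then $\indeg_{\mu_l}(v)=2$ and $\indeg_{\mu_r}(v)=1$, while $u$ has indegree $1$ in each child; evaluating the coefficients of \cref{le:spirality-P-node-2-children} gives $k_v^l=\tfrac12$, $k_v^r=1$, and $k_u^l=k_u^r=1$ if $\beta=1$ (then $\deg(u)=3$), while $k_u^l=k_u^r=\tfrac12$ if $\beta=2$ (then $\deg(u)=4$, so also $\alpha_u^l=\alpha_u^r=1$).

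For the \emph{only if} direction, assume $G_\nu$ is rectilinear planar and let $H_\nu$ be such a representation, inducing rectilinear planar representations $H_{\mu_l},H_{\mu_r}$ with spiralities $\sigma_{\mu_l},\sigma_{\mu_r}$. By \cref{le:spirality-P-node-2-children}, $\sigma_{\mu_l}-\sigma_{\mu_r}=k_u^l\alpha_u^l+k_v^l\alpha_v^l+k_u^r\alpha_u^r+k_v^r\alpha_v^r$. Substituting the coefficients above together with $\alpha_v^l=\alpha_v^r=1$ yields, when $\beta=1$, $\sigma_{\mu_l}-\sigma_{\mu_r}=\tfrac32+(\alpha_u^l+\alpha_u^r)$, which lies in $\{\tfrac52,\tfrac72\}$ because $1\le\alpha_u^l+\alpha_u^r\le2$ by property~(ii) of \cref{le:spirality-P-node-2-children}; and when $\beta=2$, $\sigma_{\mu_l}-\sigma_{\mu_r}=\tfrac12+\tfrac12+\tfrac12+1=\tfrac52$. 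In both cases this set is exactly the semi-integers of $[\tfrac52,\tfrac72-\gamma]$, which gives necessity.

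For the \emph{if} direction, suppose $G_{\mu_l}$ and $G_{\mu_r}$ are rectilinear planar with spiralities $\sigma_{\mu_l},\sigma_{\mu_r}$ such that $\sigma_{\mu_l}-\sigma_{\mu_r}\in[\tfrac52,\tfrac72-\gamma]$. I would construct $H_\nu$ by combining the two rectilinear planar representations in parallel, fixing $\alpha_v^l=\alpha_v^r=1$ (forced since $\deg(v)=4$), and choosing $\alpha_u^l,\alpha_u^r$ as follows: if $\beta=2$, set $\alpha_u^l=\alpha_u^r=1$ (here $\sigma_{\mu_l}-\sigma_{\mu_r}=\tfrac52$ is the only admissible value); if $\beta=1$, set $\alpha_u^l+\alpha_u^r=1$ when $\sigma_{\mu_l}-\sigma_{\mu_r}=\tfrac52$ and $\alpha_u^l=\alpha_u^r=1$ when $\sigma_{\mu_l}-\sigma_{\mu_r}=\tfrac72$. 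It then remains to check that the resulting angle labeling is a valid bend-free representation. Every face entirely inside $H_{\mu_l}$ or $H_{\mu_r}$ already satisfies \textsf{(H2)}, and \textsf{(H1)} holds at every non-pole vertex; by construction, and because all angles at $v$ are $90^\circ$, it holds at $u$ and $v$ as well. For a simple cycle $C$ of $G_\nu$ through $u$ and $v$ that uses a path in each child, I would decompose it into such a path in $\mu_l$ (from $u$ to $v$) and one in $\mu_r$ (from $v$ to $u$) and, exactly as in \cref{le:P-2-children-support-type1}, express $a_C$ (the number of $90^\circ$ angles minus the number of $270^\circ$ angles in the interior of $C$) as $\sigma_{\mu_l}-\sigma_{\mu_r}$ plus the contribution of the interior angles of $C$ at $u$ and $v$; the choices of the $\alpha$'s above make $a_C=4$. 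All other cycles lie inside $H_{\mu_l}$ or $H_{\mu_r}$, hence are orthogonal polygons, so \textsf{(H1)} and \textsf{(H2)} hold everywhere and $H_\nu$ is rectilinear planar. (Alternatively, one can simply invoke the ``if'' part of \cref{le:spirality-P-node-2-children} with the chosen $\alpha$'s; the explicit parallel construction is kept to make the bend-free property transparent.)

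The only substantively new ingredient compared with \cref{le:P-2-children-support-type1} is the bookkeeping caused by the indegree-$3$ pole $v$: it has degree $4$ (so $\alpha$ is forced to $1$ and $\alpha_v^l=\alpha_v^r=1$), and its three edges entering $G_\nu$ are distributed as two into one child and one into the other, producing the asymmetric pair $k_v^l=\tfrac12$, $k_v^r=1$ instead of the symmetric coefficients of an $\mathrm{I}_2$ pole. This asymmetry is precisely what shifts the admissible range of $\sigma_{\mu_l}-\sigma_{\mu_r}$ by $\tfrac12$, from $[2,4-\gamma]$ in the $\mathrm{I}_2$ case to $[\tfrac52,\tfrac72-\gamma]$. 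I expect the most delicate step to be the verification of \textsf{(H2)} for the crossing cycles through $u$ and $v$: one must track which of the (up to two) edges of $v$ toward $\mu_l$ and which angles at $v$ fall in the interior of $C$ in order to get the correct correction term in the expression for $a_C$. This is the analogue of the cycle-counting step already carried out in \cref{le:P-2-children-support-type1}, and it introduces more cases rather than new conceptual difficulties.
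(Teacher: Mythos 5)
Your proposal is correct and follows essentially the same route as the paper's proof: necessity via the spirality relation of \cref{le:spirality-P-node-2-children} with the coefficients $k_v^l=\tfrac12$, $k_v^r=1$ forced by the indegree-3 pole, and sufficiency by combining the two child representations in parallel with exactly the same choices of $\alpha_u^l,\alpha_u^r$ and the same cycle-counting verification of \textsf{(H2)} (the paper carries out the delicate step you flag by splitting each crossing cycle at $u$ and at the alias vertex $v'$ of $v$ lying in $G_\nu$, which produces the $+\tfrac12$ correction you anticipate). The only cosmetic difference is that the paper organizes the argument into four explicit cases over $(\beta,d)$ rather than your two cases over $\beta$ with $d$ handled by symmetry.
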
 
\begin{proof}
	We distinguish four cases, based on the values of $\alpha$, $\beta$, and $d$.
	
	\noindent{\bf Case 1: $\alpha=\beta=1$, $d=l$.}  $G_\nu$ is of type \Pio{3l}{11} and we prove that $G_\nu$ is rectilinear planar if and only if $G_{\mu_l}$ and $G_{\mu_r}$ are rectilinear planar for values of spiralities $\sigma_{\mu_l}$ and $\sigma_{\mu_r}$ such that $\sigma_{\mu_l}-\sigma_{\mu_r}\in[\frac{5}{2},\frac{7}{2}]$. For an \Pio{3l}{11} component we have $k_u^l=k_u^r=k_v^r=1$ and $k_v^l=\frac{1}{2}$. 
	
	If $G_\nu$ is rectilinear planar then $1 \leq \alpha_{u}^l+\alpha_{u}^r \leq 2$ and $\alpha_{v}^l=\alpha_{v}^r=1$ in any rectilinear planar representation of $G_\nu$. Hence, by Lemma~\ref{le:spirality-P-node-2-children}, for any value of spirality $\sigma_\nu$ we have $\sigma_{\mu_l}-\sigma_{\mu_r}=\alpha_{u}^l+\frac{1}{2}\alpha_{v}^l+\alpha_{u}^r+\alpha_{v}^r \in [\frac{5}{2},\frac{7}{2}]$. 
	
	Suppose vice versa that $G_{\mu_l}$ and $G_{\mu_r}$ are rectilinear planar for values of spirality $\sigma_{\mu_l}$ and $\sigma_{\mu_r}$ such that $\sigma_{\mu_l}-\sigma_{\mu_r}\in[\frac{5}{2},\frac{7}{2}]$. We show that $G_\nu$ admits a rectilinear planar representation $H_\nu$. To define $H_\nu$, we combine in parallel the two rectilinear planar representations of $G_{\mu_l}$ and $G_{\mu_r}$ and suitably assign the values of $\alpha_u^l$ and $\alpha_u^r$, depending on the value of $\sigma_{\mu_l}-\sigma_{\mu_r}$. 
	
	Let $v'$ be the alias vertex of $G_{\mu_l}$ that is in $G_\nu$. Any cycle $C$ that goes through $u$ and $v$ also passes through $v'$. We show that the number of $90^\circ$ angles minus the number of $270^\circ$ angles in the interior of $C$ is equal to four.
	Vertices $u$ and $v'$ split $C$ into two paths $\pi_l$ and $\pi_r$. Suppose to visit $C$ clockwise. The number of right turns minus left turns along $\pi_l$ while going from $u$ to $v'$ equals $\sigma_{\mu_l}+\frac{1}{2}$.  The number of right turns minus left turns along $\pi_r$ while going from $v'$ to $u$ equals $-\sigma_{\mu_r}$. Hence, the sum $\sigma_{\mu_l}+\frac{1}{2}-\sigma_{\mu_r}+2-\alpha_u^r-\alpha_u^l$ corresponds to the number of $90^\circ$ angles minus the number of $270^\circ$ angles in the interior of $C$ at the vertices of $\pi_l$. Notice that $\alpha_u^r+\alpha_u^l\in \{1,2\}$ since $u$ is a vertex of degree 3. If $\sigma_{\mu_l}-\sigma_{\mu_r}=\frac{5}{2}$ we set $\alpha_u^r+\alpha_u^l=1$ and we have $\sigma_{\mu_l}+\frac{1}{2}-\sigma_{\mu_r}+2-\alpha_u^r-\alpha_u^l=\frac{5}{2}+\frac{1}{2}+2-1=4$. Else, if $\sigma_{\mu_l}-\sigma_{\mu_r}=\frac{7}{2}$ we set $\alpha_u^r+\alpha_u^l=2$ and we have $\sigma_{\mu_l}+\frac{1}{2}-\sigma_{\mu_r}+2-\alpha_u^r-\alpha_u^l=\frac{7}{2}+\frac{1}{2}+2-2=4$.

	Also, any other cycle not passing through $u$ and $v$ is an orthogonal polygon because it belongs to a rectilinear planar representation of either $G_{\mu_l}$ (with spirality $\sigma_{\mu_l}$) or $G_{\mu_r}$ (with spirality $\sigma_{\mu_r}$). 
	
	\noindent{\bf Case 2: $\alpha=1$, $\beta=2$, $d=l$.} $G_\nu$ is of type \Pio{3l}{12} and we prove that $G_\nu$ is rectilinear planar if and only if $G_{\mu_l}$ and $G_{\mu_r}$ are rectilinear planar for values of spiralities $\sigma_{\mu_l}$ and $\sigma_{\mu_r}$ such that $\sigma_{\mu_l}-\sigma_{\mu_r}=\frac{5}{2}$ (note that this corresponds to the interval $[\frac{5}{2},\frac{7}{2}-\gamma]$ claimed in the lemma). For an \Pio{3l}{12} component, $k_u^l=k_u^r=k_v^l=\frac{1}{2}$ and $k_v^r=1$.

	If $G_\nu$ is rectilinear planar then $ \alpha_{u}^l=\alpha_{u}^r=\alpha_{v}^l=\alpha_{v}^r=1$ in any rectilinear planar representation of $G_\nu$. Hence, by Lemma~\ref{le:spirality-P-node-2-children}, for any value of spirality $\sigma_\nu$  we have $\sigma_{\mu_l}-\sigma_{\mu_r}=\frac{1}{2}\alpha_{u}^l+\frac{1}{2}\alpha_{v}^l+\frac{1}{2}\alpha_{u}^r+\alpha_{v}^r =\frac{5}{2}$. 
	
	Suppose vice versa that $G_{\mu_l}$ and $G_{\mu_r}$ are rectilinear planar for values of spirality $\sigma_{\mu_l}$ and $\sigma_{\mu_r}$ such that $\sigma_{\mu_l}-\sigma_{\mu_r}=\frac{5}{2}$. We show that $G_\nu$ admits a rectilinear planar representation $H_\nu$. To define $H_\nu$, we combine in parallel the two rectilinear planar representations of $G_{\mu_l}$ and $G_{\mu_r}$ and assign values  $\alpha_u^l=\alpha_v^l=\alpha_u^r=\alpha_v^r=1$. 
	Let $v'$ be the alias vertex of $G_{\mu_l}$ that is in $G_\nu$. Any cycle $C$ that goes through $u$ and $v$ also passes through $v'$. We show that the number of $90^\circ$ angles minus the number of $270^\circ$ angles in the interior of $C$ is equal to four.
	Vertices $u$ and $v'$ split $C$ into two paths $\pi_l$ and $\pi_r$. Suppose to visit $C$ clockwise. The number of right turns minus left turns along $\pi_l$ while going from $u$ to $v'$ equals $\sigma_{\mu_l}+\frac{1}{2}$.  The number of right turns minus left turns along $\pi_r$ while going from $v'$ to $u$ equals $-\sigma_{\mu_r}$. Also, pole $u$ forms a $90^\circ$ angle in $C$. Hence, the sum $\sigma_{\mu_l}+\frac{1}{2}-\sigma_{\mu_r}+1$ corresponds to the number of $90^\circ$ angles minus the number of $270^\circ$ angles in $C$ at the vertices of $\pi_l$. Since $\sigma_{\mu_l}-\sigma_{\mu_r}=\frac{5}{2}$ we have $\sigma_{\mu_l}+\frac{1}{2}-\sigma_{\mu_r}+1=\frac{5}{2}+\frac{1}{2}+1=4$.
	
	Also, any other cycle not passing through $u$ and $v$ is an orthogonal polygon because it belongs to a rectilinear planar representation of either $G_{\mu_l}$ (with spirality $\sigma_{\mu_l}$) or $G_{\mu_r}$ (with spirality $\sigma_{\mu_r}$).

	\noindent{\bf Case 3: $\alpha=\beta=1$, $d=r$.} Symmetric to Case~1.
	
	\noindent{\bf Case 4: $\alpha=1$, $\beta=2$, $d=r$.} Symmetric to Case~2.
\end{proof}

\begin{lemma}\label{le:P-2-children-representability-type2}
	Let $\nu$ be a P-node of type \Pio{3d}{\alpha\beta} with children $\mu_l$ and $\mu_r$.
	Suppose that $G_{\mu_l}$ and $G_{\mu_r}$ are rectilinear planar with representability intervals $I_{\mu_l}=[m_l, M_l]$ and $I_{\mu_r} = [m_r, M_r]$, respectively. Graph $G_\nu$ is rectilinear planar if and only if $[m_l-M_r,M_l-m_r] \cap [\frac{5}{2},\frac{7}{2}-\gamma] \neq \emptyset$, where $\gamma = \alpha + \beta -2$. Also, if this representability condition holds then the representability interval of $G_\nu$ is $I_\nu = [\max\{m_l-\frac{3}{2},m_r+1\}+\frac{\gamma-\rho(d)}{2},\min\{M_l-\frac{1}{2}, M_r+2\}-\frac{\gamma+\rho(d)}{2}]$, where $\rho(\cdot)$ is a function such that $\rho(r)=1$ and $\rho(l)=0$.
\end{lemma}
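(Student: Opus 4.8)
The plan is to follow the two-part template of the proof of \cref{le:P-2-children-representability-type1}, with \cref{le:P-2-children-support-type2} now in the role of \cref{le:P-2-children-support-type1} and \cref{le:spirality-P-node-2-children} used to translate between the spirality of $G_\nu$ and those of $G_{\mu_l}$ and $G_{\mu_r}$; by \cref{th:substitution} we may treat $G_{\mu_l}$ and $G_{\mu_r}$ as determined by their spiralities alone. The analysis splits into the four non-symmetric types of \cref{le:P-2-children-support-type2}, namely \Pio{3l}{11}, \Pio{3l}{12}, \Pio{3r}{11}, and \Pio{3r}{12}, the last two being the left--right mirrors of the first two --- a reflection that negates spiralities and swaps the roles of $\mu_l$ and $\mu_r$ together with their representability intervals, and propagating it through the endpoint formula is exactly what accounts for the directional correction $\rho(d)$. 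For each type one reads the coefficients $k_u^l,k_u^r,k_v^l,k_v^r$ off the definition of $k_w^d$ --- they are the values already recorded in \cref{le:P-2-children-support-type2} --- and observes that $\alpha_v^l=\alpha_v^r=1$ always (the indegree-three pole of $\nu$ has degree four), with additionally $\alpha_u^l=\alpha_u^r=1$ for the two \Pio{3d}{12} types, while for the two \Pio{3d}{11} types $\alpha_u^l,\alpha_u^r\in\{0,1\}$ with $\alpha_u^l+\alpha_u^r\ge 1$.

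The representability condition is established exactly as in \cref{le:P-2-children-representability-type1}. If $G_\nu$ is rectilinear planar then \cref{le:P-2-children-support-type2} yields spiralities $\sigma_{\mu_l}\in[m_l,M_l]$ and $\sigma_{\mu_r}\in[m_r,M_r]$ with $\sigma_{\mu_l}-\sigma_{\mu_r}\in[\frac{5}{2},\frac{7}{2}-\gamma]$, so $[m_l-M_r,M_l-m_r]\cap[\frac{5}{2},\frac{7}{2}-\gamma]\ne\emptyset$. Conversely, any value $k$ in that intersection is realizable as $\sigma_{\mu_l}-\sigma_{\mu_r}$ with $\sigma_{\mu_l}\in[m_l,M_l]$ and $\sigma_{\mu_r}\in[m_r,M_r]$ --- the parities are compatible because in a \Pio{3d}{\alpha\beta} component exactly one of $\sigma_{\mu_l},\sigma_{\mu_r}$ is semi-integer and the other integer, so that their difference ranges over the semi-integers --- and then \cref{le:P-2-children-support-type2} produces a rectilinear planar representation of $G_\nu$.

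For the representability interval, the forward inclusion follows by substituting the type-specific $k_w^d$ into $\sigma_\nu=\sigma_{\mu_l}-k_u^l\alpha_u^l-k_v^l\alpha_v^l=\sigma_{\mu_r}+k_u^r\alpha_u^r+k_v^r\alpha_v^r$ and bounding $\sigma_\nu$ from above using $\sigma_{\mu_l}\le M_l$ (and the smallest admissible $\alpha$'s) and from below using $\sigma_{\mu_r}\ge m_r$ (and the largest admissible $\alpha$'s), together with the symmetric pair of bounds; for each of the four types this produces precisely $\sigma_\nu\in[\max\{m_l-\frac{3}{2},m_r+1\}+\frac{\gamma-\rho(d)}{2},\min\{M_l-\frac{1}{2},M_r+2\}-\frac{\gamma+\rho(d)}{2}]$. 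For the reverse inclusion, given $k$ in this interval with the correct parity, one builds the desired $H_\nu$ by composing in parallel rectilinear planar representations of $G_{\mu_l}$ and $G_{\mu_r}$ as in \cref{le:P-2-children-support-type2}. For the \Pio{3d}{12} types this is immediate: \cref{le:P-2-children-support-type2} forces $\sigma_{\mu_l}-\sigma_{\mu_r}=\frac{5}{2}$, so the pair of spiralities giving $\sigma_\nu=k$ is uniquely determined and lies in $[m_l,M_l]\times[m_r,M_r]$ straight from $k\in I_\nu$. For the \Pio{3d}{11} types (we describe $d=l$; $d=r$ is the mirror) one distinguishes the case $k\le M_l-\frac{3}{2}$, where the natural choice $\sigma_{\mu_l}=k+\frac{3}{2}$ (with $\alpha_u^l=1$ and $\alpha_u^r$ matched) works, from the boundary case $k=M_l-\frac{1}{2}$, where one takes $\sigma_{\mu_l}=M_l$ (forcing $\alpha_u^l=0$ and $\alpha_u^r=1$); in each case the accompanying $\sigma_{\mu_r}$, chosen among its one or two candidates, is checked to lie in $[m_r,M_r]$ using the bounds that define $I_\nu$.

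The main work --- and the only point at which the representability condition is genuinely used in the reverse direction --- is the boundary case $k=M_l-\frac{1}{2}$ of the \Pio{3d}{11} types: there the only $\sigma_{\mu_r}$ compatible with $\sigma_{\mu_l}=M_l$ and $\sigma_{\mu_l}-\sigma_{\mu_r}\in[\frac{5}{2},\frac{7}{2}]$ is $\sigma_{\mu_r}=M_l-\frac{5}{2}$, and this value belongs to $[m_r,M_r]$ precisely when $M_l-m_r\ge\frac{5}{2}$, i.e.\ precisely under the representability condition. This is the counterpart of the step where the proof of \cref{le:P-2-children-representability-type1} invokes its condition to exclude the pathological configuration (there phrased as ``$M_l-m_r=1$, a contradiction''). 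The remaining effort is bookkeeping: keeping track, across the four types, of which of $\sigma_{\mu_l}$ and $\sigma_{\mu_r}$ is integral and which semi-integral, and of how the shift $\rho(d)$ arises under the left--right reflection.
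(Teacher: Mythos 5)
Your proposal is correct and follows essentially the same route as the paper's proof: the condition via \cref{le:P-2-children-support-type2}, the interval via \cref{le:spirality-P-node-2-children} with the type-specific coefficients, the forced construction for the \Pio{3d}{12} types, a case analysis for the \Pio{3d}{11} types whose crux is exactly the paper's Claim~1 (the configuration $\sigma_\nu+\frac{1}{2}=M_l$, $\sigma_\nu-1=m_r$ excluded by the representability condition), and $d=r$ by reflection. Your split on $k\le M_l-\frac{3}{2}$ versus $k=M_l-\frac{1}{2}$ is just a reparametrization of the paper's cases (a)--(c) and covers the same ground.
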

\begin{proof}
	We prove the correctness of the representability condition and the validity of the representability interval.
	
	\smallskip\noindent\textsf{Representability condition.}
	Suppose that $G_\nu$ is rectilinear planar. By Lemma~\ref{le:P-2-children-support-type2}, $G_{\mu_l}$ and $G_{\mu_r}$ admit spiralities $\sigma_{\mu_l}$ and $\sigma_{\mu_r}$, respectively, such that $\sigma_{\mu_l}-\sigma_{\mu_r} \in [\frac{5}{2},\frac{7}{2}-\gamma]$, where $\gamma = \alpha + \beta -2$. Hence, $m_l-M_r \leq \sigma_{\mu_l}-\sigma_{\mu_r} \leq \frac{7}{2}-\gamma$ and $M_l-m_r \geq \sigma_{\mu_l}-\sigma_{\mu_r} \geq \frac{5}{2}$, i.e.,  $[m_l-M_r,M_l-m_r] \cap [\frac{5}{2},\frac{7}{2}-\gamma] \neq \emptyset$.
	
	Suppose, vice versa that $[m_l-M_r,M_l-m_r] \cap [\frac{5}{2},\frac{7}{2}-\gamma] \neq \emptyset$. By hypothesis $G_{\mu_l}$ (resp. $G_{\mu_r}$) is rectilinear planar for every value of spirality in the interval $[m_l,M_l]$ (resp. $[m_r,M_r]$). This implies that for every semi-integer value $k$ in the interval $[m_l-M_r, M_l-m_r]$, there exist rectilinear planar representations for $G_{\mu_l}$ and $G_{\mu_r}$ with spiralities $\sigma_{\mu_l}$ and $\sigma_{\mu_r}$ such that $\sigma_{\mu_l}-\sigma_{\mu_r} = k$. Since by hypothesis there exists a value $k \in [m_l-M_r,M_l-m_r] \cap [\frac{5}{2},\frac{7}{2}-\gamma]$, there must be two values of spiralities $\sigma_{\mu_l}$ and $\sigma_{\mu_r}$ for the representations of $G_{\mu_l}$ and $G_{\mu_r}$ such that $\sigma_{\mu_l}-\sigma_{\mu_r} = k \in [\frac{5}{2},\frac{7}{2}-\gamma]$. Hence, by Lemma~\ref{le:P-2-children-support-type2}, $G_\nu$ is rectilinear planar.
	
	\smallskip\noindent\textsf{Representability interval.} 
	We analyze four cases, based on the values of $\alpha$, $\beta$, and $d$; we assume, w.l.o.g., that $v$ is the pole of degree four.
	
	\noindent{\bf Case 1: $\alpha=\beta=1$, $d=l$.} $G_\nu$ is of type \Pio{3l}{11} and we prove that $I_\nu = [\max\{m_l-\frac{3}{2},m_r+1\},\min\{M_l-\frac{1}{2}, M_r+2\}]$.
	Assume first that $\sigma_\nu$ is the spirality of a rectilinear planar representation of $G_\nu$.  Since for an \Pio{3l}{11} component we have $k_u^l=k_u^r=k_v^r=1$ and $k_v^l=\frac{1}{2}$, by Lemma~\ref{le:spirality-P-node-2-children} we 
	have $\sigma_\nu = \sigma_{\mu_r} + \alpha_{u}^r + \alpha_{v}^r$ and $\sigma_\nu = \sigma_{\mu_l} - \alpha_{u}^l - \frac{1}{2}\alpha_{v}^l$. Since $\alpha_u^l + \alpha_u^r \in \{1,2\}$ and $\alpha_v^l=\alpha_v^r=1$, we have
	$\sigma_\nu \geq m_r + 1$, $\sigma_\nu \leq M_r+2$, $\sigma_\nu \geq m_l-\frac{3}{2}$, and $\sigma_\nu \leq M_l-\frac{1}{2}$.

	We now show that, if $\sigma_\nu \in  [\max\{m_l-\frac{3}{2},m_r+1\},\min\{M_l-\frac{1}{2},M_r+2\}]$, there exists a rectilinear planar representation of $G_\nu$ with spirality $\sigma_\nu$. We have $\sigma_\nu \in [m_l-\frac{3}{2}, M_l-\frac{1}{2}]$.
	Hence, $\sigma_\nu + \frac{1}{2} \leq M_l$ and $\sigma_\nu + \frac{3}{2} \geq m_l$, i.e., $[\sigma_\nu+\frac{1}{2},\sigma_{\nu}+\frac{3}{2}]\cap [m_l,M_l]\not = \emptyset$. Also, since $m_l$ and $M_l$ are both semi-integer numbers while $\sigma_\nu$ is integer, it is impossible to have $\sigma_\nu + 1 = m_l = M_l$. It follows that $\sigma_\nu+\frac{1}{2}\in [m_l,M_l]$ or $\sigma_\nu+\frac{3}{2}\in [m_l,M_l]$. With the same reasoning, we have $\sigma_\nu \in [m_r+1, M_r+2]$ and $[\sigma_\nu-2,\sigma_{\nu}-1]\cap [m_r,M_r]\not = \emptyset$. Hence, $\sigma_\nu-2\in [m_r,M_r]$ or $\sigma_\nu-1\in [m_r,M_r]$. 
	We now prove the following.
	
	\begin{clm}\label{cl:either3/2and2-L}
		Either $\sigma_\nu+\frac{3}{2}\in [m_l,M_l]$ or $\sigma_\nu-2 \in [m_r,M_r]$.
	\end{clm}
	\begin{claimproof}
		Suppose for a contradiction that $\sigma_\nu+\frac{3}{2}\not \in [m_l,M_l]$ and $\sigma_\nu-2\not \in [m_r,M_r]$. In that case $\sigma_\nu+\frac{1}{2}\in [m_l,M_l]$ and $\sigma_\nu-1 \in [m_r,M_r]$. Consequently, $\sigma_\nu+\frac{1}{2}=M_l$ and $\sigma_\nu-1=m_r$. Hence, $M_l-m_r=\frac{3}{2}$ and,  by the representability condition, $G_\nu$ is not rectilinear planar, a contradiction. Hence, either $\sigma_\nu+\frac{3}{2}\in [m_l,M_l]$ or $\sigma_\nu-2 \in [m_r,M_r]$. 
	\end{claimproof}
	
	We can construct a rectilinear planar representation $H_{\mu_l}$ of $G_{\mu_l}$ with spirality $\sigma_{\mu_l}$ and a rectilinear planar representation $H_{\mu_r}$ of $G_{\mu_r}$ with spirality $\sigma_{\mu_r}$, based on the following cases:
	\begin{itemize}
		\item \textbf{Case (a):} $\sigma_\nu+\frac{3}{2}\not \in [m_l,M_l]$. 
		This implies that $\sigma_\nu+\frac{1}{2}\in [m_l,M_l]$ and $\sigma_\nu-2\in [m_r,M_r]$, and therefore we set 
		$\sigma_{\mu_l}=\sigma_\nu+\frac{1}{2}$ and $\sigma_{\mu_r}=\sigma_\nu-2$.
		
		\item \textbf{Case (b):} $\sigma_\nu-2\not \in [m_r,M_r]$. 
		This implies that $\sigma_\nu+\frac{3}{2}\in [m_l,M_l]$ and $\sigma_\nu-1\in [m_r,M_r]$, and therefore we set 
		$\sigma_{\mu_l}=\sigma_\nu+\frac{3}{2}$ and $\sigma_{\mu_r}=\sigma_\nu-1$.
		
		\item \textbf{Case (c):} $\sigma_\nu+\frac{3}{2}\in [m_l,M_l]$ and $\sigma_\nu-2\in [m_r,M_r]$.
		We set $\sigma_{\mu_l}=\sigma_\nu+\frac{3}{2}$ and $\sigma_{\mu_r}=\sigma_\nu-2$.
	\end{itemize} 
	
	By the claim proved above, one of the conditions in Case~(a), Case~(b), or Case~(c) is verified. In all the three cases we have $\sigma_{\mu_l}-\sigma_{\mu_r}\in [\frac{5}{2},\frac{7}{2}]$, hence, there exists a rectilinear planar representation $H_\nu$ of $G_\nu$ given the values of $\sigma_{\mu_l}$ and $\sigma_{\mu_r}$ described in the three cases. We have to prove that in the three cases the spirality of $H_\nu$ is $\sigma_\nu$. 
	By Lemma~\ref{le:spirality-P-node-2-children} we have $\sigma_\nu' = \sigma_{\mu_l} - \alpha_{u}^l - \frac{1}{2} \alpha_{v}^l$, where $\sigma_\nu'$ is the spirality of the representation $H_\nu$ of $G_\nu$ given a choice of $\sigma_{\mu_l}$, $\alpha_{v}^l$, and $\alpha_{v}^r$. In Case~(a) we have $\sigma_\nu' = \sigma_\nu+\frac{1}{2} - \alpha_{u}^l - \frac{1}{2} \alpha_{v}^l$; choosing $\alpha_{u}^l=0$ and  $\alpha_{v}^l=1$ we have  $\sigma_\nu' = \sigma_\nu$. In Cases~(b) and~(c)  we have $\sigma_\nu' = \sigma_\nu+\frac{3}{2} - \alpha_{u}^l - \frac{1}{2} \alpha_{v}^l$; choosing $\alpha_{u}^l=1$ and  $\alpha_{v}^l=1$ we have  $\sigma_\nu' = \sigma_\nu$.
	
	\smallskip
	\noindent{\bf Case 2: $\alpha=1$, $\beta=2$, $d=l$.} $G_\nu$ is of type \Pio{3l}{12} and we prove that $I_\nu = [\max\{m_l-\frac{3}{2},m_r+1\}+\frac{1}{2},\min\{M_l-\frac{1}{2}, M_r+2\}-\frac{1}{2}] = [\max\{m_l-1,m_r+\frac{3}{2}\},\min\{M_l-1, M_r+\frac{3}{2}\}$.
	Assume first that $\sigma_\nu$ is the spirality of a rectilinear planar representation of $G_\nu$.  Since for an \Pio{3l}{12} component we have $k_u^l=k_u^r=k_v^l=\frac{1}{2}$ and $k_v^r=1$, by Lemma~\ref{le:spirality-P-node-2-children} we 
	have $\sigma_\nu = \sigma_{\mu_r} + \frac{1}{2}\alpha_{u}^r + \alpha_{v}^r$ and $\sigma_\nu = \sigma_{\mu_l} - \frac{1}{2}\alpha_{u}^l - \frac{1}{2}\alpha_{v}^l$. Since $\alpha_u^l=\alpha_v^l=\alpha_u^r=\alpha_v^r=1$, we have:
	$\sigma_\nu \geq m_r + \frac{3}{2}$, $\sigma_\nu \leq M_r+\frac{3}{2}$, $\sigma_\nu \geq m_l-1$, and $\sigma_\nu \leq M_l-1$.

	We now show that, if $\sigma_\nu \in [\max\{m_l-1,m_r+\frac{3}{2}\},\min\{M_l-1,M_r+\frac{3}{2}\}]$, there exists a rectilinear planar representation of $G_\nu$ with spirality $\sigma_\nu$. 	We have $\sigma_\nu \in [m_l-1, M_l-1]$ and $\sigma_\nu \in [m_r+\frac{3}{2}, M_r+\frac{3}{2}]$. Hence, $\sigma_\nu+1 \in [m_l, M_l]$ and $\sigma_\nu-\frac{3}{2} \in [m_r, M_r]$.  We can construct a rectilinear planar representation $H_{\mu_l}$ of $G_{\mu_l}$ with spirality $\sigma_{\mu_l}=\sigma_\nu+1$ and a rectilinear planar representation $H_{\mu_r}$ of $G_{\mu_r}$ with spirality $\sigma_{\mu_r}=\sigma_\nu-\frac{3}{2}$. Notice that, for this choice, we have $\sigma_{\mu_l}-\sigma_{\mu_r}=\frac{5}{2}$, hence, there exists a rectilinear planar representation $H_\nu$ of $G_\nu$ given the values of $\sigma_{\mu_l}$ and $\sigma_{\mu_r}$. We have to prove that the spirality of $H_\nu$ is $\sigma_\nu$.
	By Lemma~\ref{le:spirality-P-node-2-children} we have $\sigma_\nu' = \sigma_{\mu_l} - \frac{1}{2}\alpha_{u}^l - \frac{1}{2}\alpha_{v}^l$, where $\sigma_\nu'$ is the spirality of the representation $H_\nu$ of $G_\nu$ given a choice of~$\sigma_{\mu_l}$, $\alpha_{v}^l$, and~$\alpha_{v}^r$. Since $\sigma_\nu=\sigma_{\mu_l}-1$, $\alpha_{v}^l=1$, and $\alpha_{v}^r=1$, we have  $\sigma_\nu' = \sigma_\nu+1-1=\sigma_\nu$.
	
	\smallskip\noindent{\bf Case 3: $\alpha=\beta=1$, $d=r$}. Symmetric to Case~1.
	
	\smallskip\noindent{\bf Case 4: $\alpha=1$, $\beta=2$, $d=r$}. Symmetric to Case~2.  
\end{proof}

\begin{lemma}\label{le:P-2-children-support-type3}
	Let $G_\nu$ be a P-node of type \Pin{3dd'} and let $\mu_l$ and $\mu_r$ be its two children. 
	$G_\nu$ is rectilinear planar if and only if $G_{\mu_l}$ and $G_{\mu_r}$ are rectilinear planar for values of spiralities $\sigma_{\mu_l}$ and $\sigma_{\mu_r}$, respectively, such that $\sigma_{\mu_l}-\sigma_{\mu_r}=3$.
\end{lemma}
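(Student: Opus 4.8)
The plan is to derive the statement from Lemma~\ref{le:spirality-P-node-2-children}, taking advantage of the fact that for a \Pin{3dd'} node the angular structure at the two poles is completely rigid. First I would note that each pole $w \in \{u,v\}$ has $\indeg_\nu(w)=3$ and $\outdeg_\nu(w)=1$, so $\deg(w)=4$; consequently all four angles at $w$ equal $90^\circ$ in every orthogonal representation of $G$ (this is already observed right after the definition of the variables $\alpha_w^l,\alpha_w^r$), and therefore in Lemma~\ref{le:spirality-P-node-2-children} we are forced to take $\alpha_u^l=\alpha_u^r=\alpha_v^l=\alpha_v^r=1$.

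Next I would evaluate the coefficients $k_w^d$. Since $\outdeg_\nu(w)=1$ for both poles, $k_w^d=1$ exactly when $\indeg_{\mu_d}(w)=1$, and $k_w^d=\tfrac12$ otherwise. As $\nu$ has two children and $\indeg_\nu(w)=3$, the three edges at $w$ internal to $G_\nu$ split as $2+1$ between $\mu_l$ and $\mu_r$; hence, for each pole $w$, exactly one of $k_w^l,k_w^r$ equals $\tfrac12$ and the other equals $1$, so $k_w^l+k_w^r=\tfrac32$ no matter which of the subcases $dd'\in\{ll,lr,rr\}$ occurs. Summing over $w\in\{u,v\}$ gives $k_u^l+k_v^l+k_u^r+k_v^r=3$.

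The two directions then follow. For necessity, if $H_\nu$ is a rectilinear planar representation of $G_\nu$ with spirality $\sigma_\nu$, Lemma~\ref{le:spirality-P-node-2-children} yields spiralities $\sigma_{\mu_l},\sigma_{\mu_r}$ for which $G_{\mu_l},G_{\mu_r}$ are rectilinear planar and, since all $\alpha$'s equal $1$, $\sigma_{\mu_l}-k_u^l-k_v^l=\sigma_\nu=\sigma_{\mu_r}+k_u^r+k_v^r$; subtracting gives $\sigma_{\mu_l}-\sigma_{\mu_r}=k_u^l+k_v^l+k_u^r+k_v^r=3$. For sufficiency, given $G_{\mu_l},G_{\mu_r}$ rectilinear planar with spiralities such that $\sigma_{\mu_l}-\sigma_{\mu_r}=3$, I would take $\alpha_u^l=\alpha_u^r=\alpha_v^l=\alpha_v^r=1$ (so condition (ii) of Lemma~\ref{le:spirality-P-node-2-children} holds) and observe that condition (iii), i.e.\ $\sigma_{\mu_l}-k_u^l-k_v^l=\sigma_{\mu_r}+k_u^r+k_v^r$, is precisely $\sigma_{\mu_l}-\sigma_{\mu_r}=3$; hence $G_\nu$ admits the spirality $\sigma_\nu:=\sigma_{\mu_l}-k_u^l-k_v^l$, and since a parallel composition creates no new edges, gluing the bend-free $H_{\mu_l}$ and $H_{\mu_r}$ produces a bend-free, i.e.\ rectilinear planar, representation $H_\nu$.

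The only part I expect to require genuine care is a self-contained proof of sufficiency that does not treat Lemma~\ref{le:spirality-P-node-2-children} as a black box, along the lines of the cycle-counting arguments in the proofs of Lemmas~\ref{le:P-2-children-support-type1} and~\ref{le:P-2-children-support-type2}. There, after merging $H_{\mu_l}$ and $H_{\mu_r}$ in parallel and fixing all pole angles to $90^\circ$, one verifies Property~\textsf{(H2)} for the face shared by the two children by showing that every cycle $C$ of $G_\nu$ through $u$ and $v$ has $a_c=4$, where $a_c$ is the number of $90^\circ$ angles minus the number of $270^\circ$ angles in the interior of $C$. This computation must be split into the subcases $dd'\in\{ll,lr,rr\}$, carefully placing the alias vertex of each pole (which now lies on an internal edge of $G_\nu$, since each pole has indegree two in exactly one child) and adding the contributions of the two sub-paths of $C$ (equal to $\sigma_{\mu_l}$ and $-\sigma_{\mu_r}$ up to the alias corrections) to the forced $90^\circ$ angles at $u$ and $v$; the identity $\sigma_{\mu_l}-\sigma_{\mu_r}=3$ is exactly what makes $a_c=4$ in every subcase. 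Every cycle of $G_\nu$ not passing through both poles lies entirely in $H_{\mu_l}$ or $H_{\mu_r}$, hence is an orthogonal polygon, and Property~\textsf{(H1)} at all non-pole vertices is inherited from the children.
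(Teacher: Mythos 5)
Your proposal is correct and follows essentially the same route as the paper's proof: forcing all $\alpha$'s to $1$ because both poles have degree four, reading off $\sigma_{\mu_l}-\sigma_{\mu_r}=\sum_{w,d}k_w^d\alpha_w^d=3$ from Lemma~\ref{le:spirality-P-node-2-children} for necessity, and establishing sufficiency via the cycle-counting verification of Property~\textsf{(H2)} through the alias vertices. The only (welcome) difference is that you derive $k_u^l+k_v^l+k_u^r+k_v^r=3$ uniformly from the $2+1$ split of each pole's internal edges, whereas the paper checks the three subcases $dd'\in\{ll,lr,rr\}$ separately.
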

\begin{proof}
	We distinguish three cases, based on the values of $d$ and $d'$. The proof for the type \Pin{3rl} is symmetric to the proof for the type~\Pin{3lr}.
	
	\noindent{\bf Case 1: $d=d'=l$.}  $G_\nu$ is of type \Pin{3ll} and we prove that $G_\nu$ is rectilinear planar if and only if $G_{\mu_l}$ and $G_{\mu_r}$ are rectilinear planar for values of spiralities $\sigma_{\mu_l}$ and $\sigma_{\mu_r}$ such that $\sigma_{\mu_l}-\sigma_{\mu_r}=3$. For an \Pin{3ll} component we have $k_u^l=k_v^l=\frac{1}{2}$ and $k_u^r=k_v^r=1$. 
	
	If $G_\nu$ is rectilinear planar, we have $\alpha_{u}^l=\alpha_{u}^r =\alpha_{v}^l=\alpha_{v}^r=1$ in any rectilinear planar representation of $G_\nu$. Hence, by Lemma~\ref{le:spirality-P-node-2-children}, for any value of spirality $\sigma_\nu$  we have $\sigma_{\mu_l}-\sigma_{\mu_r}=\frac{1}{2}\alpha_{u}^l+\frac{1}{2}\alpha_{v}^l+\alpha_{u}^r+\alpha_{v}^r=3$. 
	
	Suppose vice versa that $G_{\mu_l}$ and $G_{\mu_r}$ are rectilinear planar for values of spirality $\sigma_{\mu_l}$ and $\sigma_{\mu_r}$ such that $\sigma_{\mu_l}-\sigma_{\mu_r}=3$. We show that $G_\nu$ admits a rectilinear planar representation $H_\nu$. To define $H_\nu$, we combine in parallel the two rectilinear planar representations of $G_{\mu_l}$ and $G_{\mu_r}$ and assign values  $\alpha_u^l=\alpha_v^l=\alpha_u^r=\alpha_v^r=1$. 
	Let $u'$ and $v'$ be the alias vertices of $G_{\mu_l}$ that are in $G_\nu$. Any cycle $C$ that goes through $u$ and $v$ also passes through $u'$ and $v'$. We show that the number of $90^\circ$ angles minus the number of $270^\circ$ angles in the interior of $C$ is equal to four.
	Vertices $u'$ and $v'$ split $C$ into two paths $\pi_l$ and $\pi_r$. Suppose to visit $C$ clockwise. The number of right turns minus left turns along $\pi_l$ while going from $u'$ to $v'$ equals $\sigma_{\mu_l}+1$.  The number of right turns minus left turns along $\pi_r$ while going from $v'$ to $u'$ equals $-\sigma_{\mu_r}$. The sum of these two values corresponds to the number of $90^\circ$ angles minus the number of $270^\circ$ angles in the interior of $C$ at the vertices of $\pi_l$. Hence, $\sigma_{\mu_l}+1-\sigma_{\mu_r}=3+1=4$.
	
	Also, any other cycle not passing through $u$ and $v$ is an orthogonal polygon because it belongs to a rectilinear planar representation of either $G_{\mu_l}$ (with spirality $\sigma_{\mu_l}$) or $G_{\mu_r}$ (with spirality $\sigma_{\mu_r}$). 
	
	\noindent{\bf Case 2: $d=d'=r$.} Symmetric to Case 1, observing that $k_u^r=k_v^r=\frac{1}{2}$ and $k_u^l=k_v^l=1$.
	
	\noindent{\bf Case 3: $d=l$, $d'=r$.} $G_\nu$ is of type \Pin{3lr} and we prove that $G_\nu$ is rectilinear planar if and only if $G_{\mu_l}$ and $G_{\mu_r}$ are rectilinear planar for values of spiralities $\sigma_{\mu_l}$ and $\sigma_{\mu_r}$ such that $\sigma_{\mu_l}-\sigma_{\mu_r}=3$. For an \Pin{3lr} component we have $k_u^r=k_v^l=\frac{1}{2}$ and $k_u^l=k_v^r=1$. 
	
	If $G_\nu$ is rectilinear planar, we have $\alpha_{u}^l=\alpha_{u}^r =\alpha_{v}^l=\alpha_{v}^r=1$ in any rectilinear planar representation of $G_\nu$. Hence, by Lemma~\ref{le:spirality-P-node-2-children}, for any value of spirality $\sigma_\nu$  we have $\sigma_{\mu_l}-\sigma_{\mu_r}=\alpha_{u}^l+\frac{1}{2}\alpha_{v}^l+\frac{1}{2}\alpha_{u}^r+\alpha_{v}^r=3$.
	
	Suppose vice versa that $G_{\mu_l}$ and $G_{\mu_r}$ are rectilinear planar for values of spirality $\sigma_{\mu_l}$ and $\sigma_{\mu_r}$ such that $\sigma_{\mu_l}-\sigma_{\mu_r}=3$. We show that $G_\nu$ admits a rectilinear planar representation $H_\nu$. To define $H_\nu$, we combine in parallel the two rectilinear planar representations of $G_{\mu_l}$ and $G_{\mu_r}$ and assign values  $\alpha_u^l=\alpha_v^l=\alpha_u^r=\alpha_v^r=1$. 
	%
	Let $v'$ be the alias vertex of the pole $v$ of $G_{\mu_l}$ such that $v'$ is along an edge of $G_\nu$. Similarly, let $u'$ be the alias vertex of the pole $u$ of $G_{\mu_r}$ such that $u'$ is along an edge of $G_\nu$. Any cycle $C$ that goes through $u$ and $v$ also passes through $u'$ and $v'$. We show that the number of $90^\circ$ angles minus the number of $270^\circ$ angles in the interior of $C$ is equal to four.
	Vertices $u'$ and $v'$ split $C$ into two paths $\pi_l$ and $\pi_r$. Suppose to visit $C$ clockwise. The number of right turns minus left turns along $\pi_l$ while going from $u'$ to $v'$ equals $\sigma_{\mu_l}+\frac{1}{2}$.  The number of right turns minus left turns along $\pi_r$ while going from $v'$ to $u'$ equals $-\sigma_{\mu_r}+\frac{1}{2}$. The sum of these two values corresponds to the number of $90^\circ$ angles minus the number of $270^\circ$ angles in the interior of $C$ at the vertices of $\pi_l$. Hence, $\sigma_{\mu_l}+\frac{1}{2}-\sigma_{\mu_r}+\frac{1}{2}=3+\frac{1}{2}+\frac{1}{2}=4$.
	
	Also, any other cycle not passing through $u$ and $v$ is an orthogonal polygon because it belongs to a rectilinear planar representation of either $G_{\mu_l}$ (with spirality $\sigma_{\mu_l}$) or $G_{\mu_r}$ (with spirality $\sigma_{\mu_r}$). 
\end{proof}

\begin{lemma}\label{le:P-2-children-representability-type3}
	Let $\nu$ be a P-node of type \Pin{3dd'} with children $\mu_l$ and $\mu_r$.
	Suppose that $G_{\mu_l}$ and $G_{\mu_r}$ are rectilinear planar with representability intervals $I_{\mu_l}=[m_l, M_l]$ and $I_{\mu_r} = [m_r, M_r]$, respectively. Graph $G_\nu$ is rectilinear planar if and only if $3 \in [m_l-M_r,M_l-m_r]$. Also, if this representability condition holds then the representability interval of $G_\nu$ is $I_\nu = [\max\{m_l-1,m_r+2\}-\frac{\phi(d)+\phi(d')}{2},\min\{M_l-1, M_r+2\}-\frac{\phi(d)+\phi(d')}{2}]$, where $\phi(\cdot)$ is a function such that $\phi(r)=1$ and $\phi(l)=0$.
\end{lemma}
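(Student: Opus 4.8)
The plan is to follow the same two-part template used for \cref{le:P-2-children-representability-type1,le:P-2-children-representability-type2}, first establishing the \emph{representability condition} and then deriving the \emph{representability interval}, while exploiting the fact that for a node of type \Pin{3dd'} the support result \cref{le:P-2-children-support-type3} pins down the \emph{exact} value of the difference $\sigma_{\mu_l}-\sigma_{\mu_r}$ (namely $3$) rather than an interval of admissible differences. This makes the analysis shorter than in the \Pio{2}{\lambda\beta} and \Pio{3d}{\alpha\beta} cases, because there is no sub-case split on the magnitude of the difference.

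\emph{Representability condition.} For necessity, if $G_\nu$ is rectilinear planar then by \cref{le:P-2-children-support-type3} there are rectilinear planar representations of $G_{\mu_l}$ and $G_{\mu_r}$ with spiralities $\sigma_{\mu_l}\in[m_l,M_l]$, $\sigma_{\mu_r}\in[m_r,M_r]$, and $\sigma_{\mu_l}-\sigma_{\mu_r}=3$; hence $m_l-M_r\le 3\le M_l-m_r$, i.e., $3\in[m_l-M_r,M_l-m_r]$. For sufficiency, I would first note that in each subtype the two children of $\nu$ have spiralities of the same granularity (both integer for \Pin{3ll} and \Pin{3rr}, both semi-integer for \Pin{3lr}), which follows directly from counting alias vertices at the poles; consequently, as $\sigma_{\mu_l}$ and $\sigma_{\mu_r}$ range over $I_{\mu_l}$ and $I_{\mu_r}$, the difference $\sigma_{\mu_l}-\sigma_{\mu_r}$ attains exactly the integers in $[m_l-M_r,M_l-m_r]$. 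Since by hypothesis $3$ is an integer in that interval, there exist admissible $\sigma_{\mu_l},\sigma_{\mu_r}$ with $\sigma_{\mu_l}-\sigma_{\mu_r}=3$, and \cref{le:P-2-children-support-type3} then gives that $G_\nu$ is rectilinear planar.

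\emph{Representability interval.} Both poles $u,v$ of $\nu$ have degree four (indegree three in $G_\nu$, outdegree one), so in any rectilinear planar representation all angles at $u$ and $v$ are $90^\circ$, forcing $\alpha_u^l=\alpha_u^r=\alpha_v^l=\alpha_v^r=1$. Substituting these values, together with the constants $k_w^d$ recorded in the proof of \cref{le:P-2-children-support-type3} for each subtype, into the identity of \cref{le:spirality-P-node-2-children}, yields a single linear relation $\sigma_\nu=\sigma_{\mu_l}-c_l=\sigma_{\mu_r}+c_r$, with $(c_l,c_r)=(1,2)$ for \Pin{3ll}, $(c_l,c_r)=(2,1)$ for \Pin{3rr}, and $(c_l,c_r)=(\tfrac32,\tfrac32)$ for \Pin{3lr}; in all cases $c_l+c_r=3$, in agreement with \cref{le:P-2-children-support-type3}. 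From this relation, every rectilinear planar representation of $G_\nu$ has spirality in $[m_l-c_l,M_l-c_l]\cap[m_r+c_r,M_r+c_r]$. For the converse inclusion, given any $k$ in this intersection I would set $\sigma_{\mu_l}=k+c_l\in[m_l,M_l]$ and $\sigma_{\mu_r}=k-c_r\in[m_r,M_r]$; since $\sigma_{\mu_l}-\sigma_{\mu_r}=c_l+c_r=3$, \cref{le:P-2-children-support-type3} produces (via the parallel composition described in its proof, with all $\alpha_w^d=1$) a rectilinear planar representation of $G_\nu$ whose spirality equals $k$ by the linear relation. Hence $I_\nu=[\max\{m_l-c_l,m_r+c_r\},\min\{M_l-c_l,M_r+c_r\}]$, and writing $c_l=1+\tfrac{\phi(d)+\phi(d')}{2}$ and $c_r=2-\tfrac{\phi(d)+\phi(d')}{2}$ (checked against the three subtypes) this becomes exactly $I_\nu=[\max\{m_l-1,m_r+2\}-\tfrac{\phi(d)+\phi(d')}{2},\min\{M_l-1,M_r+2\}-\tfrac{\phi(d)+\phi(d')}{2}]$.

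I do not expect a genuine obstacle here: the uniqueness of the admissible difference $\sigma_{\mu_l}-\sigma_{\mu_r}=3$ eliminates the case analysis that complicated the \Pio{2}{11} and \Pio{3l}{11} sub-cases of \cref{le:P-2-children-representability-type1,le:P-2-children-representability-type2}. The only points needing a little care are the granularity remark used in the sufficiency of the representability condition (ensuring $3$ is actually realized as a difference of admissible child spiralities) and the bookkeeping of the $k_w^d$ constants so that the three pairs $(c_l,c_r)$ collapse into the uniform formula involving $\phi$.
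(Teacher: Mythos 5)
Your proof is correct and follows essentially the same route as the paper's: necessity and sufficiency of the representability condition via \cref{le:P-2-children-support-type3}, and the interval via the linear relation $\sigma_\nu=\sigma_{\mu_l}-c_l=\sigma_{\mu_r}+c_r$ forced by $\alpha_u^l=\alpha_u^r=\alpha_v^l=\alpha_v^r=1$ at the degree-four poles, with the converse obtained by setting $\sigma_{\mu_l}=k+c_l$ and $\sigma_{\mu_r}=k-c_r$. The only differences are presentational: you parametrize the three subtypes by $(c_l,c_r)$ with $c_l+c_r=3$ where the paper writes out the cases \Pin{3ll}, \Pin{3rr}, and \Pin{3lr} separately, and your explicit granularity remark in the sufficiency direction makes precise a point the paper leaves implicit.
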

\begin{proof}
	We prove the correctness of the representability condition and
	the validity of the representability interval.
	
	\smallskip\noindent\textsf{Representability condition.}
	Suppose that $G_\nu$ is rectilinear planar. By Lemma~\ref{le:P-2-children-support-type3}, $G_{\mu_l}$ and $G_{\mu_r}$ admit spiralities $\sigma_{\mu_l}$ and $\sigma_{\mu_r}$, respectively, such that $\sigma_{\mu_l}-\sigma_{\mu_r} = 3$. Hence, $m_l-M_r \leq \sigma_{\mu_l}-\sigma_{\mu_r} \leq 3$ and $M_l-m_r \geq \sigma_{\mu_l}-\sigma_{\mu_r} \geq 3$, i.e.,  $3 \in [m_l-M_r,M_l-m_r]$.
	Suppose, vice versa that $3 \in [m_l-M_r,M_l-m_r]$. By hypothesis $G_{\mu_l}$ (resp. $G_{\mu_r}$) is rectilinear planar for every value of spirality in the interval $[m_l,M_l]$ (resp. $[m_r,M_r]$). This implies that there exist rectilinear planar representations for $G_{\mu_l}$ and $G_{\mu_r}$ with spiralities $\sigma_{\mu_l} \in [m_l,M_l]$ and $\sigma_{\mu_r} \in [m_r,M_r]$ such that $\sigma_{\mu_l}-\sigma_{\mu_r} = 3$. Hence, by Lemma~\ref{le:P-2-children-support-type3} $G_\nu$ is rectilinear planar.
	
	\smallskip\noindent\textsf{Representability interval.} We distinguish three cases, based on the values of $d$ and $d'$. Note that a possible forth case for the type~\Pin{3rl} is symmetric to the case for the type~\Pin{3lr}. 
	
	\noindent{\bf Case 1: $d=d'=l$.} $G_\nu$ is of type \Pin{3ll} and we prove that $I_\nu = [\max\{m_l-1,m_r+2\},\min\{M_l-1, M_r+2\}]$.
	Assume first that $\sigma_\nu$ is the spirality of a rectilinear planar representation of $G_\nu$.  Since for an \Pin{3ll} component we have $k_u^l=k_v^l=\frac{1}{2}$ and $k_u^r=k_v^r=1$, by Lemma~\ref{le:spirality-P-node-2-children} we 
	have $\sigma_\nu = \sigma_{\mu_r} + \alpha_{u}^r + \alpha_{v}^r$ and $\sigma_\nu =\sigma_{\mu_l} - \frac{1}{2}\alpha_{u}^l - \frac{1}{2}\alpha_{v}^l$. Since $\alpha_u^l=\alpha_v^l=\alpha_u^r=\alpha_v^r=1$, we have:
	$\sigma_\nu \geq m_r + 2$, $\sigma_\nu \leq M_r+2$, $\sigma_\nu \geq m_l-1$, and $\sigma_\nu \leq M_l-1$. 
	
	We now show that, if $\sigma_\nu \in [\max\{m_l-1,m_r+2\},\min\{M_l-1,M_r+2\}]$, there exists a rectilinear planar representation of $G_\nu$ with spirality $\sigma_\nu$. We have $\sigma_\nu \in [m_l-1, M_l-1]$ and $\sigma_\nu \in [m_r+2, M_r+2]$. Hence, $\sigma_\nu+1 \in [m_l, M_l]$ and $\sigma_\nu-2 \in [m_r, M_r]$. We can construct a rectilinear planar representation $H_{\mu_l}$ of $G_{\mu_l}$ with spirality $\sigma_{\mu_l}=\sigma_\nu+1$ and a rectilinear planar representation $H_{\mu_r}$ of $G_{\mu_r}$ with spirality $\sigma_{\mu_r}=\sigma_\nu-2$. Note that, for this choice, we have $\sigma_{\mu_l}-\sigma_{\mu_r}=3$, hence, there exists a rectilinear planar representation $H_\nu$ of $G_\nu$ given the values of $\sigma_{\mu_l}$ and $\sigma_{\mu_r}$. We have to prove that the spirality of $H_\nu$ is $\sigma_\nu$.
	By Lemma~\ref{le:spirality-P-node-2-children} we have $\sigma_\nu' = \sigma_{\mu_l} - \frac{1}{2} \alpha_{u}^l - \frac{1}{2}\alpha_{v}^l$, where $\sigma_\nu'$ is the spirality of the representation $H_\nu$ of $G_\nu$ given a choice of $\sigma_{\mu_l}$, $\alpha_{v}^l$, and $\alpha_{u}^l$. Since $\sigma_\nu=\sigma_{\mu_l}-1$, $\alpha_{u}^l=1$, and $\alpha_{v}^l=1$, we have  $\sigma_\nu' = \sigma_\nu+1-\frac{1}{2}-\frac{1}{2}=\sigma_\nu$.
	
	\smallskip\noindent{\bf Case 2: $d=d'=r$.} Symmetric to Case 1.
	
	\smallskip\noindent{\bf Case 3: $d=l$, $d'=r$.}  $G_\nu$ is of type \Pin{3lr} and we prove that $I_\nu = [\max\{m_l-1,m_r+2\}-\frac{1}{2},\min\{M_l-1, M_r+2\}-\frac{1}{2}]=[\max\{m_l-\frac{3}{2},m_r+\frac{3}{2}\},\min\{M_l-\frac{3}{2}, M_r+\frac{3}{2}\}]$.
	Assume first that $\sigma_\nu$ is the spirality of a rectilinear planar representation of $G_\nu$.  Since for an \Pin{3lr} component we have $k_u^r=k_v^l=\frac{1}{2}$ and $k_u^l=k_v^r=1$, by Lemma~\ref{le:spirality-P-node-2-children} we 
	have $\sigma_\nu = \sigma_{\mu_r} + \frac{1}{2}\alpha_{u}^r + \alpha_{v}^r$ and $\sigma_\nu =\sigma_{\mu_l} - \alpha_{u}^l - \frac{1}{2}\alpha_{v}^l$. Since $\alpha_u^l=\alpha_v^l=\alpha_u^r=\alpha_v^r=1$, we have:
	$\sigma_\nu \geq m_r + \frac{3}{2}$, $\sigma_\nu \leq M_r+\frac{3}{2}$, $\sigma_\nu \geq m_l-\frac{3}{2}$ and $\sigma_\nu \leq M_l-\frac{3}{2}$.
	
	We now show that, if $\sigma_\nu \in [\max\{m_l-\frac{3}{2},m_r+\frac{3}{2}\},\min\{M_l-\frac{3}{2},M_r+\frac{3}{2}\}]$, there exists a rectilinear planar representation of $G_\nu$ with spirality $\sigma_\nu$. 	We have $\sigma_\nu \in [m_l-\frac{3}{2}, M_l-\frac{3}{2}]$ and $\sigma_\nu \in [m_r+\frac{3}{2}, M_r+\frac{3}{2}]$. Hence, $\sigma_\nu+\frac{3}{2} \in [m_l, M_l]$ and $\sigma_\nu-\frac{3}{2} \in [m_r, M_r]$. We can construct a rectilinear planar representation $H_{\mu_l}$ of $G_{\mu_l}$ with spirality $\sigma_{\mu_l}=\sigma_\nu+\frac{3}{2}$ and a rectilinear planar representation $H_{\mu_r}$ of $G_{\mu_r}$ with spirality $\sigma_{\mu_r}=\sigma_\nu-\frac{3}{2}$. Notice that, for this choice, we have $\sigma_{\mu_l}-\sigma_{\mu_r}=3$, hence, there exists a rectilinear planar representation $H_\nu$ of $G_\nu$ given the values of $\sigma_{\mu_l}$ and $\sigma_{\mu_r}$. We have to prove that the spirality of $H_\nu$ is $\sigma_\nu$.
	By Lemma~\ref{le:spirality-P-node-2-children}, $\sigma_\nu' = \sigma_{\mu_l} - \alpha_{u}^l - \frac{1}{2}\alpha_{v}^l$, where $\sigma_\nu'$ is the spirality of the representation $H_\nu$ of $G_\nu$ given a choice of $\sigma_{\mu_l}$, $\alpha_{v}^l$, and $\alpha_{u}^l$. Since $\sigma_\nu=\sigma_{\mu_l}-\frac{3}{2}$, $\alpha_{u}^l=1$, and $\alpha_{v}^l=1$, we have $\sigma_\nu' = \sigma_\nu+\frac{3}{2}-1-\frac{1}{2}=\sigma_\nu$.
\end{proof}


\subsection{Representability condition for the root}\label{se:root-rect-charact} 
To finally achieve a characterization of rectilinear series-parallel graphs we need to consider the representability condition that must be verified at the level of the root, when the reference edge is not a dummy edge. Denote by $e=(u,v)$ the reference edge of $G$ and let $\rho$ be the root of $T$ with respect to $e$. Let $\eta$ be the child of $\rho$ that does not correspond to $e$, and let $u'$ and $v'$ be the alias vertices associated with the poles $u$ and $v$ of $G_\eta$. Suppose that $G_\eta$ is rectilinear planar with representability interval $I_\eta$. 

We say that $G$ satisfies the \emph{root condition} if $I_\eta \cap \Delta_\rho \neq \emptyset$, where $\Delta_\rho$ is defined as follows: $(i)$ $\Delta_\rho=[2,6]$ if $u'$ coincides with $u$ and $v'$ coincides with $v$; $(ii)$ $\Delta_\rho=[3,5]$ if exactly one of $u'$ and $v'$ coincides with $u$ and $v$, respectively; $(iii)$ $\Delta_\rho=4$ if none of $u'$ and $v'$ coincides with $u$ and $v$.

\begin{lemma}\label{le:root-condition}
	Let $e=(u,v)$ be the reference edge of $G$ and let $\rho$ be the root of $T$ with respect to $e$.
	Let $\eta$ be the child of $\rho$ that does not correspond to $e$. Suppose that $G_\eta$ is rectilinear planar with representability interval $I_\eta$. $G$ is rectilinear planar if and only if it satisfies the root condition. Also, if $G$ satisfies the root condition, it admits a rectilinear planar representation $H$ for any value of spirality $\sigma_\eta$ of $H_\eta$ such that $\sigma_\eta \in I_\eta \cap \Delta_\rho$, where $H_\eta$ is the restriction of $H$ to $G_\eta$.     
\end{lemma}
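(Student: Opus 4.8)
The plan is to regard the P$^r$-node as the parallel composition of $G_\eta$ with the single non-dummy edge $e$ (which has spirality $0$) and to reduce the statement to an angle count on the two faces of $G$ incident to $e$, in the spirit of the proofs of Lemmas~\ref{le:P-2-children-support-type1}--\ref{le:P-2-children-support-type3}. Since $e$ lies on the external face and, by our convention, the external face is to the right of $e$ when moving from $u$ to $v$, the edge $e$ is incident to exactly one internal face $f_{int}$ of $G$, bounded by $e$ and the right path $p_r$ of $G_\eta$, and to the external face $f_{ext}$, bounded by $e$ and the left path $p_l$ of $G_\eta$; both faces pass through $u$ and $v$. Moreover $\outdeg_\eta(u)=\outdeg_\eta(v)=1$ (the only edge of $G$ outside $G_\eta$ at a pole is $e$), so each pole $w$ has a single alias vertex: $w'=w$ when $\indeg_\eta(w)=1$, i.e.\ $\deg_G(w)=2$, and otherwise $w'$ is a dummy vertex placed on $e$ near $w$. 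This trichotomy is exactly the one that defines $\Delta_\rho$.

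For the necessity direction, let $H$ be a rectilinear representation of $G$, let $H_\eta$ be its restriction to $G_\eta$, and set $\sigma_\eta=\sigma(H_\eta)$; then $\sigma_\eta\in I_\eta$ because $H_\eta$ is rectilinear. To see that $\sigma_\eta\in\Delta_\rho$, I would apply Property~\textsf{(H2)} to $f_{int}$: walking clockwise along $\partial f_{int}=p_r\cup e$, the number of right turns minus left turns is $4$, and this quantity is the sum of the turns along $p_r$, the turn at $v$, the turn at $u$, and $0$ (the straight edge $e$ contributes no turn). Taking a spine of $H_\eta$ whose path part is $p_r$ and whose end segments (when present) run along $e$, the turns this spine records --- along the internal vertices of $p_r$ together with those poles where it bends onto $e$ --- sum to $\sigma_\eta$. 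Subtracting the face equation and using that the turn of the clockwise traversal at a pole $w$ equals $2-a_w/90^{\circ}$, where $a_w$ is the angle of $f_{int}$ at $w$, which ranges over $\{90^{\circ},180^{\circ},270^{\circ}\}$ if $\deg_G(w)=2$, over $\{90^{\circ},180^{\circ}\}$ if $\deg_G(w)=3$, and equals $90^{\circ}$ if $\deg_G(w)=4$, one obtains $\sigma_\eta\in[2,6]$, $\sigma_\eta\in[3,5]$, or $\sigma_\eta=4$ according to whether both, exactly one, or neither of $u',v'$ coincides with its pole. Hence $\sigma_\eta\in I_\eta\cap\Delta_\rho\neq\emptyset$.

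For the sufficiency direction, given $\sigma_\eta\in I_\eta\cap\Delta_\rho$, I would take a rectilinear representation $H_\eta$ of $G_\eta$ of spirality $\sigma_\eta$ whose outer angle at every pole of degree at least $3$ in $G$ is at least $180^{\circ}$ --- a property secured by the constructions underlying Lemmas~\ref{le:Q*-representability}--\ref{le:P-2-children-representability-type3}, which leave the needed freedom in the boundary angles $\alpha_w^l,\alpha_w^r$. Then I would add $e$ in the outer region of $G_\eta$, splitting the outer angle of $G_\eta$ at $u$ and at $v$ into an $f_{int}$-part and an $f_{ext}$-part, choosing the two splits (as the equations of the previous paragraph prescribe) so that $f_{int}$ satisfies Property~\textsf{(H2)} with value $+4$. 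Property~\textsf{(H1)} then holds at $u$ and $v$ because the angle sum at each pole is unchanged; Property~\textsf{(H2)} holds at every face inherited from $H_\eta$; and Property~\textsf{(H2)} for the external face $f_{ext}$ comes for free, since splitting the outer angle at a pole raises the quantity $N_{90}-N_{270}-2N_{360}$ by $2$ there, so $[f_{int}]+[f_{ext}]$ equals the value of this quantity on the outer face of $G_\eta$ plus $4$, that is $-4+4=0$, forcing $f_{ext}$ to the value $-4$. Thus $H$ is a valid rectilinear representation of $G$ with $\sigma(H_\eta)=\sigma_\eta$.

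I expect the main obstacle to be the bookkeeping in the necessity step --- matching the turns recorded by a spine of $H_\eta$ to those made by the clockwise boundary walk of $f_{int}$ at the poles, and keeping straight the admissible pole angles as a function of the pole degrees, since this is precisely what yields the endpoints $[2,6]$, $[3,5]$, $\{4\}$ of $\Delta_\rho$ --- together with the constructive point of certifying that a rectilinear representation of $G_\eta$ of the prescribed spirality exists whose outer angles at the degree-$\ge 3$ poles are wide enough to host $e$.
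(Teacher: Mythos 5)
Your proposal is correct and follows essentially the same route as the paper's own proof: both reduce the root condition to the identity $4 = \sigma_\eta + (\text{angle corrections at the poles})$ obtained from the turn count around the face/cycle through $e$, with the corrections ranging over $[-2,2]$, $[-1,1]$, or $\{0\}$ according to which alias vertices coincide with the poles --- exactly the three cases defining $\Delta_\rho$. Your sufficiency direction is more explicit than the paper's (which simply chooses $\alpha_{u'}+\alpha_{v'}=4-\sigma_\eta$ and does not spell out the verification of \textsf{(H1)}/\textsf{(H2)} on the external face or the availability of a wide enough outer angle at the poles), but the underlying argument is identical.
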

\begin{proof}
	Let $f_{\mathrm{int}}$ be the internal face of $G$ incident to $e$. Observe that $u$ and $v$ are the poles of $G_\eta$. Let $u'$ and $v'$ be the alias vertices associated with $u$ and with $v$, respectively. $H$ is a rectilinear planar representation of $G$ if and only if the following two conditions hold: The restriction $H_\eta$ of $H$ to $G_\eta$ is a rectilinear planar representation; the number $A$ of right turns minus left turns of any simple cycle of $G$ in $H$ containing $e$ and traversed clockwise in $H$ is equal to $4$. 
	We have $A=\sigma_\eta+\alpha_{u'}+\alpha_{v'}$, where: $\sigma_\eta$ is the spirality of $H_\eta$; for $w\in \{u',v'\}$, $\alpha_w=1$, $\alpha_w=0$, and $\alpha_w=-1$ if the angle formed by $w$ in $f_{\mathrm{int}}$ is equal to $90^o$, $180^o$, or $270^o$, respectively. 
	
	According to the definition of root condition, there are three cases to consider: 
	$(i)$ $\Delta_\rho=[2,6]$, $(ii)$ $\Delta_\rho=[3,5]$, and $(iii)$ $\Delta_\rho=4$. 
	Consider Case $(i)$.  Since in this case the alias vertices coincide with the poles, we have $\alpha_{u'} \in [-1,1]$, $\alpha_{v'} \in [-1,1]$, and hence $\alpha_{u'}+\alpha_{v'} \in [-2,2]$.  
	If $G$ is rectilinear planar, we have that $A = \sigma_\eta+\alpha_{u'}+\alpha_{v'}=4$ for some $\sigma_\eta \in I_\eta$ and for $\alpha_{u'}+\alpha_{v'} \in [-2,2]$. Hence, $\sigma_\eta=4-\alpha_{u'}-\alpha_{v'} \in [2,6]$, i.e., the root condition $I_\eta \cap \Delta_\rho \neq \emptyset$ holds.    
	Suppose vice versa that the root condition $I_\eta \cap \Delta_\rho \neq \emptyset$ holds. For any value $\sigma_\eta \in I_\eta \cap \Delta_\rho$ there exists a rectilinear planar representation of $H_\eta$ of $G_\eta$ with spirality $\sigma_\eta$. Also, since $\Delta_\rho=[2,6]$, we have that  $4-\sigma_\eta\in [-2,2]$, and therefore, for any possible choice of $\sigma_\eta \in I_\eta \cap \Delta_\rho$, we can suitably choose $\alpha_{u'}$ and $\alpha_{v'}$ such that $\alpha_{u'}+\alpha_{v'}=4-\sigma_\eta$, i.e., $A = \sigma_\eta+\alpha_{u'}+\alpha_{v'}=4$. It follows that $G$ is rectilinear planar and it admits a rectilinear planar representation for any value $\sigma_\eta \in I_\eta \cap \Delta_\rho$. 
	
	Cases~$(ii)$ and~$(iii)$ are proved analogously; in Case~$(ii)$  $\alpha_{u'}+\alpha_{v'}\in [-1,1]$ and in Case~$(iii)$ $\alpha_{u'}+\alpha_{v'}=0$.  
\end{proof}

\renewcommand{\arraystretch}{1.5}
\begin{table}[tb]
	\centering
	\caption{Representability conditions and intervals for the different types of nodes. In the formulas of the table, the term $\gamma$ is such that $\gamma=\lambda +\beta-2$, and $\phi(\cdot)$ is a function such that $\phi(r)=1$ and $\phi(l)=0$.}\label{ta:representability}
	\begin{tabular}{|l |c |}
		\hline
		\rowcolor{antiquewhite}\multicolumn{2}{|c|}{\bf Q$^*$-node (the pertinent graph is a path of length $\ell$)}\\
		\hline
		{Representability Condition} & true \\
		{Representability Interval} & $[-\ell+1, \ell-1]$\\
		
		\hline
		\rowcolor{antiquewhite}\multicolumn{2}{|c|}{\bf S-node with $h$ children $\mu_i$ such that $I_{\mu_i}=[m_i,M_i]$ $(i = 1, \dots, h)$}\\
		\hline
		{Representability Condition} & true \\
		{Representability Interval} & $I_\nu=[m,M]=[\sum_{i=1}^hm_i,\sum_{i=1}^hM_i]$\\
		
		\hline
		\rowcolor{antiquewhite}\multicolumn{2}{|c|}{\bf P-node with three children $\mu_l$, $\mu_c$, $\mu_r$ where $I_{\mu_i}=[m_i,M_i]$ ($i \in \{l,c,r\}$)}\\
		\hline
		{Representability Condition} & $[m_l-2,M_l-2] \cap [m_c,M_c] \cap [m_r+2,M_r+2] \neq \emptyset$\\
		{Representability Interval} & $I_\nu=[m,M]=[\max\{m_l-2,m_c,m_r+2\},\min\{M_l-2,M_c,M_r+2\}]$\\
		
		\hline
		\rowcolor{antiquewhite}\multicolumn{2}{|c|}{\bf P-node with two children $\mu_l$ and $\mu_r$ where $I_{\mu_i}=[m_i,M_i]$ ($i \in \{l,r\}$) $-$ \Pio{2}{\lambda\beta}}\\
		\hline
		{Representability Condition} & $[m_l-M_r, M_l-m_r] \cap [2, 4-\gamma] \neq \emptyset$ \\
		{Representability Interval} & $I_\nu=[m,M]=[\max\{m_l-2,m_r\}+\frac{\gamma}{2}, \min\{M_l, M_r+2\}-\frac{\gamma}{2}]$\\
		
		\hline
		\rowcolor{antiquewhite}\multicolumn{2}{|c|}{\bf P-node with two children $\mu_l$ and $\mu_r$ where $I_{\mu_i}=[m_i,M_i]$ ($i \in \{l,r\}$) $-$ \Pio{3d}{\lambda\beta}}\\
		\hline
		{Representability Condition} & $[m_l-M_r, M_l-m_r] \cap [\frac{5}{2},\frac{7}{2}-\gamma] \neq \emptyset$ \\
		{Representability Interval} & $I_\nu=[m,M]=[\max\{m_l-\frac{3}{2},m_r+1\}+\frac{\gamma-\phi(d)}{2},\min\{M_l-\frac{1}{2}, M_r+2\}-\frac{\gamma+\phi(d)}{2}]$\\
		
		\hline
		\rowcolor{antiquewhite}\multicolumn{2}{|c|}{\bf P-node with two children $\mu_l$ and $\mu_r$ where $I_{\mu_i}=[m_i,M_i]$ ($i \in \{l,r\}$) $-$ \Pin{3dd'}}\\
		\hline
		{Representability Condition} & $3 \in [m_l-M_r,M_l-m_r]$ \\
		{Representability Interval} & $I_\nu=[m,M]=[\max\{m_l-1,m_r+2\}-\frac{\phi(d)+\phi(d')}{2},\min\{M_l-1, M_r+2\}-\frac{\phi(d)+\phi(d')}{2}]$\\
		
		\hline
		\rowcolor{antiquewhite}\multicolumn{2}{|c|}{\bf P$^r$-node (the root $\rho$)}\\
		\hline
		{Root condition} & $I_\eta \cap \Delta_\rho \neq \emptyset$\\
		\hline
	\end{tabular}
\end{table}

The next theorem summarizes the main result of this section.

\begin{theor}\label{th:characterization}
	Let $G$ be a plane series-parallel 4-graph and let $T$ be an SPQ$^*$-tree of $G$. Graph~$G$ is rectilinear planar if and only if, for every node of $T$ the corresponding representability condition of Table~\ref{ta:representability} is satisfied.
\end{theor}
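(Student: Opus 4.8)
The plan is to prove the statement by structural induction on $T$, assembling the per-node results of \cref{se:qs-rect-charact,se:p3-rect-charact,se:p2-rect-charact,se:root-rect-charact} (all of which, in turn, rest on \cref{th:substitution}, so that a component may be handled entirely through its set of admissible spiralities). The heart of the argument is the following claim, to be proved by a bottom-up (post-order) traversal of $T$: \emph{for every non-root node $\nu$ of $T$, the component $G_\nu$ is rectilinear planar if and only if every node in the subtree of $T$ rooted at $\nu$ satisfies its representability condition of \cref{ta:representability}; moreover, whenever this holds, the set of spirality values admitted by $G_\nu$ equals the representability interval $I_\nu$ of \cref{ta:representability}.}

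First I would settle the base case, where $\nu$ is a Q$^*$-node: its subtree is the single node $\nu$, whose representability condition is \textsf{true}, and \cref{le:Q*-representability} gives that $G_\nu$ is always rectilinear planar with admissible spiralities exactly $I_\nu=[-\ell+1,\ell-1]$, as in the table. For the inductive step I would let $\nu$ be an S-node or a P-node and apply the inductive hypothesis to every child $\mu$ of $\nu$. In the ``if'' direction, if every node in the subtree of $\nu$ satisfies its condition then, in particular, so does every node in the subtree of each child, whence by induction each $G_\mu$ is rectilinear planar with admissible spiralities exactly $I_\mu$; feeding these intervals, together with the (now assumed) representability condition at $\nu$ itself, into the lemma matching the type of $\nu$ --- \cref{le:S-representability} for S-nodes, \cref{le:P-3-children-representability} for P-nodes with three children, and \cref{le:P-2-children-representability-type1,le:P-2-children-representability-type2,le:P-2-children-representability-type3} for the three families of P-nodes with two children --- yields that $G_\nu$ is rectilinear planar and that its admissible spiralities are exactly $I_\nu$. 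In the ``only if'' direction, if $G_\nu$ is rectilinear planar then restricting a rectilinear representation of $G_\nu$ to each child component produces a rectilinear representation of that child, so every $G_\mu$ is rectilinear planar; by induction every node in each child's subtree satisfies its condition and its admissible-spirality set is $I_\mu$, and then the ``if and only if'' supplied by the same type-matching lemma --- whose hypothesis (rectilinear-planar children with intervals $I_\mu$) is now met --- forces the representability condition at $\nu$ to hold as well.

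It then remains to handle the root $\rho$, the P$^r$-node whose two children are the Q$^*$-node of the reference edge $e$ and a node $\eta$. If $G$ is not biconnected, then $e$ is a dummy edge on the external face, $G$ coincides with $G_\eta$, the Q$^*$-node of $e$ carries the trivially-true condition, and no further condition applies at $\rho$; hence $G$ is rectilinear planar if and only if $G_\eta$ is, which by the claim above holds if and only if every node of $T$ satisfies its condition. If instead $e$ is a real edge, \cref{le:root-condition} states that $G$ is rectilinear planar if and only if $G_\eta$ is rectilinear planar \emph{and} the root condition $I_\eta\cap\Delta_\rho\neq\emptyset$ holds; combining this with the claim applied to $\eta$ (and again using the trivially-true condition at the Q$^*$-node of $e$) shows that $G$ is rectilinear planar if and only if every node of $T$, including $\rho$, satisfies its representability condition.

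I expect the difficulty to be organizational rather than conceptual, since all the per-node lemmas are already established. Two points need care. First, each type-matching lemma requires the \emph{representability intervals} of the children, not merely that the children are rectilinear planar; hence the induction hypothesis must carry the ``admissible-spirality set equals $I_\nu$'' statement, and the lemmas must be applied only once those intervals are known. Second, in the ``only if'' direction one cannot invoke the type-matching lemma when a descendant violates its condition (the lemma presupposes rectilinear-planar children); instead one rules out rectilinear planarity of $G_\nu$ via the restriction argument. The dummy-edge versus real-edge split at the root is the only remaining place that calls for extra attention.
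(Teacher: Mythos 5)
Your proposal is correct and follows essentially the same route as the paper: Theorem~\ref{th:characterization} is stated there as a summary of Section~\ref{se:rect}, with the proof left implicit as exactly the bottom-up assembly of Lemmas~\ref{le:Q*-representability}--\ref{le:root-condition} that you spell out, including the crucial point that the inductive invariant must carry the representability \emph{intervals} (not just rectilinear planarity) and the dummy-edge versus real-edge split at the root. Your explicit treatment of the ``only if'' direction via restriction before invoking the type-matching lemmas is the right way to make the paper's implicit argument rigorous.
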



\section{Bend-Minimization Algorithm: Overview}\label{se:bend-min-overv}

Let $G$ be a plane series-parallel 4-graph. If $G$ is biconnected let $e$ be any edge
on the external face of $G$; otherwise, we add a dummy edge $e$ that makes it biconnected. Let $T$ be the SPQ$^*$-tree of $G$ with respect to $e$. 
Our bend-minimization algorithm works in two phases. It first visits $T$ bottom-up (in post order) to determine the number of bends of a bend-minimum orthogonal representation of $G$. Then it visits $T$ top-down to construct such an orthogonal representation.

When a node $\nu$ is considered in the bottom-up visit, the algorithm assigns to $\nu$ a \emph{budget}~$b_\nu$ of bends. This budget corresponds to the minimum number of extra bends that must be added to the budgets of the children of $\nu$ to realize an orthogonal representation of $G_\nu$. In other words, $b_\nu$ can be regarded as the minimum number of extra subdivision vertices that must be inserted along the edges of $G_\nu$ (besides those already inserted for the children of $\nu$) to make it rectilinear planar. The budget $b_\nu$ is larger than zero if and only if the representability condition of the rectilinear planarity testing for $\nu$ is not satisfied. Hence, according to \cref{ta:representability}, $b_\nu=0$ if $\nu$ is a Q$^*$- or an S-node, while it can be positive if $\nu$ is a P-node or the root of $T$. For instance, for the graph of \cref{fi:graph-minbend} and the tree~$T$ of \cref{fi:spq-tree-minbend}, the first component that requires some bends in the bottom-up visit of $T$ is $G_{\nu_3}$, namely  $b_{\nu_3}=3$; two more bends are required at the root level, i.e., $b_{\rho}=2$. When $b_\nu > 0$, a crucial and non-trivial aspect is how to efficiently compute $b_\nu$. The other key aspect is how to succinctly describe the set $I'_\nu$ of spirality values that a rectilinear representation of $G_\nu$ can take, by considering all possible distributions of the $b_\nu$ subdivision vertices along its edges. We will show that $I'_\nu$ is still an interval, which allows us to represent it in~$O(1)$ space.

\cref{se:budgets} describes how to compute the budgets $b_\nu$ and the sets $I'_\nu$ in the bottom-up visit of $T$, and it proves the optimality of the solution. \cref{se:summary} describes the top-down visit and summarizes our main result.

\section{Budgets and Optimality}\label{se:budgets}


In the following we denote by $m$ and $M$ the minimum and maximum values of the representability interval $I_\nu$ of $\nu$ when $G_\nu$ is rectilinear planar, as defined in \cref{ta:representability}. 
Also, since when we visit $\nu$, all its children have already been visited and have received their own budget of bends (i.e., of subdivision vertices for the corresponding component), we will simply assume that each child of $\nu$ is rectilinear planar. 

As observed, if $\nu$ is either a Q$^*$-node or an S-node, $b_\nu=0$. Hence, we assume that $\nu$ is a P-node. A child $\mu$ of a (non-root) P-node $\nu$ is either a Q$^*$- or an S-node. To compute $b_\nu$ and~$I'_\nu$ we define the concept of \emph{exposed edge} of $\mu$. If $\mu$ is a Q$^*$-node, every edge of $G_\mu$ is an exposed edge of $\mu$ (and of $G_\mu$). If $\mu$ is an S-node with at least one Q$^*$-node child $\mu'$, every edge of $G_\mu$ that belongs to $G_{\mu'}$ is an exposed edge of $\mu$ (and of $G_\mu$). Else, $\mu$ is an S-node that has no exposed edge. \cref{le:P-exposed-edges-support} states a crucial property. It implies that when we must insert some subdivision vertices in an S-component $\mu$ that is a child of $\nu$, these vertices can always be added along an exposed edge of $\mu$, if one exists.

\begin{lemma}\label{le:P-exposed-edges-support}
	Let $\mu$ be an S-node such that $G_\mu$ is rectilinear planar and $\mu$ has an exposed edge $e$. Let $H_\mu$ be an orthogonal representation of $G_\mu$ having $b>0$ bends. There exists an orthogonal representation $H'_\mu$ of $G_\mu$ with $b$ bends such that: (i) all the $b$ bends of $H'_\mu$ lie on~$e$; (ii) $\sigma(H'_\mu) = \sigma(H_\mu)$.
\end{lemma}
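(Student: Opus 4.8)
The statement says: if an S-component $G_\mu$ has $b>0$ bends in some orthogonal representation $H_\mu$ and $\mu$ owns an exposed edge $e$, then we can redistribute all bends onto $e$ while keeping the spirality fixed. My plan is to invoke the substitution machinery (\cref{th:substitution}) to reduce the problem to constructing a \emph{new} orthogonal representation $H'_\mu$ of $G_\mu$ with the same spirality and all $b$ bends on $e$; once such $H'_\mu$ exists, since $\sigma(H'_\mu)=\sigma(H_\mu)$ we can substitute $H_\mu$ with $H'_\mu$ inside any ambient representation $H$ of $G$, and the claim follows. So the real work is the \emph{construction} of $H'_\mu$, and here I would use the series decomposition of $\mu$ together with \cref{le:Q*-representability} and \cref{le:S-representability}.

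\textbf{Construction of $H'_\mu$.} Let $\mu$ have children $\mu_1,\dots,\mu_h$ in series, and let the exposed edge $e$ belong to the pertinent graph $G_{\mu_j}$ of a Q$^*$-child $\mu_j$ (one exists by hypothesis). In $H_\mu$ each child $\mu_i$ has an induced orthogonal representation with some spirality $\sigma_i$, and by \cref{le:spirality-S-node} we have $\sigma(H_\mu)=\sum_{i=1}^h \sigma_i$. Say the bends of $H_\mu$ are distributed as $b=\sum_{i=1}^h b_i$ among the children, where $b_i$ is the number of bends inside $G_{\mu_i}$. For every $i\neq j$, I would replace the orthogonal representation of $G_{\mu_i}$ by a \emph{rectilinear} (bend-free) representation with the \emph{same} spirality $\sigma_i$; such a representation exists because (a) the child's budget guarantees $G_{\mu_i}$ is rectilinear planar, and (b) any spirality achievable by a bent representation of $G_{\mu_i}$ with $b_i$ bends is also achievable by a bend-free representation of the graph $G_{\mu_i}$ with $b_i$ subdivision vertices pre-inserted — which is exactly $G_{\mu_i}$ as the algorithm sees it after processing its children — hence $\sigma_i$ lies in the representability interval of that child. (Equivalently one can observe $\sigma_i$ is an admissible spirality and use the child's representability interval directly.) Then I would give $G_{\mu_j}$ a representation with spirality $\sigma_j$ realizing all $b+b_j'$ of the remaining bends on the edge $e$: a Q$^*$-node is a path, and by the proof of \cref{le:Q*-representability} we can place any number of left/right turns at the subdivision vertices of a single chosen edge to realize \emph{any} spirality in its representability interval with a prescribed number of bends on that one edge — here we need $\sigma_j$ with $b - \sum_{i\neq j} 0 = b$ extra turns. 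Since $\sigma_j$ plus $b$ extra same-direction turns stays within reach (the new budgeted length of that edge accommodates $|\sigma_j|$ plus whatever bends $H_\mu$ already used), this is feasible. Composing these in series via \cref{le:spirality-S-node} yields $H'_\mu$ with spirality $\sum \sigma_i = \sigma(H_\mu)$, with exactly $b$ bends, all on $e$.

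\textbf{Main obstacle.} The delicate point is the bookkeeping that guarantees edge $e$ can \emph{absorb} all $b$ bends without its spirality contribution exceeding the bend budget: a chain that realizes spirality $\sigma_j$ needs at least $|\sigma_j|$ turns, and stacking $b$ more turns of, say, right type could overshoot if the intended direction of the extra turns disagrees with the sign of $\sigma_j$. The clean fix is to observe we are free to choose the \emph{type} of the $b$ extra bends: put $b$ bends on $e$ as $\lceil (b+|\sigma_j|)/2\rceil$ turns of one type and $\lfloor (b - |\sigma_j|)/2 \rfloor$... more simply, realize spirality $\sigma_j$ using exactly $b_j$ turns that net to $\sigma_j$ and then pad the \emph{remaining} $b-b_j$ bends (coming from the other children, now removed) as $\tfrac{b-b_j}{2}$ right turns and $\tfrac{b-b_j}{2}$ left turns on $e$ when $b-b_j$ is even, and with one leftover turn compensated by a single turn kept on another child when odd. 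So a slightly cleaner route keeps one ``buffer'' bend off $e$ only in a parity-corner case — but in fact each $b_i$ for $i\neq j$ can itself be pushed to $0$ only if the \emph{net} redistribution has the right parity, and since moving a turn-pair between edges changes nothing, the parity of $b$ is preserved and all $b$ bends do fit on $e$. I would therefore spell out this parity argument (pairs of opposite turns are ``free'') as the one genuinely non-routine step, and otherwise the lemma reduces to \cref{th:substitution} plus the explicit path constructions already in \cref{le:Q*-representability} and \cref{le:S-representability}.
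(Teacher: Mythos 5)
There is a genuine gap at the heart of your construction: you freeze the spirality $\sigma_i$ of every child $\mu_i\neq\mu_j$ and claim that $G_{\mu_i}$ admits a \emph{rectilinear} representation with that same $\sigma_i$. This is false in exactly the cases the lemma is needed for. The hypothesis only gives that $G_{\mu_i}$ is rectilinear planar for spiralities in its representability interval $I_{\mu_i}=[m_i,M_i]$; in $H_\mu$ the restriction to $G_{\mu_i}$ has $b_i$ bends and its spirality may lie in $[m_i-b_i,M_i+b_i]\setminus I_{\mu_i}$ (indeed, the whole point of placing bends inside an S-component --- see how \cref{le:P-2-exposed-edges} uses this lemma --- is to reach spiralities outside $I_\mu$, and then some child is necessarily pushed outside its own interval). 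Your parenthetical (b) only establishes that $\sigma_i$ is realizable by the \emph{subdivided} graph; the inference ``hence $\sigma_i$ lies in the representability interval of that child'' is a non sequitur. The parity trouble you then wrestle with is a symptom of the same mistake: once you insist that every $\sigma_i$ ($i\neq j$) stays fixed, the sum forces $\sigma_j'=\sigma_j$, so the Q$^*$-child must absorb $b$ mandatory turns at unchanged spirality, which is impossible when, e.g., $G_{\mu_j}$ is a single edge and $b-b_j$ is odd; and your escape hatch of keeping one ``buffer'' bend off $e$ would violate conclusion (i) of the statement.

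The paper's proof sidesteps all of this with a purely local relocation argument: since $e$ belongs to a Q$^*$-child of the S-node, $e$ lies on \emph{every} simple $u$--$v$ path of $G_\mu$. Take any edge $e'\neq e$ carrying a bend, a path $P^{uv}$ through $e'$ (which therefore also contains $e$), and move that single bend --- with its handedness relative to the $u$-to-$v$ traversal --- from $e'$ to $e$. The number of right minus left turns along $P^{uv}$ and the angles at the poles are unchanged, so the spirality is preserved; iterating moves all $b$ bends onto $e$. In that argument the spiralities of the individual children are allowed to drift (the child losing the bend and the Q$^*$-child each change by one turn, cancelling in the sum), which is precisely the freedom your construction gives up. If you want to salvage a decomposition-based proof, you must choose \emph{new} spiralities $\sigma_i'\in I_{\mu_i}$ for $i\neq j$ and compensate the total deficit on $e$, using that each bend removed from a child shrinks its reachable spirality range by at most one unit while each bend added to $e$ extends the Q$^*$-child's range by exactly one; that accounting is the substantive content your write-up omits.
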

\begin{proof}
	Let $u$ and $v$ be the poles of $G_\mu$. Any path from $u$ to $v$ inside $G_\mu$ contains the exposed edge $e$. Consider an orthogonal representation $H_\mu$ of $G_\mu$ with $b > 0$ bends. Let $e' \neq e$ be an edge with at least one bend in $H_\mu$. Let $P^{uv}$ be any simple path from $u$ to $v$ of $H_\mu$ passing through $e'$. Suppose, without loss of generality, that the bend on $e'$ corresponds to a right turn along $P^{uv}$ while going from $u$ to $v$. Since by hypothesis $G_\mu$ is rectilinear planar, we can derive from $H_\mu$ a different orthogonal representation $H''_\mu$ with $b$ bends by simply moving the right bend from $e'$ to $e$, i.e., by inserting a right bend along $e$ and by straightening the right bend of $H_\mu$ along $e'$. With this transformation, the number of right and left turns along $P^{uv}$ is the same in $H''_\mu$ and $H_\mu$, and the angles at $u$ and $v$ in the two representations are also the same. This implies that $\sigma(H''_\mu)=\sigma(H_\mu)$. By repeatedly applying this transformation on $H''_\mu$ until all the $b$ bends of $H_\mu$ are moved on $e$ we get the desired representation~$H'_\mu$.  
\end{proof}

\begin{figure}[t]
	\centering
	\begin{subfigure}{.2\columnwidth}
		\centering
		\includegraphics[page=1,width=\columnwidth]{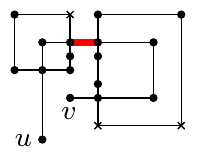}
		\subcaption{\centering $H_\mu$}
		\label{fi:exposed-edges-a}
	\end{subfigure}
	\hfil
	\begin{subfigure}{.2\columnwidth}
		\centering
		\includegraphics[page=2,width=\columnwidth]{exposed-edges.pdf}
		\subcaption{\centering $H_\mu'$}
		\label{fi:exposed-edges-b}
	\end{subfigure}
	\caption{Illustration of  \cref{le:P-exposed-edges-support}. (a)~An orthogonal representation $H_\mu$ of an S-component with spirality $3$ and having $3$ bends. An exposed edge, shown as a red thick segment. (b)~A different orthogonal representation $H'_\mu$ of the same component having the same spirality and number of bends as $H_\mu$, but such that all the bends are along the red thick exposed edge.}\label{fi:exposed-edges}
\end{figure}

An illustration of \cref{le:P-exposed-edges-support} is given in \cref{fi:exposed-edges}. In \cref{fi:exposed-edges-a} an orthogonal representation $H_\mu$ of an S-component is shown. It has spirality $3$ and $3$ bends. In \cref{fi:exposed-edges-b} a different orthogonal representation $H'_\mu$ of the same component is given, having the same spirality and number of bends as $H_\mu$, where all the bends are along an exposed edge.

\smallskip Observe that, if $\nu$ is a P-node with three children, each of them has an exposed edge (as the poles of $\nu$ have degree at most four). If $\nu$ has two children, it might have a child without exposed edges only if $\nu$ is of type \Pin{3ll} or \Pin{3rr} (see \cref{fi:P-node-types}).
\cref{sse:bottom-up-3-children} and \cref{sse:bottom-up-2-children} focus on the budget of P-nodes with three children and on the budget of P-nodes with two children, respectively. \cref{sse:bottom-up-root} concentrates on the budget of the root. 



\subsection{Budget of P-nodes with three children}\label{sse:bottom-up-3-children}
\cref{le:P-3-exposed-edges} handles the case of a P-node $\nu$ with three children $\mu_l$, $\mu_c$, and $\mu_r$ such that the corresponding components are rectilinear planar, while $G_\nu$ is not rectilinear planar. Denote by $I_{\mu_l}=[m_l,M_l]$, $I_{\mu_c}=[m_c,M_c]$, and $I_{\mu_r}=[m_r,M_r]$ the representability intervals of $\mu_l$, $\mu_c$, and $\mu_r$, respectively. Since $G_\nu$ is not rectilinear planar, the representability condition for $\nu$ is violated, i.e., $[m_l-2, M_l-2] \cap [m_c, M_c] \cap [m_r+2, M_r+2] = \emptyset$. Rename the three intervals involved in the representability condition as $[\ov{m}_x,\ov{M}_x]$, $[\ov{m}_y,\ov{M}_y]$, and $[\ov{m}_z,\ov{M}_z]$, where $x,y,z \in \{l,c,r\}$ and $x \neq y \neq z$, in such a way that $\ov{m}_z = \max\{m_l-2, m_c, m_r+2\}$ and $\ov{M}_x = \min\{M_l-2,M_c,M_r+2\}$. Namely, $z=l$ if $\ov{m}_z = m_l-2$, $z=c$ if $\ov{m}_z = m_c$, and $z=r$ if $\ov{m}_z = m_r+2$. Similarly, $x=l$ if $\ov{M}_x = M_l-2$, $x=c$ if $\ov{M}_x = M_c$, and $x=r$ if $\ov{M}_x = M_r+2$. See \cref{fi:3children-minbend-example-notation} for an example. The following holds. 

\begin{figure}[tb]
	\centering
	\begin{subfigure}{1\columnwidth}
		\centering
		\includegraphics[width=0.9\columnwidth]{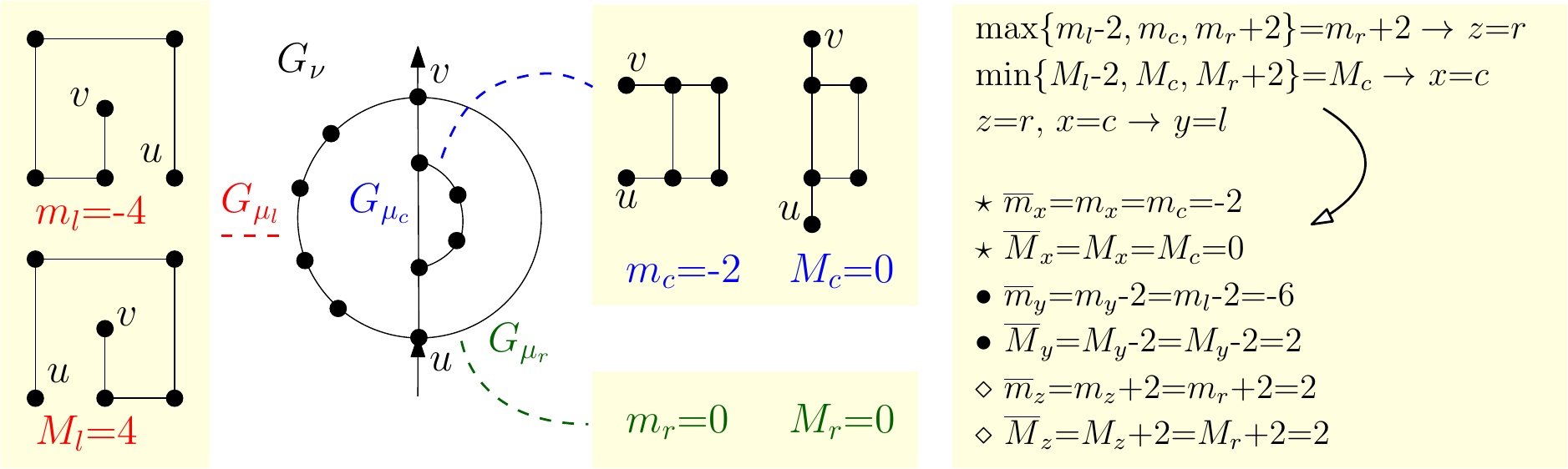}
		\subcaption{\centering}
		\label{fi:3children-minbend-example-notation}
	\end{subfigure}
	\hfil
	\begin{subfigure}{0.8\columnwidth}
		\centering
		\includegraphics[width=1\columnwidth]{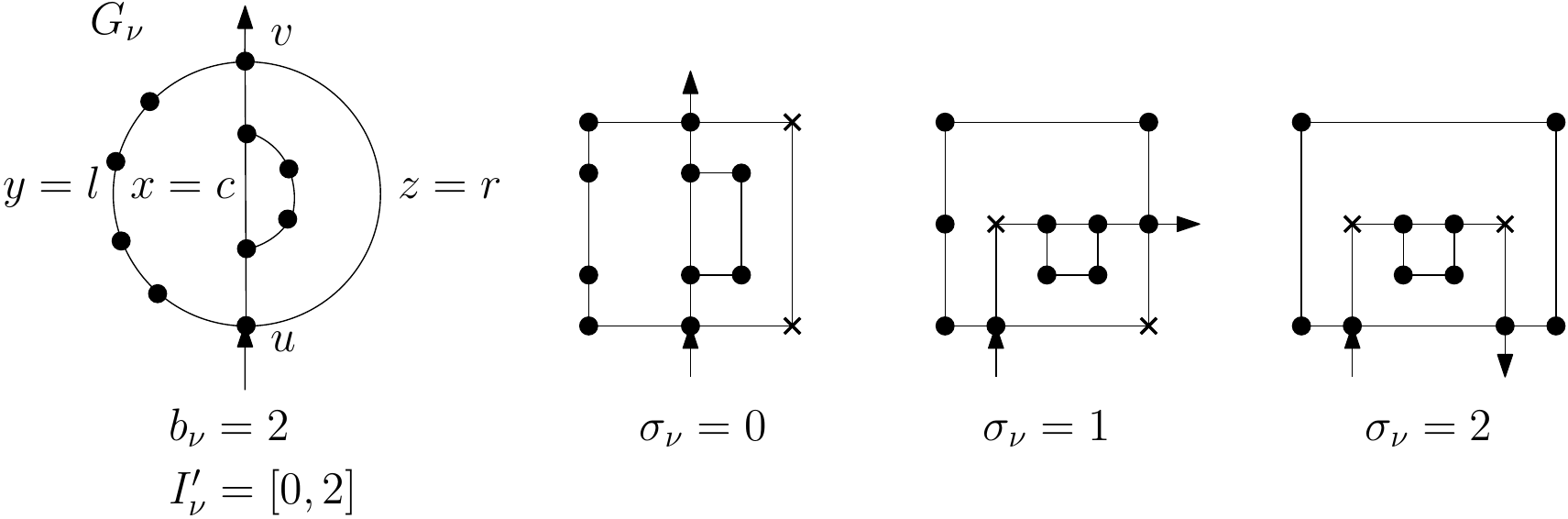}
		\subcaption{\centering}
		\label{fi:3children-minbend-example}
	\end{subfigure}
	\caption{(a) Illustration of the notation $\ov{m}_t$, $\ov{M}_t$ for $t\in \{x,y,z\}$. (b) Illustration of the statement of \cref{le:P-3-exposed-edges} for the graph $G_\nu$. By Property~(i) of the lemma $b_\nu = \ov{m}_z - \ov{M}_x =2$ and by Property~(ii) $I'_\nu = [\max\{\ov{M}_x,\ov{m}_y\}, \min\{\ov{m}_z,\ov{M}_y\}]=[0,2]$.}\label{fi:3children-minbend}
\end{figure}


\begin{proposition}\label{pr:3-intervals}
	$[\ov{m}_x,\ov{M_x}]$ and $[\ov{m}_z,\ov{M}_z]$ are disjoint, with $\ov{m}_z > \ov{M}_x$. 
\end{proposition}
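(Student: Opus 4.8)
The plan is to derive the claim directly from the way the labels $x,y,z$ were chosen, using only the elementary description of the intersection of closed intervals. First I would record that, since $G_{\mu_l}$, $G_{\mu_c}$, and $G_{\mu_r}$ are all rectilinear planar, each of the three intervals appearing in the representability condition of $\nu$ is nonempty; after the renaming, this says $\ov{m}_t \le \ov{M}_t$ for every $t \in \{x,y,z\}$. This innocuous fact is what eventually rules out the degenerate situation $x = z$.

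Next I would invoke the standard fact that the intersection of the closed intervals $[\ov{m}_x,\ov{M}_x]$, $[\ov{m}_y,\ov{M}_y]$, $[\ov{m}_z,\ov{M}_z]$ is the interval $[\,\max\{\ov{m}_x,\ov{m}_y,\ov{m}_z\},\ \min\{\ov{M}_x,\ov{M}_y,\ov{M}_z\}\,]$, which is empty exactly when the left endpoint exceeds the right one. By the choice of $z$ we have $\ov{m}_z = \max\{m_l-2,m_c,m_r+2\} = \max\{\ov{m}_x,\ov{m}_y,\ov{m}_z\}$, and by the choice of $x$ we have $\ov{M}_x = \min\{M_l-2,M_c,M_r+2\} = \min\{\ov{M}_x,\ov{M}_y,\ov{M}_z\}$. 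Hence the triple intersection equals $[\ov{m}_z,\ov{M}_x]$. Since $G_\nu$ is not rectilinear planar, the representability condition of \cref{le:P-3-children-representability} is violated, i.e.\ this intersection is empty, and therefore $\ov{m}_z > \ov{M}_x$.

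Finally I would observe that $x \neq z$: otherwise $\ov{m}_x = \ov{m}_z > \ov{M}_x = \ov{M}_z$, contradicting $\ov{m}_x \le \ov{M}_x$. So $[\ov{m}_x,\ov{M}_x]$ and $[\ov{m}_z,\ov{M}_z]$ are two distinct intervals among the three, and since every element of the first is at most $\ov{M}_x$ while every element of the second is at least $\ov{m}_z > \ov{M}_x$, the two intervals are disjoint. There is no real obstacle in this argument; the only point requiring a moment's care is ruling out $x = z$, so that the assertion ``disjoint'' refers to two genuinely different intervals, which is exactly where the nonemptiness of $[\ov{m}_x,\ov{M}_x]$ is used.
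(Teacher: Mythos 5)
Your proof is correct and follows essentially the same route as the paper: both hinge on the observation that, by the choice of $x$ and $z$, the triple intersection of the shifted intervals equals $[\ov{m}_z,\ov{M}_x]$, so its emptiness (from the violated representability condition) forces $\ov{m}_z>\ov{M}_x$, whence disjointness is immediate. The only difference is presentational — you argue directly where the paper argues by contradiction — plus your (reasonable but not strictly necessary) extra remark that nonemptiness of the children's intervals rules out $x=z$.
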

\begin{proof}
	Suppose for a contradiction that $\ov{m}_z \leq \ov{M}_x$. Since $\ov{M}_x$ is the minimum of the three maxima, we have that $\ov{M}_y$ and $\ov{M}_z$ are to the right of $\ov{M}_x$. Also, since $\ov{m}_z$ is the maximum of three minima, we have that $\ov{m}_y$ and $\ov{m}_x$ are to the left of $\ov{m}_z$. Hence the three intervals $[\ov{m}_x,\ov{M}_x]$, $[\ov{m}_y,\ov{M}_y]$, and $[\ov{m}_z,\ov{M}_z]$ share the interval $[\ov{m}_z,\ov{M}_x]$, which contradicts the fact that $G_\nu$ is not rectilinear planar.
\end{proof}



\begin{lemma}\label{le:P-3-exposed-edges}
	Let $\nu$ be a P-node with three children $\mu_l$, $\mu_c$, and $\mu_r$. Let $G_{\mu_l}$, $G_{\mu_c}$, and $G_{\mu_r}$ be rectilinear planar with representability intervals $I_{\mu_l}=[m_l,M_l]$, $I_{\mu_c}=[m_c,M_c]$, and $I_{\mu_r}=[m_r,M_r]$, respectively. If $G_\nu$ is not rectilinear planar then: (i)~the budget for $\nu$ is $b_\nu = \ov{m}_z - \ov{M}_x$; and (ii)~the interval of spirality values for an orthogonal representation of $G_\nu$ with $b_\nu$ bends is $I'_\nu = [\max\{\ov{M}_x,\ov{m}_y\}, \min\{\ov{m}_z,\ov{M}_y\}]$.
\end{lemma}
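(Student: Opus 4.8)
The plan is to prove the two claims by relating orthogonal representations of $G_\nu$ with $b$ bends to rectilinear representations of a modified graph. The key idea is that adding $b$ subdivision vertices distributed along the edges of $G_\nu$ and then asking for a rectilinear representation is equivalent to asking for an orthogonal representation of $G_\nu$ with (at most) $b$ bends; moreover, by \cref{le:P-exposed-edges-support}, any bends we need inside a child S-component can be pushed onto an exposed edge, and every child of a P-node with three children has an exposed edge. So I will work with the three children $\mu_l,\mu_c,\mu_r$ and think of ``spending'' extra bends $b_l,b_c,b_r$ with $b_l+b_c+b_r = b$, where spending $b_i$ bends on $\mu_i$ enlarges its representability interval from $[m_i,M_i]$ to $[m_i-b_i,M_i+b_i]$ (this follows from \cref{le:Q*-representability} and \cref{le:S-representability} applied to the subdivided component, together with \cref{le:P-exposed-edges-support}). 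Rewriting in the shifted coordinates, spending $b_i$ bends enlarges $[\ov m_i,\ov M_i]$ to $[\ov m_i - b_i, \ov M_i + b_i]$ for each $i\in\{x,y,z\}$.

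\textbf{Proving (i): the budget.} With the reformulation above, $G_\nu$ becomes rectilinear planar after spending a total of $b$ bends if and only if there exist nonnegative $b_x,b_y,b_z$ with $b_x+b_y+b_z=b$ such that $[\ov m_x-b_x,\ov M_x+b_x] \cap [\ov m_y-b_y,\ov M_y+b_y] \cap [\ov m_z-b_z,\ov M_z+b_z] \neq \emptyset$ (this equivalence uses \cref{le:P-3-children-representability} applied to the subdivided children, plus \cref{th:substitution} to justify that only spiralities matter). By \cref{pr:3-intervals}, the intervals $[\ov m_x,\ov M_x]$ and $[\ov m_z,\ov M_z]$ are disjoint with $\ov m_z > \ov M_x$, so the gap between them has length $\ov m_z - \ov M_x$. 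To make all three intervals share a common point, it suffices — and, I claim, is necessary — to close this gap: enlarging $[\ov m_x,\ov M_x]$ upward by $b_x$ and $[\ov m_z,\ov M_z]$ downward by $b_z$ lets them meet once $b_x + b_z \geq \ov m_z - \ov M_x$, and one checks that the middle interval $[\ov m_y,\ov M_y]$ (which by definition of $x,z$ satisfies $\ov m_y \le \ov m_z$ and $\ov M_y \ge \ov M_x$) then automatically contains the meeting point or can be extended to do so within the same total budget. Conversely, any common point $p$ of the three enlarged intervals satisfies $p \ge \ov M_x - b_x$ is false — rather $p \le \ov M_x + b_x$ and $p \ge \ov m_z - b_z$, forcing $b_x + b_z \ge \ov m_z - \ov M_x$, hence $b \ge \ov m_z - \ov M_x$. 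Thus the minimum feasible $b$ is exactly $b_\nu = \ov m_z - \ov M_x$.

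\textbf{Proving (ii): the spirality interval.} Now fix $b = b_\nu$. A spirality value $\sigma_\nu$ is realizable by an orthogonal representation of $G_\nu$ with $b_\nu$ bends iff there is a split $b_x+b_y+b_z = b_\nu$ and a common point of the three enlarged intervals equal (in shifted coordinates) to $\sigma_\nu$; concretely $\sigma_\nu \in [\ov m_x - b_x, \ov M_x + b_x]\cap[\ov m_y-b_y,\ov M_y+b_y]\cap[\ov m_z-b_z,\ov M_z+b_z]$. Since the whole budget is forced to bridge the $x$–$z$ gap, in an optimal split essentially nothing is spent widening $\mu_y$ beyond what is free, and the reachable $\sigma_\nu$ are those between where the enlarged $x$-interval can reach from below and where the enlarged $z$-interval can reach from above, intersected with the (possibly also slightly enlarged) $y$-interval. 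A short case analysis on how $b_\nu$ is partitioned between $b_x$ and $b_z$ shows the set of attainable $\sigma_\nu$ is precisely $[\max\{\ov M_x, \ov m_y\}, \min\{\ov m_z, \ov M_y\}]$: the left endpoint $\ov M_x$ arises by putting all budget on $b_z$ (so the $x$-interval is not extended and its top $\ov M_x$ is the smallest common value), bumped up to $\ov m_y$ if $\mu_y$'s interval starts later; symmetrically for the right endpoint. One must also check this interval is nonempty, which again follows from \cref{pr:3-intervals} and the definitions of $x,y,z$, and that it consists of the correct parity (all-integer) values, which holds because $\nu$ is a P-node with three children so all spiralities are integers by the discussion after \cref{le:spirality-P-node-2-children}.

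\textbf{Main obstacle.} The delicate point is the lower-bound direction of (i) and the tightness of both endpoints in (ii): one must argue that no clever redistribution of the $b_\nu$ bends among the three children (in particular, no ``wasteful'' spending on the middle child $\mu_y$, or asymmetric spending that somehow helps) can beat the bound $\ov m_z - \ov M_x$, nor extend the spirality interval beyond the stated one. This requires carefully using the fact that $[\ov m_x,\ov M_x]$ and $[\ov m_z,\ov M_z]$ are the extreme intervals (so $\ov m_y \in [\ov m_x, \ov m_z]$ in a suitable sense and likewise for $\ov M_y$), which pins down the middle interval's position relative to the other two and makes the middle child's contribution essentially free. I expect the bookkeeping — translating between the bend budgets $b_i$, the enlarged intervals, and the original representability intervals via \cref{le:P-3-children-representability} — to be the bulk of the routine work, with \cref{le:P-exposed-edges-support} invoked to place bends legitimately.
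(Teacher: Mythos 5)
Your proposal is correct and follows essentially the same route as the paper's proof: push all bends onto exposed edges via \cref{le:P-exposed-edges-support}, translate $b_i$ bends on child $\mu_i$ into the enlarged interval $[\ov m_i-b_i,\ov M_i+b_i]$, obtain the budget as the gap $\ov m_z-\ov M_x$ between the two extreme intervals (\cref{pr:3-intervals}), and pin down $I'_\nu$ by observing that the whole budget must bridge that gap, forcing $b_y=0$ and constraining $b_x,b_z$ so that the common point $\ov M_x+b_x=\ov m_z-b_z$ sweeps exactly $[\max\{\ov M_x,\ov m_y\},\min\{\ov m_z,\ov M_y\}]$. The steps you defer as bookkeeping are precisely the claim and four-case analysis the paper carries out, and they go through as you anticipate.
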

\begin{proof}
	Since by hypothesis $G_\nu$ is not rectilinear planar, by the representability condition in \cref{ta:representability} we have $[\ov{m}_x,\ov{M}_x] \cap [\ov{m}_y,\ov{M}_y] \cap [\ov{m}_z,\ov{M}_z] = \emptyset$. By \cref{pr:3-intervals}, $[\ov{m}_x,\ov{M}_x] \cap [\ov{m}_z,\ov{M}_z] = \emptyset$. We prove Property~(i) and~(ii) separately. \cref{fi:3children-minbend-example} provides an illustration for the graph in \cref{fi:3children-minbend-example-notation}, where $x=c$, $y=l$, and $z=r$.
	
	\medskip
	\noindent\textsf{Proof of Property (i)}. 
	We show that $b_\nu = \ov{m}_z - \ov{M}_x$. Observe that each of the three components $G_{\mu_l}$, $G_{\mu_c}$, and $G_{\mu_r}$ is an S-component with an exposed edge. 
	
	We first prove that $b_\nu$ bends are necessary. Suppose for a contradiction that $G_\nu$ admits an orthogonal representation $H'_\nu$ with $b_\nu'< b_\nu$ bends.
	Denote by $b'_x$ and $b'_z$ the number of bends in the restriction of $H'_\nu$ to $G_{\mu_x}$ and to $G_{\mu_z}$, respectively. By \cref{le:P-exposed-edges-support}, we can assume that all the bends $b'_x$ are along an exposed edge of $G_{\mu_x}$ and all the bends $b'_z$ are along an exposed edge of $G_{\mu_z}$. Consider the underlying graph $G'_\nu$ of $H'_\nu$ obtained by replacing each bend of $H'_\nu$ with a subdivision vertex. $G'_\nu$ is rectilinear planar.
	Denote by $[m_x',M_x']$ and $[m_z',M_z']$ the spirality intervals of $G'_{\mu_x}$ and $G'_{\mu_z}$. 
	Note that each subdivision vertex along an exposed edge of $G'_{\mu_x}$ allows one more turn (either to the left or to the right) in a rectilinear planar representation of this component with respect to a rectilinear planar representation of $G_{\mu_x}$. Hence, the spirality interval of $G'_{\mu_x}$ extends the one of $G_{\mu_x}$ by $b'_x$ units, both for the minimum value and for the maximum value. The same reasoning applies to $G'_{\mu_z}$. It follows that $m_x'=m_x-b'_x$, $M_x'=M_x+b'_x$, $m_z'=m_z-b'_z$, and $M_z'=M_z+b'_z$. 
	Consider the three representability intervals $[m'_l-2,M'_l-2]$, $[m'_c,M'_c]$, and $[m'_r+2,M'_r+2]$ for $G'_{\mu_l}$, $G'_{\mu_c}$, and $G'_{\mu_r}$, respectively. Suppose that $x=l$, i.e., $\min\{M_l-2, M_c, M_r+2\}=M_l-2$; then we define $\ov{m}'_x = m'_l-2$ and $\ov{M}'_x=M'_l-2$. Similarly, if $x=c$, we define $\ov{m}'_x = m'_c$ and $\ov{M}'_x=M'_c$. Finally, if $x=r$, we define $\ov{m}'_x = m'_r+2$ and $\ov{M}'_x=M'_r+2$. Analogously, if $z=l$, i.e. $\max\{m_l-2, m_c, m_r+2\}=m_l-2$, we define $\ov{m}'_z = m_l-2$; if $z=c$, we define $\ov{m}'_z = m'_c$ and $\ov{M}'_z=M'_c$; if $z=r$, we define $\ov{m}'_z = m'_r+2$ and $\ov{M}'_z=M'_r+2$. \cref{fi:3children-minbend-example-notation-2} illustrates this notation for the graph of \cref{fi:3children-minbend-example-notation}.
	\begin{figure}[h]
		\centering
		\includegraphics[page=2,width=0.9\columnwidth]{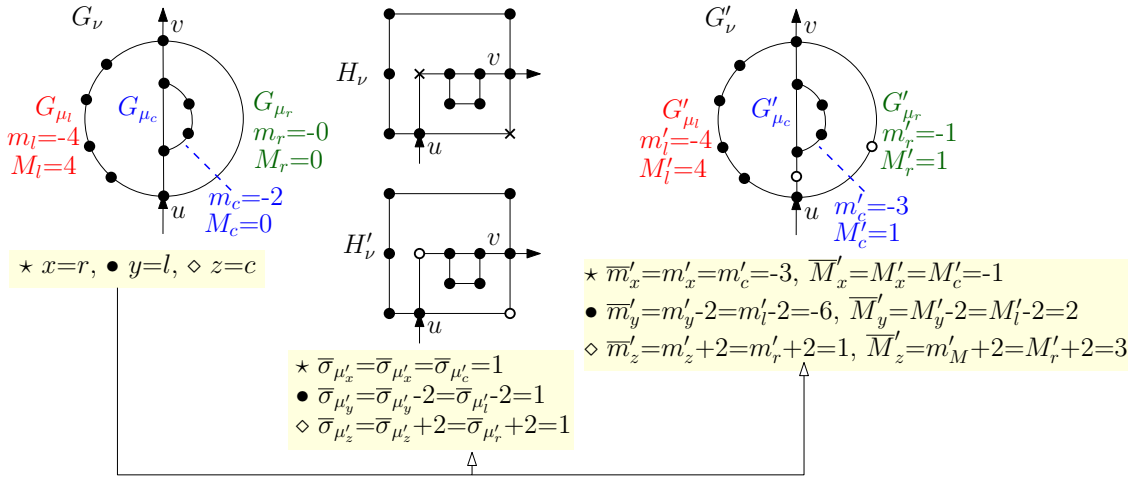}
		\caption{Illustration for the proof of Property (i) of \cref{le:P-3-exposed-edges}, on graph $G_\nu$ in \cref{fi:3children-minbend-example-notation}.}\label{fi:3children-minbend-example-notation-2}
	\end{figure}
	
	We have, $\ov{m}'_x=\ov{m}_x-b'_x$, $\ov{m}'_z=\ov{m}_z-b'_z$, $\ov{M}'_x=\ov{M}_x+b'_x$, and $\ov{M}'_z=\ov{M}_z+b'_z$. Since $G_\nu'$ is rectilinear planar,  we have $[\ov{m}'_x,\ov{M}'_x] \cap [\ov{m}'_z,\ov{M}'_z]=[\ov{m}_x-b'_x,\ov{M}_x+b'_x] \cap [\ov{m}_z-b'_z,\ov{M}_z+b'_z] \not = \emptyset$. Therefore,  $\ov{m}_z-b'_z\le \ov{M}_x+b'_x$, which implies $b_\nu=\ov{m}_z-\ov{M}_x \le b'_x+b'_z\le b'_\nu < b_\nu$, a contradiction.
	
	We now prove that $b_\nu$ bends suffice. Let $b_x$ and $b_z$ be two non-negative integers such that $b_x+b_z=b_\nu$, $b_x\ge \ov{m}_y-\ov{M}_x$, and $b_z\ge \ov{m}_z-\ov{M}_y$. Note that $b_x$ and $b_z$ always exist because $\ov{m}_y-\ov{M}_x+\ov{m}_z-\ov{M}_y=b_\nu-\ov{M}_y+\ov{m}_y\le b_\nu$. Insert $b_x$ subdivision vertices on any exposed edge of $G_{\mu_x}$ and insert $b_z$ subdivision vertices on any exposed edge of $G_{\mu_z}$. Clearly no subdivision vertex has been inserted on $G_{\mu_y}$ since $b_x+b_z=b_\nu$. Call $G'_{\mu_x}$, $G'_{\mu_y}$, and $G'_{\mu_z}$ the resulting components (note that $G'_{\mu_y}=G_{\mu_y}$). Define $\ov{m}'_x$, $\ov{m}'_z$, $\ov{M}'_x$, $\ov{M}'_z$ as in the first part of the proof.
	Suppose that $y = l$, i.e. $\min\{M_l-2, M_c, M_r+2\}\not =M_l-2$ and $\max\{m_l-2, m_c, m_r+2\}\not =m_l-2$. Then we define $\ov{m}'_y = m'_l-2$ and $\ov{M}'_y=M'_l-2$. Similarly, if $y = c$, define $\ov{m}'_y = m'_c$ and $\ov{M}'_y=M'_c$. Finally, if $y = r$, we define $\ov{m}'_y = m'_r+2$ and $\ov{M}'_y = M'_r+2$. 
	Consider the plane graph $G'_{\nu}$ obtained by the union of $G'_{\mu_x}$, $G'_{\mu_y}$, and $G'_{\mu_z}$. To prove that $G'_\nu$ is rectilinear planar, by the representability condition in \cref{ta:representability}, it suffices to show that $[\ov{m}'_x,\ov{M}'_x]\cap [\ov{m}'_y,\ov{M}'_y]\cap [\ov{m}'_z,\ov{M}'_z]=[\ov{m}_x-b_x,\ov{M}_x+b_x]\cap [\ov{m}_y,\ov{M}_y]\cap [\ov{m}_z-b_z,\ov{M}_z+b_z]\not = \emptyset$. We have:
	\begin{itemize}
		\item $b_x\ge \ov{m}_y-\ov{M}_x$, hence $\ov{m}_y\le \ov{M}_x+b_x$ and  $[\ov{m}_x-b_x,\ov{M}_x+b_x]\cap [\ov{m}_y,\ov{M}_y]\not = \emptyset$;
		\item $b_z\ge \ov{m}_z-\ov{M}_y$, hence $\ov{m}_z-b_z\le \ov{M}_y$ and $[\ov{m}_y,\ov{M}_y]\cap [\ov{m}_z-b_z,\ov{M}_z+b_z]\not = \emptyset$;
		\item $\ov{m}_z-\ov{M}_x=b_z+b_x=b_\nu$, hence $\ov{m}_z-b_z= \ov{M}_x+b_x$ and $[\ov{m}_x-b_x,\ov{M}_x+b_x]\cap [\ov{m}_z-b_z,\ov{M}_z+b_z]\not = \emptyset$.
	\end{itemize}
	
	Hence $G'_{\nu}$ has a rectilinear planar representation $H'_{\nu}$; replacing its subdivision vertices with bends, we get an orthogonal representation of $G_\nu$ with $b_\nu$ bends.
	
	\medskip\noindent\textsf{Proof of Property (ii)}.
	We show that set $I'_\nu$ is an interval of feasible spiralities for the orthogonal representations of $G_\nu$ with $b_\nu$ bends. Namely, we show that any orthogonal representation $H_\nu$ of $G_\nu$ with $b_\nu$ bends has spirality in the interval $I_\nu'$ and that for every value $\sigma_\nu\in I_\nu'$ there exists an orthogonal representation of $G_\nu$ with spirality $\sigma_\nu$ and $b_\nu$ bends. 
	
	\smallskip
	Suppose that $G_\nu$ has an orthogonal representation $H_\nu$ with $b_\nu$ bends and let $\sigma_\nu$ be the spirality of $H_\nu$. We prove that $\sigma_\nu\in I'_\nu = [\max\{\ov{M}_x,\ov{m}_y\}, \min\{\ov{m}_z,\ov{M}_y\}]$. Let $b_l$, $b_c$, and $b_r$ be the number of bends in the restriction of $H_\nu$ to $G_{\mu_l}$, $G_{\mu_C}$, and $G_{\mu_r}$, respectively, where $b_l+b_c+b_r=b_\nu$. 
	Let $H'_\nu$ be the rectilinear planar representation obtained from $H_\nu$ by replacing each bend with a subdivision vertex and let $G_\nu'$ be the underlying graph. Clearly the spirality of~$H_\nu'$ equals~$\sigma_\nu$.
	For any $t\in \{x,y,z\}$, by \cref{le:P-exposed-edges-support} we can assume that all the $b_t$ bends are along an exposed edge of $G_{\mu_t}$ and, consequently, $\sigma_{\mu_t'}\in [m_t-b_t, M_t+b_t]$. By using the same notation as in the proof of Property~(i) we define $\ov{m}'_x = m'_l-2$ and $\ov{M}'_x=M'_l-2$ if $x = l$, $\ov{m}'_x = m'_c$ and $\ov{M}'_x=M'_c$ if $x = c$, and $\ov{m}'_x = m'_r+2$ and $\ov{M}'_x = M'_r+2$ if $x = r$. The values $\ov{m}'_y$, $\ov{M}'_y$, $\ov{m}'_z$, and $\ov{M}'_z$ are defined analogously (see \cref{fi:3children-minbend-example-notation-2}). Since $G'_\nu$ is rectilinear planar, by \cref{ta:representability} we have $[\ov{m}_x-b_x,\ov{M}_x+b_x]\cap [\ov{m}_y-b_y,\ov{M}_y+b_y]\cap [\ov{m}_z-b_z,\ov{M}_z+b_z]\not = \emptyset$.
	\begin{clm}
		The following relations hold: (1)~$b_y=0$; (2)~$b_x\le \ov{M}_y-\ov{M}_x$; (3)~$b_z\le \ov{m}_z-\ov{m}_y$.
	\end{clm}
	\begin{claimproof}
		We prove the three relations by contradiction.
		\begin{itemize}
			\item[(1)] If $b_y>0$, then $b_x+b_z< b_\nu=\ov{m}_z - \ov{M}_x$. Hence, $\ov{M}_x+b_x<\ov{m}_z-b_z$ and $[\ov{m}_x-b_x,\ov{M}_x+b_x]\cap [\ov{m}_z-b_z,\ov{M}_z+b_z]=\emptyset$, a contradiction.
			\item[(2)] If $b_x> \ov{M}_y-\ov{M}_x$, then $b_y+b_z< b_\nu-(\ov{M}_y-\ov{M}_x)=\ov{m}_z - \ov{M}_x-(\ov{M}_y-\ov{M}_x)=\ov{m}_z-\ov{M}_y$. Hence, $\ov{M}_y+b_y<\ov{m}_z-b_z$ and $[\ov{m}_y-b_y,\ov{M}_y+b_y]\cap [\ov{m}_z-b_z,\ov{M}_z+b_z] = \emptyset$, a contradiction.
			\item[(3)] If $b_z> \ov{m}_z-\ov{m}_y$, then $b_x+b_y< b_\nu-(\ov{m}_z-\ov{m}_y)=\ov{m}_z - \ov{M}_x-(\ov{m}_z-\ov{m}_y)=\ov{m}_y-\ov{M}_x$. Hence, $\ov{M}_x+b_x<\ov{m}_y-b_y$ and $[\ov{m}_x-b_x,\ov{M}_x+b_x]\cap [\ov{m}_y-b_y,\ov{M}_y+b_y]= \emptyset$, a contradiction.
		\end{itemize}
	\end{claimproof}

	\begin{figure}[tb]
		\centering
		\begin{subfigure}{.4\columnwidth}
			\centering
			\includegraphics[page=1,width=\columnwidth]{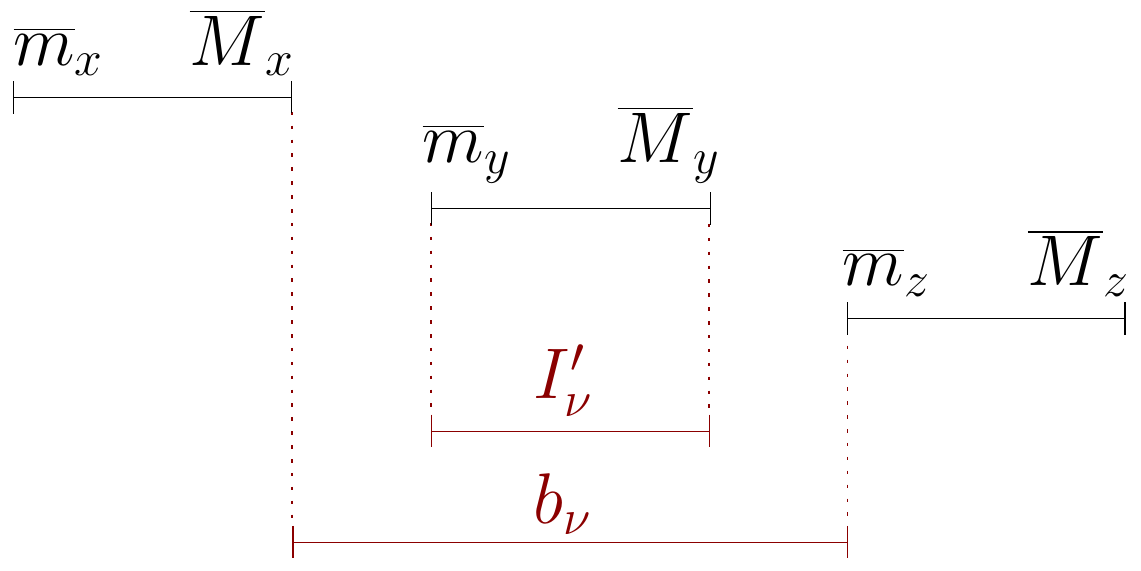}
			\subcaption{\centering}
			\label{fi:3children-nonintersecting-interval-1}
		\end{subfigure}
		\hfil
		\begin{subfigure}{.4\columnwidth}
			\centering
			\includegraphics[page=2,width=\columnwidth]{3children-nonintersecting-interval.pdf}
			\subcaption{\centering}
			\label{fi:3children-nonintersecting-interval-2}
		\end{subfigure}
		\hfil
		\begin{subfigure}{.4\columnwidth}
			\centering
			\includegraphics[page=3,width=\columnwidth]{3children-nonintersecting-interval.pdf}
			\subcaption{\centering}
			\label{fi:3children-nonintersecting-interval-3}
		\end{subfigure}
		\hfil
		\begin{subfigure}{.4\columnwidth}
			\centering
			\includegraphics[page=4,width=\columnwidth]{3children-nonintersecting-interval.pdf}
			\subcaption{\centering}
			\label{fi:3children-nonintersecting-interval-4}
		\end{subfigure}
		\caption{The four possible cases for the proof of Property~(ii) of \cref{le:P-3-exposed-edges}}\label{fi:3children-nonintersecting-interval}
	\end{figure}
	
	See for example \cref{fi:3children-minbend-example} where every bend is regarded as a subdivision vertex. We have $b_y=0$, $b_x\le \ov{M}_y-\ov{M}_x=2-0=2$, and $b_z\le  \ov{m}_z-\ov{m}_y=2+6=8$ in the three rectilinear representations of $G_\nu'$.
	We now consider spirality values of the components of $H'_\nu$ and define related values $\ov{\sigma}_{\mu_x'}=\sigma_{\mu_x'}-2$ if $x = l$, $\ov{\sigma}_{\mu_x'}=\sigma_{\mu_x'}$ if $x = c$, and $\ov{\sigma}_{\mu_x'}=\sigma_{\mu_x'}+2$ if $x = r$. The values $\ov{\sigma}_{\mu'_y}$ and $\ov{\sigma}_{\mu'_z}$ are defined analogously. See \cref{fi:3children-minbend-example-notation-2} for an illustration of the notation $\ov{\sigma}_{\mu_t'}$ for $t\in \{x,y,z\}$. We have $\ov{\sigma}_{\mu'_t}\in [\ov{m}_t-b_t,\ov{M}_t+b_t]$ for any $t\in \{x,y,z\}$. Also, by \cref{le:spirality-P-node-3-children} $\sigma_\nu=\ov{\sigma}_{\mu_x'}= \ov{\sigma}_{\mu_y'}=\ov{\sigma}_{\mu_z'}$. Hence $\sigma_\nu\in [\ov{m}_x-b_x,\ov{M}_x+b_x]\cap [\ov{m}_y-b_y,\ov{M}_y+b_y]\cap [\ov{m}_z-b_z,\ov{M}_z+b_z]$. Note that by Relation~(1) of the claim $[\ov{m}_y-b_y,\ov{M}_y+b_y]=[\ov{m}_y,\ov{M}_y]$. We show that $[\ov{m}_x-b_x,\ov{M}_x+b_x]\cap [\ov{m}_y,\ov{M}_y]\cap [\ov{m}_z-b_z,\ov{M}_z+b_z]=[\max\{\ov{M}_x,\ov{m}_y\}, \min\{\ov{m}_z,\ov{M}_y\}]$. We have four cases; see \cref{fi:3children-nonintersecting-interval}.	
	
	\smallskip\noindent\textbf{Case~(a): $\ov{M}_x<\ov{m}_y$ and $\ov{M}_y<\ov{m}_z$} (see \cref{fi:3children-nonintersecting-interval-1}). In this case $I'_\nu=[\max\{\ov{M}_x,\ov{m}_y\}, \min\{\ov{m}_z,\ov{M}_y\}] = [\ov{m}_y,\ov{M}_y]$. Refer to \cref{fi:3children-nonintersecting-interval-Case-a} for an illustration of the argument that refines \cref{fi:3children-nonintersecting-interval-1}).
	\begin{figure}[tb]
		\centering
		\includegraphics[page=6,width=0.54\columnwidth]{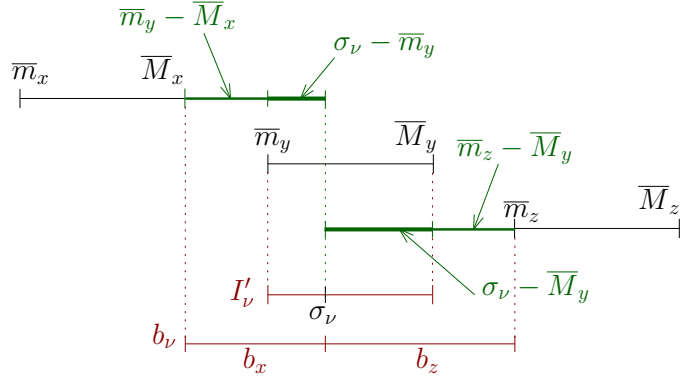}
		\caption{
			A more detailed illustration for Case~(a) in the proof of \cref{le:P-3-exposed-edges}.
			For any value $\sigma_\nu \in I'_\nu$, we need to add $b_x = \sigma_\nu - \ov{M}_x = (\sigma_\nu-\ov{m}_y)+(\ov{m}_y-\ov{M}_x)$ bends to the representation of $G_{\mu_x}$ and $b_z = \ov{m}_z-\sigma_\nu = (\ov{m}_z-\ov{M}_y) + (\ov{M}_y-\sigma_\nu)$ bends to the representation of $G_{\mu_z}$. 
		}
		\label{fi:3children-nonintersecting-interval-Case-a}
	\end{figure}
	
	First we prove that $[\ov{m}_x-b_x,\ov{M}_x+b_x]\cap [\ov{m}_y,\ov{M}_y]=[\ov{m}_y,\ov{M}_y]$.
	We have $\ov{M}_x+b_x\ge \ov{m}_y$, or $[\ov{m}_x-b_x,\ov{M}_x+b_x]\cap [\ov{m}_y,\ov{M}_y]=\emptyset$. By Relation~(2) of the claim we have $b_x\le \ov{M}_y-\ov{M}_x$. Hence, $b_x\in [\ov{m}_y-\ov{M}_x,\ov{M}_y-\ov{M}_x]$. We have $\ov{M}_x+b_x\in [\ov{M}_x+\ov{m}_y-\ov{M}_x,\ov{M}_x+\ov{M}_y-\ov{M}_x]=[\ov{m}_y,\ov{M}_y]$. 
	It follows that  $[\ov{m}_x-b_x,\ov{M}_x+b_x]\cap [\ov{m}_y,\ov{M}_y]=[\ov{m}_y,\ov{M}_y]$. 
	Second, we prove $[\ov{m}_y,\ov{M}_y]\cap [\ov{m}_z-b_z,\ov{M}_z+b_z]=[\ov{m}_y,\ov{M}_y]$.
	We have $\ov{M}_y\ge \ov{m}_z-b_z$, or $[\ov{m}_y,\ov{M}_y]\cap [\ov{m}_z-b_z,\ov{M}_z+b_z] =\emptyset$. By Relation~(3) of the claim we have $b_z\le \ov{m}_z-\ov{m}_y$. Hence, $b_z\in [\ov{m}_z-\ov{M}_y,\ov{m}_z-\ov{m}_y]$. We have $\ov{m}_z-b_z\in [\ov{m}_z-\ov{m}_z+\ov{m}_y,\ov{m}_z-\ov{m}_z-\ov{M}_y]=[\ov{m}_y,\ov{M}_y]$.
	It follows that  $[\ov{m}_y,\ov{M}_y]\cap [\ov{m}_z-b_z,\ov{M}_z+b_z]=[\ov{m}_y,\ov{M}_y]$. 
	Hence, $[\ov{m}_x-b_x,\ov{M}_x+b_x]\cap [\ov{m}_y,\ov{M}_y]\cap [\ov{m}_z-b_z,\ov{M}_z+b_z]=[\ov{m}_y,\ov{M}_y]\cap [\ov{m}_z-b_z,\ov{M}_z+b_z]=[\ov{m}_y,\ov{M}_y]=I'_\nu$.
	
	\smallskip\noindent\textbf{Case~(b): $\ov{M}_x\ge\ov{m}_y$ and $\ov{M}_y<\ov{m}_z$} (see \cref{fi:3children-nonintersecting-interval-2}). In this case $I'_\nu =[\ov{M}_x,\ov{M}_y]$. By the same reasoning as in Case~(a) we have  $[\ov{m}_y,\ov{M}_y]\cap [\ov{m}_z-b_z,\ov{M}_z+b_z]=[\ov{m}_y,\ov{M}_y]$.
	We prove that $[\ov{m}_x-b_x,\ov{M}_x+b_x]\cap [\ov{m}_y,\ov{M}_y]=[\ov{M}_x,\ov{M}_y]$.
	By Relation~(2) of the claim $b_x\in [0,\ov{M}_y-\ov{M}_x]$. We have $\ov{M}_x+b_x\in [\ov{M}_x,\ov{M}_x+\ov{M}_y-\ov{M}_x]=[\ov{M}_x,\ov{M}_y]$. 
	It follows that  $[\ov{m}_x-b_x,\ov{M}_x+b_x]\cap [\ov{m}_y,\ov{M}_y]=[\ov{M}_x,\ov{M}_y]$. 
	Hence, since $\ov{M}_x\ge\ov{m}_y$, we have $[\ov{m}_x-b_x,\ov{M}_x+b_x]\cap [\ov{m}_y,\ov{M}_y]\cap [\ov{m}_z-b_z,\ov{M}_z+b_z]=[\ov{m}_x-b_x,\ov{M}_x+b_x]\cap [\ov{m}_y,\ov{M}_y]\cap [\ov{m}_y,\ov{M}_y]\cap [\ov{m}_z-b_z,\ov{M}_z+b_z]=[\ov{m}_y,\ov{M}_y]\cap [\ov{M}_x,\ov{M}_y]=[\ov{M}_x,\ov{M}_y]=I'_\nu$.
	
	\smallskip\noindent\textbf{Case~(c): $\ov{M}_x<\ov{m}_y$ and $\ov{M}_y\ge \ov{m}_z$} (see \cref{fi:3children-nonintersecting-interval-3}). In this case $I'_\nu =[\ov{m}_y,\ov{m}_z]$.  It is possible to prove that  $[\ov{m}_x-b_x,\ov{M}_x+b_x]\cap [\ov{m}_y,\ov{M}_y]=[\ov{m}_y,\ov{M}_y]$ as we did in Case~(a).
	We prove $[\ov{m}_y,\ov{M}_y]\cap [\ov{m}_z-b_z,\ov{M}_z+b_z]=[\ov{m}_y,\ov{m}_z]$.
	By Relation~(3) of the claim $b_z\in [0,\ov{m}_z-\ov{m}_y]$. We have $\ov{m}_z-b_z\in [\ov{m}_z-\ov{m}_z+\ov{m}_y,\ov{m}_z]=[\ov{m}_y,\ov{m}_z]$.
	It follows that  $[\ov{m}_y,\ov{M}_y]\cap [\ov{m}_z-b_z,\ov{M}_z+b_z]=[\ov{m}_y,\ov{m}_z]$. 
	Hence, since $\ov{M}_y\ge \ov{m}_z$, we have $[\ov{m}_x-b_x,\ov{M}_x+b_x]\cap [\ov{m}_y,\ov{M}_y]\cap [\ov{m}_z-b_z,\ov{M}_z+b_z]=[\ov{m}_x-b_x,\ov{M}_x+b_x]\cap [\ov{m}_y,\ov{M}_y]\cap [\ov{m}_y,\ov{M}_y]\cap [\ov{m}_z-b_z,\ov{M}_z+b_z]=[\ov{m}_y,\ov{M}_y]\cap [\ov{m}_y,\ov{m}_z]= [\ov{m}_y,\ov{m}_z]=I'_\nu$.
	
	\smallskip\noindent\textbf{Case~(d): $\ov{M}_x\ge \ov{m}_y$ and $\ov{M}_y\ge \ov{m}_z$} (see \cref{fi:3children-nonintersecting-interval-4}). In this case $I'_\nu =[\ov{M}_x,\ov{m}_z]$.  It is possible to prove that  $[\ov{m}_x-b_x,\ov{M}_x+b_x]\cap [\ov{m}_y,\ov{M}_y]=[\ov{M}_x,\ov{M}_y]$ as we did in Case~(b) and that $[\ov{m}_y,\ov{M}_y]\cap [\ov{m}_z-b_z,\ov{M}_z+b_z]=[\ov{m}_y,\ov{m}_z]$ as we did for Case~(c).
	We have $[\ov{m}_x-b_x,\ov{M}_x+b_x]\cap [\ov{m}_y,\ov{M}_y]\cap [\ov{m}_z-b_z,\ov{M}_z+b_z]=[\ov{m}_x-b_x,\ov{M}_x+b_x]\cap[\ov{m}_y,\ov{M}_y]\cap [\ov{m}_y,\ov{M}_y]\cap [\ov{m}_z-b_z,\ov{M}_z+b_z]=[\ov{M}_x,\ov{M}_y]\cap [\ov{m}_y,\ov{m}_z]=[\ov{M}_x,\ov{m}_z]$ since $\ov{M}_x\ge \ov{m}_y$ and $\ov{M}_y\ge \ov{m}_z$. Hence, 
	$[\ov{m}_x-b_x,\ov{M}_x+b_x]\cap [\ov{m}_y,\ov{M}_y]\cap [\ov{m}_z-b_z,\ov{M}_z+b_z]=[\ov{M}_x,\ov{m}_z]=I'_\nu$. 
	
	\medskip
	Suppose now that we are given $\sigma_\nu\in I'_\nu =[\max\{\ov{M}_x,\ov{m}_y\}, \min\{\ov{m}_z,\ov{M}_y\}]$.
	We show that
	there exists an orthogonal representation $H_\nu$ of $G_\nu$ with $b_\nu$ bends and with spirality~$\sigma_\nu$. This is equivalent to showing that there exists a plane graph $G'_\nu$ obtained by adding $b_\nu$ subdivision vertices along some edges of $G_\nu$ such that  $G'_\nu$ has a rectilinear orthogonal representation with spirality $\sigma_\nu$. To construct $G'_\nu$ we insert a suitable number $b_z\in[0,b_\nu]$ of subdivision vertices on an exposed edge of 
	$G_{\mu_z}$ and $b_x=b_\nu-b_z$ subdivision vertices on an exposed edge of $G_{\mu_x}$ (as a consequence, we do not insert any subdivision vertex in $G_{\mu_y}$). Let $G'_{\mu_x}$, $G'_{\mu_y}$, and $G'_{\mu_z}$ be the resulting graphs. Since by hypothesis $G_{\mu_x}$, $G_{\mu_y}$, and $G_{\mu_z}$ are rectilinear planar, we have that also $G'_{\mu_x}$, $G'_{\mu_y}$, and $G'_{\mu_z}$ are rectilinear planar. Also, with the same reasoning as in the proof of Property~(i), the representability intervals of $G'_{\mu_x}$, $G'_{\mu_y}$, and $G'_{\mu_z}$ are $[m'_x,M'_x]=[m_x-b_x,M_x+b_x]$, $[m_y,M_y]$, and $[m_z-b_z,M_z+b_z]$, respectively. 
	For any $t\in \{x,y,z\}$ we define $\ov{m}'_t$, $\ov{M}'_t$, and  $\ov{\sigma}_{\mu_t'}$ as in the first part of the proof of Property~(ii). 
	We now describe how to compute $b_x$ and $b_z$, and how to set $\ov{\sigma}_{\mu_x}$, $\ov{\sigma}_{\mu_y}$, and $\ov{\sigma}_{\mu_z}$. Let  $c_1=\max\{\ov{M}_x,\ov{m}_y\}$ and $c_2= \min\{\ov{m}_z,\ov{M}_y\}$.  
	We have: $c_1=\ov{m}_y$ and $c_2=\ov{M}_y$ in the case of \cref{fi:3children-nonintersecting-interval-1}; $c_1=\ov{M}_x$ and $c_2=\ov{M}_y$ in the case of \cref{fi:3children-nonintersecting-interval-2}; $c_1=\ov{m}_y$ and $c_2=\ov{m}_z$ in the case of \cref{fi:3children-nonintersecting-interval-3}; $c_1=\ov{M}_x$ and $c_2=\ov{m}_z$ in the case of \cref{fi:3children-nonintersecting-interval-4}.   
	We have $\sigma_\nu\in I'_\nu =[c_1,c_2]$.  We set $b_z=\ov{m}_z-\sigma_\nu$ and, consequently, $b_x=b_\nu-b_z=b_\nu-\ov{m}_z+\sigma_\nu=\ov{M}_x+\sigma_\nu$. We prove that $b_z$ (and consequently $b_x$) is in the interval $[0,b_\nu]$.  We have $b_z=\ov{m}_z-\sigma_\nu\in [\ov{m}_z-c_2,\ov{m}_z-c_1] \subseteq [\ov{m}_z-\ov{m}_z,\ov{m}_z-\ov{M}_x]$, since $c_2\le \ov{m}_z$ and $c_1\ge \ov{M}_x$. Hence $b_z\in  [\ov{m}_z-\ov{m}_z,\ov{m}_z-\ov{M}_x]=[0,b_\nu]$. We now set  $\ov{\sigma}_{\mu_x}=\ov{M}_x+b_x=\sigma_\nu$ and  $\ov{\sigma}_{\mu_z}=\ov{m}_z-b_z=\sigma_\nu$. Notice that $\sigma_\nu\in [c_1,c_2]\in [\ov{m}_y,\ov{M}_y]$. Hence, it is possible to set  $\ov{\sigma}_{\mu_y}=\sigma_\nu$.
	By \cref{le:spirality-P-node-3-children} we can get a rectilinear planar representation $H'_\nu$ of $G'_\nu$ by a parallel composition of rectilinear planar representations of  $G'_{\mu_x}$, $G'_{\mu_y}$, and $G'_{\mu_z}$ with spiralities $\ov{\sigma}_{\mu_x}$ $\ov{\sigma}_{\mu_y}$, and $\ov{\sigma}_{\mu_z}$, respectively. By the same lemma, the spirality of $H'_\nu$ is $\sigma_\nu'=\ov{\sigma}_{\mu_x}=\ov{\sigma}_{\mu_y}=\ov{\sigma}_{\mu_z}=\sigma_\nu$. By replacing the subdivision vertices of $H'_\nu$ with bends we get an orthogonal representation of $G_\nu$ with $b_\nu$ bends and spirality $\sigma_\nu$.
\end{proof}

\subsection{Budget of P-nodes with two children}\label{sse:bottom-up-2-children}

Let $\mu_l$ and $\mu_r$ be the two children of $\nu$, and suppose that $G_{\mu_l}$ and $G_{\mu_r}$ are rectilinear planar with representability intervals $I_{\mu_l}=[m_l,M_l]$ and $I_{\mu_r}=[m_r,M_r]$, respectively. The representability condition for $\nu$ given in \cref{ta:representability} is expressed in terms of intersection between the interval $[m_l-M_r, M_l-m_r]$ and another interval $\Delta_\nu$ that depends on the type of P-node. Specifically: $\Delta_\nu = [2,4-\gamma]$ if $\nu$ is of type \Pio{2}{\lambda\beta}; $\Delta_\nu = [\frac{5}{2}, \frac{7}{2} - \gamma]$ if $\nu$ is of type \Pio{3d}{\lambda\beta}; and $\Delta_\nu = [3,3]$ if $\nu$ is of type \Pin{3dd'}.
In the following, given two non-intersecting intervals of real numbers $A_1 = [m_1,M_1]$ and $A_2 = [m_2,M_2]$, the \emph{distance} between $A_1$ and $A_2$ is defined as $\delta(A_1, A_2)=\min\{\lvert M_1-m_2 \rvert, \lvert M_2-m_1 \rvert\}$.

\cref{ssse:with-exposed} handles the case of a P-node with two children both having an exposed egde. 
\cref{ssse:no-exposed} handles the more involved cases in which either the left child or the right child of the P-node has no exposed edge. Note that, since the vertex-degree is at most four, at least one of the two children of the P-node must have an exposed edge.
We start giving two simple combinatorial results (\cref{le:P-2-bend-support-pio2alphabeta,le:P-2-bend-support-pio3dalphabeta}).

\begin{lemma}\label{le:P-2-bend-support-pio2alphabeta}
	Let $\nu$ be a P-node of type \Pio{2}{\lambda\beta} with children $\mu_l$ and $\mu_r$. Let $G_{\mu_l}$ and $G_{\mu_r}$ be rectilinear planar. For any rectilinear planar representation of $G_{\mu_l}$ and $G_{\mu_r}$, and for any $d\in \{l,r\}$, the following relation holds:
	
	$$k_{u}^d \alpha_{u}^d + k_{v}^d\alpha_v^d=h \Leftrightarrow h\in [\frac{\gamma}{2},2-\frac{\gamma}{2}],$$
	
	\noindent where both $h$ and  $\frac{\gamma}{2}$ are either integer or semi-integer numbers.
\end{lemma}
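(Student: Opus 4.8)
The plan is to evaluate the four coefficients $k_u^l,k_u^r,k_v^l,k_v^r$ explicitly and then reduce the statement to a short arithmetic computation. Since $\nu$ has type \Pio{2}{\lambda\beta}, both poles have indegree two in $G_\nu$; because $\nu$ has exactly two children, each pole necessarily has indegree one in each of $G_{\mu_l}$ and $G_{\mu_r}$, so $\indeg_{\mu_d}(w)=1$ for all $w\in\{u,v\}$ and $d\in\{l,r\}$. Hence, by the definition of $k_w^d$, we get $k_w^d=1$ when $\outdeg_\nu(w)=1$ and $k_w^d=\tfrac12$ when $\outdeg_\nu(w)=2$; in particular these coefficients depend only on the degree structure of $G_\nu$, which is why the equivalence is uniform over all rectilinear planar representations of the children. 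Also $\deg(w)=\indeg_\nu(w)+\outdeg_\nu(w)=2+\outdeg_\nu(w)$, so $w$ has degree three exactly when $\outdeg_\nu(w)=1$ and degree four exactly when $\outdeg_\nu(w)=2$. Throughout I will use the constraints on the $\alpha$-variables recalled just before \cref{le:spirality-P-node-2-children}: $\alpha_w^l,\alpha_w^r\in\{0,1\}$; $1\le\alpha_w^l+\alpha_w^r\le 2$ for each pole $w$; and $\alpha_w^l=\alpha_w^r=1$ whenever $\deg(w)=4$.

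I would then argue uniformly. Fix $d\in\{l,r\}$. A pole $w$ of degree four forces $\alpha_w^d=1$, hence contributes the fixed quantity $k_w^d\cdot 1=\tfrac12$ to $k_u^d\alpha_u^d+k_v^d\alpha_v^d$. A pole $w$ of degree three has $k_w^d=1$ and an $\alpha_w^d$ that can be chosen freely in $\{0,1\}$: once $\alpha_w^d$ is set, the constraint $1\le\alpha_w^l+\alpha_w^r$ is met by taking the companion variable $\alpha_w^{d'}$ equal to $1$, and $\alpha_w^l+\alpha_w^r\le 2$ is automatic. Since $\gamma=\lambda+\beta-2$ equals the number of degree-four poles (zero, one, or two according to the subtype) and $2-\gamma$ equals the number of degree-three poles, the set of values attainable by $k_u^d\alpha_u^d+k_v^d\alpha_v^d$ is exactly $\{\tfrac{\gamma}{2}+j : j\in\mathbb{Z},\, 0\le j\le 2-\gamma\}$. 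This is precisely the set of numbers in $[\tfrac{\gamma}{2},2-\tfrac{\gamma}{2}]$ that are integers when $\gamma$ is even and semi-integers when $\gamma$ is odd; and since $h-\tfrac{\gamma}{2}=j\in\mathbb{Z}$, the parity claim about $h$ follows. Both implications of the stated equivalence are contained in this description: the forward implication from ``all attainable values lie in that set'' and the backward implication from ``every such value is attained''.

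For readers who prefer the enumeration in the style of the previous lemmas, the three cases read: for \Pio{2}{11} we have $\gamma=0$, $k_u^d=k_v^d=1$, both poles of degree three, hence $k_u^d\alpha_u^d+k_v^d\alpha_v^d=\alpha_u^d+\alpha_v^d\in\{0,1,2\}=[0,2]\cap\mathbb{Z}$; for \Pio{2}{12}, assuming w.l.o.g.\ that $\outdeg_\nu(u)=1$ and $\outdeg_\nu(v)=2$, we have $\gamma=1$, $k_u^d=1$, $k_v^d=\tfrac12$, $\alpha_v^d=1$ forced, hence the quantity equals $\alpha_u^d+\tfrac12\in\{\tfrac12,\tfrac32\}=[\tfrac12,\tfrac32]$; for \Pio{2}{22} we have $\gamma=2$, $k_u^d=k_v^d=\tfrac12$, both poles of degree four, $\alpha_u^d=\alpha_v^d=1$, hence the quantity is the single value $1=[1,1]$.

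The computation is essentially routine; the only point demanding a little care is the very first step, namely deriving $\indeg_{\mu_d}(w)=1$ from the type definition together with $\nu$ having two children, so that the $k$-coefficients are read off correctly, together with the mild observation that a degree-three variable $\alpha_w^d$ may be chosen freely because the companion constraint on $\alpha_w^{d'}$ is always satisfiable. No genuine obstruction arises.
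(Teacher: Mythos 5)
Your proof is correct and follows essentially the same route as the paper's: a case analysis on $\lambda,\beta$ in which degree-four poles force $\alpha_w^d=1$ with coefficient $\tfrac12$ and degree-three poles leave $\alpha_w^d$ free with coefficient $1$, so that the attainable set is exactly $\{\tfrac{\gamma}{2}+j:0\le j\le 2-\gamma\}$. The only difference is cosmetic: you derive the $k$-coefficients from the type definition rather than reading them from Table~\ref{ta:p-coefficients}, and you add a uniform counting argument on top of the same three-case enumeration.
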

\begin{proof}
	For the reader's convenience, we summarize in \cref{ta:p-coefficients} the parameters associated with each type of P-node with two children. Based on these parameters, we analyze three~cases:
	
	\begin{itemize}
		\item $\lambda =\beta=1$. In this case, $\gamma=\lambda+\beta-2=0$, $h \in \{0,1,2\}$, and $k_u^l=k_v^l=k_u^r=k_v^r=1$ (see \cref{ta:p-coefficients}). Hence, $k_{u}^d \alpha_{u}^d + k_{v}^d\alpha_v^d= \alpha_{u}^d + \alpha_v^d$. Since both poles $u$ and $v$ have outdegree one, both $\alpha_{u}^d$ and $\alpha_v^d$ can take either value $0$ or $1$, hence $\alpha_{u}^d + \alpha_v^d$ can take all and only the values in the set $\{0,1,2\}$.   
		
		\item $\lambda=1$ and $\beta=2$. In this case $\gamma=1$,  $h \in \{\frac{1}{2},\frac{3}{2}\}$, $k_u^l=k_u^r=1$, and $k_v^l=k_v^r=\frac{1}{2}$ (see \cref{ta:p-coefficients}). Hence, $k_{u}^d \alpha_{u}^d + k_{v}^d\alpha_v^d= \alpha_{u}^d + \frac{1}{2}\alpha_v^d$.  Since we are assuming that $\outdeg(u)=1$ and $\outdeg(v)=2$, we have $\alpha_{u}^d\in \{0,1\}$ and $\alpha_v^d=1$, i.e., $\alpha_{u}^d + \frac{1}{2}\alpha_v^d$ equals either $\frac{1}{2}$ or $\frac{3}{2}$. 
		
		\item $\lambda=\beta=2$. In this case $\gamma=2$, $c \in \{1\}$, and $k_u^l=k_u^r=k_v^l=k_v^r=\frac{1}{2}$ (see \cref{ta:p-coefficients}). Hence, $k_{u}^d \alpha_{u}^d + k_{v}^d\alpha_v^d= \frac{1}{2}\alpha_{u}^d + \frac{1}{2}\alpha_v^d$. Since $\outdeg(u)=\outdeg(v)=2$, we have $\alpha_{u}^d=\alpha_v^d=1$.
	\end{itemize}
\end{proof}

\renewcommand{\arraystretch}{1.5}
\begin{table}[h]
	\centering
	\caption{Parameters for the P-nodes with 2 children.}\label{ta:p-coefficients}
	\begin{tabular}{|c|c|c|c|c|c|c|c|c|c|c|}
		\hline	\rowcolor{antiquewhite} \hspace{1mm} \textsc{Type} \hspace{1mm} & \hspace{1mm} \textsc{$k_u^l$} \hspace{1mm}
		& \hspace{1mm} \textsc{$k_u^r$} \hspace{1mm}& \hspace{1mm} \textsc{$k_v^l$} \hspace{1mm} & \hspace{1mm} \textsc{$k_v^r$} \hspace{1mm} & \hspace{1mm} \textsc{$\lambda$} \hspace{1mm} & \hspace{1mm} \textsc{$\beta$} \hspace{1mm} & \hspace{1mm} \textsc{$\gamma$} \hspace{1mm} & \hspace{1mm} \textsc{$d$} \hspace{1mm} & \hspace{1mm} \textsc{$d'$} \hspace{1mm} \\

		\hline
		\Pio{2}{11} & $1$ & $1$ & $1$ & $1$ & $1$ & $1$ & $0$ & - & - \\
		\hline
		\Pio{2}{12} & $1$ & $1$ & $\frac{1}{2}$ & $\frac{1}{2}$ & $1$ & $2$ & $1$ & - & - \\
		\hline
		\Pio{2}{21} & $\frac{1}{2}$ & $\frac{1}{2}$ & $1$ & $1$ & $1$ & $2$ & $1$ & - & - \\
		\hline
		\Pio{2}{22} & $\frac{1}{2}$ & $\frac{1}{2}$ & $\frac{1}{2}$ & $\frac{1}{2}$ & $2$ & $2$ & $2$ & - & - \\
		\hline
		\Pio{3l}{11} & $1$ & $1$ & $\frac{1}{2}$ & $1$ & $1$ & $1$ & $0$ & $l$ & - \\
		\hline
		\Pio{3r}{11} & $1$ & $1$ & $1$ & $\frac{1}{2}$ & $1$ & $1$ & $0$ & $r$ & - \\
		\hline
		\Pio{3l}{12} & $\frac{1}{2}$ & $\frac{1}{2}$ & $\frac{1}{2}$ & $1$ & $1$ & $2$ & $1$ & $l$ & - \\
		\hline
		\Pio{3r}{12} & $\frac{1}{2}$ & $\frac{1}{2}$ & $1$ & $\frac{1}{2}$ & $1$ & $2$ & $1$ & $r$ & - \\
		\hline
		\Pin{3ll} & $\frac{1}{2}$ & $1$ & $\frac{1}{2}$ & $1$ & $1$ & $1$ & - & $l$ & $l$ \\
		\hline
		\Pin{3lr} & $1$ & $\frac{1}{2}$ & $\frac{1}{2}$ & $1$ & $1$ & $1$ & - & $l$ & $r$ \\
		\hline
		\Pin{3rr} & $1$ & $\frac{1}{2}$ & $1$ & $\frac{1}{2}$ & $1$ & $1$ & - & $r$ & $r$ \\
		\hline
	\end{tabular}
\end{table}

\begin{lemma}\label{le:P-2-bend-support-pio3dalphabeta}
	Let $\nu$ be a P-node of type \Pio{3l}{\lambda\beta} with children $\mu_l$ and $\mu_r$. Let $G_{\mu_l}$ and $G_{\mu_r}$ be rectilinear planar. For any rectilinear planar representation of $G_{\mu_l}$ and $G_{\mu_r}$, the following relations hold:
	
	\begin{equation}\label{eq:pio3dalphabeta-l}
	k_{u}^l \alpha_{u}^l +k_{v}^l\alpha_v^l=h_l \Leftrightarrow h_l \in[\frac{\gamma}{2}+\frac{1}{2},\frac{3}{2}-\frac{\gamma}{2}], 
	\end{equation}
	\noindent where both $h_l$ and $\frac{\gamma}{2}+\frac{1}{2}$ are either integer or semi-integer numbers.
	
	\begin{equation}\label{eq:pio3dalphabeta-r}
	k_{u}^r \alpha_{u}^r +k_{v}^r\alpha_v^r=h_r \Leftrightarrow h_r \in[\frac{\gamma}{2}+1,2-\frac{\gamma}{2}], 
	\end{equation}
	\noindent where both $h_r$ and $\frac{\gamma}{2}+1$ are either integer or semi-integer numbers.
\end{lemma}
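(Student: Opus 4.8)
The plan is to follow closely the template of the proof of \cref{le:P-2-bend-support-pio2alphabeta}: reduce to a finite case analysis over the two concrete P-node types \Pio{3l}{11} and \Pio{3l}{12}, read the coefficients $k_u^l,k_v^l,k_u^r,k_v^r$ off \cref{ta:p-coefficients}, determine the admissible ranges of the binary variables $\alpha_u^l,\alpha_v^l,\alpha_u^r,\alpha_v^r$ from the degrees of the two poles, and then simply evaluate the two weighted sums. The one non-mechanical preliminary step is to pin down the poles' degrees. Since $d=l$, one pole of $\nu$ — say $v$ — has indegree three in $G_\nu$, hence indegree two in $G_{\mu_l}$, indegree one in $G_{\mu_r}$, and, because $G$ is a $4$-graph, outdegree one in $G_\nu$; thus $\deg(v)=4$ in both subcases. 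A short degree count on the other pole $u$ then gives $\deg(u)=3$ when $\lambda=\beta=1$ (type \Pio{3l}{11}) and $\deg(u)=4$ when $\lambda=1,\beta=2$ (type \Pio{3l}{12}), and in each case the computed $k$-values match \cref{ta:p-coefficients}. With the degrees fixed, the observations stated just before \cref{le:spirality-P-node-2-children} pin down the $\alpha$'s: a degree-four pole $w$ has $\alpha_w^l=\alpha_w^r=1$, while a degree-three pole $w$ obeys $1\le\alpha_w^l+\alpha_w^r\le 2$ with each variable binary, so each of $\alpha_w^l,\alpha_w^r$ may independently be chosen $0$ or $1$ (completing with the partner set to $1$ whenever the lower bound would otherwise be violated).

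Then I would simply substitute. For \eqref{eq:pio3dalphabeta-l}: in the \Pio{3l}{11} case ($\gamma=0$, $u$ of degree three, $v$ of degree four) the left side equals $1\cdot\alpha_u^l+\tfrac12\cdot 1=\alpha_u^l+\tfrac12$, whose attainable set is $\{\tfrac12,\tfrac32\}$, i.e. exactly the semi-integers of $[\tfrac{\gamma}{2}+\tfrac12,\tfrac32-\tfrac{\gamma}{2}]=[\tfrac12,\tfrac32]$; in the \Pio{3l}{12} case ($\gamma=1$, both poles of degree four) it equals $\tfrac12\cdot 1+\tfrac12\cdot 1=1$, the only value of $[\tfrac{\gamma}{2}+\tfrac12,\tfrac32-\tfrac{\gamma}{2}]=[1,1]$, an integer as announced. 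For \eqref{eq:pio3dalphabeta-r}: the \Pio{3l}{11} case gives $1\cdot\alpha_u^r+1\cdot 1=\alpha_u^r+1\in\{1,2\}=[\tfrac{\gamma}{2}+1,2-\tfrac{\gamma}{2}]$, and the \Pio{3l}{12} case gives $\tfrac12\cdot 1+1\cdot 1=\tfrac32$, the only value of $[\tfrac{\gamma}{2}+1,2-\tfrac{\gamma}{2}]=[\tfrac32,\tfrac32]$, a semi-integer as announced. Both directions of each ``$\Leftrightarrow$'' then follow: the ``$\Leftarrow$'' (achievability) holds because every listed interval value is realised by one of the assignments above, using the freedom of the degree-three pole's $\alpha$-variables; the ``$\Rightarrow$'' holds because the displayed value set is literally the interval, so no other value can occur. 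The parity tags are checked by evaluating $\tfrac{\gamma}{2}+\tfrac12$ and $\tfrac{\gamma}{2}+1$ at $\gamma\in\{0,1\}$, and one notes that the statement is independent of the particular rectilinear planar representations of $G_{\mu_l}$ and $G_{\mu_r}$, since only the pole degrees — not the children's spiralities — enter the computation.

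I do not expect a genuine obstacle: the lemma is a bounded table lookup followed by a two-line arithmetic check per case. The only place that needs care is the initial identification of each concrete type with its pole degrees, and hence with the correct $k$-coefficients of \cref{ta:p-coefficients} — an error there would flip a $\tfrac12$ to a $1$ somewhere and break the match. I would also add a one-line remark that the symmetric type \Pio{3r}{\lambda\beta} is obtained by interchanging $l$ and $r$ throughout, and that the roles of $u$ and $v$ in the argument are immaterial since both relations are symmetric in the two poles.
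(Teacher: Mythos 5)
Your proposal is correct and follows essentially the same route as the paper's proof: a case split on $\beta\in\{1,2\}$ (equivalently $\gamma\in\{0,1\}$), reading the $k$-coefficients from \cref{ta:p-coefficients}, fixing $\alpha_v^l=\alpha_v^r=1$ (and $\alpha_u^l,\alpha_u^r$ when $\deg(u)=4$), and evaluating the two weighted sums to see they sweep exactly the claimed intervals. Your explicit degree count for the poles only makes precise what the paper leaves implicit, so there is nothing to add.
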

\begin{proof}
	$k_v^l=\frac{1}{2}$, $k_v^r=1$, and $\alpha_v^d=1$ for any $d \in \{l,r\}$. There are two subcases:
	\begin{itemize}
		\item $\beta=1$. In this case $\gamma=0$, $h_l \in \{\frac{1}{2},\frac{3}{2}\}$, $h_r \in \{1,2\}$, and $k_u^l=k_u^r=1$. We have $k_{u}^d \alpha_{u}^d + k_{v}^d\alpha_v^d= \alpha_{u}^d +k_v^d$.  
		Since $\alpha_u^d \in \{0,1\}$, we have $\alpha_{u}^d +k_v^d\in \{k_v^d,k_v^d +1\}$. Since $k_v^l=\frac{1}{2}$, we have $\{k_v^l,k_v^l +1\}=\{\frac{1}{2},\frac{3}{2}\}=\{\frac{\gamma}{2}+\frac{1}{2},\frac{3}{2}-\frac{\gamma}{2}\}$. Hence, Relation~\ref{eq:pio3dalphabeta-l} holds. Since $k_v^r=1$, we have $\{k_v^r,k_v^r +1\}=\{1,2\}=\{\frac{\gamma}{2}+1,2-\frac{\gamma}{2}\}$. Hence, Relation~\ref{eq:pio3dalphabeta-r} holds.
		
		\item $\beta=2$. In this case $\gamma=1$, $h_l \in \{1\}$, $h_r \in \{\frac{3}{2}\}$, and $k_u^l=k_u^r=\frac{1}{2}$. Since $\deg(u)=4$, $\alpha_u^d=1$. 
		We have $k_{u}^d \alpha_{u}^d + k_{v}^d\alpha_v^d= \frac{1}{2}+k_v^d$. Equivalently, $k_{u}^d \alpha_{u}^d + k_{v}^d\alpha_v^d \in \{\frac{1}{2}+k_v^d, \frac{1}{2}+k_v^d\}$.  
		Since $k_v^l=\frac{1}{2}$, we have
		$\{\frac{1}{2}+k_v^l,\frac{1}{2}+k_v^l\}=\{\frac{\gamma}{2}+\frac{1}{2},\frac{3}{2}-\frac{\gamma}{2}\}=\{1\}$. Hence, Relation~\ref{eq:pio3dalphabeta-l} holds. Since $k_v^r=1$, we have $\{\frac{1}{2}+k_v^r,\frac{1}{2}+k_v^r\}=\{\frac{\gamma}{2}+1,2-\frac{\gamma}{2}\}=\{\frac{3}{2}\}$. Hence, Relation~\ref{eq:pio3dalphabeta-r} holds.
	\end{itemize}
\end{proof}

\subsubsection{P-nodes with both children having an exposed edge}\label{ssse:with-exposed}

\begin{lemma}\label{le:P-2-exposed-edges}
	Let $\nu$ be a P-node with two children $\mu_l$ and $\mu_r$, each having an exposed edge. Let $G_{\mu_l}$ and $G_{\mu_r}$ be rectilinear planar with representability intervals $I_{\mu_l}=[m_l,M_l]$ and $I_{\mu_r}=[m_r,M_r]$, respectively. If $G_\nu$  is not rectilinear planar then: (i) the budget for $\nu$ is $b_\nu=\delta([m_l-M_r, M_l-m_r],\Delta_\nu)$; and (ii) the set of spirality values for an orthogonal representation of $G_\nu$ with $b_\nu$ bends is the interval $I'_\nu = [m-b_\nu, M+b_\nu]$.
\end{lemma}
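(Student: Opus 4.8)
The plan is to turn everything into a computation on representability intervals, using the exposed-edge normalization of \cref{le:P-exposed-edges-support}. Since both $\mu_l$ and $\mu_r$ have an exposed edge, given any orthogonal representation $H_\nu$ of $G_\nu$ with $b$ bends one may, by \cref{le:P-exposed-edges-support} applied to each child (for Q$^*$-node children every edge is exposed) and then by \cref{th:substitution}, assume that the $b_l$ (resp. $b_r$) bends falling inside $G_{\mu_l}$ (resp. $G_{\mu_r}$) all lie on an exposed edge of $\mu_l$ (resp. $\mu_r$), leaving the spirality of each child unchanged; since $G_\nu$ is the parallel composition of exactly these two components, $b=b_l+b_r$. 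Replacing each bend by a subdivision vertex yields a plane graph $G'_\nu$ of the same P-node type whose children $G'_{\mu_l}$ and $G'_{\mu_r}$ have, by \cref{le:Q*-representability} and \cref{le:S-representability}, representability intervals $[m_l-b_l,M_l+b_l]$ and $[m_r-b_r,M_r+b_r]$ (a subdivision vertex on an exposed edge extends the interval by one unit on each side). Conversely, any rectilinear planar representation of such a $G'_\nu$ gives an orthogonal representation of $G_\nu$ with $b_l+b_r$ bends having the corresponding spiralities.

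For Property~(i) I observe that the representability condition of \cref{ta:representability} applied to $G'_\nu$ reads $[(m_l-b_l)-(M_r+b_r),\,(M_l+b_l)-(m_r-b_r)]\cap\Delta_\nu\neq\emptyset$, i.e. $[m_l-M_r-b,\,M_l-m_r+b]\cap\Delta_\nu\neq\emptyset$, which depends only on $b=b_l+b_r$. Because $G_\nu$ is not rectilinear planar, $[m_l-M_r,M_l-m_r]\cap\Delta_\nu=\emptyset$, and the smallest $b$ for which the $b$-inflation $[m_l-M_r-b,\,M_l-m_r+b]$ meets $\Delta_\nu$ is exactly $\delta([m_l-M_r,M_l-m_r],\Delta_\nu)$. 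Hence any representation normalizes to a rectilinear $G'_\nu$ and so cannot have fewer than $b_\nu$ bends, while placing all $b_\nu$ subdivision vertices on a single child inflates the difference interval just enough to touch $\Delta_\nu$, producing a rectilinear $G'_\nu$ and thus an orthogonal representation of $G_\nu$ with $b_\nu$ bends.

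For Property~(ii), the set of spiralities realizable with $b_\nu$ bends is $\bigcup F\big([m_l-b_l,M_l+b_l],[m_r-b_r,M_r+b_r]\big)$ over splits $b_l+b_r=b_\nu$, where $F$ is the representability-interval formula of \cref{ta:representability} for the type of $\nu$; all these intervals are nonempty, since the difference interval for total budget $b_\nu$ meets $\Delta_\nu$ for every split. The inclusion in $[m-b_\nu,M+b_\nu]$ is monotonicity: $F$ is a $\max$ (resp. $\min$) of affine functions of the child-interval endpoints, so each endpoint of $F([m_l-b_l,M_l+b_l],[m_r-b_r,M_r+b_r])$ differs from the corresponding endpoint of $F([m_l,M_l],[m_r,M_r])=[m,M]$ by at most $\max\{b_l,b_r\}\le b_\nu$. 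For the reverse inclusion I split into the two ways $\Delta_\nu$ can be disjoint from $[m_l-M_r,M_l-m_r]$ (to its right, or to its left). In each case the formulas simplify (for instance, for \Pio{2}{11} with $\Delta_\nu$ to the right one gets $m=m_r$, $M=M_l$, $b_\nu=2-(M_l-m_r)$), and one checks that the extreme split $b_l=0$ realizes one endpoint of $[m-b_\nu,M+b_\nu]$ while $b_l=b_\nu$ realizes the other; since each endpoint of $F(\cdot)$ changes by at most one unit when $b_l$ changes by one unit, consecutive intervals overlap or are adjacent, so, being all contained in $[m-b_\nu,M+b_\nu]$ and reaching its two ends, their union is exactly $[m-b_\nu,M+b_\nu]$.

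The main obstacle is the reverse inclusion in Property~(ii): one must carry out the interval bookkeeping separately for the three families \Pio{2}{\lambda\beta}, \Pio{3d}{\lambda\beta}, \Pin{3dd'} and, within each, for the two relative positions of $\Delta_\nu$, using the combinatorial identities of \cref{le:P-2-bend-support-pio2alphabeta} and \cref{le:P-2-bend-support-pio3dalphabeta} (which pin down exactly which pole-turn contributions $k_u^d\alpha_u^d+k_v^d\alpha_v^d$ are attainable) to exhibit, for each target spirality, a split together with an angle assignment at the poles that realizes it; the half-integer shifts $\gamma/2$ and $\phi(d)/2$ in the interval formulas, and the integer/semi-integer parity constraints on spiralities, make this the most delicate part of the argument.
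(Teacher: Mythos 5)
Your proposal is correct and, for Property~(i), follows the paper's proof essentially verbatim: normalize bends onto exposed edges via \cref{le:P-exposed-edges-support}, replace them by subdivision vertices, observe that the inflated representability condition $[m_l-M_r-b,\,M_l-m_r+b]\cap\Delta_\nu\neq\emptyset$ depends only on $b=b_l+b_r$, and read off the minimum. For Property~(ii) you take a genuinely different, and in my view cleaner, route. The paper proves the inclusion $I'_\nu\subseteq[m-b_\nu,M+b_\nu]$ by computing $\sigma_\nu=\sigma_{\mu_l}-k_u^l\alpha_u^l-k_v^l\alpha_v^l$ directly and bounding the pole contributions via \cref{le:P-2-bend-support-pio2alphabeta,le:P-2-bend-support-pio3dalphabeta}; your monotonicity argument (the endpoints of the type-specific interval formula $F$ are $\max$/$\min$ of affine functions with unit coefficients in $b_l,b_r$, so they move by at most $\max\{b_l,b_r\}\le b_\nu$) packages the same bounds more abstractly and avoids re-deriving them per type. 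For the reverse inclusion the paper fixes a target $\sigma_\nu$ and exhibits an explicit split ($b_l=\sigma_\nu-M_l+h$ or $b_l=m_l-\sigma_\nu-2+\gamma/2$, etc.) together with a pole-angle assignment; you instead argue that the union $\bigcup_{b_l+b_r=b_\nu}F(\cdot)$ sweeps out the whole interval because the extreme splits attain the two endpoints and consecutive intervals (endpoints shifting by at most one grid unit as $b_l$ increments) cannot leave a gap — a sound argument, and the gap-freeness claim is easily verified as you state it. What your write-up leaves genuinely unverified is the endpoint claim itself: that $b_l=0$ and $b_l=b_\nu$ really do attain $m-b_\nu$ and $M+b_\nu$ in each of the six subcases (three P-node families times the two relative positions of $\Delta_\nu$); this is a short computation per case (e.g.\ for \Pio{2}{11} with $M_l-m_r<2$ one has $m_r-b_\nu=M_l-2\ge m_l-2$, so the split $b_l=0$ gives left endpoint $m_r-b_\nu+\frac{\gamma}{2}=m-b_\nu$), and it is the analogue of the paper's verifications that its chosen $b_l$ lies in $[0,b_\nu]$ — so it must still be done, but your scheme reduces the paper's per-$\sigma_\nu$ constructions to a constant number of endpoint checks per type.
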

\begin{proof}
	Since by hypothesis $G_\nu$ is not rectilinear planar, we have $[m_l-M_r, M_l-m_r] \cap \Delta_\nu  = \emptyset$. \cref{fi:I2O11_minbend_example-new} illustrates the statement for a $P$-node $\nu$ of type \Pio{2}{11}, by also showing how the interval $I'_\nu$ of Property (ii) is defined.
	
	\medskip\noindent\textsf{Proof of Property (i)}. We show that $b_\nu = \delta([m_l-M_r, M_l-m_r],\Delta_\nu)$. We first prove that $b_\nu$ bends are necessary. Suppose for a contradiction that $G_\nu$ admits an orthogonal representation $H'_\nu$ with $b'_\nu < b_\nu$ bends. Denote by $b'_l$ and $b'_r$ the number of bends in the restriction of $H'_\nu$ to $G_{\mu_l}$ and to $G_{\mu_r}$, respectively. By \cref{le:P-exposed-edges-support}, we can assume that all the bends $b'_l$ (resp. $b'_r$) are along an exposed edge of $G_{\mu_l}$ (resp. $G_{\mu_r}$).  Consider the underlying graph $G'_\nu$ of $H'_\nu$ obtained by replacing each bend of $H'_\nu$ with a subdivision vertex. $G'_\nu$ is rectilinear planar. Also, each subdivision vertex along an exposed edge of $G_{\mu_l}$ allows one more turn (either to the left or to the right) in a rectilinear representation of this component, i.e., it extends the spirality interval of $G_{\mu_l}$ by one unit, both for the minimum value and for the maximum value. The same considerations happen for $G_{\mu_r}$. Hence $G'_{\mu_l}$ and $G'_{\mu_r}$ are rectilinear planar with representability intervals $[m_l-b'_l,M_l+b'_l]$ and $[m_r-b'_r,M_r+b'_r]$, respectively, and the representability condition for $G'_\nu$ is $[m_l-b'_l-M_r-b'_r, M_l+b'_l-m_r+b'_r] \cap \Delta_\nu \neq \emptyset$. Since $b'_l+b'_r=b'_\nu$, we have $[m_l-M_r-b'_\nu, M_l-m_r+b'_\nu] \cap \Delta_\nu \neq \emptyset$, which implies $\delta([m_l-M_r,M_l-m_r],\Delta_\nu) \leq b'_\nu < b_\nu$, a contradiction. 
	
	\begin{figure}[tb]
		\centering
		\includegraphics[width=0.9\columnwidth]{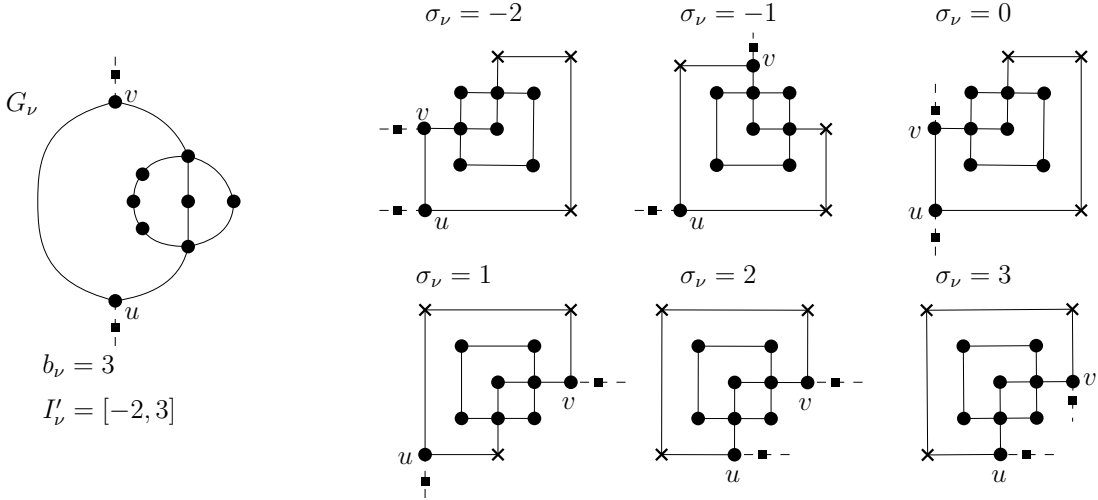}
		\caption{Illustration of \cref{le:P-2-exposed-edges} for a $P$-node $\nu$ of type \Pio{2}{11}, where $m_l=M_l=0$ and $m_r=M_r=1$. In this case $\Delta_\nu = [2,4]$, $m=\max\{m_l-2,m_r\}=-1$, and $M=\min\{M_l,M_r+2\}=0$. Since $M_l-m_r<2$, $G_\nu$ is not rectilinear planar. By Property~(i), $b_\nu=\delta([m_l-M_r, M_l-m_r],\Delta_\nu)=3$; by Property~(ii), $I'_\nu=[m-b_\nu,M+b_\nu]=[-2,3]$.
		}\label{fi:I2O11_minbend_example-new}
	\end{figure}
	
	We now prove that $b_\nu$ bends suffice. Let $b_l$ and $b_r$ be two arbitrarily chosen non-negative integers such that $b_l+b_r=b_\nu$. Insert $b_l$ (resp. $b_r$) subdivision vertices on any exposed edge of $G_{\mu_l}$ (resp. $G_{\mu_r}$). Call $G'_{\mu_l}$ and  $G'_{\mu_r}$ the resulting components. Since by hypothesis $G_{\mu_l}$ and $G_{\mu_r}$ are rectilinear planar, with the same argument as above, $G'_{\mu_l}$ and  $G'_{\mu_r}$ are both rectilinear planar with representability intervals $[m_l-b_l,M_l+b_l]$ and $[m_r-b_r,M_r+b_r]$, respectively. Consider the plane graph $G'_{\nu}$ obtained by the union of $G'_{\mu_l}$ and $G'_{\mu_r}$. Since by hypothesis $b_\nu = \delta([m_l-M_r, M_l-m_r],\Delta_\nu)$, we have that  $[m_l-M_r-b_\nu, M_l-m_r+b_\nu] \cap \Delta_\nu \neq \emptyset$. Since  $[m_l-b_l-M_r-b_r, M_l+b_l-m_r+b_r] = [m_l-M_r-b_\nu, M_l-m_r+b_\nu]$ we have $[m_l-b_l-M_r-b_r, M_l+b_l-m_r+b_r] \cap \Delta_\nu \neq \emptyset$. It follows that $G'_{\nu}$ admits a rectilinear planar representation $H'_{\nu}$. By replacing the $b_\nu$ subdivision vertices of $H'_{\nu}$ with bends, we get an orthogonal representation of $G_\nu$ with $b_\nu$ bends. 
	
	\medskip\noindent\textsf{Proof of Property (ii)}.
	Suppose without loss of generality that $\outdeg(u) \leq \outdeg(v)$. Denote by $\sigma_{\mu_l}$ and $\sigma_{\mu_r}$ the spiralities of orthogonal representations of $G_{\mu_l}$ and~$G_{\mu_r}$, respectively. 
	We distinguish three cases.

	\smallskip\noindent{\bf Case \Pio{2}{\lambda\beta}:}  In this case $\Delta_\nu = [2,4-\gamma]$, $m=\max\{m_l-2,m_r\}+\frac{\gamma}{2}$, and $M=\min\{M_l, M_r+2\}-\frac{\gamma}{2}$. 
	We show that the set $I'_\nu$ is an interval of feasible spiralities for the orthogonal representations of $G_\nu$ with $b_\nu$ bends.
	Suppose first that $G_\nu$ has an orthogonal representation $H_\nu$ with $b_\nu$ bends, and let $\sigma_\nu$ be the spirality of $H_\nu$. We prove that $\sigma_\nu\in [m-b_\nu,M+b_\nu]$.
	Let $b_l$ and $b_r$ be the number of bends in the restriction of $H_\nu$ to $G_{\mu_l}$ and to~$G_{\mu_r}$, respectively, where $b_l+b_r=b_\nu$. By \cref{le:P-exposed-edges-support}, we can assume that all the $b_l$ bends are along an exposed edge of $G_{\mu_l}$ and all the $b_r$ bends are along an exposed edge of $G_{\mu_r}$.
	Since $\sigma_{\mu_l}\in [m_l-b_l, M_l+b_l]$ and $b_l\in [0,b_\nu]$ we have $\sigma_{\mu_l}\in [m_l-b_\nu, M_l+b_\nu]$. Also, by \cref{le:spirality-P-node-2-children} we have $\sigma_\nu = \sigma_{\mu_l} -k_{u}^l \alpha_{u}^l - k_{v}^l \alpha_{v}^l$. By \cref{le:P-2-bend-support-pio2alphabeta}, we have $-k_{u}^l \alpha_{u}^l - k_{v}^l \alpha_{v}^l\in[\frac{\gamma}{2}-2,-\frac{\gamma}{2}]$. 
	Hence, $\sigma_\nu\in [m_l-b_\nu+\frac{\gamma}{2}-2, M_l+b_\nu-\frac{\gamma}{2}]$. With a symmetric argument on $\sigma_{\mu_r}$ we have that $\sigma_\nu \in [m_r-b_\nu+\frac{\gamma}{2}, M_r+b_\nu+2-\frac{\gamma}{2}]$. It follows that $\sigma_\nu \in [m_l-b_\nu+\frac{\gamma}{2}-2, M_l+b_\nu-\frac{\gamma}{2}]\cap[m_r-b_\nu+\frac{\gamma}{2}, M_r+b_\nu+2-\frac{\gamma}{2}]=[\max\{m_l-2,m_r\}+\frac{\gamma}{2}-b_\nu,\min\{M_l, M_r+2\}-\frac{\gamma}{2}+b_\nu] =[m-b_\nu, M+b_\nu]$.
	
	To complete the proof it remains to show that for every $\sigma_\nu\in [m-b_\nu,M+b_\nu]$,  
	there exists an orthogonal representation $H_\nu$ of $G_\nu$ with $b_\nu$ bends and with spirality $\sigma_\nu$. This is equivalent to showing that there exists a plane graph $G'_\nu$ obtained by adding $b_\nu$ subdivision vertices along some edges of $G_\nu$ such that  $G'_\nu$ has a rectilinear orthogonal representation with spirality $\sigma_\nu$. To construct $G'_\nu$ we insert a suitable number $b_l\in[0,b_\nu]$ of subdivision vertices on an exposed edge of 
	$G_{\mu_l}$ and $b_r=b_\nu-b_l$ subdivision vertices on an exposed edge of $G_{\mu_r}$. Let $G'_{\mu_l}$ and $G'_{\mu_r}$ be the resulting graphs. Since by hypothesis $G_{\mu_l}$ and $G_{\mu_r}$ are rectilinear planar, also $G'_{\mu_l}$ and $G'_{\mu_r}$ are rectilinear planar. With the same reasoning as in the proof of Property (i), the representability intervals of $G'_{\mu_l}$ and $G'_{\mu_r}$ are $[m_l-b_l,M_l+b_l]$ and $[m_r-b_r,M_r+b_r]$. We now describe how to compute $b_l$ and, consequently,~$b_r$.       
	
	Since by hypothesis $G_\nu$ is not rectilinear planar we have $[m_l-M_r, M_l-m_r] \cap \Delta_\nu = \emptyset$, i.e., $[m_l-M_r, M_l-m_r] \cap [2,4-\gamma] = \emptyset$. We consider two subcases:
	
	\begin{itemize}
		\item $M_l-m_r<2$. In this case, by Property~(i) we have $b_\nu = \delta([m_l-M_r, M_l-m_r],[2,4-\gamma]) = 2-M_l+m_r$. For example, in \cref{fi:I2O11_minbend_example-new} we have $\gamma=0$ and $b_\nu=3$. We set $b_l = \sigma_\nu - M_l + h$, where $h$ is a number (either integer or semi-integer) in the interval $[\frac{\gamma}{2},2-\frac{\gamma}{2}]$ such that $b_l \in [0,b_\nu]$. We first prove that such a value $h$ always exists. 
		Suppose first that $\sigma_\nu \in [m-b_\nu,m-b_\nu+1]$. In this case we choose $h = 2-\frac{\gamma}{2}$. This implies that $b_l = \sigma_\nu - M_l + 2 - \frac{\gamma}{2} \in [m-b_\nu-M_l+2-\frac{\gamma}{2}, m-b_\nu+1-M_l+2-\frac{\gamma}{2}]$.
		For example, in \cref{fi:I2O11_minbend_example-new} for every $\sigma_\nu\in[m-b_\nu,m-b_\nu+1]=[-2,-1]$ we set $b_l=\sigma_\nu - M_l + 2 - \frac{\gamma}{2}=\sigma_\nu+2$.
		Since $M_l-m_r<2$ we have $m_l-2\le M_l-2<m_r$ and hence $m_l-2<m_r$. Since $m = \max\{m_l-2,m_r\}+\frac{\gamma}{2}$, we have $m=m_r+\frac{\gamma}{2}$. Also, since we have $b_\nu = 2-M_l+m_r$, it follows $m-b_\nu-M_l+2-\frac{\gamma}{2}=m_r+\frac{\gamma}{2}-b_\nu-M_l+2-\frac{\gamma}{2}=0$. Hence, $b_l\in [0,1]$. Since by hypothesis $G_\nu$ is not rectilinear planar, $b_\nu \geq 1$, and therefore there exists a value of $h \in [\frac{\gamma}{2},2-\frac{\gamma}{2}]$ such that $b_l\in [0,b_\nu]$.

		Suppose now that $\sigma_\nu \in [m-b_\nu+2,M+b_\nu]$. In this case we choose $h = \frac{\gamma}{2}$. This implies that $b_l = \sigma_\nu - M_l + \frac{\gamma}{2}\in [m-b_\nu+2-M_l+\frac{\gamma}{2}, M+b_\nu-M_l+\frac{\gamma}{2}]$. For example, in \cref{fi:I2O11_minbend_example-new} $\sigma_\nu\in[m-b_\nu+2,M+b_\nu]=[0,3]$ and we set $b_l=\sigma_\nu - M_l - \frac{\gamma}{2}=\sigma_\nu$. We have $m-b_\nu+2-M_l+\frac{\gamma}{2}=m_r+\frac{\gamma}{2}-b_\nu+2-M_l+\frac{\gamma}{2}=\gamma$. It follows that $b_l\in [\gamma, M+b_\nu-M_l+\frac{\gamma}{2}]$.  Since $M_l<m_r+2\le M_r+2$ we have $M_l<M_r+2$. Also, since $M = \min\{M_l,M_r+2\}-\frac{\gamma}{2}$ it follows that $M=M_l-\frac{\gamma}{2}$. Hence, $M+b_\nu-M_l+\frac{\gamma}{2}=M_l-\frac{\gamma}{2}+b_\nu-M_l+\frac{\gamma}{2}=b_\nu$. Hence, $b_l\in [\gamma, b_\nu]$. Since $\gamma\ge 0$, also in this case there exists a value of $h \in [\frac{\gamma}{2},2-\frac{\gamma}{2}]$ such that $b_l\in [0,b_\nu]$.
		We represent $G_{\mu_l}'$ with spirality $\sigma_{\mu_l}'=M_l+b_l=M_l+\sigma_\nu - M_l + h=\sigma_\nu + c$ and $G_{\mu_r}'$ with spirality $\sigma_{\mu_r}'=m_r-b_r=m_r-(b_\nu-b_l)=m_r-b_\nu+b_l=m_r-b_\nu +\sigma_\nu - M_l + h=m_r-M_l-b_\nu+\sigma_\nu+h=b_\nu-2-b_\nu+\sigma_\nu+h=\sigma_\nu+h-2$. 
		We have $\sigma_{\mu_l}'-\sigma_{\mu_r}'=\sigma_\nu + h -(\sigma_\nu+h-2)=2$, and by \cref{le:P-2-children-support-type1}, $G_\nu'$ is rectilinear planar.
		It remains to show that $G_\nu'$ admits a rectilinear planar representation with spirality $\sigma_\nu'=\sigma_\nu$. Given the choice of $\sigma_{\mu_l}'$ and $\sigma_{\mu_r}'$, by \cref{le:spirality-P-node-2-children} every rectilinear planar representation of $G_\nu'$ has spirality $\sigma_\nu'=\sigma_{\mu_l}' - k_u^l \alpha_{u}^l -  k_u^r\alpha_{v}^l=\sigma_\nu + h - k_u^l \alpha_{u}^l -  k_u^r\alpha_{v}^l$. Since $h \in [\frac{\gamma}{2},2-\frac{\gamma}{2}]$, by \cref{le:P-2-bend-support-pio2alphabeta} there exists a value $k_u^l \alpha_{u}^l +  k_u^r\alpha_{v}^l$ such that $h- k_u^l \alpha_{u}^l -  k_u^r\alpha_{v}^l=0$, and thus $\sigma_\nu'=\sigma_\nu$. For example, in \cref{fi:I2O11_minbend_example-new}, for every $\sigma_\nu\in I_\nu = [-2,3]$, there is a rectilinear representation of~$G_\nu$ with $\sigma_{\mu_l}-\sigma_{\mu_r}=2$, $b_l$ and $b_l$ chosen as described above,~and~spirality~$\sigma_\nu$.

		\item $m_l-M_r>4-\gamma$. In this case, by Property~(i) we have $b_\nu = \delta([m_l-M_r, M_l-m_r],[2,4-\gamma]) = m_l-M_r-4+\gamma$.  We set $b_l = m_l-\sigma_\nu  - 2+\frac{\gamma}{2}$. We first prove that $b_l \in [0,b_\nu]$. We have $b_l = m_l-\sigma_\nu-2 +\frac{\gamma}{2} \in [m_l-M-b_\nu-2+\frac{\gamma}{2}, m_l-m+b_\nu-2+\frac{\gamma}{2}]$. Since $M_l\ge m_l>M_r+4-\gamma\ge M_r+2$, we have $M=M_r+2-\frac{\gamma}{2}$. Hence, $m_l-M-b_\nu-2+\frac{\gamma}{2}=m_l-(M_r+2-\frac{\gamma}{2})-b_\nu-2+\frac{\gamma}{2}=m_l-M_r-4+\gamma-b_\nu=0$. Also, since $m_l-2 \geq m_l-4+\gamma > M_r\ge m_r$, we have $m=m_l-2+\frac{\gamma}{2}$. Hence, $m_l-m+b_\nu-2+\frac{\gamma}{2}=m_l-(m_l-2+\frac{\gamma}{2})+b_\nu-2+\frac{\gamma}{2}=b_\nu$. It follows that $b_l\in [0,b_\nu]$.
		We represent $G_{\mu_l}'$ with spirality $\sigma_{\mu_l}'=m_l-b_l=m_l-(m_l-\sigma_\nu-2+\frac{\gamma}{2})=\sigma_\nu+2-\frac{\gamma}{2}$ and $G_{\mu_r}'$ with spirality $\sigma_{\mu_r}'=M_r+b_r=M_r+b_\nu-b_l=M_r+m_l-M_r-4+\gamma-m_l+\sigma_\nu+2-\frac{\gamma}{2}=\sigma_\nu+\frac{\gamma}{2}-2$. We have $\sigma_{\mu_l}'-\sigma_{\mu_r}'=\sigma_\nu+2-\frac{\gamma}{2}-(\sigma_\nu+\frac{\gamma}{2}-2)=4-\gamma$ and, by \cref{le:P-2-children-support-type1}, $G_\nu'$ is rectilinear planar.
		It remains to show that $G_\nu'$ admits a rectilinear representation with spirality $\sigma_\nu'=\sigma_\nu$. Given the choice of~$\sigma_{\mu_l}'$ and $\sigma_{\mu_r}'$, by \cref{le:spirality-P-node-2-children} every rectilinear representation of~$G_\nu'$ has spirality $\sigma_\nu'=\sigma_{\mu_l}' - k_u^l \alpha_{u}^l -  k_u^r\alpha_{v}^l=\sigma_\nu+2-\frac{\gamma}{2}- k_u^l \alpha_{u}^l -  k_u^r\alpha_{v}^l$. Since $2-\frac{\gamma}{2}\in [\frac{\gamma}{2},2-\frac{\gamma}{2}]$, by \cref{le:P-2-bend-support-pio2alphabeta} we can set $k_u^l \alpha_{u}^l +  k_u^r\alpha_{v}^l=2-\frac{\gamma}{2}$, and thus $\sigma_\nu'=\sigma_\nu$.
	\end{itemize}
	
	\smallskip\noindent{\bf Case \Pio{3d}{\lambda\beta}:} Recall that in this case $\lambda=1$ and $\beta\in\{1,2\}$. Assume, without loss of generality, that $\indeg(v)=3$ and $\indeg(u)=2$. Also, assume $d=l$ (the case $d=r$ is treated symmetrically). In this case $\Delta_\nu = [\frac{5}{2},\frac{7}{2}-\gamma]$, $m=\max\{m_l-\frac{3}{2},m_r+1\}+\frac{\gamma}{2}$, and $M=\min\{M_l-\frac{1}{2}, M_r+2\}-\frac{\gamma}{2}$. 
	We show that set $I'_\nu$ is an interval of feasible spiralities for the orthogonal representations of $G_\nu$ with $b_\nu$ bends. Suppose first that $G_\nu$ has an orthogonal representation $H_\nu$ with $b_\nu$ bends, and let $\sigma_\nu$ be the spirality of $H_\nu$. We prove that $\sigma_\nu\in [m-b_\nu,M+b_\nu]$. Let $b_l$ and $b_r$ be the number of bends in the restriction of $H_\nu$ to $G_{\mu_l}$ and to $G_{\mu_r}$, respectively, where $b_l+b_r=b_\nu$. 
	By \cref{le:spirality-P-node-2-children}, we have  $\sigma_\nu = \sigma_{\mu_l} - k_{u}^l \alpha_{u}^l - k_{v}^l \alpha_{v}^l$. 
	%
	By Relation~\ref{eq:pio3dalphabeta-l} of \cref{le:P-2-bend-support-pio3dalphabeta}, we have $-k_{u}^l \alpha_{u}^l - k_{v}^l \alpha_{v}^l\in[-\frac{3}{2}+\frac{\gamma}{2},-\frac{\gamma}{2}-\frac{1}{2}]$. 
	Hence, by the same reasoning as in the case \Pio{2}{\lambda\beta}, $\sigma_\nu\in [m_l-b_\nu-\frac{3}{2}+\frac{\gamma}{2}, M_l+b_\nu-\frac{\gamma}{2}-\frac{1}{2}]$. Also, by Relation~\ref{eq:pio3dalphabeta-r} of \cref{le:P-2-bend-support-pio3dalphabeta}, we have $k_u^r\alpha_u^r+k_v^r\alpha_v^r \in [\frac{\gamma}{2}+1, 2-\frac{\gamma}{2}]$. Hence, $\sigma_\nu \in [m_r-b_\nu+\frac{\gamma}{2}+1, M_r+b_\nu+2-\frac{\gamma}{2}]$. 
	It follows that $\sigma_\nu \in[m_l-b_\nu-\frac{3}{2}+\frac{\gamma}{2}, M_l+b_\nu-\frac{\gamma}{2}-\frac{1}{2}]\cap [m_r-b_\nu+1+\frac{\gamma}{2}, M_r+b_\nu+2-\frac{\gamma}{2}]
	=[\max\{m_l-\frac{3}{2},m_r+1\}+\frac{\gamma}{2}-b_\nu,\min\{M_l-\frac{1}{2}, M_r+2\}-\frac{\gamma}{2}+b_\nu] =[m-b_\nu, M+b_\nu]$.
	
	It remains to show that for every $\sigma_\nu\in [m-b_\nu,M+b_\nu]$,  
	there exists an orthogonal representation $H_\nu$ of $G_\nu$ with $b_\nu$ bends and with spirality $\sigma_\nu$. With the same notation as in the previous case, we show how to compute the values $b_l$ and $b_r$. Also, we show how to compute the spirality for the rectilinear planar representations of $G'_{\mu_l}$ and~$G'_{\mu_r}$, within their representability intervals $[m_l-b_l,M_l+b_l]$ and $[m_r-b_r,M_r+b_r]$.      
	Since by hypothesis $G_\nu$ is not rectilinear planar we have $[m_l-M_r, M_l-m_r] \cap \Delta_\nu = \emptyset$, i.e., $[m_l-M_r, M_l-m_r] \cap [\frac{5}{2},\frac{7}{2}-\gamma] = \emptyset$. We analyze the two possible subcases:

	\begin{itemize}
		\item 	$M_l-m_r<\frac{5}{2}$. By Property~(i), $b_\nu = \delta([m_l-M_r, M_l-m_r],[\frac{5}{2},\frac{7}{2}-\gamma]) = \frac{5}{2}-M_l+m_r$.  We set $b_l = \sigma_\nu - M_l + h_l$, where $h_l$ is a number (either integer or semi-integer) in the interval $[\frac{\gamma}{2}+\frac{1}{2},\frac{3}{2}-\frac{\gamma}{2}]$ such that $b_l \in [0,b_\nu]$. We first prove that such a value $h_l$ always exists for any given $\sigma_\nu \in [m-b_\nu, M+b_\nu]$. If $\sigma_\nu =m-b_\nu$, we choose $h_l = \frac{3}{2}-\frac{\gamma}{2}$. This implies that $b_l =m-b_\nu-M_l+\frac{3}{2}-\frac{\gamma}{2}$. Since $M_l-m_r<\frac{5}{2}$, we have $m_l-\frac{5}{2}\le M_l-\frac{5}{2}<m_r$, and therefore $m_l-\frac{3}{2}<m_r+1$. Since $m = \max\{m_l-\frac{3}{2},m_r+1\}+\frac{\gamma}{2}$, we have $m=m_r+1+\frac{\gamma}{2}$. Also, since $b_\nu = \frac{5}{2}-M_l+m_r$, we have $b_l = m-b_\nu-M_l+\frac{3}{2}-\frac{\gamma}{2}=m_r+1+\frac{\gamma}{2}-b_\nu-M_l+\frac{3}{2}-\frac{\gamma}{2}=0$.  If $\sigma_\nu \in [m-b_\nu+1,M+b_\nu]$, we choose $h_l = \frac{\gamma}{2}+\frac{1}{2}$. This implies that $b_l = \sigma_\nu - M_l + \frac{\gamma}{2}+\frac{1}{2}\in [m-b_\nu+1-M_l+\frac{\gamma}{2}+\frac{1}{2}, M+b_\nu-M_l+\frac{\gamma}{2}+\frac{1}{2}]$. We have $m-b_\nu+1-M_l+\frac{\gamma}{2}+\frac{1}{2}=m_r+1+\frac{\gamma}{2}-b_\nu+1-M_l+\frac{\gamma}{2}+\frac{1}{2}=(\frac{5}{2}+m_r-M_l)-b_\nu + \gamma=\gamma$. 
		Since $M_l<m_r+\frac{5}{2}\le M_r+\frac{5}{2}$ we have $M_l-\frac{1}{2}<M_r+2$. Also, since $M = \min\{M_l-\frac{1}{2},M_r+2\}-\frac{\gamma}{2}$ it follows that $M=M_l-\frac{1}{2}-\frac{\gamma}{2}$. Hence, $M+b_\nu-M_l+\frac{\gamma}{2}+\frac{1}{2}=M_l-\frac{1}{2}-\frac{\gamma}{2}+b_\nu-M_l+\frac{\gamma}{2}+\frac{1}{2}=b_\nu$. It follows that $b_l\in [\gamma, b_\nu]$. Since $\gamma\ge 0$, also in this case there exists a value of  $h_l \in [\frac{\gamma}{2}+\frac{1}{2},\frac{3}{2}-\frac{\gamma}{2}]$ such that $b_l\in [0,b_\nu]$.
		We represent $G_{\mu_l}'$ with spirality $\sigma_{\mu_l}'=M_l+b_l=M_l+\sigma_\nu - M_l + h_l=\sigma_\nu + h_l$ and $G_{\mu_r}'$ with spirality $\sigma_{\mu_r}'=m_r-b_r=m_r-(b_\nu-b_l)=m_r-b_\nu+b_l=m_r-b_\nu +\sigma_\nu - M_l + h_l=b_\nu-\frac{5}{2}-b_\nu+\sigma_\nu+h_l=\sigma_\nu+h_l-\frac{5}{2}$. We have $\sigma_{\mu_l}'-\sigma_{\mu_r}'=\sigma_\nu + h_l-(\sigma_\nu+h_l-\frac{5}{2})=\frac{5}{2}$ and, by \cref{le:P-2-children-support-type2}, $G_\nu'$ is rectilinear planar.
		It remains to show that $G_\nu'$ admits a rectilinear planar representation with spirality $\sigma_\nu'=\sigma_\nu$. Given the choice of $\sigma_{\mu_l}'$ and $\sigma_{\mu_r}'$, by \cref{le:spirality-P-node-2-children} every rectilinear planar representation of $G_\nu'$ has spirality $\sigma_\nu'=\sigma_{\mu_l}' - k_u^l \alpha_{u}^l -  k_u^l\alpha_{v}^l=\sigma_\nu + h_l - k_u^l \alpha_{u}^l -  k_u^r\alpha_{v}^l$. Since $h_l \in [\frac{\gamma}{2}+\frac{1}{2},\frac{3}{2}-\frac{\gamma}{2}]$, by Relation~\ref{eq:pio3dalphabeta-l} of \cref{le:P-2-bend-support-pio3dalphabeta} there exists a value $ k_u^l \alpha_{u}^l +  k_u^r\alpha_{v}^l$ such that $h_l- k_u^l \alpha_{u}^l -  k_u^r\alpha_{v}^l=0$, and thus $\sigma_\nu'=\sigma_\nu$. 
		
		\item $m_l-M_r>\frac{7}{2}-\gamma$. In this case, by Property~(i) we have $b_\nu = m_l-M_r-\frac{7}{2}+\gamma$.  We set $b_l = m_l-\sigma_\nu  - \frac{3}{2}+\frac{\gamma}{2}$. As before, we first prove that $b_l \in [0,b_\nu]$. We have $b_l \in [m_l-M-b_\nu-\frac{3}{2}+\frac{\gamma}{2}, m_l-m+b_\nu-\frac{3}{2}+\frac{\gamma}{2}]$. 
		Since $M_l\ge m_l>M_r+\frac{7}{2}-\gamma\ge M_r+\frac{5}{2}$, we have $M_l-\frac{1}{2} \ge M_r+2$. It follows that $M=M_r+2-\frac{\gamma}{2}$. Hence, $m_l-M-b_\nu-\frac{3}{2}+\frac{\gamma}{2}=m_l-(M_r+2-\frac{\gamma}{2})-b_\nu-\frac{3}{2}+\frac{\gamma}{2}=m_l-M_r-\frac{7}{2}+\gamma-b_\nu=0$. 
		Also, since $m_l-\frac{5}{2} \geq m_l-\frac{7}{2}+\gamma > M_r$ we have $m_l-\frac{3}{2} \ge M_r+1\ge m_r+1$. It follows that $m=m_l-\frac{3}{2}+\frac{\gamma}{2}$. Hence, $m_l-m+b_\nu-\frac{3}{2}+\frac{\gamma}{2}=m_l-(m_l-\frac{3}{2}+\frac{\gamma}{2})+b_\nu-\frac{3}{2}+\frac{\gamma}{2}=b_\nu$. It follows that $b_l\in [0,b_\nu]$. We represent $G_{\mu_l}'$ with spirality $\sigma_{\mu_l}'=m_l-b_l=m_l-(m_l-\sigma_\nu-\frac{3}{2}+\frac{\gamma}{2})=\sigma_\nu+\frac{3}{2}-\frac{\gamma}{2}$ and $G_{\mu_r}'$ with spirality $\sigma_{\mu_r}'=M_r+b_r=M_r+b_\nu-b_l=M_r+m_l-M_r-\frac{7}{2}+\gamma-m_l+\sigma_\nu+\frac{3}{2}-\frac{\gamma}{2}=\sigma_\nu+\frac{\gamma}{2}-2$. 
		We have $\sigma_{\mu_l}'-\sigma_{\mu_r}'=\sigma_\nu+\frac{3}{2}-\frac{\gamma}{2}-(\sigma_\nu+\frac{\gamma}{2}-2)=\frac{7}{2}-\gamma$ and, by \cref{le:P-2-children-support-type2}, $G_\nu'$ is rectilinear planar.
		Also, by \cref{le:spirality-P-node-2-children} every rectilinear planar representation of $G_\nu'$ has spirality $\sigma_\nu'=\sigma_{\mu_l}' - k_u^l \alpha_{u}^l -  k_u^l\alpha_{v}^l=\sigma_\nu+\frac{3}{2}-\frac{\gamma}{2}- k_u^l \alpha_{u}^l -  k_u^l\alpha_{v}^l$. By Relation~\ref{eq:pio3dalphabeta-l} of \cref{le:P-2-bend-support-pio3dalphabeta}, we can set $ k_u^l \alpha_{u}^l +  k_u^r\alpha_{v}^l=\frac{3}{2}-\frac{\gamma}{2}$, and thus $\sigma_\nu'=\sigma_\nu$.
	\end{itemize}
	
	\smallskip\noindent{\bf Case \Pin{3dd'}:} Assume that $d=l$ (the case $d=r$ is symmetric). In this case $k_v^l=\frac{1}{2}$, $k_v^r=1$, $\Delta_\nu = [3,3]$, $m=\max\{m_l-1,m_r+2\}-\frac{\phi(d')}{2}$, and $M=\min\{M_l-1, M_r+2\}-\frac{\phi(d')}{2}$. 
	Since both $u$ and $v$ have degree four, in any rectilinear representation of $G_{\mu_l}$ and $G_{\mu_l}$ we have $\alpha_{u}^l=\alpha_{u}^r=\alpha_{v}^l=\alpha_{v}^r=1$. If $d'=l$ then $k_u^l=\frac{1}{2}$ and $k_u^r=1$; hence $k_{u}^l \alpha_{u}^l + k_{v}^l \alpha_{v}^l=1$ and  $k_{u}^r \alpha_{u}^r + k_{v}^r \alpha_{v}^r=2$. 
	If $d'=r$ then $k_u^l=1$ and $k_u^r=\frac{1}{2}$; hence $k_{u}^l \alpha_{u}^l + k_{v}^l \alpha_{v}^l=k_{u}^r \alpha_{u}^r + k_{v}^r \alpha_{v}^r=\frac{3}{2}$.
	We show that set $I'_\nu$ is an interval of feasible spiralities for the orthogonal representations of~$G_\nu$ with~$b_\nu$ bends. 
	Suppose first that $G_\nu$ has an orthogonal representation $H_\nu$ with $b_\nu$ bends, and let $\sigma_\nu$ be the spirality of $H_\nu$. We prove that $\sigma_\nu\in [m-b_\nu,M+b_\nu]$. Let $b_l$ and $b_r$ be the number of bends in the restriction of $H_\nu$ to $G_{\mu_l}$ and to $G_{\mu_r}$, respectively, where $b_l+b_r=b_\nu$.
	By \cref{le:spirality-P-node-2-children}, we have $\sigma_\nu = \sigma_{\mu_l} -k_{u}^l \alpha_{u}^l - k_{v}^l \alpha_{v}^l=\sigma_{\mu_r} +k_{u}^r \alpha_{u}^r + k_{v}^r \alpha_{v}^r$. 
	Suppose first $d'=l$. In this case $\sigma_\nu = \sigma_{\mu_l} -1=\sigma_{\mu_r} +2$. Hence, by the same reasoning as in the case \Pio{2}{\lambda\beta}, $\sigma_\nu \in [m_l-b_\nu-1, M_l+b_\nu-1]$ and $\sigma_\nu\in [m_r-b_\nu+2,M_r+b_\nu+2]$. Therefore,  $\sigma_\nu \in [m_l-b_\nu-1, M_l+b_\nu-1]  \cap [m_r-b_\nu+2,M_r+b_\nu+2]=[\max\{m_l-1,m_r+2\}-b_\nu, \min\{M_l-1,M_r+2\}+b_\nu]$. Since $d'=l$, we have $\frac{\phi(d')}{2}=0$ and $\sigma_\nu\in [\max\{m_l-1,m_r+2\}-b_\nu, \min\{M_l-1,M_r+2\}+b_\nu]=[m-b_\nu, M+b_\nu]$.
	Suppose now $d'=r$. In this case $\sigma_\nu = \sigma_{\mu_l} -\frac{3}{2}=\sigma_{\mu_r} +\frac{3}{2}$. Hence, $\sigma_\nu \in [m_l-b_\nu-\frac{3}{2}, M_l+b_\nu-\frac{3}{2}] \cap [m_r-b_\nu+\frac{3}{2},M_r+b_\nu+\frac{3}{2}]=[\max\{m_l-\frac{3}{2},m_r+\frac{3}{2}\}-b_\nu, \min\{M_l-\frac{3}{2},M_r+\frac{3}{2}\}+b_\nu]=[\max\{m_l-1,m_r+2\}-\frac{1}{2}-b_\nu, \min\{M_l-1,M_r+2\}-\frac{1}{2}+b_\nu]$. Since $d'=r$, we have $\frac{\phi(d')}{2}=\frac{1}{2}$ and $\sigma_\nu\in [\max\{m_l-1,m_r+2\}-\frac{1}{2}-b_\nu, \min\{M_l-1,M_r+2\}-\frac{1}{2}+b_\nu]=[m-b_\nu, M+b_\nu]$.
	
	It remains to show that for every $\sigma_\nu\in [m-b_\nu,M+b_\nu]$,  
	there exists an orthogonal representation $H_\nu$ of $G_\nu$ with $b_\nu$ bends and with spirality $\sigma_\nu$. With the same notation as in the previous cases, we show how to compute the values $b_l$ and $b_r$. Also, we show how to compute the spirality for the rectilinear planar representations of $G'_{\mu_l}$ and~$G'_{\mu_r}$, within their representability intervals $[m_l-b_l,M_l+b_l]$ and $[m_r-b_r,M_r+b_r]$.    
	Since by hypothesis $G_\nu$ is not rectilinear planar we have $[m_l-M_r, M_l-m_r] \cap \Delta_\nu = \emptyset$, i.e., $[m_l-M_r, M_l-m_r] \cap [3,3] = \emptyset$. We analyze the two possible subcases:
	\begin{itemize}
		\item 	$M_l-m_r<3$. By Property~(i), $b_\nu = \delta([m_l-M_r, M_l-m_r],[3,3]) = 3-M_l+m_r$.
		Suppose first that $d'=l$. By setting $b_l = \sigma_\nu - M_l + 1$, we have $b_l \in [0,b_\nu]$. Namely, $b_l\in  [m-b_\nu - M_l + 1, M+b_\nu - M_l + 1]$. Since $M_l<m_r+3\le M_r+3$ we have $M_l-1<M_r+2$. Also, since $M=\min\{M_l-1, M_r+2\}$, we have $M=M_l-1$. Since $m_l-3\le M_l-3<m_r$ we have $m_l-1<m_r+2$. Also, since $m=\max\{m_l-1,m_r+2\}$, we have $m=m_r+2$. Hence, $b_l\in  [m_r+2-b_\nu - M_l + 1, M_l-1+b_\nu - M_l + 1]=[0,b_\nu]$. 
		We represent $G_{\mu_l}'$ with spirality $\sigma_{\mu_l}'=M_l+b_l=M_l+\sigma_\nu - M_l + 1=\sigma_\nu+1$ and $G_{\mu_r}'$ with spirality $\sigma_{\mu_r}'=m_r-b_r=m_r-(b_\nu-b_l)=m_r-b_\nu +\sigma_\nu - M_l + 1=(m_r-M_l+1)-b_\nu+\sigma_\nu=b_\nu-2-b_\nu+\sigma_\nu=\sigma_\nu-2$. We have $\sigma_{\mu_l}'-\sigma_{\mu_r}'=\sigma_\nu+1-(\sigma_\nu-2)=3$ and, by \cref{le:P-2-children-support-type3}, $G_\nu'$ is rectilinear planar.
		Also, given the choice of $\sigma_{\mu_l}'$ and $\sigma_{\mu_r}'$, by \cref{le:spirality-P-node-2-children} every rectilinear planar representation of $G_\nu'$ has spirality $\sigma_\nu'=\sigma_{\mu_l}' - k_u^l \alpha_{u}^l -  k_u^l\alpha_{v}^l=\sigma_\nu + 1 - 1=\sigma_\nu$. 
		%
		%
		
		Suppose now that $d'=r$. By setting $b_l = \sigma_\nu - M_l + \frac{3}{2}$ we have $b_l \in [0,b_\nu]$. Namely, $b_l\in  [m-b_\nu - M_l + \frac{3}{2}, M+b_\nu - M_l + \frac{3}{2}]$. Since $M_l < M_r+3$ we have $M_l-\frac{3}{2}<M_r+\frac{3}{2}$. Also, since $M=\min\{M_l-\frac{3}{2}, M_r+\frac{3}{2}\}$, we have $M=M_l-\frac{3}{2}$. Since $m_l-3 < m_r$ we have $m_l-\frac{3}{2}<m_r+\frac{3}{2}$. This implies that $m=\max\{m_l-\frac{3}{2},m_r+\frac{3}{2}\}=m_r+\frac{3}{2}$. Hence, $b_l\in  [m_r+\frac{3}{2}-b_\nu - M_l + \frac{3}{2}, M_l-\frac{3}{2}+b_\nu - M_l + \frac{3}{2}]=[0,b_\nu]$. 
		We represent $G_{\mu_l}'$ with spirality $\sigma_{\mu_l}'=M_l+b_l=M_l+\sigma_\nu - M_l + \frac{3}{2}=\sigma_\nu+\frac{3}{2}$ and $G_{\mu_r}'$ with spirality $\sigma_{\mu_r}'=m_r-b_r=m_r-(b_\nu-b_l)=m_r-b_\nu +\sigma_\nu - M_l + \frac{3}{2}=(m_r-M_l+\frac{3}{2})-b_\nu+\sigma_\nu=b_\nu-\frac{3}{2}-b_\nu+\sigma_\nu=\sigma_\nu-\frac{3}{2}$. We have $\sigma_{\mu_l}'-\sigma_{\mu_r}'=\sigma_\nu+\frac{3}{2}-(\sigma_\nu-\frac{3}{2})=3$ and, by \cref{le:P-2-children-support-type3}, $G_\nu'$ is rectilinear planar.
		Also, by \cref{le:spirality-P-node-2-children} every rectilinear planar representation of $G_\nu'$ has spirality $\sigma_\nu'=\sigma_{\mu_l}' - k_u^l \alpha_{u}^l -  k_u^l\alpha_{v}^l=\sigma_\nu + \frac{3}{2} - \frac{3}{2}=\sigma_\nu$. 
		
		\item  $m_l-M_r>3$. In this case, by Property~(i) we have $b_\nu = m_l-M_r-3$. 
		Suppose first that $d'=l$. By setting $b_l = m_l-\sigma_\nu  - 1$, we have $b_l \in [0,b_\nu]$. Namely, $b_l \in[m_l-b_\nu-M-1,m_l-m+b_\nu-1]$. Since $M_l\ge m_l>M_r+3$, we have $M_l-1 \ge M_r+2$. This implies that $M=\min\{M_l-1, M_r+2\}=M_r+2$. Since $m_l>M_r+3\ge m_r+2$, we have $m_l-1 \ge m_r+2$, which implies that $m=\max\{m_l-1, m_r+2\}=m_l-1$.  Hence, $b_l \in [m_l-b_\nu-M_r-2-1,m_l-m_l-1+b_\nu+1]=[0,b_\nu]$. 
		We represent $G_{\mu_l}'$ with spirality $\sigma_{\mu_l}'=m_l-b_l=m_l-m_l+\sigma_\nu  + 1=\sigma_\nu+1$ and $G_{\mu_r}'$ with spirality $\sigma_{\mu_r}'=M_r+b_r=M_r+b_\nu-b_l=M_r+b_\nu -(m_l-\sigma_\nu  - 1)=M_r+b_\nu-m_l+\sigma_\nu+1=\sigma_\nu+b_\nu-b_\nu-2=\sigma_\nu-2$. 
		We have $\sigma_{\mu_l}'-\sigma_{\mu_r}'=\sigma_\nu+1-(\sigma_\nu-2)=3$ and, by \cref{le:P-2-children-support-type3}, $G_\nu'$ is rectilinear planar.
		Also, by \cref{le:spirality-P-node-2-children} every rectilinear planar representation of $G_\nu'$ has spirality $\sigma_\nu'=\sigma_{\mu_l}' - k_u^l \alpha_{u}^l -  k_u^l\alpha_{v}^l=\sigma_\nu + 1 - 1=\sigma_\nu$. 

		Suppose now that $d'=r$. By setting $b_l = m_l-\sigma_\nu  - \frac{3}{2}$ we have $b_l \in [0,b_\nu]$. Namely, $b_l \in[m_l-b_\nu-M-\frac{3}{2},m_l-m+b_\nu-\frac{3}{2}]$. Since $M_l > M_r+3$, we have $M_l-\frac{3}{2} \ge M_r+\frac{3}{2}$. This implies that $M=\min\{M_l-\frac{3}{2}, M_r+\frac{3}{2}\}=M_r+\frac{3}{2}$. Since $m_l> m_r+3$, we have $m_l-\frac{3}{2} \ge m_r+\frac{3}{2}$. This implies that $m=\max\{m_l-\frac{3}{2}, m_r+\frac{3}{2}\}=m_l-\frac{3}{2}$.  Hence, $b_l \in[m_l-b_\nu-M-\frac{3}{2},m_l-m+b_\nu-\frac{3}{2}]=[m_l-b_\nu-M_r-\frac{3}{2}-\frac{3}{2},m_l-m_l-\frac{3}{2}+b_\nu+\frac{3}{2}]=[0,b_\nu]$. 
		We represent $G_{\mu_l}'$ with spirality $\sigma_{\mu_l}'=m_l-b_l=m_l-m_l+\sigma_\nu  + \frac{3}{2}=\sigma_\nu+\frac{3}{2}$ and $G_{\mu_r}'$ with spirality $\sigma_{\mu_r}'=M_r+b_r=M_r+b_\nu-b_l=M_r+b_\nu -(m_l-\sigma_\nu  - \frac{3}{2})=\sigma_\nu+b_\nu-b_\nu-\frac{3}{2}=\sigma_\nu-\frac{3}{2}$. 
		We have $\sigma_{\mu_l}'-\sigma_{\mu_r}'=\sigma_\nu+\frac{3}{2}-(\sigma_\nu-\frac{3}{2})=3$ and, by \cref{le:P-2-children-support-type3}, $G_\nu'$ is rectilinear planar.
		Also, by \cref{le:spirality-P-node-3-children} every rectilinear planar representation of $G_\nu'$ has spirality $\sigma_\nu'=\sigma_{\mu_l}' - k_u^l \alpha_{u}^l -  k_u^l\alpha_{v}^l=\sigma_\nu + \frac{3}{2} - \frac{3}{2}=\sigma_\nu$. 
	\end{itemize}
\end{proof}

\subsubsection{P-nodes having a child with no exposed edges}\label{ssse:no-exposed}
We now consider the case of a P-node $\nu$ having a child that does not contain an exposed edge (see \cref{le:P-2-without-exposed-edges-left,le:P-2-without-exposed-edges-right}). Denote by $\mu$ such a child node; $\mu$ is an S-node whose children $\nu_1, \nu_2, \dots, \nu_h$ ($h \geq 2$) are all P-nodes of type \Pio{2}{22}. For a given integer value $b \geq 0$, denote by $\sigma^{\max}_{\nu_i}(b)$ the \emph{maximum} value of spirality that any orthogonal representation of $G_{\nu_i}$  with at most $b$ bends can have ($i \in \{1, \dots, h\}$). When we consider the value $b+1$, the value $\sigma^{\max}_{\nu_i}(b + 1)$ may or may not increase by one unit with respect to $\sigma^{\max}_{\nu_i}(b)$, i.e., either $\sigma^{\max}_{\nu_i}(b + 1)=\sigma^{\max}_{\nu_i}(b)+1$ or $\sigma^{\max}_{\nu_i}(b + 1)=\sigma^{\max}_{\nu_i}(b)$. By plotting the function $\sigma^{\max}_{\nu_i}(b)$ as $b$ increases, we define the \emph{positive flexibility breakpoint of $G_{\nu_i}$} as the maximum number of bends $b^+_{\nu_i}$ such that for every non-negative integer $b < b^+_{\nu_i}$, we have $\sigma^{\max}_{\nu_i}(b + 1)=\sigma^{\max}_{\nu_i}(b)+1$. For example, \cref{fi:positive-flexibility} shows a component $G_{\nu_i}$ and a plot of the function $\sigma^{\max}_{\nu_i}(b)$ for $b = 0, 1,\dots, 5$. In the figure, the positive flexibility breakpoint is $b^+_{\nu_i} = 3$ because passing from $b=3$ to $b=4$ does not allow us to have an orthogonal representation of $G_{\nu_i}$ with a larger value of spirality.   
The \emph{positive flexibility breakpoint of the S-node} $\mu$ is denoted as $b^+_{\mu}$ and it is defined as the sum of the positive flexibility breakpoints of its children, i.e., $b^+_{\mu} = \sum_{i=1}^{h} b^+_{\nu_i}$.

\begin{figure}[t]
	\centering
	\includegraphics[width=0.8\columnwidth]{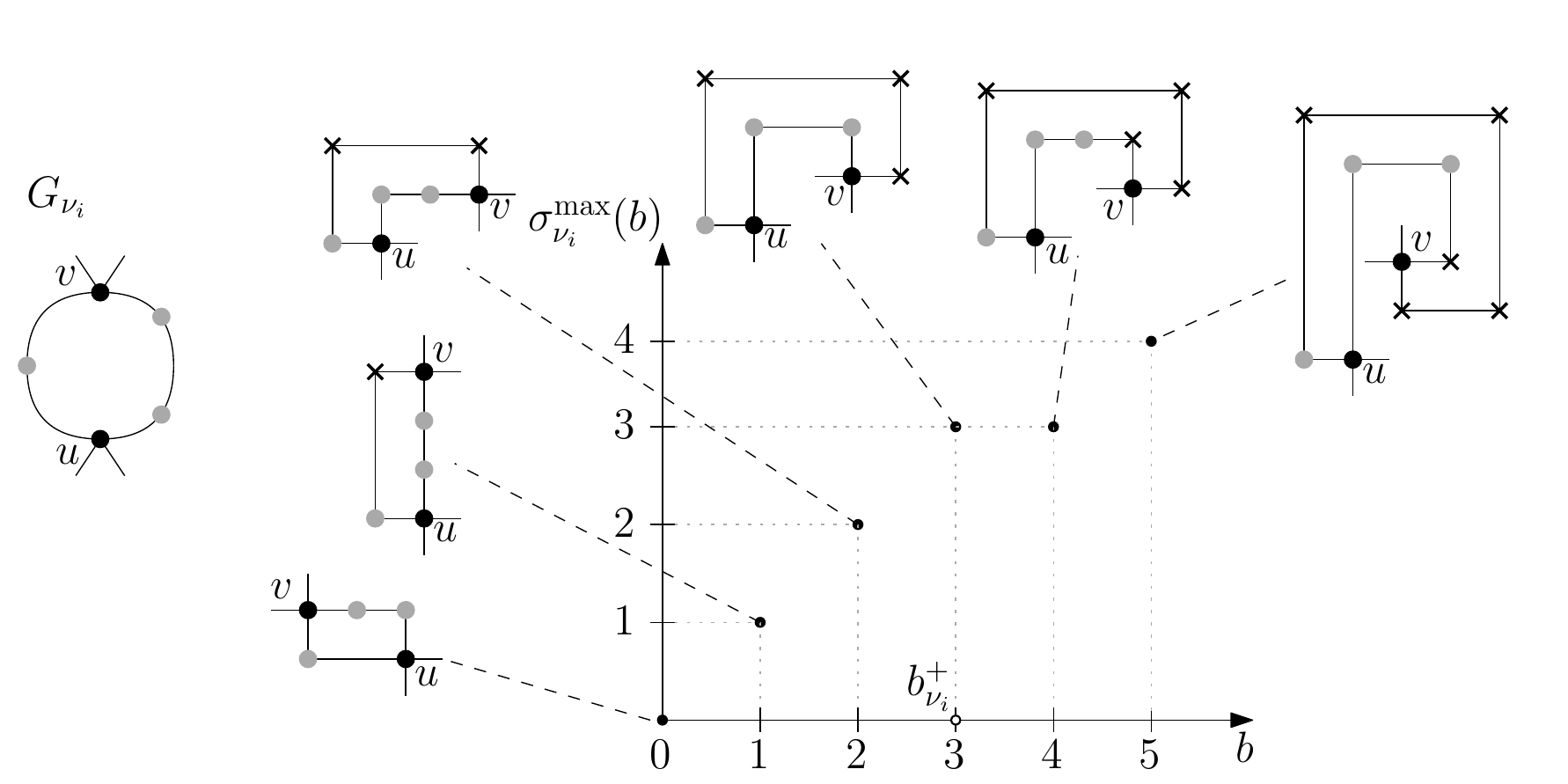}
	\caption{A P-component $G_{\nu_i}$ of type \Pio{2}{22} and a plot of function $\sigma^{\max}_{\nu_i}(b)$ for $b = 0, 1,\dots, 5$. For each value of $b$, the figure depicts an orthogonal representation of (maximum) spirality $\sigma^{\max}_{\nu_i}(b)$ among those with $b$ bends (cross vertices). The positive flexibility breakpoint is $b^+_{\nu_i} = 3$.}\label{fi:positive-flexibility}
\end{figure}

Symmetrically, for a given integer $b \geq 0$, let $\sigma^{\min}_{\nu_i}(b)$ be the \emph{minimum} value of spirality that an orthogonal representation of $G_{\nu_i}$ with at most $b$ bends can have ($i \in \{1, \dots, h\}$). The value of $\sigma^{\min}_{\nu_i}(b+1)$ may or may not decrease by one unit with respect to $\sigma^{\min}_{\nu_i}(b)$. We define the \emph{negative flexibility breakpoint of $G_{\nu_i}$} as the maximum number of bends $b^-_{\nu_i}$ such that for every non-negative integer $b < b^-_{\nu_i}$, we have $\sigma^{\min}_{\nu_i}(b + 1)=\sigma^{\min}_{\nu_i}(b)-1$. For example, \cref{fi:negative-flexibility} shows the same component $G_{\nu_i}$ as in \cref{fi:positive-flexibility} and a plot of the function $\sigma^{\min}_{\nu_i}(b)$ for $b = 0, 1,\dots, 4$. In the figure, the negative flexibility breakpoint is $b^-_{\nu_i} = 1$ because passing from $b=1$  to $b=2$ does not allow us to have an orthogonal representation of $G_{\nu_i}$ with a smaller value of spirality. The \emph{negative flexibility breakpoint of the S-node} $\mu$ is denoted as $b^-_{\mu}$ and it is defined as the sum of the negative flexibility breakpoints of its children, i.e., $b^-_{\mu} = \sum_{i=1}^{h} b^-_{\nu_i}$. 
The next lemma establishes how to compute in constant time the positive and negative flexibility breakpoints for a P-node $\nu_i$ of type \Pio{2}{22}, given the representability intervals of its children.

\begin{figure}[t]
	\centering
	\includegraphics[width=0.8\columnwidth]{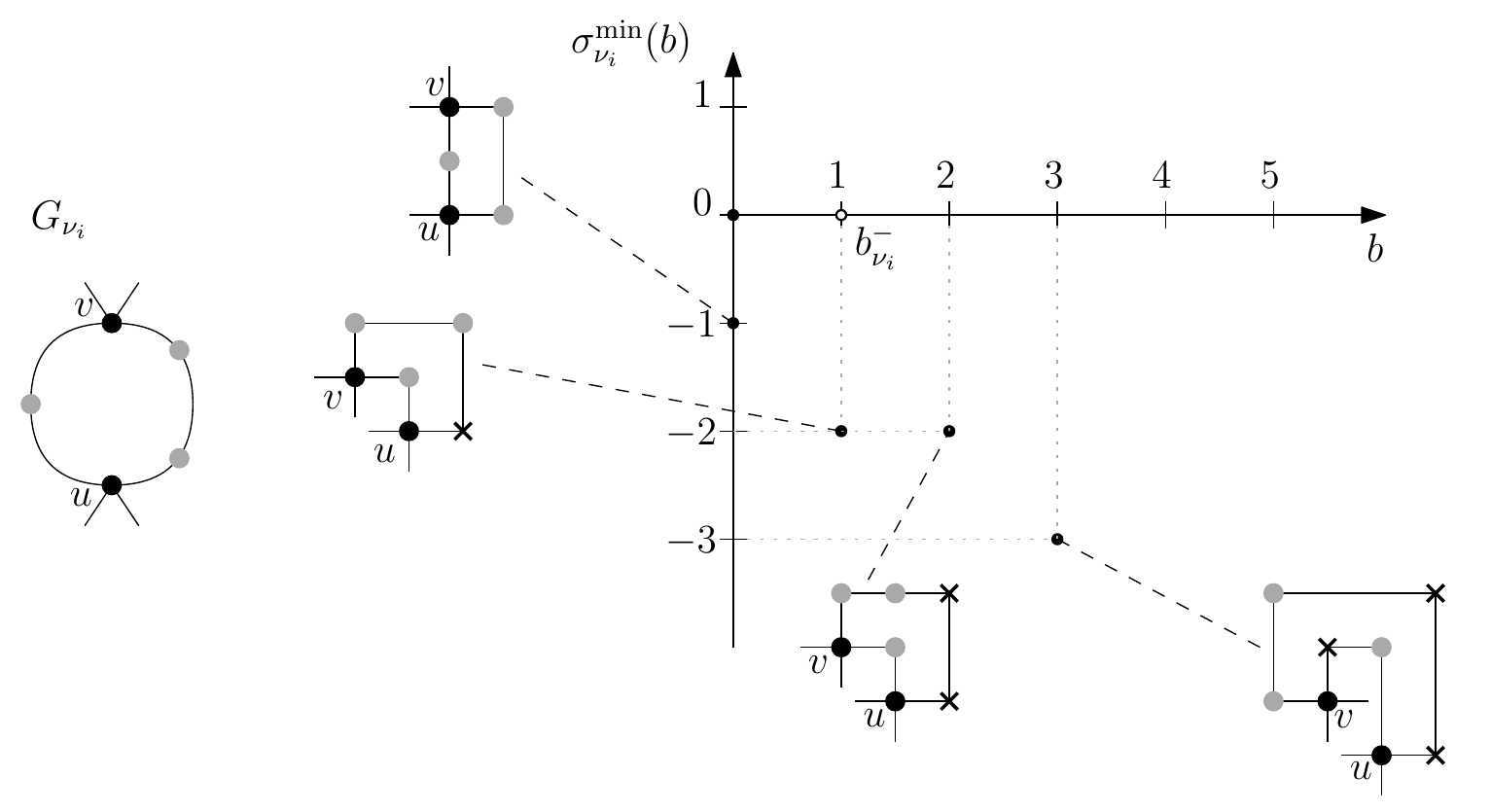}
	\caption{A P-component $G_{\nu_i}$ of type \Pio{2}{22} and a plot of function $\sigma^{\min}_{\nu_i}(b)$ for $b = 0, 1,\dots, 4$. For each value of $b$, the figure depicts an orthogonal representation of (minimum) spirality $\sigma^{\min}_{\nu_i}(b)$ among those with $b$ bends (cross vertices). The negative flexibility breakpoint is $b^-_{\nu_i} = -1$.}\label{fi:negative-flexibility}
\end{figure}

\begin{lemma}\label{le:flexibility_breakpoints}
	Let $\nu_i$ be a $P$-node  of type \Pio{2}{22} with children $\mu_l$ and $\mu_r$ such that $G_{\nu_i}$ is rectilinear planar. Let $I_{\mu_l}=[m_l,M_l]$ and $I_{\mu_r}=[m_r,M_r]$ be the representability intervals of $G_{\mu_l}$ and $G_{\mu_r}$, respectively. We have $b^+_{\nu_i} = \lvert M_r+2-M_l \rvert$ and $b^-_{\nu_i} = \lvert m_l-2-m_r \rvert$. Also, an orthogonal representation of $G_{\nu_i}$ with spirality $\sigma^{\max}_{\nu_i}(b^+_{\nu_i})$ $($resp. with $\sigma^{\min}_{\nu_i}(b^-_{\nu_i})$$)$ can be obtained by inserting all the bends on an exposed edge of either $G_{\mu_l}$ or $G_{\mu_r}$.   
\end{lemma}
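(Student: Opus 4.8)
The plan is to reduce the statement to a one‑dimensional optimization over how the $b$ bends are split between the two children $\mu_l$ and $\mu_r$, and then to resolve that optimization by an elementary sign analysis. I will carry out $b^+_{\nu_i}$ in full and obtain $b^-_{\nu_i}$ by the mirror argument (swap left/right and negate spiralities). First I would record the facts specific to a P‑node of type \Pio{2}{22}: both poles have degree four, so in every orthogonal representation of $G_{\nu_i}$ we have $\alpha_u^l=\alpha_u^r=\alpha_v^l=\alpha_v^r=1$ and $k_u^l=k_u^r=k_v^l=k_v^r=\tfrac12$, whence by \cref{le:spirality-P-node-2-children} the spiralities $\sigma_{\mu_l},\sigma_{\mu_r}$ of the two child‑restrictions and the spirality $\sigma_{\nu_i}$ of the whole satisfy $\sigma_{\nu_i}=\sigma_{\mu_l}-1=\sigma_{\mu_r}+1$, in particular $\sigma_{\mu_l}-\sigma_{\mu_r}=2$, consistently with \cref{le:P-2-children-support-type1}. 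Moreover, since $\nu_i$ has type \Pio{2}{22}, \emph{both} $\mu_l$ and $\mu_r$ own an exposed edge (a child with no exposed edge can occur only for the types \Pin{3ll} and \Pin{3rr}). This gives the pivotal quantitative claim: an orthogonal representation of $G_{\nu_i}$ placing $b_l$ bends inside $G_{\mu_l}$ and $b_r=b-b_l$ bends inside $G_{\mu_r}$ has $\sigma_{\mu_l}\in[m_l-b_l,M_l+b_l]$ and $\sigma_{\mu_r}\in[m_r-b_r,M_r+b_r]$, and conversely every such pair of values is realized by placing the corresponding bends on a single exposed edge of each child. Combining this with \cref{le:P-2-children-representability-type1} in the case $\lambda=\beta=2$ (whose representability condition $2\in[m_l-M_r,M_l-m_r]$ holds because $G_{\nu_i}$ is rectilinear planar, and only widens under bends) yields the closed formulas
\[
\sigma^{\max}_{\nu_i}(b)=-1+\max\{\min\{M_l+b_l,\,M_r+2+b_r\}:b_l,b_r\ge 0,\ b_l+b_r=b\},
\]
\[
\sigma^{\min}_{\nu_i}(b)=1+\min\{\max\{m_l-b_l,\,m_r-2-b_r\}:b_l,b_r\ge 0,\ b_l+b_r=b\},
\]
where the reduction from ``at most $b$ bends'' to ``$b_l+b_r=b$'' uses that enlarging $b_l$ or $b_r$ only relaxes the constraints.

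Next I would solve the optimization. Writing $b_l=t$, the map $t\mapsto\min\{M_l+t,\,M_r+2+b-t\}$ is the lower envelope of an increasing and a decreasing affine function whose graphs cross at $t^\star=(b+M_r+2-M_l)/2$; its maximum over $t\in[0,b]$ is attained at $t^\star$ if $t^\star\in[0,b]$ and at an endpoint of $[0,b]$ otherwise, and $t^\star\in[0,b]$ holds exactly when $b\ge|M_r+2-M_l|$. Put $B:=|M_r+2-M_l|$. For $b<B$ the maximum is attained at an endpoint and equals $\min\{M_l,M_r+2\}+b$, so $\sigma^{\max}_{\nu_i}(b)=M+b$, which grows by exactly one when $b$ is incremented; for $b\ge B$ the maximum equals $(M_l+M_r+2+b)/2$, so $\sigma^{\max}_{\nu_i}(b)=(M_l+M_r+b)/2-1$, which grows by $\tfrac12$ per unit of $b$. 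Hence $\sigma^{\max}_{\nu_i}(b+1)=\sigma^{\max}_{\nu_i}(b)+1$ precisely for $b<B$, i.e.\ $b^+_{\nu_i}=B=\lvert M_r+2-M_l\rvert$; and at $b=B$ the maximizing split is $(b_l,b_r)=(B,0)$ when $M_l\le M_r+2$ and $(b_l,b_r)=(0,B)$ when $M_l\ge M_r+2$, so all $b^+_{\nu_i}$ bends can be placed on one exposed edge of a single child. The analysis for $b^-_{\nu_i}$ is identical after swapping maximum/minimum and $(M_l,M_r)$ with $(m_l,m_r)$, giving $b^-_{\nu_i}=\lvert m_l-2-m_r\rvert$ with the extremal representation again obtained by putting all bends on one exposed edge of one child.

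The step I expect to be the main obstacle is making the pivotal quantitative claim watertight, specifically the upper bound $\sigma_{\mu_l}\le M_l+b_l$ when the $b_l$ bends are placed \emph{arbitrarily} inside $G_{\mu_l}$, not just on an exposed edge. The clean argument is: by \cref{le:P-exposed-edges-support} those $b_l$ bends can be moved onto one exposed edge $e$ of $G_{\mu_l}$ without changing the spirality; reading the $b_l$ bends on $e$ as subdivision vertices turns $G_{\mu_l}$ into a graph in which the Q$^*$‑node containing $e$ has its length increased by $b_l$, so by \cref{le:Q*-representability} and \cref{le:S-representability} its representability interval is exactly $[m_l-b_l,M_l+b_l]$, whence $\sigma_{\mu_l}$, being the spirality of a rectilinear representation of that graph, lies in this interval (symmetrically for $\sigma_{\mu_r}$). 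The remaining points are mere bookkeeping: $M_l-M_r$ and $m_l-m_r$ are integers, forced because $\sigma_{\mu_l}$ and $\sigma_{\mu_r}$ must differ by the integer $2$ so the two child‑intervals share the same integrality type, which makes $b^+_{\nu_i}$ and $b^-_{\nu_i}$ genuine non‑negative integers; and in the degenerate case $M_l=M_r+2$ (resp.\ $m_l=m_r+2$) one has $b^+_{\nu_i}=0$ (resp.\ $b^-_{\nu_i}=0$) and the claim about all bends on one exposed edge is vacuously true.
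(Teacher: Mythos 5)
Your proposal is correct, and it reaches the formulas by a genuinely different route from the paper. The paper argues constructively and incrementally: it takes a rectilinear representation of $G_{\nu_i}$ of maximum spirality, observes that $\sigma_{\mu_l}=M_l$ or $\sigma_{\mu_r}=M_r$ must already hold, and then adds one subdivision vertex at a time to an exposed edge of the saturated child, showing the maximum spirality increases by exactly one per step until the other child saturates too, after which a one-step obstruction argument shows no further gain is possible; the count of steps is $\lvert M_r+2-M_l\rvert$. You instead derive a closed-form expression $\sigma^{\max}_{\nu_i}(b)=-1+\max_{b_l+b_r=b}\min\{M_l+b_l,\,M_r+2+b_r\}$ and analyze the piecewise-linear envelope, which is more global and makes the location of the breakpoint and the extremal split $(b_l,b_r)\in\{(B,0),(0,B)\}$ transparent; your reduction of ``arbitrary bend placement'' to ``bends on one exposed edge per child'' via \cref{le:P-exposed-edges-support} plus \cref{le:Q*-representability} and \cref{le:S-representability} is exactly the mechanism the paper uses elsewhere (e.g.\ in \cref{le:P-3-exposed-edges}), so that step is sound. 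One imprecision worth fixing: for $b\ge B$ the spirality of a \Pio{2}{22} component is an integer, so the true value is $\sigma^{\max}_{\nu_i}(b)=\bigl\lfloor (M_l+M_r+2+b)/2\bigr\rfloor-1$ rather than $(M_l+M_r+b)/2-1$; consequently the increment for $b\ge B$ alternates between $0$ and $1$ rather than being $\tfrac12$, and the biconditional ``$\sigma^{\max}(b+1)=\sigma^{\max}(b)+1$ precisely for $b<B$'' is not literally true. This does not affect the conclusion, because the definition of $b^+_{\nu_i}$ only requires unit growth for \emph{all} $b<b^+_{\nu_i}$ and failure at the first opportunity, and your own computation shows the increment at $b=B$ itself is $0$; you should just state that explicitly instead of appealing to the $\tfrac12$-slope.
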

\begin{proof}
	We prove that  $b^+_{\nu_i} = \lvert M_r+2-M_l \rvert$. The proof that  $b^-_{\nu_i} = \lvert m_l-m_r-2 \rvert$ is symmetric. Consider a rectilinear planar representation $H_{\nu_i}$ of $G_{\nu_i}$ with maximum value of spirality, that is spirality $\sigma^{\max}_{\nu_i}(0)$ which, 
	by \cref{ta:representability}  
	is $\sigma^{\max}_{\nu_i}(0)=\min\{M_l-1,M_r+1\}$. Let $\sigma_{\mu_l}$ and $\sigma_{\mu_r}$ be the spiralities of the left and the right orthogonal components of  $H_{\nu_i}$, respectively. By \cref{le:P-2-children-support-type1}, we know that $\sigma_{\mu_l} - \sigma_{\mu_r} = 2$. Also, by \cref{le:spirality-P-node-2-children}, $\sigma^{\max}_{\nu_i}(0)=\sigma_{\mu_l}-1=\sigma_{\mu_r}+1$. By 
	\cref{ta:representability}, either $\sigma_{\mu_l} = M_l$ or $\sigma_{\mu_r} = M_r$ (possibly both). Three cases are possible:
	
	\begin{itemize}
		\item $\sigma_{\mu_l} = M_l$ and $\sigma_{\mu_r} = M_r$. Note that, since $\sigma_{\mu_l} = \sigma_{\mu_r}+2$, in this case we have $\lvert M_r+2-M_l \rvert=0$. 
		We show that if we are allowed to subdivide an edge of $G_{\nu_i}$ with exactly one degree-2 vertex, so to obtain a graph $G'_{\nu_i}$, the maximum value of spirality that a rectilinear planar representation $H'_{\nu_i}$ of $G'_{\nu_i}$ equals $\sigma^{\max}_{\nu_i}(0)$. In other words, we show that $\sigma^{\max}_{\nu_i}(1)=\sigma^{\max}_{\nu_i}(0)$.     
		By \cref{le:P-exposed-edges-support}, we can assume that the degree-2 vertex subdivides an exposed edge, which belongs either to $G_{\mu_l}$ or to $G_{\mu_r}$. Suppose that this degree-2 vertex is inserted along an exposed edge of $G_{\mu_l}$. Denote by $G'_{\mu_l}$ and $G'_{\mu_r}$ the left component and the right component of $G'_{\nu_i}$, respectively. The maximum spirality of a rectilinear planar representation of $G'_{\mu_l}$ is $M_l+1$, while the maximum spirality of a rectilinear planar representation of $G'_{\mu_r}$ remains $M_r$, because $G'_{\mu_r}$ coincides with $G_{\mu_r}$. Assume for a contradiction that $G'_{\nu_i}$ admits a rectilinear planar representation $H'_{\nu_i}$ with spirality $\sigma^{\max}_{\nu_i}(0)+1$, and denote by $\sigma'_{\mu_l}$ and $\sigma'_{\mu_r}$ the spiralities of the left and right components of $H'_{\nu_i}$. We should have that $\sigma^{\max}_{\nu_i}(0)+1 = \sigma'_{\mu_l} - 1$ and $\sigma^{\max}_{\nu_i}(0)+1 = \sigma'_{\mu_r} + 1$. Now, $\sigma^{\max}_{\nu_i}(0)+1=M_l$ and $\sigma^{\max}_{\nu_i}(0)+1=(M_r+1)+1$, which implies that $G'_{\mu_r}$ should have a rectilinear planar representation with spirality $M_r+1$, a contradiction. A symmetric argument applies if we subdivide an exposed edge of $G_{\mu_r}$. Hence, $b^+_{\nu_i} = 0 = \lvert M_r+2-M_l \rvert$.

		\item $\sigma_{\mu_l} = M_l$ and $\sigma_{\mu_r} < M_r$. Subdivide an exposed edge of $G_{\nu_l}$ with a degree 2-vertex, and call $G'_{\nu_i}$ the graph resulting from $G_{\nu_i}$ after this subdivision. As in the previous case, denote by $G'_{\mu_l}$ and $G'_{\mu_r}$ the left and the right components of $G'_{\nu_i}$, where $G'_{\mu_r}$ coincides with $G_{\mu_r}$.  We show that $G'_{\nu_i}$ admits a rectilinear planar representation with spirality $\sigma'_\nu = \sigma^{\max}_{\nu_i}(0) + 1$. Since we have added a subdivision vertex on an exposed edge of $G_{\mu_l}$, there exists a rectilinear planar representation  $H'_{\mu_l}$ of $G'_{\mu_l}$ with spirality $\sigma'_{\mu_l}=M_l+1$. Also, since $\sigma_{\mu_r} < M_r$, there exists a rectilinear planar representation of $G'_{\mu_r}$ with spirality $\sigma'_{\mu_r}=\sigma_{\mu_r}+1$. Hence, $\sigma'_{\mu_l} - \sigma'_{\mu_r} = 2$ and therefore, by \cref{le:P-2-children-support-type1}, we can merge the representations $H'_{\mu_l}$ and $H'_{\mu_r}$ into a rectilinear planar representation $H'_{\nu_i}$ of $G'_{\nu_i}$. The spirality of $H'_{\nu_i}$ is $\sigma'_{\nu_i} = \sigma^{\max}_{\nu_i}(0) + 1$. By replacing the subdivision vertex of $H'_{\mu_l}$ with a bend, we get an orthogonal representation of $G_{\nu_i}$ with one bend and with spirality $\sigma^{\max}_{\nu_i}(0) + 1$, i.e., $\sigma^{\max}_{\nu_i}(1)=\sigma^{\max}_{\nu_i}(0) + 1$. Iterate this procedure until $\sigma'_{\mu_r}=M_r$, i.e., until $\sigma'_{\nu_i}=M_r+1$. Denote by $b$ the number of bends added in total. By \cref{le:spirality-P-node-2-children}, the spirality of the resulting orthogonal representation is $\sigma^{\max}_{\nu_i}(b)=M_l+b-1=M_r+1$, and hence $b = M_r+2-M_l$. If we consider the rectilinear representation $H'_{\nu_i}$ where the $b$ bends are replaced with degree-2 vertices, its left and right components have the maximum possible spirality in their representability intervals. Hence, the previous case applies, and inserting exactly one subdivision vertex to $G'_{\nu_i}$, does not result into a graph that admits a rectilinear planar representation with spirality greater than the one of $H'_{\nu_i}$.    
		It follows that $b^+_{\nu_i} = M_r+2-M_l$.
		
		\item $\sigma_{\mu_l} < M_l$ and $\sigma_{\mu_r} = M_r$. With a symmetric argument as in the previous case, $b^+_{\nu_i} = M_l - (M_r + 2)$. 
	\end{itemize}
\end{proof}

\begin{lemma}\label{le:P-2-without-exposed-edges-left}
	Let $\nu$ be a P-node with two children $\mu_l$ and $\mu_r$, such that $\mu_l$ has no exposed edge. Let $G_{\mu_l}$ and $G_{\mu_r}$ be rectilinear planar with representability intervals $I_{\mu_l}=[m_l,M_l]$ and $I_{\mu_r}=[m_r,M_r]$, respectively. If $G_\nu$  is not rectilinear planar then: (i) the budget for $\nu$ is $b_\nu=\delta([m_l-M_r, M_l-m_r],[3,3])$; and (ii) the interval of spirality values for an orthogonal representation of $G_\nu$ with $b_\nu$ bends is $I'_\nu = [m-b_\nu, M+\min\{b^+_{\mu_l},b_\nu\}]$ if $M_l - m_r < 3$ and  $I'_\nu = [m-\min\{b^-_{\mu_l},b_\nu\}, M+b_\nu]$ if $m_l - M_r > 3$.
\end{lemma}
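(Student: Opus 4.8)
\emph{The plan} is to first pin down the combinatorics forced by the hypothesis. Since $\mu_l$ has no exposed edge, the discussion preceding \cref{sse:bottom-up-3-children} forces $\nu$ to be of type \Pin{3ll}, so $k_u^l=k_v^l=\frac12$, $k_u^r=k_v^r=1$, $\Delta_\nu=[3,3]$, and (degree-four poles) $\alpha_u^l=\alpha_u^r=\alpha_v^l=\alpha_v^r=1$; hence, by \cref{le:spirality-P-node-2-children}, a parallel composition of representations of $G_{\mu_l}$ and $G_{\mu_r}$ with spiralities $\sigma_{\mu_l},\sigma_{\mu_r}$ has spirality $\sigma_{\mu_l}-1=\sigma_{\mu_r}+2$, and by \cref{le:P-2-children-support-type3} it exists if and only if $\sigma_{\mu_l}-\sigma_{\mu_r}=3$. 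I will then record two structural facts: ($\alpha$) $\mu_r$ is an $S$- or $Q^*$-node whose poles have degree one inside $G_{\mu_r}$, so it has an exposed edge, and inserting $b$ subdivision vertices on it widens $I_{\mu_r}$ by $b$ on each side (as in \cref{le:P-exposed-edges-support,le:P-2-exposed-edges}); ($\beta$) $\mu_l$ is an $S$-node of \Pio{2}{22} children, so by \cref{le:flexibility_breakpoints,le:S-representability} and the definition $b^{\pm}_{\mu_l}=\sum_i b^{\pm}_{\nu_i}$, inserting $b$ subdivision vertices into $G_{\mu_l}$ widens the maximum (resp.\ minimum) of $I_{\mu_l}$ by at most $\min\{b,b^+_{\mu_l}\}$ (resp.\ $\min\{b,b^-_{\mu_l}\}$), with equality achievable. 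Since $G_\nu$ is not rectilinear planar, $3\notin[m_l-M_r,M_l-m_r]$, so $M_l-m_r<3$ or $m_l-M_r>3$; I will carry out the first case and indicate the symmetric changes for the second.

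\emph{For Property (i)} I would mimic \cref{le:P-2-exposed-edges}. Necessity: if $G_\nu$ had a representation with $b'_\nu<b_\nu$ bends, split as $b'_l$ inside $G_{\mu_l}$ and $b'_r$ inside $G_{\mu_r}$, replacing bends by subdivision vertices would make $G_\nu$ rectilinear with child representability intervals contained in $[m_l-b'_l,M_l+b'_l]$ and $[m_r-b'_r,M_r+b'_r]$ (one subdivision vertex adds at most one turn to any spine), so by \cref{ta:representability} $3\in[m_l-M_r-b'_\nu,M_l-m_r+b'_\nu]$, whence $b_\nu=\delta([m_l-M_r,M_l-m_r],[3,3])\le b'_\nu$, a contradiction. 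Sufficiency: put $b_l=\min\{b_\nu,b^+_{\mu_l}\}$ subdivision vertices into $G_{\mu_l}$ (so it attains spirality $M_l+b_l$, by the construction of \cref{le:flexibility_breakpoints}) and $b_r=b_\nu-b_l$ on an exposed edge of $G_{\mu_r}$ (attaining spirality $m_r-b_r$); since $(M_l+b_l)-(m_r-b_r)=M_l-m_r+b_\nu=3$, \cref{le:P-2-children-support-type3} produces a rectilinear planar representation, which becomes the desired one after turning the subdivision vertices back into bends.

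\emph{For Property (ii)} I would first compute, from \cref{ta:representability} and $M_l-m_r<3$, that $m=m_r+2$, $M=M_l-1$, and $b_\nu=m-M$, so $m-b_\nu=M$ and the target is $I'_\nu=[M,M+\min\{b^+_{\mu_l},b_\nu\}]$. For ``$\subseteq$'': for a representation with $b_\nu$ bends, $b_l$ inside $G_{\mu_l}$ and $b_r$ inside $G_{\mu_r}$, facts ($\alpha$),($\beta$) bound the child intervals by $[m_l-\min\{b_l,b^-_{\mu_l}\},M_l+\min\{b_l,b^+_{\mu_l}\}]$ and $[m_r-b_r,M_r+b_r]$, and then $\sigma_\nu=\sigma_{\mu_l}-1\le M+\min\{b_\nu,b^+_{\mu_l}\}$ while $\sigma_\nu=\sigma_{\mu_r}+2\ge m_r-b_\nu+2=M$. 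For ``$\supseteq$'': given $\sigma_\nu$ in the target interval, take $b_l=\sigma_\nu-M$ (which lies in $[0,\min\{b_\nu,b^+_{\mu_l}\}]$) and $b_r=b_\nu-b_l$, realize $G_{\mu_l}$ with spirality $M_l+b_l=\sigma_\nu+1$ and $G_{\mu_r}$ with spirality $m_r-b_r$; a direct computation shows $m_r-b_r=\sigma_\nu-2$, so the difference is $3$, and the parallel composition (\cref{le:P-2-children-support-type3}) has spirality $\sigma_{\mu_l}-1=\sigma_\nu$ by \cref{le:spirality-P-node-2-children}; replacing the $b_\nu$ subdivision vertices by bends finishes. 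The case $m_l-M_r>3$ is entirely analogous, with maxima and minima, $b^+_{\mu_l}$ and $b^-_{\mu_l}$, and the two ends of $I'_\nu$ interchanged, yielding $I'_\nu=[m-\min\{b^-_{\mu_l},b_\nu\},M+b_\nu]$.

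\emph{The main obstacle} is the asymmetry between the two children: bends in $G_{\mu_r}$ enlarge its spirality interval without bound, whereas bends in $G_{\mu_l}$ help only up to $b^\pm_{\mu_l}$. Property (i) is unaffected (the loose bound $M'_l\le M_l+b'_l$ already closes the necessity argument, and $\mu_r$ can absorb whatever slack remains), but in Property (ii) the cap $\min\{\cdot,b^+_{\mu_l}\}$ must land in precisely the right endpoint of $I'_\nu$, and one must verify that for every $\sigma_\nu$ in the claimed interval the split $(b_l,b_r)$ is simultaneously feasible for both components; isolating $b^+_{\mu_l}$ via the flexibility-breakpoint machinery of \cref{le:flexibility_breakpoints} is exactly what makes this bookkeeping go through.
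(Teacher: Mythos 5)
Your setup, your Property~(i), and the ``$\supseteq$'' half of Property~(ii) all match the paper's argument (the paper's sufficiency step is even simpler than yours: it places all $b_\nu$ subdivision vertices on an exposed edge of $G_{\mu_r}$). The problem is your fact~($\beta$), on which the ``$\subseteq$'' half of Property~(ii) rests. It is not true that inserting $b$ subdivision vertices into $G_{\mu_l}$ widens the maximum of $I_{\mu_l}$ by \emph{at most} $\min\{b,b^+_{\mu_l}\}$. The flexibility breakpoint only marks where the growth rate drops from one unit per bend to one unit per \emph{two} bends: for a child $\nu_i$ of type \Pio{2}{\textrm{22}} whose two subcomponents both sit at their maximum spirality, one extra subdivision vertex gains nothing, but one extra subdivision vertex on an exposed edge of \emph{each} subcomponent raises $\min\{M_l,M_r+2\}-1$ by one. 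So $\sigma^{\max}_{\nu_i}(b^+_{\nu_i}+2)=\sigma^{\max}_{\nu_i}(b^+_{\nu_i})+1$, and in general $M_l+\min\{b_l,b^+_{\mu_l}\}$ is not an upper bound on the spirality of $G_{\mu_l}$ with $b_l$ bends. Consequently your chain $\sigma_\nu=\sigma_{\mu_l}-1\le M+\min\{b_\nu,b^+_{\mu_l}\}$ does not follow from what you have established.

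What closes this gap in the paper is an exchange argument that exploits the minimality of $b_\nu$ (Property~(i)), not a bound on intervals: in \emph{any} orthogonal representation of $G_\nu$ with exactly $b_\nu$ bends, the number $b_l$ of bends inside $G_{\mu_l}$ satisfies $b_l\le\min\{b^+_{\mu_l},b_\nu\}$. Otherwise, if $b_l=b^+_{\mu_l}+1$ one bend of $G_{\mu_l}$ can be deleted without changing $\sigma_{\mu_l}$, and if $b_l\ge b^+_{\mu_l}+2$ two bends of $G_{\mu_l}$ can be traded for one bend on the exposed edge of $G_{\mu_r}$ while decreasing both $\sigma_{\mu_l}$ and $\sigma_{\mu_r}$ by one (preserving $\sigma_{\mu_l}-\sigma_{\mu_r}=3$); either way one obtains a representation with fewer than $b_\nu$ bends, contradicting Property~(i). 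Only after this claim does $\sigma_{\mu_l}\le M_l+b_l\le M_l+\min\{b^+_{\mu_l},b_\nu\}$, and hence the right endpoint of $I'_\nu$, follow. You should add this claim (or an equivalent minimality argument) before asserting the upper bound on $\sigma_\nu$.
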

\begin{proof}
	Since by hypothesis $G_\nu$ is not rectilinear planar, by \cref{ta:representability} we have $3 \notin [m_l-M_r, M_l-m_r] \cap \Delta_\nu  = \emptyset$, where $\Delta_\nu=[3,3]$. In other words, $3 \notin [m_l-M_r, M_l-m_r]$.
	
	\medskip\noindent\textsf{Proof of Property (i)}.
	We first prove that $b_\nu = \delta([m_l-M_r, M_l-m_r],[3,3])$ bends are necessary. Suppose for a contradiction that $G_\nu$ admits an orthogonal representation $H'_\nu$ with $b'_\nu < b_\nu$ bends. Let $b'_l$ and $b'_r$ be the number of bends in the restriction of $H'_\nu$ to $G_{\mu_l}$ and to $G_{\mu_r}$, respectively. By \cref{le:P-exposed-edges-support}, we can assume that all the bends $b'_r$ are along an exposed edge of $G_{\mu_r}$. 
	Consider the underlying graph $G'_\nu$ of $H'_\nu$ obtained by replacing each bend of $H'_\nu$ with a subdivision vertex. $G'_\nu$ is rectilinear planar. Let $G'_{\mu_l}$ and $G'_{\mu_r}$ be the left and right component of $G'_\nu$, respectively. 
	Hence $G'_{\mu_r}$ is rectilinear planar with representability interval $[m_r-b'_r,M_r+b'_r]$. About $G'_{\mu_l}$, its representability interval $I'_{\mu_l}=[p,q]$ is such that $[p,q] \subseteq [m_l-b_l',M_l+b_l']$. In particular $[p,q] = [m_l-b_l',M_l+b_l']$ when $b_{\mu_l}^- \geq b'_l$ and $b_{\mu_l}^+ \geq b'_l$.  
	By \cref{ta:representability}, the representability condition for $G'_\nu$ is $[p-M_r-b'_r, q-m_r+b'_r] \cap [3,3] \neq \emptyset$.
	Since $p \geq m_l-b_l'$ and $q \leq M_l+b_l'$ we have $[p-M_r-b'_r, q-m_r+b'_r] \subseteq [m_l-b_l'-M_r-b'_r,M_l+b_l'-m_r+b'_r]$, and therefore $[m_l-b_l'-M_r-b'_r,M_l+b_l'-m_r+b'_r] \cap [3,3] \neq \emptyset$.
	This equals to say that $[m_l-M_r-b'_\nu,M_l-m_r+b'_\nu] \cap [3,3] \neq \emptyset$, which implies that $\delta([m_l-M_r,M_l-m_r],[3,3]) \leq b'_\nu < b_\nu$, a contradiction.      
	
	We now prove that $b_\nu$ bends suffice. Insert $b_\nu$ subdivision vertices on any exposed edge of $G_{\mu_r}$, and let $G'_{\mu_r}$ be the resulting component. Since $G_{\mu_r}$ is rectilinear planar by hypothesis, $G'_{\mu_r}$ is also rectilinear planar and its representability interval is $[m_r-b_\nu,M_r+b_\nu]$. Consider the plane graph $G'_{\nu}$ obtained by the union of $G_{\mu_l}$ and $G'_{\mu_r}$. Since by hypothesis $b_\nu = \delta([m_l-M_r, M_l-m_r],[3,3])$, we have that  $[m_l-M_r-b_\nu, M_l- m_r+b_\nu] \cap [3,3] \neq \emptyset$. It follows that $G'_{\nu}$ is rectilinear planar. Consider any rectilinear representation $H'_{\nu}$ of $G'_{\nu}$ and replace each of its subdivision vertices with a bend. Since by construction $G'_{\nu}$ has $b_\nu$ subdivision vertices, the resulting orthogonal representation has at most $b_\nu$ bends.  
	
	\medskip\noindent\textsf{Proof of Property (ii)}. Since by hypothesis $3 \notin [m_l-M_r, M_l-m_r]$ we have two cases: either $M_l - m_r < 3$ or $M_r - m_l > 3$. For each of them, we must show that every orthogonal representation of $G_\nu$ with $b_\nu$ bends has spirality $\sigma_\nu$ in the interval $I'_\nu$ and that for every value $\sigma_\nu \in I'_\nu$ there exists an orthogonal representation of $G_\nu$ with $b_\nu$ bends and spirality~$\sigma_\nu$. We show the argument for the case~$M_l-m_r<3$ (the other case is treated analogously). 
	
	By \cref{ta:representability} we have $M=\min\{M_l-1, M_r+2\}$ and $m=\max\{m_l-1,m_r+2\}$. Since $M_l<m_r+3\le M_r+3$, we have $M_l-1<M_r+2$ and since $m_l-3\le M_l-3<m_r$, we have $m_l-1<m_r+2$.   
	Hence, $M=M_l-1$ and $m=m_r+2$.    
	Suppose first that $G_\nu$ has an orthogonal representation $H_\nu$ with $b_\nu$ bends. We prove that $\sigma_\nu\in [m-b_\nu,M+\min\{b_{\mu_l}^+,b_\nu\}]$. 
	Let $\sigma_\nu$ be the spirality of $H_\nu$, and let $\sigma_{\mu_l}$ and $\sigma_{\mu_r}$ be the spiralities of the restrictions $H_{\mu_l}$ and $H_{\mu_r}$ of $H_\nu$ to  $G_{\mu_l}$ and $G_{\mu_r}$, respectively. By \cref{le:P-2-children-support-type1} we have $\sigma_{\mu_l}-\sigma_{\mu_r}=3$. Let $b_l$ and $b_r$ be the number of bends in $H_{\mu_l}$ and $H_{\mu_r}$, respectively. Clearly, $b_l + b_r = b_\nu$. We now prove the following claim.
	
	\begin{clm} 
		$b_l\in [0,\min\{b_{\mu_l}^+,b_\nu\}]$.
	\end{clm}
	\begin{claimproof}
		Suppose for a contradiction that $b_l\not \in [0,\min\{b_{\mu_l}^+,b_\nu\}]$. Clearly, $b_l \in [0,b_\nu]$, and thus, under this hypothesis, it must be
		$\min\{b_{\mu_l}^+,b_\nu\}=b_{\mu_l}^+$ and $b_l\in [b_{\mu_l}^++1,b_\nu]$.
		We show that there exists an orthogonal representation $H_\nu'$ of $G_\nu$ with less than $b_\nu$ bends, which is impossible by definition of budget $b_\nu$.  
		We will denote by $H'_{\mu_l}$ and $H'_{\mu_r}$ the restrictions of $H_\nu'$ to $G_{\mu_l}$ and to $G_{\mu_r}$, respectively. Also,  $b_l'$ and $b_r'$ will denote the number of bends of $H'_{\mu_l}$ and $H'_{\mu_r}$, while $\sigma_{\mu_l}'$ and $\sigma_{\mu_r}'$ will denote the spiralities of $H'_{\mu_l}$ and $H'_{\mu_r}$, respectively.
		We analyze two cases: $b_l=b_{\mu_l}^++1$ and $b_l\in [b_{\mu_l}^++2, b_\nu]$.
		
		\begin{itemize}   
			\item Suppose $b_l=b_{\mu_l}^++1$. We set $b_l'=b_l-1=b_{\mu_l}^+$ and $b_r'=b_r$. By definition of positive flexibility breakpoint it is possible to set $\sigma_{\mu_l}'=\sigma_{\mu_l}$. Also, we set $\sigma_{\mu_r}'=\sigma_{\mu_r}$. Since $\sigma_{\mu_l}'-\sigma_{\mu_r}'=\sigma_{\mu_l}-\sigma_{\mu_r}=3$, by \cref{le:P-2-children-support-type1}, $H_\nu'$ exists and it has $b_l'+b_r'=b_\nu-1<b_\nu$ bends. 
			\item Suppose $b_l\in [b_{\mu_l}^++2, b_\nu]$. We set $b_l'=b_l-2$ and $b_r'=b_r+1$. By definition of positive flexibility breakpoint in this case it is possible to set $\sigma_{\mu_l}'=\sigma_{\mu_l}-1$. Also, since $G_{\mu_r}$ has an exposed edge, by \cref{le:P-exposed-edges-support} it is possible to use it to place the extra bend of the right component. It follows that we can set $\sigma_{\mu_r}'=\sigma_{\mu_r}-1$. Therefore, $\sigma_{\mu_l}'-\sigma_{\mu_r}'=\sigma_{\mu_l}-\sigma_{\mu_r}=3$, and by \cref{le:P-2-children-support-type1}, $H_\nu'$ exists and it has $b_l'+b_r'<b_l+b_r=b_\nu$ bends. 
		\end{itemize}
		
		\noindent Hence, in both cases $H_\nu'$ has less bends than $H_\nu$, a contradiction.
	\end{claimproof}
	
	Since $G_\nu$ is of type \Pin{3ll}, by \cref{le:spirality-P-node-2-children} and \cref{ta:p-coefficients} we have $\sigma_\nu = \sigma_{\mu_l} -1=\sigma_{\mu_r} +2$. By the claim we have $b_l\in [0,\min\{b_{\mu_l}^+,b_\nu\}]$, hence $m_l-\min\{b_{\mu_l}^+,b_\nu\}\le \sigma_{\mu_l}\le M_l+\min\{b_{\mu_l}^+,b_\nu\}$ and hence $\sigma_\nu \in [m_l-\min\{b_{\mu_l}^+,b_\nu\}-1, M_l+\min\{b_{\mu_l}^+,b_\nu\}-1]$. Also, $b_r\in [0,b_\nu]$ and $\sigma_\nu\in [m_r-b_\nu+2,M_r+b_\nu+2]$. 
	Hence $\sigma_\nu\ge \max\{m_l-\min\{b_{\mu_l}^+,b_\nu\}-1,m_r-b_\nu+2\}$. By Property~(i), $b_\nu = \delta([m_l-M_r, M_l-m_r],[3,3]) = 3-M_l+m_r$, hence, $m_r-b_\nu+2=m_r-3+M_l-m_r+2=M_l-1\ge m_l-\min\{b_{\mu_l}^+,b_\nu\}-1$. It follows that  $\sigma_\nu\ge m_r-b_\nu+2$. Also, $\sigma_\nu\le \min\{M_l+\min\{b_{\mu_l}^+,b_\nu\}-1, M_r+b_\nu+2\}$. Since by hypothesis $M_l-1\le m_r+2\le M_r+2$ and since $\min\{b_{\mu_l}^+,b_\nu\}\le b_\nu$, we have $M_l+\min\{b_{\mu_l}^+,b_\nu\}-1\le  M_r+b_\nu+2$ and $\sigma_\nu\le M_l+\min\{b_{\mu_l}^+,b_\nu\}-1$. Hence, $\sigma_\nu \in  [m_r-b_\nu+2, M_l+\min\{b_{\mu_l}^+,b_\nu\}-1] = [m-b_\nu,M+\min\{b_{\mu_l}^+,b_\nu\}]=I'_\nu$.
	
	
	\medskip
	It remains to show that for every $\sigma_\nu\in I'_\nu = [m-b_\nu, M+\min\{b^+_{\mu_l},b_\nu\}]$, there exists an orthogonal representation $H_\nu$ of $G_\nu$ with $b_\nu$ bends and with spirality $\sigma_\nu$. With the same notation and approach as in \cref{le:P-2-exposed-edges}, we show how to set the number of bends $b_l$ and $b_r$ that can be assigned to the left and right component to guarantee the existence of~$H_\nu$. By Property~(i), $b_\nu = \delta([m_l-M_r, M_l-m_r],[3,3]) = 3-M_l+m_r$.
	We set $b_l = \sigma_\nu - M_l + 1$ and $b_r = b_\nu - b_l$. We first show that $b_l \in [0,\min\{b_{\mu_l}^+,b_\nu\}]$. We have $b_l\in  [m-b_\nu - M_l + 1, M+\min\{b_{\mu_l}^+,b_\nu\} - M_l + 1]$. 
	Hence, $b_l\in  [m_r+2-b_\nu - M_l + 1, M_l-1+\min\{b_{\mu_l}^+,b_\nu\} - M_l + 1]=[0,\min\{b_{\mu_l}^+,b_\nu\}]$. 

	Let $\nu_1, \dots, \nu_h$ be the children of $\mu_l$. Since $G_{\mu_l}$ has no exposed edge, each $G_{\nu_i}$ $(i=1, \dots, h)$ is a parallel component of type \Pio{2}{22} and \cref{le:flexibility_breakpoints} applies. We distribute the $b_l$ bends on the components $G_{\nu_i}$ in such a way that the number of bends assigned to $G_{\nu_i}$ is at most $b_{\nu_i}^+$ $(i=1, \dots, h)$. This is always possible, because $b_l \in [0,\min\{b_{\mu_l}^+,b_\nu\}]$. The bends assigned to each $G_{\nu_i}$ are inserted on an exposed edge according to the procedure in the proof of \cref{le:flexibility_breakpoints}.    
	The graph $G_{\mu_l}'$ obtained from $G_{\mu_l}$ by regarding each bend as a subdivision vertex admits a rectilinear planar representation $H'_{\mu_l}$ with spirality $\sigma_{\mu_l}'=M_l+b_l=M_l+\sigma_\nu - M_l + 1=\sigma_\nu+1$.  
	We now place $b_r$ bends along an exposed edge of $G_{\mu_r}$. The graph $G_{\mu_r}'$ obtained from $G_{\mu_r}$ by regarding each bend as a subdivision vertex admits a rectilinear planar representation $H_{\mu_r}'$ with spirality $\sigma_{\mu_r}'=m_r-b_r=m_r-(b_\nu-b_l)=m_r-b_\nu +\sigma_\nu - M_l + 1=(m_r-M_l+1)-b_\nu+\sigma_\nu=b_\nu-2-b_\nu+\sigma_\nu=\sigma_\nu-2$. 
	It follows that $\sigma_{\mu_l}'-\sigma_{\mu_r}'=\sigma_\nu+1-(\sigma_\nu-2)=3$ and, by \cref{le:P-2-children-support-type3}, $G_\nu'$ is rectilinear planar and it admits a rectilinear planar representation $H'_\nu$ obtained by combining in parallel $H_{\mu_l}'$ and $H_{\mu_r}'$. By \cref{le:spirality-P-node-2-children}, $H'_\nu$ has spirality $\sigma_\nu'=\sigma_{\mu_l}' - {k_u^l}' {\alpha_{u}^l}' -  {k_u^l}'{\alpha_v^l}'$. Since $u$ and $v$ have degree four, we have ${\alpha^l_u}'={\alpha^l_v}'={\alpha^r_u}'={\alpha^r_v}'=1$. Also, by \cref{ta:p-coefficients}, we have ${k_u^l}'={k_u^r}'=\frac{1}{2}$. Hence,    
	$\sigma'_\nu=\sigma_\nu + 1 - 1=\sigma_\nu$. By replacing the subdivision vertices of $H'_\nu$ with bends we get an orthogonal representation of $G_\nu$ with $b_\nu$ bends and spirality~$\sigma_\nu$. 
\end{proof}

The proof of \cref{le:P-2-without-exposed-edges-right} is symmetric to that of \cref{le:P-2-without-exposed-edges-left} and is omitted.

\begin{lemma}\label{le:P-2-without-exposed-edges-right}
	Let $\nu$ be a P-node with two children $\mu_l$ and $\mu_r$, such that $\mu_r$ has no exposed edge. Let $G_{\mu_l}$ and $G_{\mu_r}$ be rectilinear planar with representability intervals $I_{\mu_l}=[m_l,M_l]$ and $I_{\mu_r}=[m_r,M_r]$, respectively. If $G_\nu$  is not rectilinear planar, then: (i) the budget for $\nu$ is $b_\nu=\delta([m_l-M_r, M_l-m_r],[3,3])$; and (ii) the interval of spirality values for an orthogonal representation of $G_\nu$ with $b_\nu$ bends is $I'_\nu = [m-\min\{b^-_{\mu_r},b_\nu\},M+b_\nu]$ if $m_l - M_r > 3$ and $I'_\nu = [m-b_\nu,M+\min\{b^+_{\mu_r},b_\nu\}]$ if $M_l - m_r < 3$.  
\end{lemma}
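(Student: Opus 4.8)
The statement is the mirror image of \cref{le:P-2-without-exposed-edges-left}: the child without an exposed edge is now the right child $\mu_r$, so $\nu$ is of type \Pin{3rr} instead of \Pin{3ll}. My plan is to reprove the two properties by the arguments used for \cref{le:P-2-without-exposed-edges-left} with the roles of $\mu_l$ and $\mu_r$ interchanged, using the \Pin{3rr} row of \cref{ta:p-coefficients} in place of the \Pin{3ll} row. An equivalent route is to apply a left-to-right reflection to every orthogonal representation involved: this swaps $\mu_l$ with $\mu_r$, negates all spiralities, turns a \Pin{3ll} P-node into a \Pin{3rr} one, and exchanges the positive and negative flexibility breakpoints of every component, after which \cref{le:P-2-without-exposed-edges-right} reduces to \cref{le:P-2-without-exposed-edges-left} once the endpoints $m$, $M$ are rewritten according to the $\phi(\cdot)$ offsets of \cref{ta:representability}.

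First I would dispatch Property~(i), which is verbatim the left-lemma argument with $l$ and $r$ swapped. The key point is that inserting $b$ bends in the flexible child $\mu_l$ (which has an exposed edge) extends its representability interval to exactly $[m_l-b,M_l+b]$ by \cref{le:P-exposed-edges-support}, whereas inserting $b$ bends in the rigid child $\mu_r$ yields a representability interval contained in $[m_r-b,M_r+b]$; combining this with the representability condition $3\in[m_l-M_r,M_l-m_r]$ of \Pin{3dd'} from \cref{ta:representability} and with $b_l'+b_r'=b_\nu'$ shows that fewer than $b_\nu=\delta([m_l-M_r,M_l-m_r],[3,3])$ bends cannot make $G_\nu$ rectilinear planar, while placing all $b_\nu$ bends on an exposed edge of $\mu_l$ suffices.

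For Property~(ii) I would run the same two-case analysis. By \cref{le:spirality-P-node-2-children} and the \Pin{3rr} row of \cref{ta:p-coefficients} one has $\sigma_\nu=\sigma_{\mu_l}-2=\sigma_{\mu_r}+1$, and by \cref{le:P-2-children-support-type3} every rectilinear representation of $G_\nu$ satisfies $\sigma_{\mu_l}-\sigma_{\mu_r}=3$. I would then express the achievable spiralities of $G_\nu$ with $b_\nu$ bends through the simultaneous constraints $\sigma_{\mu_l}\in[m_l-b_l,M_l+b_l]$, $\sigma_{\mu_r}\in[\sigma^{\min}_{\mu_r}(b_r),\sigma^{\max}_{\mu_r}(b_r)]$, $b_l+b_r=b_\nu$, where the plateau identities $\sigma^{\max}_{\mu_r}(b)=M_r+\min\{b,b^+_{\mu_r}\}$ and $\sigma^{\min}_{\mu_r}(b)=m_r-\min\{b,b^-_{\mu_r}\}$ for the S-node $\mu_r$ follow from \cref{le:flexibility_breakpoints} applied to each of its \Pio{2}{22} children and summed over all children (this is the ``by definition of flexibility breakpoint'' step used implicitly in \cref{le:P-2-without-exposed-edges-left}). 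In each of the two cases $M_l-m_r<3$ and $m_l-M_r>3$ I would prove both a membership direction (every $b_\nu$-bend representation of $G_\nu$ has spirality in $I'_\nu$) and a realizability direction (an explicit choice of $b_l$, $b_r$ and of component spiralities with $\sigma_{\mu_l}-\sigma_{\mu_r}=3$ realizing each prescribed $\sigma_\nu\in I'_\nu$), obtaining the claimed interval with the appropriate flexibility breakpoint of $\mu_r$ capping one endpoint and $b_\nu$ controlling the other.

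The main obstacle is precisely this Property~(ii) bookkeeping. Unlike Property~(i), the placement of $b^+_{\mu_r}$ versus $b^-_{\mu_r}$, and which endpoint of $I'_\nu$ each caps in each of the two cases, depends on the \Pin{3rr} coefficients, so one must carefully track the sign of the spirality relation and the offsets of $m$, $M$ through the reflection/role-swap, and also check that a bend ``wasted'' on the rigid child $\mu_r$ (beyond its flexibility breakpoint) can always be rerouted to $\mu_l$, so that the extreme spiralities are attained exactly at the stated endpoints. All remaining ingredients reduce to the already-established lemmas \cref{le:P-2-children-support-type3}, \cref{le:spirality-P-node-2-children}, \cref{le:flexibility_breakpoints}, and \cref{le:P-exposed-edges-support}, with no new idea beyond the symmetry.
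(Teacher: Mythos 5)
Your overall plan coincides with the paper's, which simply declares the proof of \cref{le:P-2-without-exposed-edges-right} symmetric to that of \cref{le:P-2-without-exposed-edges-left} and omits it; your reflection/role-swap recipe, the relation $\sigma_\nu=\sigma_{\mu_l}-2=\sigma_{\mu_r}+1$ for type \Pin{3rr}, and the observation that mirroring exchanges $b^+$ with $b^-$ are all the right ingredients for carrying that symmetry out.

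The problem is the one step you explicitly defer --- which breakpoint caps which endpoint in which case --- because that is exactly where a faithful symmetric argument and the printed statement part ways, and your proposal promises to ``obtain the claimed interval'' without checking this. Carry out your own recipe for the case $M_l-m_r<3$: here $m=m_r+1$, $M=M_l-2$, and $b_\nu=3-M_l+m_r$; the deficit is closed by raising $\sigma_{\mu_l}$ (unbounded, since $\mu_l$ has an exposed edge) and/or lowering $\sigma_{\mu_r}$ (capped by $b^-_{\mu_r}$). Since $\sigma_{\mu_r}=\sigma_{\mu_l}-3\le M_l+b_l-3=m_r-b_r$, every $b_\nu$-bend representation forces $\sigma_{\mu_r}=m_r-b_r$ with $b_r\le\min\{b^-_{\mu_r},b_\nu\}$, so $\sigma_\nu=\sigma_{\mu_r}+1$ ranges exactly over $[\,m-\min\{b^-_{\mu_r},b_\nu\},\,M+b_\nu\,]$; symmetrically, for $m_l-M_r>3$ one obtains $[\,m-b_\nu,\,M+\min\{b^+_{\mu_r},b_\nu\}\,]$. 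This is the transpose of the pairing in the statement (and in \cref{ta:minbend-representability}): the breakpoint of the rigid child $\mu_r$ caps the endpoint that $\mu_r$ contributes, namely the lower one when $M_l-m_r<3$ and the upper one when $m_l-M_r>3$. The reflection argument you sketch confirms this, since it sends $b^{+}$ of the mirrored left child to $b^{-}_{\mu_r}$ while negating and swapping the interval endpoints. So your strategy is sound, but executed correctly it proves the statement with the two case formulas interchanged, not the statement as written; you must either establish the corrected pairing or explicitly reconcile the discrepancy, and as it stands your proposal does neither.
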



\subsection{Budget of the root}\label{sse:bottom-up-root}
We finally show how to compute the budget of the root $\rho$ of $T$. Recall that $\rho$ is the P$^r$-node describing the parallel composition of the reference edge $e=(u,v)$ with the rest of the graph. If $e$ is a dummy edge, the budget of $\rho$ is zero, because $e$ does not need to be drawn. Thus we assume that the graph is biconnected and $e$ is a real edge. Let $\eta$ be the child of $\rho$ that does not correspond to $e$, and let $u'$ and $v'$ be the alias vertices associated with the poles $u$ and $v$ of $G_\eta$. If $G_\eta$ is rectilinear planar with representability interval $I_\eta$, by the root condition in \cref{ta:representability} we know that $G$ is rectilinear planar if and only if $I_\eta \cap \Delta_\rho \neq \emptyset$.
Recall that $\Delta_\rho$ is defined as follows: $(i)$ $\Delta_\rho=[2,6]$ if $u'$ coincides with $u$ and $v'$ coincides with $v$; $(ii)$ $\Delta_\rho=[3,5]$ if exactly one of $u'$ and $v'$ coincides with $u$ and $v$, respectively; $(iii)$ $\Delta_\rho=4$ if none of $u'$ and $v'$ coincides with $u$ and $v$. We prove the following.

\begin{lemma}\label{le:root-budget}
	Let $e=(u,v)$ be the reference edge of $G$ and let $\rho$ be the root of $T$ with respect to $e$. Let $\eta$ be the child of $\rho$ that does not correspond to $e$. Suppose that $G_\eta$ is rectilinear planar with representability interval $I_\eta$. If $G$ is not rectilinear planar then $b_\rho=\delta(I_\eta,\Delta_\rho)$.
\end{lemma}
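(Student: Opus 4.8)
The plan is to mirror the structure of \cref{le:P-2-exposed-edges}, treating the reference edge $e=(u,v)$ as if it were a child component of $\rho$ that always owns an exposed edge, namely $e$ itself. Since $G$ is not rectilinear planar, \cref{le:root-condition} tells us that the root condition fails, i.e.\ $I_\eta \cap \Delta_\rho = \emptyset$; hence the two intervals are disjoint and $\delta(I_\eta,\Delta_\rho)$ is well defined and positive. Throughout, write $I_\eta = [m,M]$ and let $[\ell_\rho, r_\rho]$ denote $\Delta_\rho$ (with $\ell_\rho = r_\rho = 4$ in the singleton case). Recall also that for the child $\eta$ of the root both poles have outdegree one in $G_\eta$, so each pole contributes exactly one alias vertex and $\sigma_\eta$ is always an integer, consistently with $\Delta_\rho$ being an integer interval.

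First I would establish the lower bound $b_\rho \ge \delta(I_\eta,\Delta_\rho)$. Let $H$ be any orthogonal representation of $G$ obtained by adding $b$ extra bends to the budgeted children, and split $b = b_\eta + b_e$, where $b_\eta$ is the number of bends lying inside $H_\eta$ and $b_e$ is the number of bends on $e$. Replacing every bend of $H_\eta$ by a subdivision vertex yields a rectilinear planar graph $G'_\eta$ whose representability interval $I'_\eta$ satisfies $I'_\eta \subseteq [m-b_\eta, M+b_\eta]$: this is the easy, placement-independent direction, since each subdivision vertex permits at most one additional turn along any spine, so the spirality of any rectilinear representation of $G'_\eta$ differs from a value of $I_\eta$ by at most $b_\eta$ (if some bends sit on ``useless'' edges, the interval grows by even less). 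Likewise, subdividing $e$ with $b_e$ degree-$2$ vertices lets the net number of right-minus-left turns $t$ along the subdivided copy of $e$ be any integer in $[-b_e,b_e]$, so in the identity $A = \sigma_\eta + \alpha_{u'} + \alpha_{v'} + t = 4$ of \cref{le:root-condition} applied to the subdivided graph $G'$, the admissible values of $\sigma_\eta$ form exactly $\Delta_\rho$ enlarged by $b_e$ on each side, i.e.\ $[\ell_\rho - b_e, r_\rho + b_e]$. Since $G'$ is rectilinear planar, \cref{le:root-condition} forces $I'_\eta \cap [\ell_\rho - b_e, r_\rho + b_e] \ne \emptyset$, hence $[m-b_\eta, M+b_\eta] \cap [\ell_\rho - b_e, r_\rho + b_e] \ne \emptyset$. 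A one-line interval computation then yields $b_\eta + b_e \ge \delta(I_\eta,\Delta_\rho)$, i.e.\ $b \ge \delta(I_\eta,\Delta_\rho)$.

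For the upper bound $b_\rho \le \delta(I_\eta,\Delta_\rho)$, I would simply insert all $\delta(I_\eta,\Delta_\rho)$ extra subdivision vertices on $e$. As above this enlarges $\Delta_\rho$ to an interval $\Delta'_\rho = [\ell_\rho - \delta(I_\eta,\Delta_\rho), r_\rho + \delta(I_\eta,\Delta_\rho)]$, and by the very definition of $\delta$ the endpoint of $I_\eta$ nearest to $\Delta_\rho$ now lies in $\Delta'_\rho$, so $I_\eta \cap \Delta'_\rho \ne \emptyset$. Applying \cref{le:root-condition} to the resulting graph $G'$ shows that $G'$ is rectilinear planar, and replacing the subdivision vertices on $e$ by bends produces an orthogonal representation of $G$ with exactly $\delta(I_\eta,\Delta_\rho)$ extra bends. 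Combining the two bounds gives $b_\rho = \delta(I_\eta,\Delta_\rho)$.

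The main obstacle I anticipate is making the two ``enlargement'' claims fully rigorous and uniform. Unlike the clean exposed-edge situations of the P-node lemmas, here $\eta$ may contain children without exposed edges, so inserting bends inside $G_\eta$ can enlarge $I_\eta$ \emph{sublinearly} in the number of bends; what the lower bound needs, however, is only that this growth is never \emph{superlinear}, which follows from the elementary monotonicity of spirality under edge subdivision. The second claim — that subdividing $e$ enlarges $\Delta_\rho$ by precisely the subdivision count on each side — is handled by the observation that a straight chain of $b_e$ degree-$2$ vertices realizes every net turn count in $[-b_e,b_e]$. Everything else reduces to the same interval-distance bookkeeping already used in \cref{le:P-2-exposed-edges}.
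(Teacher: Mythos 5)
Your proposal is correct and follows essentially the same route as the paper: both rest on the identity $A=\sigma_\eta-\sigma_e+\alpha_{u'}+\alpha_{v'}=4$, the observation that a bend on $e$ shifts the admissible range of $\sigma_\eta$ by exactly one while a bend inside $G_\eta$ enlarges $I_\eta$ by at most one, and the construction that places all $\delta(I_\eta,\Delta_\rho)$ extra bends on $e$. The only cosmetic difference is that the paper normalizes up front (``without loss of generality all bends lie on $e$'') whereas you keep the split $b=b_\eta+b_e$ and do the interval bookkeeping explicitly; the conclusion and the key inequalities are the same.
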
	
\begin{proof}
	See \cref{fi:root-condition} for an illustration of the statement.
	Let $f_{\mathrm{int}}$ be the internal face of $G$ incident to $e$.
	Note that $H$ is an orthogonal representation of $G$ if and only if the following two conditions hold: The restriction $H_\eta$ of $H$ to $G_\eta$ is an orthogonal representation; the number $A$ of right turns minus left turns of any simple cycle of $G$ in $H$ containing $e$ and traversed clockwise in $H$ is equal to $4$. 
	We have $A=\sigma_\eta-\sigma_e+\alpha_{u'}+\alpha_{v'}$, where: $\sigma_\eta$ is the spirality of $H_\eta$; $\sigma_e$ is the spirality of $e$; for $w\in \{u',v'\}$, $\alpha_w=1$, $\alpha_w=0$, and $\alpha_w=-1$ if the angle formed by $w$ in $f_{\mathrm{int}}$ equals $90^o$, $180^o$, or $270^o$, respectively. 
	
	Since $G$ is not rectilinear planar, $H$ must contain some bends. Denote by $b$ the number of bends of $H$. Each of these $b$ bends placed along an edge of $G_\eta$ contributes to increase or decrease $\sigma_\eta$ by at most one unit, therefore increasing or decreasing $A$ by at most one. Also placing this bend along $e$ contributes to increase or decrease $A$ by at most one. Hence, without loss of generality, we can assume that all the $b$ bends are placed along $e$, which implies that $H_\eta$ does not contain any bend. Hence, we have $\sigma_\eta \in I_\eta = [m_\eta,M_\eta]$ and $\sigma_e\in [-b,b]$. We now show that if $b$ is the minimum value such that $A=4$ then $b=b_\rho$. Observe that when $b$ is minimum, we have that $\lvert \sigma_e \rvert=b$, and therefore $\lvert \sigma_e \rvert>0$.  
	
	Consider first Case $(i)$ in the definition of $\Delta_\rho$, i.e., $\Delta_\rho=[2,6]$. Since in this case the alias vertices coincide with the poles, we have $\alpha_{u'} \in [-1,1]$, $\alpha_{v'} \in [-1,1]$, and hence $\alpha_{u'}+\alpha_{v'} \in [-2,2]$.  
	Since $\sigma_\eta+\alpha_{u'}+\alpha_{v'}+\sigma_e=4$ there are two possible subcases for any possible choice of tha values $\sigma_\eta$, $\alpha_{u'}$, and $\alpha_{v'}$: either (a)~$\sigma_\eta+\alpha_{u'}+\alpha_{v'}<4$ or (b)~$\sigma_\eta+\alpha_{u'}+\alpha_{v'}>4$.
	
	\begin{itemize}
		\item Case~(a). The maximum value for $\sigma_\eta+\alpha_{u'}+\alpha_{v'}$ is $M_\eta+2$. Hence $M_\eta+2<4$, which implies $M_\eta<2$. It follows that $b_\rho=\delta(I_\eta,\Delta_\rho)=2-M_\eta$. Also, since $b$ is the minimum value such that $A=4$ and $\sigma_\eta+\alpha_{u'}+\alpha_{v'}<4$, we have that $b=\sigma_e$ and $M_\eta+2+b=4$, which implies $b = 2-M_\eta = b_\rho$. Therefore, an orthogonal representation of $G$ with $b$ bends is constructed by placing $b$ bends along $e$ and choosing $\sigma_\eta = M_\eta$. 
		
		\item Case~(b). The minimum value for  $\sigma_\eta+\alpha_{u'}+\alpha_{v'}$ is $m_\eta-2$. Hence $m_\eta-2>4$, which implies $m_\eta>6$. It follows that $b_\rho=\delta(I_\eta,\Delta_\eta)=m_\eta-6$. Also, since $b$ is the minimum value such that $A=4$ and $\sigma_\eta+\alpha_{u'}+\alpha_{v'}>4$, we have that $b=-\sigma_e$ and $m_\eta-2-b=4$, which implies $b = m_\eta - 6 = b_\rho$. Therefore, an orthogonal representation of $G$ with $b$ bends is constructed by placing $b$ bends along $e$ and choosing $\sigma_\eta = m_\eta$.
	\end{itemize}  
	
	The proofs for Case~$(ii)$ ($\Delta_\rho=[3,5]$) and Case~$(iii)$ ($\Delta_\rho=4$) are analogous, observing that in Case~$(ii)$ we have $\alpha_{u'}+\alpha_{v'}\in [-1,1]$ and in Case~$(iii)$ we have $\alpha_{u'}+\alpha_{v'}=0$.
\end{proof}


\begin{figure}[t]
	\captionsetup[subfigure]{labelformat=empty}
	\centering
	\begin{subfigure}{.2\columnwidth}
		\centering
		\includegraphics[width=\columnwidth,page=3]{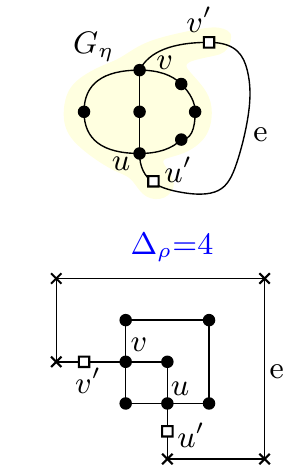}
		\subcaption{\centering~Case $(i)$}
		\label{fi:root-condition-a}
	\end{subfigure}
	\hfil
	\begin{subfigure}{.2\columnwidth}
		\centering
		\includegraphics[width=\columnwidth,page=2]{root-condition.pdf}
		\subcaption{\centering~Case $(ii)$}
		\label{fi:root-condition-b}
	\end{subfigure}
	\hfil
	\begin{subfigure}{.2\columnwidth}
		\centering
		\includegraphics[width=\columnwidth,page=1]{root-condition.pdf}
		\subcaption{\centering~Case $(iii)$}
		\label{fi:root-condition-c}
	\end{subfigure}
	\caption{Illustration of Lemma~\ref{le:root-budget}.  We have $I_\eta=[-1,-1]$ and: In Case~(i), $\Delta_\rho=[2,6]$ and $b_\rho=\delta(I_\eta,\Delta_\rho)=3$; in Case~(ii), $\Delta_\rho=[3,5]$ and $b_\rho=\delta(I_\eta,\Delta_\rho)=4$; in Case~(iii), $\Delta_\rho=4$ and $b_\rho=\delta(I_\eta,\Delta_\rho)=5$.
		%
		%
	}\label{fi:root-condition}
\end{figure}

\cref{ta:minbend-representability} summarizes how to compute $b_\nu$ and $I'_\nu$ for the different types of nodes $\nu$. 

\renewcommand{\arraystretch}{1.5}
\begin{table}[h]
	\centering
	\caption{Summary of how to compute $b_\nu$ and $I'_\nu$ for the different types of nodes $\nu$. In the formulas, we have $\gamma=\lambda +\beta-2$ and $\phi(\cdot)$ is such that $\phi(r)=1$ and $\phi(l)=0$. Values $m$ and $M$ are computed as shown in \cref{ta:representability}.}\label{ta:minbend-representability}
	\begin{tabular}{|l |c |}
		
		\hline
		\rowcolor{antiquewhite}\multicolumn{2}{|c|}{\bf P-node with two children, each having an exposed edge --  \cref{le:P-2-exposed-edges}}\\
		\hline
		{Budget} & $\delta([m_l-M_r, M_l-m_r],\Delta_\nu)$
		\\ 
		{Representability Interval} & $[m-b_\nu, M+b_\nu]$ \\

		\hline
		\rowcolor{antiquewhite}\multicolumn{2}{|c|}{\bf P-node with three children (each child has an exposed edge) --  \cref{le:P-3-exposed-edges}}\\
		\hline
		{Budget} & $\overline{m}_z-M_x$ \\
		{Representability Interval} & $[\max\{\overline{M}_x,\overline{m}_y\},\min\{\overline{m}_z,\overline{M}_y\}]$\\


		\hline
		\rowcolor{antiquewhite}\multicolumn{2}{|c|}{\bf P-node with two children, such that $\mu_d$, with $d\in \{l,r\}$ has no exposed edge}\\
		\rowcolor{antiquewhite}\multicolumn{2}{|c|}{\bf --  \cref{le:P-2-without-exposed-edges-left} if $d=l$ and \cref{le:P-2-without-exposed-edges-right} if $d=r$}\\ 
		\hline
		{Budget} & $b_\nu=\delta([m_l-M_r, M_l-m_r],[3,3])$\\ 
		{Representability Interval} &  If $d=l$:\\& $[m-b_\nu, M+\min\{b^+_{\mu_l},b_\nu\}]$ if $M_l - m_r < 3$ \\ &  $[m-\min\{b^-_{\mu_l},b_\nu\}, M+b_\nu]$ if $m_l - M_r > 3$.
		\\ & If $d=r$:\\ & $[m-\min\{b^-_{\mu_r},b_\nu\},M+b_\nu]$ if $m_l - M_r > 3$ \\ & $[m-b_\nu,M+\min\{b^+_{\mu_r},b_\nu\}]$ if $M_l - m_r < 3$. \\
		
		\hline
		\rowcolor{antiquewhite}\multicolumn{2}{|c|}{\bf P$^r$-node (the root $\rho$)}\\
		\hline
		{Budget} & $b_\rho=\delta(I_\eta,\Delta_\rho)$\\
		\hline
	\end{tabular}
\end{table}

\bigskip

\bigskip
\subsection{Optimality of the Approach}\label{sse:optimality} 
Our bottom-up algorithm incrementally computes for each node $\nu$ of $T$ the budget of bends needed to realize an orthogonal representation of $G_\nu$. We prove that, the total budget at the level of the root of $T$ corresponds to the number of bends of a bend-minimum orthogonal representation of $G$. More formally, for a node $\nu$ of $T$, the \emph{cumulative budget} $B_\nu$ of $\nu$ is the sum of the budgets of all nodes in the subtree of~$T$ rooted at $\nu$. If $\nu$ is a leaf of~$T$, $B_\nu = b_\nu = 0$. If $\nu$ is an internal node with children $\mu_1, \mu_2, \dots, \mu_h$ then $B_\nu = b_\nu + \sum_{i=1, \dots, h}B_{\mu_i}$. Hence, the cumulative budget $B_\rho$ of the root corresponds to the total number of bends at the end of the bottom-up visit. 

\begin{theor}\label{th:bottom-up-cottectness}
	Let $G$ be a plane series-parallel 4-graph, $T$ be an SPQ$^*$-tree of $G$, and $\rho$ be the root of $T$. The cumulative budget $B_{\rho}$ computed by the bottom-up visit equals the number of bends of a bend-minimum orthogonal representation~of~$G$.
\end{theor}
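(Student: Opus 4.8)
The plan is to prove the two inequalities $B_\rho \geq \mathrm{bends}(G)$ and $B_\rho \leq \mathrm{bends}(G)$ separately, where $\mathrm{bends}(G)$ denotes the number of bends of a bend-minimum orthogonal representation of $G$. The ``$\leq$'' direction is the easy one: the bottom-up visit, together with the top-down construction summarized in \cref{se:summary}, actually produces an orthogonal representation of $G$ with exactly $B_\rho$ bends, so $\mathrm{bends}(G) \leq B_\rho$. More precisely, each lemma of \cref{se:budgets} (\cref{le:P-3-exposed-edges,le:P-2-exposed-edges,le:P-2-without-exposed-edges-left,le:P-2-without-exposed-edges-right,le:root-budget}) explicitly exhibits, for every node $\nu$ and every target spirality $\sigma_\nu \in I'_\nu$, a way to distribute $b_\nu$ subdivision vertices along the edges of $G_\nu$ so that $G_\nu$ becomes rectilinear planar with spirality $\sigma_\nu$, given that the children are already rectilinear planar with their (extended) representability intervals. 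Composing these constructions along $T$ yields a rectilinear planar representation of the graph $G'$ obtained from $G$ by inserting $B_\rho$ subdivision vertices in total; turning subdivision vertices back into bends gives an orthogonal representation of $G$ with $B_\rho$ bends.

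The ``$\geq$'' direction, i.e.\ $B_\rho \leq \mathrm{bends}(G)$, is where the real argument lies, and the natural approach is induction on the structure of $T$. Fix a bend-minimum orthogonal representation $H$ of $G$; for every node $\nu$, let $H_\nu$ be its restriction to $G_\nu$ and let $\beta_\nu$ be the number of bends of $H_\nu$. I would prove, by a bottom-up induction, that $B_\nu \leq \beta_\nu$, and moreover (the inductive hypothesis must be strengthened in this way) that $\sigma(H_\nu)$ lies in the extended interval $I'_\nu$ associated with budget $b'_\nu := \beta_\nu - \sum_i \beta_{\mu_i}$, where the $\mu_i$ are the children of $\nu$ — in fact, what one really carries up is: if we regard the $\beta_\nu$ bends inside $G_\nu$ as subdivision vertices, then $G_\nu$ becomes rectilinear planar, its representability interval strictly contains $I_\nu$ extended by the extra bends used, and $B_{\mu_i} \leq \beta_{\mu_i}$ for each child. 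The base case is a Q$^*$- or S-node, for which $b_\nu = 0$ and the claim is immediate from \cref{le:Q*-representability,le:S-representability}. For the inductive step at a P-node $\nu$, the key observation — already established inside the proofs of the budget lemmas, e.g.\ the ``necessity'' half of each — is precisely that any orthogonal representation of $G_\nu$ must use at least $b_\nu$ extra bends beyond those in its children, i.e.\ $b'_\nu \geq b_\nu$, hence $\beta_\nu = b'_\nu + \sum_i \beta_{\mu_i} \geq b_\nu + \sum_i B_{\mu_i} = B_\nu$. The root case is handled analogously using \cref{le:root-budget}.

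The main obstacle — and the reason the strengthened inductive hypothesis is needed — is that the budget $b_\nu$ computed by the algorithm is defined relative to the \emph{minimum-size} representability intervals $I_{\mu_i}$ of the children, whereas in $H$ the children may themselves carry bends and therefore sit in \emph{larger} spirality intervals. One must argue that spending $\beta_{\mu_i} \geq B_{\mu_i}$ bends inside $G_{\mu_i}$ can enlarge its representability interval by at most $\beta_{\mu_i}$ on each side (and exactly $\beta_{\mu_i} - B_{\mu_i}$ on each side if the extra bends are placed optimally, using \cref{le:P-exposed-edges-support} to move them onto an exposed edge, or the flexibility-breakpoint analysis of \cref{le:flexibility_breakpoints} when no exposed edge exists), and then feed these enlarged intervals into the representability condition for $\nu$ (\cref{ta:representability}) to conclude that the amount of extra ``interval shrinkage'' that must still be compensated at $\nu$ is at least $b_\nu$. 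The delicate point is the no-exposed-edge case of a P-node of type \Pin{3ll} or \Pin{3rr}: here a bend placed deep inside the exposed-edge-free S-child does not always extend that child's representability interval, so the naive ``one bend $=$ one unit of interval extension'' accounting fails, and one must invoke \cref{le:flexibility_breakpoints} and the positive/negative flexibility breakpoints $b^+_{\mu}, b^-_{\mu}$ to get the correct bound. Once this monotonicity/accounting is in place for each node type, the induction closes and $B_\rho = \mathrm{bends}(G)$ follows by combining the two inequalities.

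Finally, I would remark that the argument also yields, as a byproduct, that for every $\sigma \in I'_\rho$ (at the root, $I'_\rho$ is a single value or the interval forced by \cref{le:root-budget}) there is a bend-minimum representation realizing it, which is exactly what the top-down phase of \cref{se:summary} needs; but the theorem as stated only asks for the count $B_\rho$, so the proof proper ends with the two inequalities.
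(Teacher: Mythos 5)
Your proposal is correct and follows essentially the same route as the paper: a bottom-up induction over $T$ in which the inductive hypothesis asserts that $B_{\mu_i}$ is the bend-minimum count for each child $G_{\mu_i}$, the lower bound $b(H'_\nu)\geq B_\nu$ comes from the necessity halves of the budget lemmas (\cref{le:P-3-exposed-edges,le:P-2-exposed-edges,le:P-2-without-exposed-edges-left,le:P-2-without-exposed-edges-right,le:root-budget}), and the upper bound comes from their constructive halves. The paper states this as a single induction establishing equality rather than two separate inequalities, and leaves the interval-accounting details (including the flexibility-breakpoint subtlety you correctly flag) packaged inside the budget lemmas, but the substance is identical.
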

\begin{proof}
	We prove that once the algorithm has processed a node $\nu$ in the bottom-up visit, the cumulative budget $B_\nu$ equals the number of bends of a bend-minimum orthogonal representation of~$G_\nu$.
	This implies that once the algorithm has visited the root $\rho$ of~$T$, the cumulative budget $B_\rho$ equals the number of bends of a bend-minimum orthogonal representation of~$G_\rho = G$.
	The proof is by induction on the depth of the subtree of~$T$~rooted~at~$\nu$.
	
	\medskip \noindent \textsf{Base Case.} For a leaf $\nu$ of $T$ ($\nu$ is a Q$^*$-node), the statement is trivial, as $B_\nu=0$.
	
	\medskip \noindent \textsf{Inductive Case.} Let $\nu$ be a node of $T$ that is not a leaf. Denote by $\mu_1, \mu_2, \dots, \mu_h$ the children of $\nu$. By inductive hypothesis, each $B_{\mu_i}$ $(i = 1, \dots, h)$ corresponds to the number of bends of a bend-minimum representation of~$G_{\mu_i}$. By definition, $B_\nu = b_\nu + \sum_{i=1, \dots, h}B_{\mu_i}$, where $b_\nu$ is the minimum number of bends that must be used in addition to $\sum_{i=1, \dots, h}B_{\mu_i}$ to realize an orthogonal representation of $G_\nu$ ($b_\nu = 0$ if $\nu$ is an S-node). Budget $B_\nu$ corresponds to the minimum number of bends of any orthogonal representation $H_\nu$ of $G_\nu$ with the property that the number of bends $b(H_{\mu_i})$ of the restriction $H_{\mu_i}$ of $H_\nu$ to $G_{\mu_i}$ is such that $b(H_{\mu_i}) \geq B_{\mu_i}$ ($i \in \{1, \dots, h\}$).  Let $H'_\nu$ be any bend-minimum orthogonal representation of $G_\nu$, and let $H'_{\mu_i}$ be the restriction of $H'_\nu$ to $G_{\mu_i}$. Since, by inductive hypothesis, there is no orthogonal representation of  $G_{\mu_i}$ with less bends than $B_{\mu_i}$, we have $b(H'_{\mu_i}) \geq B_{\mu_i}$, and hence, by the previous observation, $b(H'_\nu) \geq B_\nu$. On the other hand, since $H'_\nu$ is bend-minimum for $G_\nu$, we also have $b(H'_\nu) \leq B_\nu$, i.e., $b(H'_\nu) = B_\nu$.
\end{proof}


\section{Bend-Minimization in Linear Time}\label{se:summary}
The bottom-up visit described above, equips each node $\nu$ of $T$ with three information: the budget $b_\nu$, the cumulative budget $B_\nu$, and, if $\nu  \neq \rho$, an interval $I'_\nu=[m'_\nu,M'_\nu]$ of all possible spirality values that an orthogonal representation of $G_\nu$ with $B_\nu$ bends can have. Once the bottom-up visit of $T$ has been completed, our algorithm performs a top-down visit of $T$ to suitably add to $G$ a number $B_\rho$ of subdivision vertices, so that the resulting graph $G'$ admits a rectilinear planar representation.

At the beginning of the top-down visit, the root $\rho$ is considered. Let $\eta$ be the child of $\rho$ that does not correspond to $e$. If $I'_\eta \cap \Delta_\rho \neq \emptyset$, the algorithm selects an arbitrary value $\sigma_\eta \in I'_\eta \cap \Delta_\rho$ as target spirality value for a representation of $G'_\eta$ within a rectilinear planar representation of $G'$. If, vice versa, $I'_\eta \cap \Delta_\rho = \emptyset$, according to the proof of Lemma~\ref{le:root-budget}, the algorithm subdivides $e$ with $b_e=\delta(I'_\eta,\Delta_\rho)$ bends and sets the target spirality value $\sigma_\eta$ as either $\sigma_\eta=M'_\eta$ (if $M'_\eta$ is smaller than the infimum of $\Delta_\rho$) or $\sigma_\eta=m'_\eta$ (if $m'_\eta$ is larger than the supremum of $\Delta_\rho$).
In the next step of the top-down visit, the algorithm considers node $\eta$, for which the target spirality value $\sigma_\eta$ has been previously fixed. If $\eta$ is an S-node then $b_\eta=0$, i.e., no subdivision vertices must be added in this step. If $\eta$ is a P-node and $b_\eta > 0$, the algorithm suitably adds $b_\eta$ subdivision vertices along some edges of $G_\eta$. To do so, it applies the procedures described in the second part of the proof of Property~(ii) of Lemma~\ref{le:P-3-exposed-edges}, of Lemma~\ref{le:P-2-exposed-edges}, or of Lemmas~\ref{le:P-2-without-exposed-edges-left} and~\ref{le:P-2-without-exposed-edges-right}, depending on whether $\eta$ is a P-node with three children, a P-node with two children both having an exposed edge, or a P-node with two children one of which has no exposed edge. 
Then, the algorithm sets the spirality values for each child of $\eta$. 
Namely, we distinguish the following cases:

\smallskip\noindent{\bf Case~1:} $\eta$ is an S-node, with children $\mu_1, \dots, \mu_h$ $(i \in \{1, \dots, h\})$. Let $I'_{\mu_i}=[m'_i,M'_i]$ be the representability interval of $\mu_i$ (assuming that any orthogonal representation of $G_{\mu_i}$ will have $B_{\nu_i}$ bends). We must find a value $\sigma_{\mu_i} \in [m'_i,M'_i]$ for each $i = 1, \dots, h$ such that $\sum_{i=i}^h \sigma_{\mu_i} = \sigma_\eta$. To this aim, initially set $\sigma_{\mu_i} = M'_i$ for each $i = 1, \dots, h$ and consider $s = (\sum_{i=i}^h \sigma_{\mu_i}) - \sigma_\eta$. By Lemma~\ref{le:spirality-S-node}, $s \geq 0$. If $s = 0$ we are done. Otherwise, iterate over all $i=1, \dots, h$ and for each $i$
decrease both $\sigma_{\mu_i}$ and $s$ by the value $\min\{s, M'_i - m'_i\}$, until~$s=0$.

\smallskip\noindent{\bf Case~2:} $\eta$ is a P-node with three children, $\mu_l$, $\mu_c$, and $\mu_r$. By Lemma~\ref{le:spirality-P-node-3-children}, it suffices to set $\sigma_{\mu_l}=\sigma_{\eta}+2$, $\sigma_{\mu_c}=\sigma_{\eta}$, and $\sigma_{\mu_r}=\sigma_{\eta}-2$.

\smallskip\noindent{\bf Case~3:} $\eta$ is a P-node with two children, $\mu_l$ and $\mu_r$. Let $u$ and $v$ be the poles of $\eta$. By Lemma~\ref{le:spirality-P-node-2-children}, $\sigma_{\mu_l}$ and  $\sigma_{\mu_r}$ must be determined in such a way that $\sigma_{\mu_l} = \sigma_{\eta}+k_u^l\alpha_u^l+k_v^l\alpha_v^l$ and $\sigma_{\mu_r} = \sigma_{\eta}-k_u^r\alpha_u^r-k_v^r\alpha_v^r$. The values of $k_u^l$, $k_v^l$, $k_u^r$, $k_v^r$ are fixed by the indegree and outdegree of $u$ and $v$. Hence, it suffices to choose the values of $\alpha_u^l$, $\alpha_v^l$, $\alpha_u^r$, $\alpha_v^r$ such that they are consistent with the type of $\eta$ and yield $\sigma_{\mu_l} \in I'_{\mu_l}$ and $\sigma_{\mu_r} \in I'_{\mu_r}$. Since each $\alpha_w^d$ $(w \in \{u,v\}, d \in \{l,r\})$ is either $0$ or $1$, there are at most four combinations of values~to~consider.

\smallskip\noindent{\bf Case~4:} $\eta$ is a Q$^*$-node. In this case the algorithm does nothing, as no further subdivision vertices must be added.  

\smallskip \noindent
In the subsequent steps of the top-down visit, for every node $\nu$ the algorithm applies the same procedure as for $\eta$ to determine a target spirality value $\sigma_\nu$ and to suitably distribute the $b_\nu$ subdivision vertices along the edges of~$G_\nu$.

\begin{theor}\label{th:bend-minimum}
	Let $G$ be an $n$-vertex plane series-parallel 4-graph. There exists an $O(n)$-time algorithm that computes a bend-minimum orthogonal representation of $G$.
\end{theor}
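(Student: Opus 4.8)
The plan is to turn the two–phase procedure of \cref{se:bend-min-overv,se:budgets,se:summary} into a concrete algorithm and to bound its running time, the correctness being already in hand via \cref{th:bottom-up-cottectness}. First I would build the SPQ$^*$-tree $T$ of $G$: if $G$ is biconnected, pick any edge $e$ on the external face as the reference edge; otherwise add the dummy edge $e$ on the external face first. Computing $T$ together with the normalizations of \cref{se:preli} (root P$^r$-node, no P-node child of a P-node, no S-node child of an S-node, left-to-right child order reflecting the embedding) takes $O(n)$ time by a standard series-parallel decomposition, and $T$ has $O(n)$ nodes. Each node will carry only $O(1)$ words of data: by \cref{ta:representability,ta:minbend-representability}, the representability interval (when defined), the budget $b_\nu$, the cumulative budget $B_\nu$, and the spirality interval $I'_\nu=[m'_\nu,M'_\nu]$ are all constant-size objects. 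By \cref{th:bottom-up-cottectness}, once the bottom-up visit is complete the cumulative budget $B_\rho$ at the root equals the number of bends of a bend-minimum orthogonal representation of $G$, so the algorithm then only needs to realize a representation with exactly $B_\rho$ bends.

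For the bottom-up phase I would do a post-order traversal of $T$. For a Q$^*$-node, \cref{le:Q*-representability} gives $I_\nu$ directly; for a P-node, \cref{le:P-3-exposed-edges,le:P-2-exposed-edges,le:P-2-without-exposed-edges-left,le:P-2-without-exposed-edges-right} give $b_\nu$ and $I'_\nu$ in $O(1)$ time from the constant-size data of its at most three children; for the root, \cref{le:root-budget} gives $b_\rho$. The only non-constant-time step is at S-nodes: by \cref{le:S-representability} the interval is $[\sum_i m_i,\sum_i M_i]$, and, when the S-node will later play the role of the exposed-edge-free child $\mu$ of some P-node, the flexibility breakpoints $b^+_\mu=\sum_i b^+_{\nu_i}$ and $b^-_\mu=\sum_i b^-_{\nu_i}$ (with each $b^\pm_{\nu_i}$ obtained in $O(1)$ from \cref{le:flexibility_breakpoints}); both are computed in time proportional to the number of children of the S-node, and I would store $b^+_\mu,b^-_\mu$ at the S-node so they are available in $O(1)$ when its parent P-node is processed. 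Since the children counts sum to $O(n)$ over all nodes of $T$, the bottom-up phase runs in $O(n)$ total time.

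For the top-down phase I would proceed as in \cref{se:summary}: choose a target spirality $\sigma_\eta\in I'_\eta\cap\Delta_\rho$ for the non-reference child $\eta$ of the root (subdividing $e$ with $b_\rho$ bends and picking $\sigma_\eta$ at the appropriate endpoint of $I'_\eta$ when the intersection is empty, following \cref{le:root-budget}), then visit $T$ top-down; at each node $\nu$ whose target spirality is already fixed, insert the $b_\nu$ prescribed subdivision vertices using the constructive second halves of \cref{le:P-3-exposed-edges,le:P-2-exposed-edges,le:P-2-without-exposed-edges-left,le:P-2-without-exposed-edges-right} (distributing bends over the \Pio{2}{22}-components of an exposed-edge-free child via \cref{le:flexibility_breakpoints}) and assign consistent target spiralities to the children via \cref{le:spirality-S-node} (a single linear scan over the children, as in Case~1 of \cref{se:summary}), \cref{le:spirality-P-node-3-children}, or \cref{le:spirality-P-node-2-children} (at most four $\alpha$-assignments to check). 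As in the bottom-up phase, the per-node cost is $O(1)$ except at S-nodes and exposed-edge-free children, where it is linear in the number of children, so the phase is $O(n)$. At the end every node carries a spirality value consistent with those of its children; composing the trivial rectilinear representations of the (augmented) Q$^*$-leaves bottom-up according to \cref{le:spirality-S-node,le:spirality-P-node-3-children,le:spirality-P-node-2-children} realizes exactly these spiralities and yields a rectilinear planar representation $H'$ of the augmented graph $G'$, where \cref{th:substitution} guarantees that the local choices fit together. Replacing each subdivision vertex of $H'$ by a bend gives an orthogonal representation $H$ of $G$ with $B_\rho$ bends, which is bend-minimum by \cref{th:bottom-up-cottectness}. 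Since $G$ admits some orthogonal representation with $O(n)$ bends, $B_\rho=O(n)$, so $G'$ has $O(n)$ vertices and edges and $H'$ is built in $O(n)$ time; if an actual drawing is desired, Tamassia's algorithm~\cite{DBLP:journals/siamcomp/Tamassia87} converts $H$ in $O(n)$ additional time.

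The main obstacle, apart from bookkeeping, is the linear-time guarantee: one must verify that whenever a P-node is processed all the data it needs about its children (representability intervals and, for an exposed-edge-free child, the flexibility breakpoints $b^\pm_\mu$) has already been produced and stored in $O(1)$ space, and that the work spent at S-nodes and in distributing bends among \Pio{2}{22}-components, although not constant per node, sums to $O(n)$ by the standard ``sum of children counts'' amortization. Correctness of the reconstruction — that the target spiralities fixed independently at the various nodes are jointly realizable — is not a new difficulty: it is precisely the content of the ``if'' directions of \cref{le:spirality-S-node,le:spirality-P-node-3-children,le:spirality-P-node-2-children} together with the interchangeability result \cref{th:substitution}, which lets us treat components of equal spirality as equivalent.
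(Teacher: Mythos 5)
Your proposal is correct and follows essentially the same route as the paper's own proof: build the normalized SPQ$^*$-tree, run the bottom-up visit computing budgets, intervals, and flexibility breakpoints with the per-node costs amortized over children counts, then the top-down visit assigning target spiralities and placing the $B_\rho$ subdivision vertices, concluding via \cref{th:bottom-up-cottectness}. The only additions beyond the paper's argument are explicit bookkeeping remarks (storing $b^{\pm}_\mu$ at S-nodes, invoking \cref{th:substitution} for the gluing), which are consistent with and implicit in the paper's reasoning.
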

\begin{proof}
	If $G$ is biconnected let $e$ be any edge of $G$ on the external face; otherwise, let $e$ be a dummy edge added on the external face to make $G$ biconnected. Let $T$ be an SPQ$^*$-tree of $G$ with respect to $e$. The algorithm executes the bottom-up and the top-down visits described above. 
	Once the top-down visit is completed and $B_\rho$ subdivision vertices have been suitably inserted in $G$, a rectilinear planar representation of the subdivision of $G$ is easily computed from the values of spiralities of each component and from the values of the angles at the poles of each component. From this representation we obtain a bend-minimum orthogonal representation of $G$ by replacing the subdivision vertices with bends. Since the obtained orthogonal representation has $B_\rho$ bends, by Theorem~\ref{th:bottom-up-cottectness} it has the minimum number of bends.
	
	We now analyze the time complexity of the algorithm. $T$ can be computed in $O(n)$ time and it consists of $O(n)$ nodes~\cite{DBLP:books/ph/BattistaETT99}. For a node $\nu$ of $T$ that is not a Q$^*$-node, we denote by $n_\nu$ the number of children of $\nu$.
	
	Consider first the bottom-up visit. Let $\nu$ be a visited node of $T$. If $\nu$ is a Q$^*$-node then $b_\nu=0$ and, by Table~\ref{ta:representability}, $I'_\nu = I_\nu$ is computed in $O(1)$ time (we can assume that the length $\ell$ of the chain of edges represented by $\nu$ is stored at $\nu$ during the construction of $T$). If $\nu$ is an S-node, we still have $b_\nu=0$ and by Table~\ref{ta:representability} $I'_\nu = I_\nu$ is computed in $O(n_\nu)$ time. If $\nu$ is a P-node with three children, $b_\nu$ and $I'_\nu$ are computed in $O(1)$ time by Lemma~\ref{le:P-3-exposed-edges}. If $\nu$ is a P-node with two children each having an exposed edge, $b_\nu$ and $I'_\nu$ are computed in $O(1)$ time by Lemma~\ref{le:P-2-exposed-edges}. Suppose now that $\nu$ is a P-node with an S-node child that has no exposed edge, and assume that this S-node is the left child $\mu_l$ of $\nu$. By Lemma~\ref{le:P-2-without-exposed-edges-left}, $b_\nu$ and $I'_\nu$ can be computed in $O(1)$ time if we know the flexibility breakpoints (positive and negative) of $\mu_l$. By Lemma~\ref{le:flexibility_breakpoints}, the flexibility breakpoints of $\mu_l$ can be computed in $O(n_{\mu_l})$ time. By Lemma~\ref{le:P-2-without-exposed-edges-right},
	the same reasoning applies if the right child of $\nu$ is an S-node with no exposed edge.
	Finally, if $\nu$ coincides with the root $\rho$ of $T$, $b_\nu$ is easily computed in $O(1)$ time by Lemma~\ref{le:root-budget}. In summary, for each $S$-node $\nu$ of $T$, the bottom-up visit requires $O(n_\nu)$ time to compute $b_\nu$, $I'_\nu$, and the flexibility breakpoints of $\nu$ if needed. For any other type of node, the visit takes $O(1)$ time. Thus, the bottom-up visit requires $O(n)$ overall time.   
	
	In the top-down visit, for every non-leaf node $\nu$ of $T$, the algorithm spends $O(b_\nu)$ time to add $b_\nu$ subdivision vertices. Also, we should consider the extra time $t$ required to decide what are the edges along which these bends must be added and what is the target spirality value for each child of $\nu$. Namely, if $\nu$ is the root, $t=O(1)$. If $\nu$ is 
	an S-node, $t=O(n_\nu)$ by Case~1 of the top-down visit above described. If $\nu$ is a P-node with three children, by Case~2 of the top-down visit and by Lemma~\ref{le:P-3-exposed-edges}, $t=O(1)$. If $\nu$ is a P-node with two children each having an exposed edge, by Case~3 of the top-down visit and by Lemma~\ref{le:P-2-exposed-edges}, $t=O(1)$. Finally, if $\nu$ is a P-node with two children, one of which is an S-node $\mu$ with no exposed edge, by Case~3 and by Lemmas~\ref{le:P-2-without-exposed-edges-left} and~\ref{le:P-2-without-exposed-edges-right}, $t=O(n_\mu)$. Hence, since $B_\rho=\sum_\nu b_\nu = O(n)$~\cite{DBLP:journals/siamcomp/Tamassia87}, the top-down visit takes $O(n)$ overall time, and a rectilinear planar representation of the subdivision of $G$ is easily computed in $O(n)$ time from the values of spiralities of each component and from the values of the angles at the poles of each component.
\end{proof}

\section{Conclusions and Open Problems}\label{se:open}
We proved that there exists an optimal linear-time algorithm that computes a bend-minimum orthogonal drawing of a plane series-parallel 4-graph; this result solves, for a popular and widely studied family of plane graphs, a question opened for over 30 years, thus shedding new light on the complexity of computing orthogonal drawings of plane graphs with the minimum number of bends. It is also worth remarking that, despite the sophisticated analysis and key ingredients needed to prove our main theorem, the resulting bend-minimization algorithm is relatively easy to implement, as at every node of the SPQ$^*$-tree it just requires to apply some simple formulas, as summarized in \cref{ta:representability,ta:minbend-representability}. We conclude by suggesting two open problems:
%
%
%
\medskip
\begin{itemize}
	\item[--] \textbf{Problem 1.} Our result holds for the series-parallel graphs that are also called two-terminal, which are either biconnected or that can be made biconnected with the addition of a single edge. Can we extend Theorem~\ref{th:bend-minimum} to 1-connected plane 4-graphs whose biconnected components are two-terminal series-parallel~graphs, also known as partial 2-trees? Here the main difficulty of extending our approach is to succinctly describe the possible values of spiralities for those components that contain cut-vertices. In fact, a cut-vertex requires the imposition of some constraints at its angles, in order to correctly compose the different biconnected components that contain it. These constraints forbid some vertex angles, and in turns some spirality values that an orthogonal representation can take.
	\item[--] \textbf{Problem 2.} Is it possible to find a linear-time algorithm for the bend-minimization problem of triconnected plane 4-graphs? A positive answer to this question, together with our result, can be used to solve the problem of computing a bend-minimum orthogonal drawing of a general plane 4-graph in linear time.
\end{itemize}

\medskip


\bibliography{bibliography}
\bibliographystyle{plain}

\end{document}